\documentclass[11pt]{article}


\usepackage{amsthm}
\usepackage{graphicx} 
\usepackage{array} 

\usepackage{amsmath, amssymb, amsfonts, verbatim}
\usepackage{hyphenat,epsfig,multirow}

\usepackage[usenames,dvipsnames]{xcolor}
\usepackage[ruled]{algorithm2e}

\usepackage{tcolorbox}
\tcbuselibrary{skins,breakable}
\tcbset{enhanced jigsaw}

\usepackage[compact]{titlesec}

\definecolor{DarkRed}{rgb}{0.5,0.1,0.1}
\definecolor{DarkBlue}{rgb}{0.1,0.1,0.5}

\usepackage{nameref}
\definecolor{ForestGreen}{rgb}{0.1333,0.5451,0.1333}
\definecolor{Red}{rgb}{0.9,0,0}
\usepackage[linktocpage=true,
	pagebackref=true,colorlinks,
	linkcolor=DarkRed,citecolor=ForestGreen,
	bookmarks,bookmarksopen,bookmarksnumbered]
	{hyperref}

\usepackage{tikz}
\usetikzlibrary{arrows}
\usetikzlibrary{arrows.meta}
\usetikzlibrary{shapes}
\usetikzlibrary{backgrounds}
\usetikzlibrary{positioning}
\usetikzlibrary{decorations.markings}
\usetikzlibrary{patterns}
\usetikzlibrary{calc}
\usetikzlibrary{fit}
\usetikzlibrary{snakes}

\usepackage{bm}
\usepackage{url}
\usepackage{xspace}
\usepackage[mathscr]{euscript}

\usepackage{mdframed}

\usepackage[noend]{algpseudocode}
\makeatletter
\def\BState{\State\hskip-\ALG@thistlm}
\makeatother

\usepackage{cite}
\usepackage{enumitem}

\usepackage[margin=1in]{geometry}

\newtheorem{theorem}{Theorem}
\newtheorem{lemma}{Lemma}[section]
\newtheorem{proposition}[lemma]{Proposition}
\newtheorem{corollary}[lemma]{Corollary}
\newtheorem{claim}[lemma]{Claim}

\newtheorem{definition}{Definition}[section]

\newtheorem*{claim*}{Claim}
\newtheorem*{proposition*}{Proposition}
\newtheorem*{lemma*}{Lemma}
\newtheorem*{problem*}{Problem}

\theoremstyle{definition}
\newtheorem{remark}[lemma]{Remark}

\newtheorem{assumption}{Assumption}

\newtheorem{mdresult}{Result}
\newenvironment{result}{\begin{mdframed}[backgroundcolor=lightgray!40,topline=false,rightline=false,leftline=false,bottomline=false,innertopmargin=2pt]\begin{mdresult}}{\end{mdresult}\end{mdframed}}

\usepackage{tikz}
\usetikzlibrary{arrows}
\usetikzlibrary{arrows.meta}
\usetikzlibrary{shapes}
\usetikzlibrary{backgrounds}
\usetikzlibrary{positioning}
\usetikzlibrary{decorations.markings}
\usetikzlibrary{patterns}
\usetikzlibrary{calc}
\usetikzlibrary{fit}
\usetikzlibrary{snakes}
\usetikzlibrary{shadows.blur}
\usetikzlibrary{petri,decorations.markings}

\usepackage{caption}
\captionsetup[table]{position=bottom}
\usepackage{subcaption}

\allowdisplaybreaks

\renewcommand{\qed}{\nobreak \ifvmode \relax \else
      \ifdim\lastskip<1.5em \hskip-\lastskip
      \hskip1.5em plus0em minus0.5em \fi \nobreak
      \vrule height0.75em width0.5em depth0.25em\fi}

\newcommand{\Qed}[1]{\ensuremath{\qed_{\textnormal{~#1}}}}

\setlength{\parskip}{1pt}



\newcommand{\Ot}{\ensuremath{\widetilde{O}}}
\newcommand{\eps}{\ensuremath{\varepsilon}}
\newcommand{\Paren}[1]{\Big(#1\Big)}

\newcommand{\bracket}[1]{\left[#1\right]}
\newcommand{\paren}[1]{\ensuremath{\left(#1\right)}\xspace}
\newcommand{\card}[1]{\left\vert{#1}\right\vert}

\newcommand{\IN}{\ensuremath{\mathbb{N}}}

\newcommand{\ceil}[1]{{\left\lceil{#1}\right\rceil}}

\newcommand{\expect}[1]{\Exp\bracket{#1}}

\newcommand{\set}[1]{\ensuremath{\left\{ #1 \right\}}}
\newcommand{\poly}{\mbox{\rm poly}}
\newcommand{\polylog}{\mbox{\rm  polylog}}

\DeclareMathOperator*{\Exp}{\ensuremath{{\mathbb{E}}}}
\DeclareMathOperator*{\Prob}{\ensuremath{\mathbb{P}}}
\renewcommand{\Pr}{\Prob}
\newcommand{\EX}{\Exp}


\newcommand{\FG}{\ensuremath{\mathcal{G}}}

\newcommand{\event}{\ensuremath{{\mathcal{E}}}}


\newenvironment{tbox}{\begin{tcolorbox}[
		enlarge top by=5pt,
		enlarge bottom by=5pt,
		 breakable,
		 boxsep=0pt,
                  left=4pt,
                  right=4pt,
                  top=10pt,
                  arc=0pt,
                  boxrule=1pt,toprule=1pt,
                  colback=white
                  ]
	}
{\end{tcolorbox}}



\newcommand{\col}{\ensuremath{\mathcal{C}}}
\newcommand{\drem}[1]{\ensuremath{d^{\textnormal{\textsf{rem}}}(#1)}}

\newcommand{\pactive}{\ensuremath{p_{\tt active}}}
\newcommand{\istar}{\ensuremath{i^{\star}}}

\title{Palette Sparsification Beyond $(\Delta+1)$ Vertex Coloring\footnote{An extended abstract of this paper appears in RANDOM 2020.}} 
\author{Noga Alon\footnote{Department of Mathematics, Princeton University, Princeton, New Jersey, USA and Schools of Mathematics and Computer 
Science, Tel Aviv University, Tel Aviv, Israel. Research supported
in part by NSF grant DMS-1855464 and the Simons Foundation.
} 
\and Sepehr Assadi\footnote{Department of Computer Science, Rutgers University, Piscataway, New Jersey, USA.  Part of this work was done while the author was a postdoctoral researcher at Princeton University and was supported in part by Simons Collaboration on Algorithms and Geometry.}  
}

\date{}

\begin{document}
\maketitle

\pagenumbering{roman}

\begin{abstract}

	A recent \emph{palette sparsification theorem} of Assadi, Chen, 
	and Khanna [SODA'19] states that in every $n$-vertex graph $G$ with maximum degree $\Delta$, sampling $O(\log{n})$ colors per each vertex independently from $\Delta+1$ colors 
	almost certainly allows for proper coloring of $G$
	from the sampled colors. Besides being a combinatorial statement of its own independent interest, this theorem was shown to have various applications to design of algorithms for $(\Delta+1)$ coloring in different 
	models of computation on massive graphs such as
	 streaming or sublinear-time algorithms. 
	
	\smallskip
	
	In this paper, we focus on palette sparsification beyond 
$(\Delta+1)$ coloring, in both regimes when the number of available colors is much larger than $(\Delta+1)$, and when it is much smaller. In particular, 
	\begin{itemize}
		\item We prove that for $(1+\eps) \Delta$ coloring, sampling only $O_{\eps}(\sqrt{\log{n}})$ colors per vertex is sufficient and necessary to obtain a proper coloring from the sampled colors -- this shows a  separation between
		 $(1+\eps) \Delta$ and $(\Delta+1)$ coloring in the context of palette sparsification.

		\item A natural family of graphs with chromatic number much smaller than $(\Delta+1)$ are triangle-free graphs which are $O(\frac{\Delta}{\ln{\Delta}})$ colorable. 
		We prove a palette sparsification theorem tailored to these graphs: 
		Sampling $O(\Delta^{\gamma} + \sqrt{\log{n}})$ colors per vertex is sufficient and necessary to obtain a proper $O_{\gamma}(\frac{\Delta}{\ln{\Delta}})$ coloring of triangle-free graphs. 	
		
		\item We also consider the ``local version'' of graph coloring where every vertex $v$ can only be colored from a list of colors with size proportional to the degree $\deg(v)$ of $v$. We show
		that sampling $O_{\eps}(\log{n})$ colors per vertex is 
		sufficient for proper coloring of any graph with high probability 
		whenever each vertex is sampling from 
		a list of $(1+\eps) \cdot \deg(v)$ arbitrary colors, 
		or even only $\deg(v)+1$ colors when the lists are the sets $\set{1,\ldots,\deg(v)+1}$. 
	\end{itemize} 
	
	\smallskip

	Similar to previous work, our new palette sparsification results naturally lead to a host of new and/or improved algorithms for vertex coloring in different models including streaming and sublinear-time algorithms.

\end{abstract}

\clearpage

\setcounter{tocdepth}{2} 
\tableofcontents

\clearpage

\pagenumbering{arabic}
\setcounter{page}{1}


\section{Introduction}\label{sec:intro}

Given a graph $G(V,E)$, let $n := \card{V}$ be the number of vertices and $\Delta$ denote the maximum degree. 
A proper $c$-coloring of $G$ is an assignment of colors to vertices from the palette of colors $\set{1,\ldots,c}$ such that adjacent vertices receive distinct colors.  
The minimum number of colors needed for proper coloring of $G$ is referred to as the chromatic number of $G$ and is denoted by $\chi(G)$. 
An interesting variant of graph coloring is \emph{list-coloring} whereby every vertex $v$ is given a set $S(v)$ of available colors and the goal is to find a proper coloring of $G$ such that the color of every  $v$ belongs to $S(v)$. 
When this is possible, we say that $G$ is list-colorable from the lists $S$. 

It is well-known that $\chi(G) \leq \Delta+1$ for every graph $G$; the algorithmic problem of finding such a coloring---the $(\Delta+1)$ coloring problem---can also be solved via a text-book greedy algorithm. 
Very recently, Assadi, Chen, and Khanna~\cite{AssadiCK19} proved the following \textbf{palette sparsification theorem} for the $(\Delta+1)$ coloring problem: Suppose for every vertex $v$ of a graph $G$, 
we \emph{independently} sample $O(\log{n})$ colors $L(v)$ uniformly at random from the palette $\set{1,\ldots,\Delta+1}$; then $G$ is almost-certainly list-colorable from the sampled lists $L$ (see Appendix~\ref{sec:background-ACK19} for 
a formal statement).

The palette sparsification theorem of~\cite{AssadiCK19}, besides being a purely graph-theoretic result of its own independent interest, also had several interesting algorithmic implications 
for the $(\Delta+1)$ coloring problem owing to its ``sparsification'' nature: it is easy to see that by sampling only $O(\log{n})$ colors per vertex, the total number of edges that can ever become monochromatic while coloring $G$ from lists $L$
is with high probability only $O(n \cdot \log^2{n})$; at the same time we can safely ignore all other edges of $G$. This theorem thus reduces 
the $(\Delta+1)$ coloring problem, in a \emph{non-adaptive} way, to a list-coloring problem on a graph
with (potentially) much smaller number of edges. 

The aforementioned aspect of this palette sparsification is particularly appealing for the design of \emph{sublinear algorithms}---these 
are algorithms which require computational resources that are substantially smaller than the size of their input. Indeed, one of the interesting applications of this theorem, proven (among other things) in~\cite{AssadiCK19},
is a randomized algorithm for the $(\Delta+1)$ coloring problem that runs in $\Ot(n\sqrt{n})$\footnote{Here and throughout the paper, we use the notation $\Ot(f) := O(f \cdot \polylog(f))$ to suppress log-factors.} time; for sufficiently dense graphs, this is faster than even 
reading the entire input once! 

Palette sparsification in~\cite{AssadiCK19} was tailored specifically to the $(\Delta+1)$ coloring problem. 
Motivated by the ubiquity of graph coloring problems on one hand, and the wide range of applications of this palette sparsification result on the other hand, the following question is natural: 
\begin{quote}
\emph{What other graph coloring problems admit (similar) palette sparsification theorems?} 
\end{quote}
This is precisely the question we study in this work from both upper and lower bound fronts. 

\subsection{Our Contributions}\label{sec:results}

We consider palette sparsification beyond $(\Delta+1)$ coloring: when the number of available colors is much larger than $\Delta+1$, when it is much smaller, and when the number of available colors for vertices depend on 
``local'' parameters of the graph. We elaborate on each part below.  

\paragraph{$\bm{(1+\eps)\Delta}$ Coloring.} The palette sparsification theorem of~\cite{AssadiCK19} is shown to be tight in the sense that on some 
graphs, sampling $o(\log{n})$ colors 
per vertex from $\set{1,\ldots,\Delta+1}$, results in the sampled list-coloring instance to have no proper coloring with high probability. 
We prove that in contrast to this, if one allows for a larger number of available colors, then indeed we can obtain a palette sparsification with asymptotically smaller sampled lists.  
\begin{result}[Informal -- Formalized in Theorem~\ref{thm:ps-od-coloring}]\label{res:od}
	For any graph $G(V,E)$, sampling $O_{\eps}(\sqrt{\log{n}})$ colors per vertex from a set of size $(1+\eps)\Delta$ colors with high probability allows for a proper list-coloring of $G$ from the sampled lists. 
\end{result}
Result~\ref{res:od}, combined with the lower bound of~\cite{AssadiCK19}, provides a separation between $(\Delta+1)$ coloring and $(1+\eps)\Delta$ coloring in the context of palette sparsification. 
We also prove that the bound of $\Theta(\sqrt{\log{n}})$ sampled colors is (asymptotically) optimal in Result~\ref{res:od}. 

To prove Result~\ref{res:od}, we unveil a new connection between palette sparsification theorems and some of the classical list-coloring problems studied in the literature. In particular, 
several works in the past (see, e.g.~\cite{Reed99,Haxell01,ReedS02} and \cite[Proposition~5.5.3]{ProbBook}) have studied the following question: Suppose in a list-coloring instance on a graph $G$, we define the $c$-degree of a vertex-color pair $(v,c)$ as the number of neighbors of $v$ that also contain $c$ in their list; what conditions
on maximum $c$-degrees and minimum list sizes imply that $G$ is list-colorable from such lists? 

Palette sparsification theorems turned out to be closely related to these questions as the sampled lists in these results can be viewed through the lens of these list-coloring results. In particular, Reed and Sudakov~\cite{ReedS02} proved
that in the above question if the size of each list is larger than
the maximum $c$-degree by a $(1+o(1))$ factor, then $G$ is always
list-colorable. The question here is then whether or not
the lists sampled in Result~\ref{res:od} satisfy this condition
with high probability. The answer turns out to be \emph{no} as
sampling only $O(\sqrt{\log{n}})$ colors does not provide
the proper concentration needed for this guarantee. Despite this,
we show that one can still use~\cite{ReedS02} to prove
Result~\ref{res:od}  with a more delicate argument 
by applying~\cite{ReedS02} to carefully 
chosen subsets of the sampled lists. 

\paragraph{$\bm{O(\frac{\Delta}{\ln{\Delta}})}$ Coloring of Triangle-Free Graphs.} Even though $\chi(G)$ in general can be $\Delta+1$, many natural families of graphs have chromatic number (much) smaller than $\Delta+1$. 
One key example is the set of triangle-free graphs which are $O(\frac{\Delta}{\ln{\Delta}})$ colorable by a celebrated result of Johansson~\cite{Johansson96a} (this result was recently simplified and 
improved to $(1+o(1)) \cdot \frac{\Delta}{\ln{\Delta}}$ by Molloy~\cite{Molloy19}; see also~\cite{PettieS15,Bernshteyn19}). We prove a palette sparsification theorem tailored to these graphs. 

\begin{result}[Informal -- Formalized in Theorem~\ref{thm:ps-triangle-free}]\label{res:triangle-free}
	For any triangle-free graph $G(V,E)$, sampling $O(\Delta^{\gamma}+\sqrt{\log{n}})$ colors per vertex from a set of size $O_{\gamma}(\frac{\Delta}{\ln{\Delta}})$ colors with high probability allows for a proper list-coloring of $G$ from the sampled lists. 
\end{result}
 	Unlike Result~\ref{res:od} of our paper and the theorem of~\cite{AssadiCK19}, in this result we also have a dependence of $\Delta^{\gamma}$ on the number of sampled colors 
	(where the exponent depends on the number of available colors). We prove that this dependence is also necessary in this result (Proposition~\ref{prop:tf-optimal}). 

	The proof of Result~\ref{res:triangle-free} is also based on the aforementioned connection to list-coloring problems based on $c$-degrees. However, unlike the case for Result~\ref{res:od}, here we are not aware 
	of any such list-coloring result that allows us to infer Result~\ref{res:triangle-free}. As such, a  key part of the proof of Result~\ref{res:triangle-free} is exactly to establish such a result. Our proof for the corresponding 
	list-coloring problem is by the probabilistic method and in particular a version of the so-called ``R\"{o}dl Nibble'' or the ``semi-random method''; see, e.g.~\cite{Rodl85,ColoringBook}. 
	Similar to previous work on coloring triangle-free graphs, the main challenge here is to establish the desired concentration bounds. We do this following the approach of Pettie and Su~\cite{PettieS15} in their  
	distributed algorithm for coloring triangle-free graphs. 

We shall note that our proofs of Results~\ref{res:od} and~\ref{res:triangle-free} are almost entirely disjoint from the techniques in~\cite{AssadiCK19} and instead build on classical
work on list-coloring problems in the graph theory literature. 

\paragraph{Coloring with Local Lists Size.} Finally, we consider a coloring problem with ``local'' list sizes where the number of available colors for vertices 
depends on a local parameter, namely their degree as
opposed to a global parameter such as maximum degree. 

\begin{result}[Informal -- Formalized in Theorem~\ref{thm:ps-deg+1-coloring}]\label{res:deg+1}
	For any graph $G(V,E)$, sampling $O_{\eps}(\log{n})$ colors for each vertex $v$ with degree $\deg(v)$ from a set $S(v)$ of $(1+\eps) \cdot \deg(v)$ arbitrary colors or only $\deg(v)+1$ colors when the lists are the sets $\set{1,\ldots,\deg(v)+1}$, 
	allows for a proper coloring of $G$ from the sampled colors. 
\end{result}
	
	Coloring problems with local lists size have been studied
before in both the graph theory literature, e.g. in~\cite{DaviesVKP18,BonamyKNP2018} for coloring triangle-free graphs (and as pointed out by~\cite{DaviesVKP18}, the general idea
	goes all the way back to the notion of degree-choosability in one of the original list-coloring papers~\cite{ErdosRT79}), and theoretical computer science, e.g. in~\cite{ChangLP18}. 
	
	To be more precise, the first part of Result~\ref{res:deg+1} refers to the standard $(1+\eps)\deg$ \emph{list}-coloring problem and the second part corresponds to the so-called $(\deg+1)$ coloring problem
	introduced first (to our knowledge) in the recent work of Chang, Li, and Pettie~\cite{ChangLP18} (see also~\cite{AmirKKNP16} for an application of this problem). We remark that the $(\deg+1)$ coloring 
	problem is a generalization of the $(\Delta+1)$ coloring problem and hence our Result~\ref{res:deg+1} generalizes that of~\cite{AssadiCK19} (although technically we build on many of the ideas and tools
	developed in~\cite{AssadiCK19} for $\Delta+1$ coloring).   
	
	Our proof of Result~\ref{res:deg+1} takes a different route than  Results~\ref{res:od} and~\ref{res:triangle-free} that were based on list-coloring and instead we follow the approach of~\cite{AssadiCK19}  for the $(\Delta+1)$ coloring problem 
	(outlined in Appendix~\ref{sec:background-ACK19}). A fundamental challenge here is that the graph decomposition for partitioning vertices into sparse and dense parts that played
	a key role in~\cite{AssadiCK19} is no longer applicable to the $(\deg+1)$ coloring problem. We address this by ``relaxing'' the requirements of the decomposition 
and develop a new one that despite being somewhat ``weaker'' than 
	the ones for $(\Delta+1)$ coloring in~\cite{HarrisSS16,ChangLP18,AssadiCK19} (themselves  based on~\cite{Reed98}), takes into account the disparity between degrees of vertices in the $(\deg+1)$ coloring problem. 
	Similar to~\cite{AssadiCK19}, we then handle ``sparse''\footnote{Technically speaking, this decomposition allows for vertices that are neither sparse nor dense according to standard definitions and are key to extending the decomposition from
	$(\Delta+1)$ coloring to $(\deg+1)$ coloring.} and dense vertices of this decomposition separately but unlike~\cite{AssadiCK19}, here the main part of the argument is to handle these ``sparse'' vertices and the result
	for the dense part follows more or less directly from~\cite{AssadiCK19}. 
	
	We conclude this section by noting that our proof for $(1+\eps)\deg$-list coloring problem also immediately gives a palette sparsification result for obtaining a $(1+\eps)\kappa$-list coloring where $\kappa$ is the degeneracy of the graph
	(see Remark~\ref{rem:degeneracy-coloring}). This problem was studied very recently in the context of sublinear or ``space conscious'' algorithms by Bera, Chakrabarti, and Ghosh~\cite{BeraCG19} who also 
	proved, among many other interesting results, a lower bound
	that $(\kappa+1)$ coloring cannot be achieved via palette sparsification (see~\cite[Section 5.3]{BeraCG19} -- our result thus complements their lower bound. 
\subsection{Implication to Sublinear Algorithms for Graph Coloring} 

As stated earlier, one motivation in studying palette sparsification is in its application to design of sublinear algorithms. 
As was shown in~\cite{AssadiCK19}, these theorems imply sublinear algorithms in various models in ``almost'' a black-box way (see Section~\ref{sec:sublinear} for details). For concreteness, in this paper, we stick to their application
to the two canonical examples of streaming and sublinear-time algorithms. We only note in passing that exactly as in~\cite{AssadiCK19}, our results also imply new algorithms in models 
such as massively parallel computation (MPC) or distributed/linear sketching; see also~\cite{ChangFGUZ18,BeraCG19} for more recent results on graph coloring problems in these and related models. 

Our results in this part appear in Section~\ref{sec:sublinear}. Table~\ref{tab:sublinear} presents a summary of our sublinear algorithms and the directly related previous work (although our Result~\ref{res:od} implies
a separation between $(\Delta+1)$ and $(1+\eps)\Delta$ coloring for palette sparsification, the resulting sublinear algorithms from Result~\ref{res:od} are subsumed by the previous work in~\cite{BeraCG19} and hence are omitted from Table~\ref{tab:sublinear}). 

\paragraph{Sublinear Algorithms from Graph Partitioning.} Motivated by our results on sublinear algorithms for triangle-free graphs, we also consider sublinear algorithms for coloring other ``locally sparse'' graphs such as $K_r$-free graphs, locally $r$-colorable 
graphs, and graphs with sparse neighborhood. We give several results for these problems through a general algorithm based on the graph partitioning technique (see, e.g.~\cite{ChangFGUZ18, Parter18,ParterS18,BeraCG19}). Our results in this 
part are presented in Section~\ref{sec:vertex-sampling}. 

 \def\arraystretch{2}

\newsavebox{\tabone}

\sbox{\tabone}{
             {\small
             
        \centering
      
        \begin{tabular}{|c||c|c|c|c|}
             \hline
     
        \textbf{Problem} & \textbf{Graph Family} &  \textbf{Streaming} & \textbf{Sublinear-Time} & \textbf{Source} \\
             \hline 
             \hline
             
	     {$(\Delta + 1)$ Coloring}  & 	General 	&  $O(n\log^2{n})$ space  & $\Ot(n^{3/2}) $ time & \cite{AssadiCK19} \\	
   	     {$(1+\eps)\kappa $ Coloring} & $\kappa$-Degenerate & $O(n\log{n})$ space  & $\Ot(n^{3/2}) $ time & \cite{BeraCG19} \\	
	     {$O_{\gamma}(\frac{\Delta}{\ln{\Delta}})$ Coloring}  & Triangle-Free & $O(n \cdot \Delta^{\gamma})$ space  & $O(n^{3/2+\gamma}) $ time & \underline{our work} \\
	     {$(1+\eps)\deg$ List-Coloring}  & General & $O(n \cdot \log^2{n})$ space  & $\Ot(n^{3/2}) $ time & \underline{our work} \\
	    {$(\deg+1)$ Coloring}  & General & $O(n \cdot \log^2{n})$ space  & $\Ot(n^{3/2}) $ time & \underline{our work} \\
	   \hline
        \end{tabular}
      }
  }
  
 \begin{table}[t!]
\begin{tikzpicture}
   \node[fill=white](boz){};
  \node[drop shadow={black, shadow xshift=5pt,shadow yshift=-5pt, opacity=0.5}, fill=white, inner xsep=-7pt, inner ysep=0pt](table)[right=5pt of boz]{\usebox{\tabone}};
\end{tikzpicture}
\vspace{0.25cm}
          \caption{A sample of our sublinear algorithms as corollaries of Results~\ref{res:od},~\ref{res:triangle-free}, and~\ref{res:deg+1}, and the previous work in~\cite{AssadiCK19} and~\cite{BeraCG19} (for brevity, we assume
          $\eps,\gamma$ are constants). All streaming algorithms here are \emph{single-pass} and all sublinear-time algorithms except for $(1+\eps)\kappa$ coloring are \emph{non-adaptive}. 
        \label{tab:sublinear}}

    \end{table}



\newcommand{\bd}{\ensuremath{\bar{d}}}
\newcommand{\bC}{\ensuremath{\overline{C}}}
\newcommand{\hC}{\ensuremath{\widehat{C}}}

\newcommand{\Vr}{\ensuremath{V^{\textnormal{\textsf{rem}}}}}
\newcommand{\Gr}{\ensuremath{G^{\textnormal{\textsf{rem}}}}}

\renewcommand{\drem}[1]{\ensuremath{d^{\textnormal{\textsf{rem}}}(#1)}}

\newcommand{\GC}{\ensuremath{\textnormal{\textsf{GreedyColor}}}\xspace}

\renewcommand{\col}{\ensuremath{\mathcal{C}}}

\newcommand{\Vdense}{\ensuremath{V^{\textnormal{\textsf{dense}}}}\xspace}

\newcommand{\Vsparse}{\ensuremath{V^{\textnormal{\textsf{sparse}}}}\xspace}

\newcommand{\barS}{\overline{S}}

\newcommand{\mumin}{\mu_{\textnormal{\textsf{min}}}}
\newcommand{\mumax}{\mu_{\textnormal{\textsf{max}}}}

\section{Preliminaries}\label{sec:prelim}

\paragraph{Notation.} For any integer $t \geq 1$, we define $[t] := \set{1,\ldots,t}$. For a graph $G(V,E)$, we use $V(G) := V$ and $E(G) := E$ to denote the vertex-set and edge-set respectively. 
For a vertex $v \in V$, $N_G(v)$ denotes the neighborhood of $v$ in $G$ and $\deg_G(v) := \card{N_G(v)}$ denotes the degree of $v$ (when clear from the context, we may drop the subscript $G$).  For a vertex-set $U \subseteq V$, $G[U]$ denotes the induced 
subgraph of $G$ on $U$. 

When there are lists of colors $S(v)$ given to vertices $v$, we use the term \textbf{$\bm{c}$-degree} of $v$ to mean the number of neighbors $u$ of $v$ of with color $c$ in their list $S(u)$
and denote this by $\deg_S(v,c)$.   

Throughout, we use the term ``with high probability''  (w.h.p.) for an event to mean that the probability of this event happening is at least $1-1/n^{c}$ where $c$ is a sufficiently large constant. 

\subsection{Probabilistic Tools}\label{sec:prob}

We use the following standard probabilistic tools. 

\begin{proposition}[Lov\'asz Local Lemma -- symmetric form; 
cf.~\cite{ProbBook}]\label{prop:lll}
	 Let $\event_1,\ldots,\event_n$ be $n$ events such that 
each event $\event_i$ is mutually independent of all other events 
besides at most $d$, and $\Pr\paren{\event_i} \leq p$ for all $i \in [n]$. If $e \cdot p \cdot (d+1) \leq 1$ (where $e=2.71...$), then $\Pr\paren{\wedge_{i=1}^{n} \overline{\event_i}} > 0$. 
\end{proposition}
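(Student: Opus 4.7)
The plan is to establish the symmetric Lovász Local Lemma by first proving the standard asymmetric strengthening and then specializing. Concretely, I would introduce weights $x_i \in (0,1)$ (one per event) and prove by induction on $|S|$ that for every index $i$ and every $S \subseteq [n] \setminus \{i\}$,
\[
\Pr\!\paren{\event_i \mid \bigwedge_{j \in S} \overline{\event_j}} \leq x_i,
\]
provided that $\Pr(\event_i) \leq x_i \prod_{j \in D_i}(1-x_j)$, where $D_i$ denotes the (at most $d$) indices on which $\event_i$ depends. The symmetric statement will follow by choosing $x_i := 1/(d+1)$ for every $i$, since $e \cdot p \cdot (d+1) \leq 1$ combined with the elementary bound $(1-1/(d+1))^d \geq 1/e$ yields
\[
p \;\leq\; \frac{1}{e(d+1)} \;\leq\; \frac{1}{d+1}\paren{1 - \frac{1}{d+1}}^{d} \;=\; x_i \prod_{j \in D_i}(1-x_j),
\]
which is exactly the hypothesis required by the asymmetric form.

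For the inductive step I would split $S$ as $S_1 := S \cap D_i$ and $S_2 := S \setminus D_i$ and expand
\[
\Pr\!\paren{\event_i \mid \bigwedge_{j \in S} \overline{\event_j}} \;=\; \frac{\Pr\!\paren{\event_i \wedge \bigwedge_{j \in S_1} \overline{\event_j} \mid \bigwedge_{k \in S_2} \overline{\event_k}}}{\Pr\!\paren{\bigwedge_{j \in S_1} \overline{\event_j} \mid \bigwedge_{k \in S_2} \overline{\event_k}}}.
\]
The numerator is at most $\Pr(\event_i \mid \wedge_{k \in S_2} \overline{\event_k}) = \Pr(\event_i) \leq x_i \prod_{j \in D_i}(1-x_j)$, using that $\event_i$ is mutually independent of the events indexed by $S_2$. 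For the denominator I would enumerate $S_1 = \{j_1,\ldots,j_r\}$ and apply the inductive hypothesis repeatedly to each $\Pr(\event_{j_\ell} \mid \overline{\event_{j_1}} \wedge \cdots \wedge \overline{\event_{j_{\ell-1}}} \wedge \wedge_{k \in S_2}\overline{\event_k})$, obtaining a lower bound of $\prod_{\ell=1}^{r}(1-x_{j_\ell}) \geq \prod_{j \in D_i}(1-x_j)$. Dividing gives the desired $x_i$ bound.

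Finally, to conclude $\Pr(\wedge_{i=1}^{n} \overline{\event_i}) > 0$, I would write this probability as a telescoping product
\[
\Pr\!\paren{\bigwedge_{i=1}^{n} \overline{\event_i}} \;=\; \prod_{i=1}^{n} \Pr\!\paren{\overline{\event_i} \mid \bigwedge_{j<i}\overline{\event_j}} \;\geq\; \prod_{i=1}^{n}(1-x_i) \;>\; 0,
\]
where each factor is bounded below by $(1-x_i)$ using the conditional bound just established (with $S = \{1,\ldots,i-1\}$).

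The only genuinely delicate step is the inductive step above: one has to be careful to condition on $\wedge_{k \in S_2}\overline{\event_k}$ before invoking independence, and to order the events of $S_1$ so that the chain rule together with the inductive hypothesis produces the product $\prod_{j \in D_i}(1-x_j)$ that cancels exactly against the product appearing in the numerator's hypothesis. A minor side condition is verifying $\Pr(\wedge_{j \in S}\overline{\event_j}) > 0$ at each stage so that conditioning is well-defined; this follows from the same telescoping argument applied to proper subsets. Once this bookkeeping is done, the symmetric form drops out immediately from the choice $x_i = 1/(d+1)$.
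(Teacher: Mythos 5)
The paper does not prove this proposition; it is quoted as a known result with a citation to the standard reference, whose proof is exactly the argument you give (asymmetric form by induction on $|S|$, then specialization via $x_i = 1/(d+1)$ and $(1-1/(d+1))^d \ge 1/e$). Your write-up is correct, including the two delicate points you flag (conditioning on $\wedge_{k\in S_2}\overline{\event_k}$ before invoking mutual independence, and the well-definedness of the conditionings), so there is nothing to add.
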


\begin{proposition}[Chernoff-Hoeffding bound; cf.~\cite{ProbBook,MitzenmacherU17}]\label{prop:chernoff}
	Let $X_1,\ldots,X_n$ be $n$ independent random variables where each $X_i \in [0,b]$. Define $X:= \sum_{i=1}^{n} X_i$. Then, for any $t > 0$, 
	\begin{align*}
		\Pr\Paren{\card{X - \expect{X}} > t} \leq 2 \cdot \exp\paren{-\frac{2t^2}{n \cdot b^2}}.
	\end{align*}
	Moreover, for any $\delta \in (0,1)$, and $\mumin \leq \expect{X} \leq \mumax$: 
	\begin{align*}
		\Pr\paren{X > (1+\delta) \cdot \mumax} \leq \exp\paren{-\frac{\delta^2 \cdot\mumax}{3b}}, \qquad		\Pr\paren{X < (1-\delta) \cdot\mumin} \leq \exp\paren{-\frac{\delta^2 \cdot\mumin}{2b}}.
	\end{align*} 
\end{proposition}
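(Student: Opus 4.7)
The plan is to prove both the additive (Hoeffding) and the multiplicative (Chernoff) halves by the standard exponential-moment method. The common starting point is that, for any $\lambda \in \IR$ and $s > 0$, Markov's inequality applied to the nonnegative random variable $e^{\lambda X}$ yields $\prob{X > s} \leq e^{-\lambda s} \cdot \expect{e^{\lambda X}}$, and independence of the $X_i$'s factors the moment generating function as $\expect{e^{\lambda X}} = \prod_{i=1}^{n} \expect{e^{\lambda X_i}}$. The whole argument reduces to bounding each per-coordinate MGF and then optimizing $\lambda$ (taking $\lambda > 0$ for upper tails and $\lambda < 0$ for lower tails).

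For the additive bound I would first center by setting $Y_i := X_i - \expect{X_i}$, so each $Y_i$ lies in an interval of length at most $b$ and has mean zero. Hoeffding's lemma then gives $\expect{e^{\lambda Y_i}} \leq \exp\paren{\lambda^2 b^2 / 8}$, hence $\expect{e^{\lambda(X - \expect{X})}} \leq \exp\paren{n \lambda^2 b^2 / 8}$. Optimizing at $\lambda = 4t/(n b^2)$ produces $\prob{X - \expect{X} > t} \leq \exp\paren{-2 t^2/(nb^2)}$, the symmetric choice $\lambda = -4t/(nb^2)$ handles the lower tail, and a union bound yields the stated factor of $2$.

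For the multiplicative bounds I would rescale by $b$: set $Y_i := X_i / b \in [0,1]$ and $Y := X/b$. Convexity of $y \mapsto e^{\lambda y}$ on $[0,1]$ gives $\expect{e^{\lambda Y_i}} \leq 1 + \expect{Y_i}(e^{\lambda}-1) \leq \exp\paren{\expect{Y_i}(e^{\lambda}-1)}$, so $\expect{e^{\lambda Y}} \leq \exp\paren{\expect{Y}(e^{\lambda}-1)}$. Choosing $\lambda = \ln(1+\delta)$ and applying Markov produces the classical estimate $\prob{Y > (1+\delta)\expect{Y}} \leq \paren{e^{\delta}/(1+\delta)^{1+\delta}}^{\expect{Y}}$, which the elementary calculus inequality $e^{\delta}(1+\delta)^{-(1+\delta)} \leq e^{-\delta^2/3}$ for $\delta \in (0,1)$ turns into $\exp\paren{-\delta^2 \expect{Y}/3}$. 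An analogous computation with $\lambda < 0$ and the sharper bound $\ln(1-\delta) \leq -\delta - \delta^2/2$ yields $\exp\paren{-\delta^2 \expect{Y}/2}$ for the lower tail, and unrescaling via $\expect{Y} = \expect{X}/b$ produces the $1/(3b)$ and $1/(2b)$ factors in the statement.

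The last step is to replace $\expect{X}$ by the parameters $\mumax$ and $\mumin$. For the upper tail, setting $r := \mumax/\expect{X} \geq 1$, the event $\set{X > (1+\delta)\mumax}$ equals $\set{X > (1+\delta')\expect{X}}$ with $\delta' := r(1+\delta) - 1 \geq r\delta$, and then $\delta'^{2} \expect{X} \geq r^{2}\delta^{2}\expect{X} = r \delta^{2} \mumax \geq \delta^{2}\mumax$; the $\mumin$ case is entirely symmetric. The only substantive ingredients are Hoeffding's lemma and the elementary inequality controlling $e^{\delta}(1+\delta)^{-(1+\delta)}$; the rest is the routine exponential-moment bookkeeping, which is precisely why the paper quotes the result as a textbook fact instead of proving it.
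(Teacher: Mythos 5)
The paper does not prove this proposition at all---it is quoted as a textbook fact with citations---so the only question is whether your argument is correct. Your overall route (exponential moment method, Hoeffding's lemma for the additive bound, the convexity bound $\expect{e^{\lambda Y_i}} \leq \exp\paren{\expect{Y_i}(e^{\lambda}-1)}$ for the multiplicative bounds) is the standard one, and the first two paragraphs are fine.

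The gap is in your last step, where you pass from $\expect{X}$ to $\mumax$ and $\mumin$. For the upper tail you rewrite the event as $\set{X > (1+\delta')\expect{X}}$ with $\delta' = r(1+\delta)-1$ and then apply the bound $\exp\paren{-\delta'^2\expect{X}/(3b)}$; but that bound was derived only for $\delta' \in (0,1)$ (via $e^{\delta}(1+\delta)^{-(1+\delta)} \leq e^{-\delta^2/3}$), and $\delta'$ has no reason to be at most $1$ when $\mumax$ is much larger than $\expect{X}$. The inequality genuinely fails there: at $\delta'=4$ one has $e^{4}/5^{5} \approx 0.017$ while $e^{-16/3}\approx 0.005$. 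For the lower tail the reduction is not ``entirely symmetric'' either: with $r=\mumin/\expect{X}\le 1$ the analogous chain gives $\delta''^2\expect{X} \geq r^2\delta^2\expect{X} = r\delta^2\mumin$, which is \emph{weaker} than the needed $\delta^2\mumin$ since $r\le 1$ (one can still close this case by proving $\delta'' \ge \sqrt{r}\,\delta$, but that is not what you wrote). The clean fix for both tails is to insert $\mumax$ or $\mumin$ \emph{before} optimizing: since $\expect{e^{\lambda Y}} \leq \exp\paren{\expect{Y}(e^{\lambda}-1)}$ and $e^{\lambda}-1>0$ for $\lambda>0$ (resp.\ $<0$ for $\lambda<0$), one may replace $\expect{Y}$ by $\mumax/b$ (resp.\ $\mumin/b$) in the exponent, apply Markov at threshold $(1+\delta)\mumax/b$ (resp.\ $(1-\delta)\mumin/b$), and then the parameter in the final calculus inequality is the original $\delta\in(0,1)$, where it is valid.
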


A function $f(x_1,\ldots,x_n)$ is called \emph{$c$-Lipschitz} iff changing any single $x_i$ can affect the value of $f$ by at most $c$. Additionally, $f$ is called \emph{$r$-certifiable} iff whenever $f(x_1,\ldots,x_n) \geq s$, there exists
at most $r \cdot s$ variables $x_{i_1},\ldots,x_{i_{r \cdot s}}$ so that knowing the values of these variables certifies $f \geq s$.

\begin{proposition}[Talagrand's inequality; cf.~\cite{ColoringBook}]\label{prop:talagrand}
	Let $X_1,\ldots,X_n$ be $n$ independent random variables and $f(X_1,\ldots,X_n)$ be a $c$-Lipschitz function; then for any $t \geq 1$, 
	\begin{align*}
		\Pr\paren{\card{f - \expect{f}} > t } \leq 2 \exp\paren{-\frac{t^2}{2c^2 \cdot n}}.
	\end{align*} 
	 Moreover, if $f$ is additionally $r$-certifiable, then for any $b \geq 1$,
	\begin{align*}
		\Pr\paren{\card{f - \expect{f}} > b + 30c \sqrt{r \cdot \expect{f}}} \leq 4 \exp\paren{-\frac{b^2}{8c^2 r \expect{f}}}. 
	\end{align*} 
\end{proposition}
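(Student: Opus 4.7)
The plan is to prove the two bounds separately since they reflect two different concentration phenomena. The first bound does not actually use certifiability and is essentially a bounded-differences (McDiarmid) estimate; the second is the genuinely Talagrand-style bound that exploits $r$-certifiability to replace the ambient dimension $n$ by the much smaller quantity $r \cdot \expect{f}$.

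For the first bound I would take the Doob-martingale route. Set $M_i := \Exp\bracket{f(X_1,\ldots,X_n) \mid X_1,\ldots,X_i}$ for $i = 0,1,\ldots,n$, so $M_0 = \expect{f}$ and $M_n = f$. A standard coupling argument, conditioning on $X_i = a$ versus $X_i = a'$ and integrating over the remaining coordinates, combined with the $c$-Lipschitz hypothesis and the independence of the $X_j$, gives $\card{M_i - M_{i-1}} \leq c$ almost surely. The Azuma--Hoeffding inequality applied to this bounded-difference martingale then yields the stated $\exp\paren{-t^2/(2c^2 n)}$ tail bound.

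For the second bound I would invoke Talagrand's convex-distance inequality on the product probability space on which $X_1,\ldots,X_n$ live. Writing $J_y := \set{i : x_i \neq y_i}$, the convex distance from a point $x$ to a set $A$ is
\[
d_T(x, A) \;:=\; \sup_{\norm{\alpha}_2 \leq 1} \; \inf_{y \in A} \; \sum_{i \in J_y} \alpha_i,
\]
and Talagrand's master inequality states $\Exp\bracket{\exp\paren{d_T(X,A)^2/4}} \cdot \Pr\paren{X \in A} \leq 1$. Let $m$ be a median of $f$ and $A_m := \set{x : f(x) \leq m}$. The combinatorial heart of the argument is to show that for every $x$ with $f(x) \geq m$,
\[
f(x) - m \;\leq\; c \sqrt{r \cdot f(x)} \cdot d_T(x, A_m).
\]
The proof of this inequality exploits $r$-certifiability (a certifying set $I$ of size at most $r \cdot f(x)$ for the event $f \geq f(x)$ forces any $y \in A_m$ to disagree with $x$ on at least one coordinate of $I$) together with the $c$-Lipschitz control of $f(x) - f(y^*)$ by the Hamming distance to a nearest $y^* \in A_m$, with a Cauchy--Schwarz-style choice of $\alpha$ supported on $I$ converting Hamming distance into convex distance. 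Plugging this bound into the master inequality and applying Markov yields one-sided concentration around the median $m$; a symmetric argument controls the lower tail; and translating from the median to the mean costs only an additive $O(c\sqrt{r \cdot \expect{f}})$ term, which is absorbed into the constants $30$ and $8$ in the statement.

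The main obstacle is the convex-distance inequality itself, which is proved by an induction on the number of factors in the product space whose inductive step requires a delicate convex-analytic optimization and is substantially harder than either the combinatorial reduction above or the martingale proof of the first bound. For the paper the cleanest course is to cite this master inequality from \cite{ColoringBook} as a black box and present only the combinatorial reduction and the short martingale argument, exactly as the statement is cited here.
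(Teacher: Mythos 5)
The paper does not prove this proposition at all: it is stated in Section~2.1 as a standard probabilistic tool and cited from the graph-colouring literature as a black box, exactly as you suggest doing in your final paragraph. Your outline is the standard textbook proof and is correct: the first bound is McDiarmid/Azuma applied to the Doob martingale $M_i = \Exp[f \mid X_1,\ldots,X_i]$, whose increments are bounded by $c$ by a coupling argument, giving precisely the constant $2c^2 n$ in the exponent; the second bound is the usual reduction of $r$-certifiability plus $c$-Lipschitzness to the convex-distance inequality, via the inequality $f(x) - m \leq c\sqrt{r\cdot f(x)}\cdot d_T(x,A_m)$ obtained by splicing a point of $A_m$ with the certificate coordinates of $x$ and choosing $\alpha$ uniform on the certificate. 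The only place worth a word of caution is the median-to-mean translation at the end: making the additive error $O(c\sqrt{r\,\expect{f}})$ rather than $O(c\sqrt{r\,m})$ requires first showing $m = O(\expect{f})$ from the median-centered tail bound, and this step is exactly where the published constants of this inequality have historically needed correction. Since the paper only ever uses the inequality with $\Theta(1)$ slack, none of its applications are sensitive to those constants.
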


\subsection{List-Coloring with Constraints on Color-Degrees} We use the following result of Reed and Sudakov~\cite{ReedS02} on list-coloring of graphs with constraints on $c$-degrees of vertices. 

\begin{proposition}[\!\!\cite{ReedS02}]\label{prop:lc-eps}
   For every  $\eps > 0$ there exists a $d_0 := d_0(\eps)$ such that for all $d \geq d_0$ the following is true. 
   Suppose $G(V,E)$ is a graph with lists $S(v)$ for every $v \in V$ such that:
    \begin{enumerate}[label=(\roman*)]
    	\item for every vertex $v$, $\card{S(v)} \geq (1+\eps) \cdot d$, and
	\item for every vertex $v$ and color $c \in S(v)$, $\deg_S(v,c) \leq d$ (recall that $\deg_S(v,c)$ denotes the $c$-degree of $v$ which is the number of neighbors $u$ of $v$ with color $c \in S(u)$). 
    \end{enumerate} 
    Then, there exists a proper coloring of $G$ from these lists. 
\end{proposition}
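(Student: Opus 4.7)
My approach is the classical semi-random method (a.k.a.\ R\"{o}dl nibble), adapting the procedure to lists with bounded $c$-degrees. The plan is to construct the coloring over $T$ rounds of a Wasteful Coloring Procedure and finish off the residual instance directly. In round $i = 0, 1, \ldots, T-1$, every still-uncolored vertex $v$ independently picks a tentative color $c_v$ uniformly from its current list $S_i(v)$; $v$ commits to $c_v$ iff no still-uncolored neighbor tentatively picked the same color, and for every committed $u$ the color $c_u$ is removed from the list of each still-uncolored neighbor of $u$. Writing $\ell_i(v) := \card{S_i(v)}$ and $d_i(v,c) := \deg_{S_i}(v,c)$, the target invariant is that after round $i$, every uncolored vertex $v$ satisfies $\ell_i(v) \geq (1+\eps/2) \cdot \max_{c \in S_i(v)} d_i(v,c)$, while $\max_{v,c} d_i(v,c)$ decreases geometrically in $i$.

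The core is a one-round expected-change calculation. For fixed $v$ and $c \in S_i(v)$, a short computation expresses the probabilities that (a) $c$ survives in $S_i(v)$ and (b) a specific $c$-neighbor $u$ of $v$ remains uncolored with $c$ still in its list, as essentially equal-looking products over $1-1/\ell_i(\cdot)$ terms; the crucial discrepancy is that in (b) there is one \emph{extra} factor accounting for $u$ successfully committing to \emph{any} color, not just $c$. Consequently, $\expect{\ell_{i+1}(v)} \approx p \cdot \ell_i(v)$ and $\expect{d_{i+1}(v,c)} \approx p \cdot q \cdot d_i(v,c)$, for a common ``survival'' probability $p$ and a factor $q < 1$ bounded away from $1$ (uniformly in the instance once $d \geq d_0(\eps)$). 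This multiplicative gap is what drives $d_i$ down geometrically while keeping $\ell_i / d_i$ above $1+\eps/2$ at every round.

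The principal obstacle is concentration. Both $\ell_{i+1}(v)$ and $d_{i+1}(v,c)$ are functions of the random choices of vertices within distance $2$ of $v$, but each is $O(1)$-Lipschitz (a single vertex's tentative color alters only an $O(1)$ number of entries in these counts) and $O(1)$-certifiable (witnessing the survival of $c$ at $v$ requires only the choices of $v$, of its $c$-neighbors, and of those further neighbors that block the $c$-neighbors' commits). Talagrand's inequality (Proposition~\ref{prop:talagrand}) then yields
\[
   \prob{\card{\ell_{i+1}(v) - \expect{\ell_{i+1}(v)}} \geq \eps_i \cdot \expect{\ell_{i+1}(v)}} \leq \exp\paren{-d_i^{\Omega(1)}},
\]
and the analogous tail bound for $d_{i+1}(v,c)$, where $\eps_i$ is a round-dependent error term chosen so that $\sum_i \eps_i < \eps/2$. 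The residual uncertainty is absorbed into the slack and the invariant is preserved throughout.

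Finally, the per-pair deviation events are combined via the symmetric Lov\'asz Local Lemma (Proposition~\ref{prop:lll}): the bad event at $(v,c)$ depends only on random choices within distance $2$ of $v$, so its dependency degree is at most $\poly(d)$, while its probability is at most $\exp(-d^{\Omega(1)})$, and the LLL condition holds with room to spare as long as $d \geq d_0(\eps)$ is sufficiently large. After $T = O(\log d)$ rounds the invariant forces the maximum $c$-degree below a sufficiently small constant with $\ell_i \geq (1+\eps/2) d_i$ maintained throughout; the residual instance---now on a graph of bounded degree with a constant multiplicative gap between list sizes and $c$-degrees---is then finished by one more LLL application to a single round of the same procedure, where each uncolored vertex commits with constant probability and the remaining bad events again have bounded dependency.
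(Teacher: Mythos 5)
You should first note that the paper does not prove Proposition~\ref{prop:lc-eps} at all: it imports it as a black box from Reed and Sudakov~\cite{ReedS02} and only proves the weaker Proposition~\ref{prop:lc-const} (with $2ed$ in place of $(1+\eps)d$) via a single application of the Local Lemma. Your plan --- an iterated semi-random procedure analyzed with Talagrand plus the LLL in each round, finished off once the ratio of list size to $c$-degree is a large constant --- is the right template; it is both the strategy of~\cite{ReedS02} and the strategy the paper itself carries out in Appendix~\ref{app:lc-triangle} for the triangle-free analogue, Proposition~\ref{prop:lc-triangle}.

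The genuine gap is in the concentration step, which is the crux of any such argument. You assert that $\ell_{i+1}(v)$ and $d_{i+1}(v,c)$ are $O(1)$-Lipschitz and $O(1)$-certifiable; neither claim holds for your procedure. Certifying that a neighbor $u$ of $v$ both remains uncolored and retains $c$ in its list involves the universally quantified event ``no $c$-neighbor of $u$ committed to $c$,'' whose witness in the worst case reveals the choices of all $\Theta(d)$ $c$-neighbors of $u$; hence $d_{i+1}(v,c)$ is only $O(d)$-certifiable, which renders Talagrand's inequality useless at the scale required. Worse, $d_{i+1}(v,c)$ is not $O(1)$-Lipschitz: if a single vertex $u$ switches its tentative color away from $c_1$, then every pairwise non-adjacent neighbor of $u$ that picked $c_1$ and was blocked only by $u$ becomes committed simultaneously, and all of them may be $c$-neighbors of $v$, changing $d_{i+1}(v,c)$ by $\Theta(d)$. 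The paper flags exactly this obstruction twice: in the proof of Lemma~\ref{lem:first-sparse}, where $Z$ is noted to be only $\Omega(\deg(v))$-certifiable and must be decomposed as $Z = T - D$ into two $O(1)$-certifiable pieces, and in Appendix~\ref{app:lc-triangle}, where concentration of $\hb_i(v,c)$ requires the auxiliary quantity $\bp_i(v,c)$, an equalization step so that every color survives with exactly probability $\pkeep_i$, and a two-stage conditioning that crucially exploits triangle-freeness. Repairing this for general graphs is precisely the technical content of~\cite{ReedS02}; it cannot be obtained by invoking Talagrand with $c = r = O(1)$. Two smaller points: without equalizing survival probabilities your ``common $p$'' does not exist and $\ell_{i+1}(v)$ is not a sum of independent indicators; and the endgame needs the ratio $\ell_i/d_i$ to grow to at least $2e$ before a one-shot LLL finish applies, which your stated invariant $\ell_i \geq (1+\eps/2)\,d_i$ alone does not guarantee (though the geometric growth of the ratio implicit in your expected-change calculation would deliver it after further rounds).
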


A weaker version of this result obtained by replacing $(1+\eps)$ above with some absolute constant appeared earlier in~\cite{Reed99} (see also~\cite[Proposition~5.5.3]{ProbBook} and~\cite{Haxell01}). 
For some of our proofs, we only require this weaker version whose easy proof is provided below for completeness.  

\begin{proposition}[cf.~\cite{Reed99}]\label{prop:lc-const}
Suppose $G(V,E)$ is a graph with lists $S(v)$ for every $v \in V$ such that $\card{S(v)} \geq \ceil{2ed}$ (where $e=2.71...$) and for every color $c \in S(v)$, $c$-degree of $v$ is at most $d$. Then, there exists a proper
coloring of $G$ from these lists. 
\end{proposition}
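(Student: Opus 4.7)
The plan is to apply the Lovász Local Lemma (Proposition~\ref{prop:lll}) after randomly assigning each vertex a color uniformly from its list. The main obstacle is choosing the right granularity for the bad events: the naive choice of one bad event per edge does not directly exploit the color-degree bound, since the degrees of the graph are not controlled by $d$. The key idea is to index the bad events by both edges \emph{and} colors.

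Concretely, I would assign each vertex $v \in V$ an independently and uniformly chosen color $c(v) \in S(v)$. For every edge $e = \{u,v\}$ and every common color $c \in S(u) \cap S(v)$, let $\event_{e,c}$ be the event that $c(u) = c(v) = c$. Observe that if none of the $\event_{e,c}$ occur, the random assignment is a proper coloring of $G$ from the given lists, so it suffices to show $\Pr(\wedge_{e,c}\, \overline{\event_{e,c}}) > 0$.

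For the two LLL parameters: since the color choices at $u$ and $v$ are independent and uniform on $S(u)$ and $S(v)$,
\[
\Pr(\event_{e,c}) \;=\; \frac{1}{|S(u)|\,|S(v)|} \;\leq\; \frac{1}{\ceil{2ed}^{\,2}},
\]
so we can take $p = 1/(|S(u)||S(v)|)$. The event $\event_{e,c}$ depends only on $c(u)$ and $c(v)$, and therefore is mutually independent of the family of all $\event_{e',c'}$ with $e' \cap \{u,v\} = \emptyset$. To count the remaining events, note that for each $c' \in S(u)$, the number of neighbors $w$ of $u$ with $c' \in S(w)$ is exactly $\deg_S(u,c') \leq d$ by hypothesis; summing over $c' \in S(u)$, the number of events $\event_{e',c'}$ with $u \in e'$ is at most $|S(u)| \cdot d$. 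The analogous count holds for $v$. Hence the dependency degree satisfies $D + 1 \leq (|S(u)| + |S(v)|)\, d$.

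Plugging these into the symmetric LLL condition,
\[
e \cdot p \cdot (D+1) \;\leq\; e \cdot \frac{1}{|S(u)|\,|S(v)|} \cdot \bigl(|S(u)| + |S(v)|\bigr)\,d \;=\; e\,d\left(\frac{1}{|S(u)|} + \frac{1}{|S(v)|}\right) \;\leq\; \frac{2ed}{\ceil{2ed}} \;\leq\; 1,
\]
so Proposition~\ref{prop:lll} yields $\Pr(\wedge_{e,c}\,\overline{\event_{e,c}}) > 0$. Any outcome in this event gives a proper list-coloring of $G$, completing the proof. The only delicate point is the per-vertex counting argument that converts the color-degree condition into the $|S(u)|\cdot d$ bound on the dependency count; once that is in place, the probability and dependency terms combine cleanly because the $|S(u)|, |S(v)|$ factors cancel between $p$ and $D+1$.
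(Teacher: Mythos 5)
Your proposal is correct and takes essentially the same route as the paper: assign each vertex an independent uniform color from its list, index the bad events by pairs $(e,c)$ with $c\in S(u)\cap S(v)$, bound $\Pr(\event_{e,c})$ and the dependency count using the $c$-degree hypothesis, and invoke the symmetric Lov\'asz Local Lemma. Your write-up of the dependency count (summing $\deg_S(u,c')$ over $c'\in S(u)$) is, if anything, a bit more explicit than the paper's terse "$2d(2ed)-1$"; the one cosmetic point is that the per-event cancellation $e\,d(1/|S(u)|+1/|S(v)|)\le 1$ is cleanest to justify via the standard WLOG of truncating each $S(v)$ to exactly $\ceil{2ed}$ colors, which makes $p$ and the dependency bound uniform so Proposition~\ref{prop:lll} applies verbatim -- the paper does this implicitly as well.
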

\begin{proof}
	Pick a color for each vertex $v$ independently and uniformly 
at random from $S(v)$. For an edge $e = (u,v) \in E$ and each color
$c$ that appears in $S(u) \cap S(v)$, 
define an event $\event_{e,c}$ as the event that both endpoints $u$ and $v$ of $e$ have chosen $c$ as their color. 
	Clearly, $\Pr\paren{\event_{e,c}} \leq 1/(2ed)^2$.
On the other hand, each $\event_{e,c}$ is mutually independent of
all other events
$\event_{e',c'}$ besides those where 
$e$ and $e'$ share a vertex and $c'$ is contained in both end-points 
of $e'$. The total number of such events 
	is at most $2d (2ed)-1$.  The proof now follows from Lov\'asz Local Lemma (Proposition~\ref{prop:lll}) as there is an assignment 
of colors to vertices in which none of the events $\event_{e,c}$ happens.
\end{proof}


\section{Two New Palette Sparsification Theorems}\label{sec:ps-global}

We present our new palette sparsification theorems in Result~\ref{res:od} and Result~\ref{res:triangle-free} in this section. 



\newcommand{\bad}{\ensuremath{\textnormal{\textsf{bad}}}}
\newcommand{\hL}{\ensuremath{\widehat{L}}}

\subsection{Palette Sparsification for $(1+\eps)\Delta$ Coloring}\label{sec:od-coloring}

We start with our improved palette sparsification theorem for $(1+\eps)\Delta$ coloring.

\begin{theorem}\label{thm:ps-od-coloring}
    For every $\eps \in (0,1/2)$, there exists an integer $n_0(\eps) \geq 1$ such that the following is true. 
    Let $G(V,E)$ be any graph with $n \geq n_0(\eps)$ vertices and maximum degree $\Delta$, and define $C := C(\eps) = {(1+\eps)\cdot \Delta}$. Suppose for every vertex $v \in V$,
    we independently sample a set $L(v)$ of colors of size $O\paren{\sqrt{\log{n}}/\eps^{1.5}}$ uniformly at random  from colors $\set{1,\ldots,C}$. Then, 
    with high probability, there exists a proper coloring of $G$ from lists $L(v)$ for every $v \in V$. 
\end{theorem}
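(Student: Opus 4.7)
The plan is to apply the list-coloring theorem of Reed and Sudakov (Proposition~\ref{prop:lc-eps}) to carefully chosen subsets of the sampled lists. Write $k := \card{L(v)}$ for the sample size (taken to be $C_0 \sqrt{\log n}/\eps^{1.5}$ for a sufficiently large absolute constant $C_0$) and set the threshold $d := k/(1+\eps/2)$. For each vertex $v$, call a color $c \in [C]$ \emph{bad at $v$} if $\deg_L(v,c) := \card{\set{u \in N_G(v) : c \in L(u)}}$ exceeds $d$, and set $B_v := \card{\set{c \in L(v) : c \text{ is bad at } v}}$ and $L'(v) := L(v) \setminus \set{c : c \text{ is bad at } v}$. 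The central claim is that, with high probability, $B_v \leq \eps k/6$ for every $v$ simultaneously. Granting the claim, $\card{L'(v)} \geq k(1-\eps/6) \geq (1+\eps/4) \cdot d$ (an easy calculation using $\eps \leq 1$), and $\deg_{L'}(v,c) \leq \deg_L(v,c) \leq d$ for every $c \in L'(v)$ by construction; thus Proposition~\ref{prop:lc-eps} with parameter $\eps/4$ (whose hypothesis $d \geq d_0(\eps/4)$ is satisfied for $n \geq n_0(\eps)$) yields a proper coloring of $G$ from the lists $L'$, which is also a valid coloring from the lists $L$.

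To establish the claim I first bound $\expect{B_v}$. Since $\expect{\deg_L(v,c)} = \deg_G(v) \cdot k/C \leq k/(1+\eps)$ and $d/(k/(1+\eps)) = 1+\Theta(\eps)$, the Chernoff bound (Proposition~\ref{prop:chernoff}) with deviation $\delta = \Theta(\eps)$ gives $\prob{\deg_L(v,c) > d} \leq \exp(-\Omega(\eps^2 k))$. Since $L(v)$ is independent of $\set{L(u)}_{u \in N_G(v)}$, linearity of expectation yields $\expect{B_v} = k \cdot \prob{\deg_L(v,c) > d} \leq k \exp(-\Omega(\eps^2 k))$. For our sample size, $\eps^2 k = \Theta(\sqrt{\eps \log n})$, so $\expect{B_v}$ decays faster than any fixed negative power of $\log n$.

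Next I concentrate $B_v$ via Talagrand's inequality (Proposition~\ref{prop:talagrand}), viewing $B_v$ as a function of the at most $\Delta+1$ independent set-valued variables $L(v)$ and $\set{L(u)}_{u \in N_G(v)}$. Modifying any single $L(w)$ changes $B_v$ by at most $2k$ (it either flips the bad/good status of at most $2k$ colors, when $w \in N_G(v)$, or flips the membership of at most $2k$ colors in $L(v)$, when $w = v$), so $B_v$ is $(2k)$-Lipschitz. Moreover, $B_v \geq s$ is certified by revealing $L(v)$ (which identifies $s$ bad colors $c_1,\ldots,c_s \in L(v)$) together with, for each $c_i$, at most $d+1$ neighbor lists whose contents witness $\deg_L(v,c_i) > d$, at most $1+s(d+1) \leq (d+2)s$ variables in total, so $B_v$ is $O(d) = O(k)$-certifiable. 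Setting $\mu := \expect{B_v}$ and $b := \eps k/12$, the certifiable form of Proposition~\ref{prop:talagrand} gives
\begin{align*}
  \prob{B_v > \mu + b + O(k\sqrt{k\mu})} \leq 4 \exp(-\Omega(b^2/(k^3\mu))).
\end{align*}
Because $\mu$ is smaller than any fixed negative power of $\log n$ once $n \geq n_0(\eps)$, both $\mu$ and the correction $O(k\sqrt{k\mu})$ are $o(\eps k)$ and the right-hand side is at most $1/n^{c_1}$ for any desired constant $c_1$. A union bound over the $n$ vertices then yields $B_v \leq \eps k/6$ simultaneously for all $v$ with high probability, completing the proof.

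The principal obstacle is that $k = \Theta(\sqrt{\log n}/\eps^{1.5})$ is too small for a direct Chernoff-plus-union-bound argument to secure the $(1+\Theta(\eps))$-gap between list sizes and maximum $c$-degrees required by Reed and Sudakov: typical fluctuations of a single $c$-degree around its mean are $\Theta(\sqrt{k\log n}) = \Theta(\log^{3/4}n/\eps^{3/4})$, already swallowing the $\Theta(\eps k)$ slack. The crux is therefore to observe that, although no individual $c$-degree can be pinned down so precisely, the \emph{number} $B_v$ of bad colors actually landing in $L(v)$ has a tiny expectation, and to extract concentration around this tiny expectation using the $r$-certifiable form of Talagrand's inequality with $r = O(d)$.
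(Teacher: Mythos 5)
Your proof is correct, and its overall architecture is exactly the paper's: define a color $c$ to be bad at $v$ when $\deg_L(v,c)$ exceeds a threshold about an $\eps$-factor below the list size, show that w.h.p.\ only an $O(\eps)$-fraction of each $L(v)$ is bad, prune those colors, and feed the pruned lists to Proposition~\ref{prop:lc-eps}. The one place you diverge is the concentration step for the number of bad colors per vertex, which is indeed the crux. The paper bounds $\Pr\paren{\bad(v)\geq \eps\ell/4}$ directly by a union bound over all $(\eps\ell/4)$-subsets of $L(v)$, multiplying the single-color Chernoff bound $\exp(-\Theta(\eps^2\ell))$ across the subset; this requires the observation that the events ``$c$ is bad for $v$'' are negatively correlated across colors (sampling without replacement), which the paper invokes somewhat informally, and it yields $2^{\ell}\cdot\exp(-\Theta(\eps^3\ell^2))\leq n^{-5}$. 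You instead compute $\expect{B_v}\leq k\exp(-\Omega(\eps^2 k))$, which is super-polylogarithmically small, and then concentrate $B_v$ around this tiny mean via the $r$-certifiable form of Talagrand's inequality with $c=O(k)$ and $r=O(k)$; your bookkeeping of the Lipschitz and certifiability constants and of the exponent $\Omega(\eps^2/(k\mu))\gg\log n$ checks out, as does the verification that the correction term $O(k\sqrt{k\mu})$ is $o(\eps k)$. Your route is slightly heavier machinery but sidesteps the negative-correlation claim entirely; the paper's is more elementary but leans on that dependency argument. Both correctly identify and overcome the same obstacle, namely that a single $c$-degree is not concentrated tightly enough at scale $\ell=\Theta(\sqrt{\log n})$ for a direct union bound.
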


We shall note that in contrast to Theorem~\ref{thm:ps-od-coloring}, it was shown in~\cite{AssadiCK19} that for the more stringent problem of $(\Delta+1)$ coloring, sampling $\Omega(\log{n})$ colors per vertex is necessary. 
As such, Theorem~\ref{thm:ps-od-coloring} presents a separation between these two problems in the context of palette sparsification. 

\subsubsection*{Proof of Theorem~\ref{thm:ps-od-coloring}}
	The proof of this theorem is by showing that the lists sampled for vertices can be adjusted so that they satisfy the requirement of Proposition~\ref{prop:lc-eps}; we then apply this proposition to 
	obtain a list-coloring of $G$ from the sampled lists. Let $\ell := \paren{20\sqrt{\log{n}}/\eps^{1.5}}$ denote the number of sampled colors per vertex. 
	
	Recall that $\deg_L(v,c)$ denotes the $c$-degree of vertex $v$ with respect to lists $L$. For every $c \in L(v)$,
	\begin{align}
		\expect{\deg_L(v,c)} := \sum_{u \in N(v)} \Pr\paren{\textnormal{$u$ samples $c$ in $L(u)$}} \leq \Delta \cdot \frac{\ell}{C} = \frac{\ell}{1+\eps}. \label{eq:od1}
	\end{align}
	
	Now \emph{if} $\deg_L(v,c)$ was concentrated enough so that $\max_{v,c} \deg_L(v,c) = (1-\Theta(\eps)) \cdot \ell$, we would have been done already: by Proposition~\ref{prop:lc-eps}, 
	there is always a proper coloring of $G$ from such lists (take the parameter $d$ to be $\max_{v,c}\deg_L(v,c)$ and so size of each list is $(1+\Theta(\eps))d$). 
	Unfortunately however, it is easy to see that as $\ell = \Theta(\sqrt{\log{n}})$ in general no such  concentration is guaranteed. 
	
	We fix the issue above by showing existence of a \emph{subset} $\hL(v)$ of each list $L(v)$ such that these new lists can indeed be used in Proposition~\ref{prop:lc-eps}. The argument is intuitively as follows: the probability that 
	$\deg_L(v,c)$ deviates significantly from its expectation 
is $2^{-\Theta(\ell)} = 2^{-\Theta(\sqrt{\log{n}})}$ by a simple Chernoff bound. Moreover, the probability that $\Omega(\sqrt{\log{n}})$ colors in $L(v)$ all deviate from their expectation 
	can be bounded by $ \paren{2^{-\Theta(\sqrt{\log{n}})}}^{\Omega(\sqrt{\log{n}})}$ (ignoring dependency issues for the moment). This probability is now $n^{-\Theta(1)}$, enough for us to take a union bound over all vertices. As such, 
	by removing some fraction of the 
colors from the list of each vertex, we can  satisfy the $c$-degree requirements for applying Proposition~\ref{prop:lc-eps} and conclude the proof. We now formalize this.

	We say that a color $c \in L(v)$ is \emph{bad} for $v$ iff $\deg_L(v,c) > (1+\eps/2) \cdot \frac{\ell}{1+\eps}$.  As the choice of color $c$ for each vertex $u \in N(v)$ is independent, 
	by  Eq~\eqref{eq:od1} and Chernoff bound (Proposition~\ref{prop:chernoff}), 
	\begin{align}
		\Pr\Paren{\deg_L(v,c) > (1+\eps/2) \cdot \frac{\ell}{1+\eps}} \leq \exp\paren{-\frac{\eps^2}{12} \cdot \frac{\ell}{1+\eps}}.  \label{eq:od2}
	\end{align}

Define $\bad(v)$ as the number of colors $c$ in $L(v)$ that are bad for vertex $v$. We note that by the sampling process in Theorem~\ref{thm:ps-od-coloring}, conditioning on some colors being bad for $v$ can only reduce the chance of the remaining colors being bad for $v$. As such, by Eq~\eqref{eq:od2},  
	\begin{align*}
		\Pr\Paren{\bad(v) \geq \eps/4 \cdot \ell} &\leq {{\ell}\choose{\eps/4 \cdot \ell}} \cdot \exp\paren{-\frac{\eps^2}{12} \cdot \frac{\ell}{1+\eps}}^{\eps \cdot \ell/4} &\leq 2^{\ell} \cdot \exp\paren{-\frac{\eps^3}{72} \cdot \ell^2} \leq \exp\paren{-5\log{n}} \tag{by the choice of 
		$\ell = 20\sqrt{\log{n}}/\eps^{1.5}$ and as $\eps < 1/2$ is sufficiently smaller than $n$}.
	\end{align*}
	
	By a union bound over all $n$ vertices, with high probability, for every vertex $v$, $\bad(v) \leq \eps\cdot\ell/4$. We let $\hL(v)$ to be a subset of $L(v)$ obtained by removing 
	all bad colors from $L(v)$. For any $c \in \hL(v)$: 
	\begin{align*}
	\deg_{\hL}(v,c) \leq \deg_{L}(v,c) \leq (1+\eps/2) \cdot \frac{\ell}{1+\eps} \leq (1-\eps/3) \cdot \ell. \tag{for $\eps < 1/2$}
	\end{align*}
	On the other hand, as $\bad(v) \leq \eps\cdot\ell/4$, we have $\card{\hL(v)} \geq (1-\eps/4) \cdot \ell$. As such, by Proposition~\ref{prop:lc-eps} (as $\eps$ is a constant with respect to $\ell$), 
	we can list-color $G$ from lists $\hL$ and consequently also $L$, finalizing the proof. \Qed{Theorem~\ref{thm:ps-od-coloring}}

\subsubsection*{Asymptotic Optimality of the Bounds in Theorem~\ref{thm:ps-od-coloring}}\label{sec:od-lower}

We give a simple proof of the (asymptotic) optimality of $O(\sqrt{\log{n}})$ sampled colors in Theorem~\ref{thm:ps-od-coloring}. That is, if we instead sample slightly smaller number of colors per each vertex, then
there are graphs where, w.h.p., 
the resulting list-coloring instance has no proper coloring. For concreteness, we focus on $2\Delta$ coloring; it will be evident how to extend this to other choices of $O(\Delta)$ coloring. 

\begin{proposition}\label{prop:od-optimal}
  There exists an $n$-vertex graph $G$ with maximum degree $\Delta = 0.5\sqrt{\log{n}}$ such that if for each vertex $v \in V$,
    we independently pick a set $L(v)$ of colors with size $\ell=0.5\sqrt{\log{n}}$ uniformly at random from $2\Delta$ colors, then, 
    with probability $1-o(1)$, there exists no proper coloring of $G$ such that for all vertices $v \in V$ color of $v$ is chosen from $L(v)$. 
\end{proposition}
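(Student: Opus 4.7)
The plan is to let $G$ be the disjoint union of $k := \lfloor n/(\Delta+1) \rfloor$ copies of the clique $K_{\Delta+1}$, padded with isolated vertices if needed so that $|V(G)| = n$. Then $G$ has maximum degree exactly $\Delta = 0.5\sqrt{\log n}$, and since the lists on distinct cliques are sampled independently, $G$ fails to be list-colorable from $L$ whenever any single copy of $K_{\Delta+1}$ does. Thus it suffices to show that, with probability $1-o(1)$, at least one of the $k$ cliques receives lists admitting no proper coloring.

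The key step is a reduction via Hall's marriage theorem: a proper coloring of $K_{\Delta+1}$ from lists $L(v_1), \ldots, L(v_{\Delta+1})$ exists iff these sets admit a system of distinct representatives. A Hall-violating subset $S$ would require $\bigl|\bigcup_{v \in S} L(v)\bigr| < |S|$, which is impossible for any $|S| \leq \Delta$, since even a single list already contributes $\ell = \Delta \geq |S|$ colors. The only remaining option is $S = V(K_{\Delta+1})$, where the union of all $\Delta+1$ lists must have size at most $\Delta$; but each list has size exactly $\Delta$ and lies inside this union, forcing every list to equal this union and hence to be identical. Consequently the per-clique failure probability is exactly
\[
q \;=\; \binom{2\Delta}{\Delta} \cdot \binom{2\Delta}{\Delta}^{-(\Delta+1)} \;=\; \binom{2\Delta}{\Delta}^{-\Delta}.
\]

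Using the crude bound $\binom{2\Delta}{\Delta} \leq 4^{\Delta}$ together with $\Delta^2 = \tfrac{1}{4}\log n$ yields $q \geq 4^{-\Delta^2} = n^{-1/2}$. Since the $k = \Omega(n/\sqrt{\log n})$ cliques are vertex-disjoint, their failure events are mutually independent, so the probability that none fails is at most
\[
(1-q)^k \;\leq\; \exp(-kq) \;\leq\; \exp\!\bigl(-\Omega(\sqrt{n}/\sqrt{\log n})\bigr) \;=\; o(1),
\]
as desired.

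The main obstacle is really just the Hall's-theorem reduction in the second paragraph: recognizing that, for a clique whose vertices each receive a list of size equal to their degree, the \emph{only} list configuration forbidding a proper coloring is the degenerate one in which every list coincides. Once this structural observation is in place, the remainder is a calibrated counting argument — the constants $\Delta = \ell = 0.5\sqrt{\log n}$ in the statement are chosen precisely so that the per-clique failure probability $n^{-1/2+o(1)}$ comfortably beats the $n^{1-o(1)}$ number of cliques.
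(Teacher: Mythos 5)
Your proof is correct and takes essentially the same approach as the paper: a disjoint union of $(\ell+1)$-cliques, a per-clique failure probability of order $n^{-1/2}$, and independence across cliques. The one difference is cosmetic: you use Hall's theorem to pin down the \emph{exact} failure condition (all lists in a clique identical), giving failure probability $\binom{2\Delta}{\Delta}^{-\Delta}$, whereas the paper just lower-bounds the failure probability by the chance that every vertex in the clique samples the specific set $\{1,\ldots,\ell\}$, giving $\binom{2\ell}{\ell}^{-(\ell+1)}$; both are $\geq n^{-1/2}$ for large $n$, so the Hall's-theorem step is a nice structural observation but not needed.
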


\begin{proof}
	Consider a graph $G$ which is a collection of $(\ell+1)$-cliques $C_1,\ldots,C_k$ for $k = n/(\ell+1)$. As such, maximum degree
	of this graph is $\Delta = \ell$. For a clique $C_i$, let $L(C_i) := \cup_{v \in C_i} L(v)$ denote the set of sampled
	colors for vertices in $C_i$. As we are sampling the colors from a set of size $2\Delta=2\ell$ colors, and by the independence across vertices in their choice of colors, we have, 
	\begin{align*}
		\Pr\paren{L(C_i) = \set{1,\ldots,\ell}} = {{2\ell}\choose{\ell}}^{-(\ell+1)} \geq \paren{2^{2\ell-2}}^{-(\ell+1)} \geq 2^{-2\ell^2} = n^{-1/2}, 
	\end{align*} 
	by the choice of $\ell = 0.5\sqrt{\log{n}}$. Using the fact that $n/(\ell+1) = \omega(n^{1/2})$ and that the event above is independent across the cliques, with probability $1-o(1)$, there exists a 
	clique $C_i$ in which $L(C_i) = \set{1,\ldots,\ell}$. This clique clearly cannot be colored using the colors $L(v)$ for $v \in C_i$.
\end{proof}


\newcommand{\FGf}{\ensuremath{\FG^{-{K_3}}_{n,p}}}

\newcommand{\LL}{\ensuremath{\mathcal{L}}}

\newcommand{\passign}{\ensuremath{p_{\textnormal{\textsf{assign}}}}}
\newcommand{\Astar}{\ensuremath{A^{\star}}}
\newcommand{\astar}{\ensuremath{a^{\star}}}
\newcommand{\pkeep}{\ensuremath{{\textnormal{\textsf{keep}}}}}
\newcommand{\pcolor}{\ensuremath{{\textnormal{\textsf{color}}}}}

\newcommand{\wasteful}{\ensuremath{\textnormal{\textsf{WastefulColoring}}}\xspace}

\newcommand{\bp}{\ensuremath{\widetilde{b}}}
\newcommand{\sbp}{\ensuremath{\widetilde{sb}}}
\newcommand{\II}{\ensuremath{\mathbb{I}}}
\newcommand{\Bp}{\ensuremath{\widetilde{B}}}

\subsection{Palette Sparsification for Triangle-Free Graphs}\label{sec:triangle-free}

We now prove a palette sparsification theorem for triangle-free graphs. 

\begin{theorem}\label{thm:ps-triangle-free}
	   Let $G(V,E)$ be any $n$-vertex triangle-free graph with maximum degree $\Delta$. Let $\gamma \in (0,1)$ be a parameter and define $C := C(\gamma) = \Paren{\frac{9\Delta}{\gamma \cdot \ln{\Delta}}}$. 
	   Suppose for every vertex $v \in V$,
    we independently sample a set $L(v)$ of size $b \cdot (\Delta^{\gamma}
+ \sqrt{\log{n}})$ uniformly at random  from colors
$\set{1,\ldots,C}$ for an appropriate absolute positive constant $b$. Then, 
    with high probability there exists a proper coloring of $G$ from lists $L(v)$ for every vertex $v \in V$. 
\end{theorem}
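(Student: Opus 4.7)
The plan is to reduce Theorem~\ref{thm:ps-triangle-free} to a new list-coloring theorem for triangle-free graphs, which we would prove by the R\"odl Nibble / semi-random method. Writing $\ell := b(\Delta^{\gamma}+\sqrt{\log{n}})$ for the sampled list size, a direct computation shows that for every $v\in V$ and every $c\in L(v)$,
$\Exp[\deg_L(v,c)] \leq \Delta\cdot\ell/C = (\gamma/9)\cdot\ell\cdot\ln{\Delta}$.
Thus the expected $c$-degree is \emph{larger} than the list size by a factor $\Theta(\ln\Delta)$, so neither Proposition~\ref{prop:lc-eps} nor Proposition~\ref{prop:lc-const} applies off the shelf and we must genuinely exploit triangle-freeness.

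The central ingredient I would prove is a list-coloring analogue of the Johansson--Molloy theorem: there exist constants $K$ and $d_0$ such that every triangle-free graph $H$ equipped with lists $S(v)$ obeying $|S(v)| \geq K\cdot d/\ln{d}$ and $\deg_S(v,c) \leq d$ for all $v$ and all $c\in S(v)$ (with $d\geq d_0$) is list-colorable from $S$. The proof runs the nibble for a suitable number of rounds; in each round every surviving vertex activates independently with some small probability $\passign$, tentatively picks a uniformly random color from its current list, and commits to that color iff no neighbor picks the same tentative color, at which point the vertex leaves the graph and its color is deleted from its neighbors' lists. The invariant I plan to maintain is that the list size $\ell_i(v)$ and the maximum residual $c$-degree $d_i(v,c)$ evolve with per-round retention rates $\pkeep$ and $\pcolor$ satisfying $\pcolor < \pkeep$ by a margin of $\Omega(1/\ln{\Delta})$, so that after enough rounds the ratio $d_i(v,c)/\ell_i(v)$ drops from $\Theta(\ln{\Delta})$ to an absolute constant and the residual instance can be finished by Proposition~\ref{prop:lc-const}. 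This favorable gap between $\pkeep$ and $\pcolor$ is exactly where triangle-freeness enters: the uncolored neighbors of a vertex $v$ form an independent set and therefore make \emph{independent} color choices, rather than colluding through a shared neighbor of $v$. Concentration of $\ell_i(v)$ and $d_i(v,c)$ at every round is obtained via Talagrand's inequality (Proposition~\ref{prop:talagrand}), following the template of Pettie--Su~\cite{PettieS15}.

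To deduce Theorem~\ref{thm:ps-triangle-free} from this list-coloring lemma we verify its hypotheses for the sampled lists $L$. The list-size hypothesis holds by construction. For the $c$-degree hypothesis we reuse the trimming trick from the proof of Theorem~\ref{thm:ps-od-coloring}: by the Chernoff bound (Proposition~\ref{prop:chernoff}), any single color $c\in L(v)$ has $\deg_L(v,c) > (1+o(1))\cdot(\gamma/9)\ell\ln{\Delta}$ with probability $\exp(-\Omega(\ell\ln{\Delta}))$, and a binomial union bound exactly as in the proof of Theorem~\ref{thm:ps-od-coloring} shows that w.h.p. every $v$ has at most an $o(1)$ fraction of such bad colors in $L(v)$; deleting them produces $\hL(v)\subseteq L(v)$ satisfying the hypothesis with $d=(1+o(1))(\gamma/9)\ell\ln{\Delta}$ and $|\hL(v)| \geq (1-o(1))\ell \geq K\cdot d/\ln{d}$. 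The two summands in $\ell$ play distinct roles: the $\Delta^{\gamma}$ term supplies the polynomial slack the nibble needs for the Talagrand bounds to survive the union bound across rounds, vertices, and colors; the $\sqrt{\log{n}}$ term is what makes the $\ell^2\ln{\Delta}$ exponent in the trimming calculation dominate $\log{n}$ in the regime where $\Delta$ is small. The main obstacle will be the nibble analysis itself --- simultaneously controlling the coupled invariants $\ell_i(v)$ and $d_i(v,c)$ across many rounds and verifying the Lipschitz and certifiability hypotheses of Talagrand at each round is the technically demanding step, and the sole place where triangle-freeness is genuinely used.
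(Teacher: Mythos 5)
Your plan is essentially the paper's: the list-coloring lemma you postulate is exactly the paper's Proposition~\ref{prop:lc-triangle}, proved in Appendix~\ref{app:lc-triangle} by a Pettie--Su-style nibble, and the reduction of the theorem to it via a Chernoff bound on $\deg_L(v,c)$ plus the trimming trick from Theorem~\ref{thm:ps-od-coloring} is also how the paper proceeds. One caveat on execution: the paper explicitly notes that the residual $c$-degrees $b_i(v,c)$ are \emph{not} individually concentrated, and therefore tracks an averaged invariant $\ss_i(v)$ (a convex combination of the mean $c$-degree over $A_i(v)$ and the cap $2\betaid_i$), establishing concentration by plain Chernoff after conditioning so that the relevant sums are over independent variables---your stated plan of controlling each $d_i(v,c)$ directly via Talagrand would need to be replaced by such an averaging device.
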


It is known that there are triangle-free graphs with chromatic number $\Omega(\frac{\Delta}{\ln{\Delta}})$~\cite{Bollobas78} (In fact this bound holds even for graphs with arbitrarily large girth not only girth $>3$). 
Theorem~\ref{thm:ps-triangle-free} then shows that one can match the chromatic number of these graphs asymptotically
by sampling only a small number of colors per vertex (almost as small as $O(\Delta^{o(1)} + \sqrt{\log{n}})$ in the limit). 

\subsubsection*{Proof of Theorem~\ref{thm:ps-triangle-free}}

As we already saw in the proof of Theorem~\ref{thm:ps-od-coloring}, looking at the sampled lists $L(v)$ of vertices as a list-coloring problem with constraints on $c$-degrees can be quite helpful in proving the corresponding palette sparsification result. 
We take the same approach in proving Theorem~\ref{thm:ps-triangle-free} as well. However, unlike for $(1+\eps)\Delta$ coloring, to the best of our knowledge, no such list-coloring results (with  constraints on $c$-degrees 
\emph{instead} of maximum degree) are known
for coloring triangle-free graphs. Our main task here is then exactly to prove such a result formalized as follows. 

\begin{proposition}\label{prop:lc-triangle}
   There exists an absolute constant $d_0$ such that for all $d \geq d_0$ the following holds. 
   Suppose $G(V,E)$ is a triangle-free graph with lists $S(v)$ for every $v \in V$ such that:
   \begin{enumerate}[label=(\roman*)]
    	\item for every vertex $v$, $\card{S(v)} \geq 8 \cdot \frac{d}{\ln{d}}$, and
	\item for every vertex $v$ and color $c \in S(v)$, $\deg_S(v,c) \leq d$. 
    \end{enumerate} 
    Then, there exists a proper coloring of $G$ from these lists. 
\end{proposition}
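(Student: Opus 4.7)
The plan is to prove Proposition~\ref{prop:lc-triangle} via the semi-random (R\"{o}dl nibble) method, iterating a \emph{wasteful coloring} subroutine in the style of Johansson, Molloy, and Pettie--Su~\cite{PettieS15}. Writing $G_t$ and $S_t$ for the surviving graph and lists after round $t$, I would maintain the invariants $\card{S_t(v)} \geq \ell_t$ and $\deg_{S_t}(v,c) \leq d_t$ for every surviving $v$ and every $c \in S_t(v)$, starting at $(\ell_0,d_0) = (\lceil 8d/\ln d\rceil, d)$ and arranged so that the relation $\ell_t \geq 8 d_t/\ln d_t$ is preserved throughout. Once $d_t$ drops below an absolute constant (at which point $\ell_t/d_t$ comfortably exceeds $2e$), Proposition~\ref{prop:lc-const} applied to $G_t$ with lists $S_t$ completes the coloring.

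Each round proceeds as follows. Every $v$ activates independently with probability $p = p(d_t)$; if activated, $v$ picks a tentative color $\gamma(v)$ uniformly from $S_t(v)$, and $v$ is permanently colored iff no neighbor of $v$ picks the same tentative color. Every remaining $v$ then deletes from $S_t(v)$ the colors just assigned to its neighbors, and finally an \emph{equalizing step} of independent per-color coin flips trims $S_t(v)$ further so that both $\card{S_{t+1}(v)}$ and $\deg_{S_{t+1}}(v,c)$ concentrate sharply at their target values $\ell_{t+1}$ and $d_{t+1}$.

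The triangle-free hypothesis enters in the expectation analysis. For fixed $v$ and $c \in S_t(v)$, any two distinct neighbors $u_1, u_2 \in N(v)$ with $c \in S_t(u_i)$ are non-adjacent, so the events ``$u_i$ is permanently colored with $c$'' are (conditionally) independent across $i$. A short calculation then yields $\Pr\bracket{c \text{ survives in } S_{t+1}(v)} \gtrsim (1 - p/\ell_t)^{d_t}$ and $\Exp\bracket{\deg_{S_{t+1}}(v,c)} \lesssim (1-p)\, d_t\, (1 - p/\ell_t)^{d_t}$. Choosing $p = \Theta(1/\ln d_t)$, the ratio $\ell_t/d_t$ gains a factor $\approx 1/(1-p)$ per round, which---accounting for the matched decay of $\ln d_t$---lets me maintain $\ell_t \geq 8 d_t / \ln d_t$ over the $T = \Theta(\log d)$ rounds needed to drive $d_t$ to a constant.

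The hardest step, and the main obstacle, is concentration. I would apply Talagrand's inequality (Proposition~\ref{prop:talagrand}) to $\card{S_{t+1}(v)}$ and to $\deg_{S_{t+1}}(v,c)$ as functions of the activation bits, tentative colors, and equalizing flips of $v$ together with its $2$-neighborhood; the equalizing step, as in~\cite{PettieS15}, is precisely what forces these functionals to be Lipschitz with a small certification parameter, delivering deviation bounds of the form $\exp(-d_t^{\Omega(1)})$ per $(v,c)$ pair. These are then absorbed through the Lov\'asz Local Lemma (Proposition~\ref{prop:lll}), whose dependency graph has radius $O(1)$, to preserve the worst-case invariants across the next round. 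Concentration weakens in the last $O(\log\ln d)$ rounds, when $d_t$ is only polylogarithmic in $d$; there, following Pettie--Su, I would replace Talagrand by a direct local-lemma argument on that short tail before handing the residual instance to Proposition~\ref{prop:lc-const}.
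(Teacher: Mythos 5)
Your overall architecture matches the paper's: a wasteful-coloring/nibble iteration in the style of Pettie--Su, an equalizing coin-flip step, triangle-freeness used to decouple the relevant events, concentration combined with the Lov\'asz Local Lemma to push invariants to the next round, and a final application of Proposition~\ref{prop:lc-const} once the list-size/color-degree ratio is comfortable. However, there is a genuine gap at the step you yourself identify as the hardest one. You propose to maintain the worst-case invariant $\deg_{S_t}(v,c)\leq d_t$ by showing that $\deg_{S_{t+1}}(v,c)$ ``concentrates sharply'' at $d_{t+1}$ for every pair $(v,c)$. This fails for two reasons. First, for a color $c$ currently at the maximum allowed $c$-degree, the expectation of $\deg_{S_{t+1}}(v,c)$ is essentially equal to the new target $d_{t+1}$, so no concentration bound can keep it below the target; one would have to compound multiplicative slack across all $\Theta(\log^2 d)$ rounds. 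Second, and more fundamentally, for a \emph{fixed} color $c$ the variable $\deg_{S_{t+1}}(v,c)=\sum_{u}\mathbb{I}[c\text{ survives at }u,\ u\text{ uncolored}]$ is not amenable to Talagrand or Chernoff: two neighbors $u_1,u_2$ of $v$ are non-adjacent in a triangle-free graph but may share many common neighbors, and a single such common neighbor being assigned $c$ simultaneously deletes $c$ from both of their lists. Hence the summands are correlated, and the function is neither $O(1)$-Lipschitz nor $O(1)$-certifiable in the underlying random choices. The paper states this obstruction explicitly (``$b_i(v,c)$'s are not concentrated'') and it is exactly why the per-pair invariant cannot be carried by concentration alone.

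The paper circumvents this with two devices that your outline lacks and that cannot be omitted. It enforces the $c$-degree upper bound \emph{deterministically}, by discarding from $A_{i+1}(v)$ every color whose intermediate $c$-degree exceeds $2\betaid_{i+1}$; and it controls the list-size loss caused by this truncation through a potential function $\eta_i(v)$, the \emph{average} $c$-degree over $A_i(v)$ padded with the value $2\betaid_i$ for each color by which $A_i(v)$ falls short of the ideal size $\alphaid_i$. Only this average over colors concentrates (the sum over $c$ of the per-color contributions involves independent randomness across colors, and triangle-freeness gives the disjointness $B_i(u,c)\cap B_i(v,c)=\emptyset$ needed to decouple survival of $c$ at $v$ from the count at its neighbors), and the induction is run on $\max_v \eta_i(v)\leq(1+\eps)\betaid_i$ rather than on $\max_{v,c} b_i(v,c)$ directly; the worst-case bounds on list sizes and $c$-degrees are then recovered from this single invariant. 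Your remaining deviations from the paper are cosmetic (one tentative color per activated vertex versus independent per-color assignment; Talagrand versus Chernoff after the independence structure is set up; a special tail phase that the paper avoids because list sizes stay at least $d^{\delta}$ throughout), but without the truncation-plus-averaged-potential mechanism the inductive step does not close.
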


A word of interpretation is in order. It is known that any triangle-free graph $G$ with maximum degree $\Delta$ is $O(\frac{\Delta}{\ln{\Delta}})$ (list-)colorable~\cite{Johansson96a,Molloy19}. 
However, in Proposition~\ref{prop:lc-triangle}, the maximum degree
of a vertex can be as large as $\Theta(d^2/\ln{d})$ even after
omitting all edges between adjacent vertices with disjoint lists,
while the size of each list is only $O(d/\ln{d})$. (In fact this is precisely the setting of parameters we will be interested in while 
proving Theorem~\ref{thm:ps-triangle-free}). Proposition~\ref{prop:lc-triangle} shows that even in this case, as long as the $c$-degrees are bounded by $d$, we can list-color the graph 
with $O(d/\ln{d})$ colors  (similar to Proposition~\ref{prop:lc-eps} for $(1+\eps)\Delta$ coloring)\footnote{It is worth mentioning that 
transforming results on maximum degree to ones on
maximum $c$-degree in general is a non-trivial task and not even always true: it was shown in~\cite{BohmanH02} that there are graphs and lists so that $c$-degree of every vertex is $d$ and still the graph is not $d+1$ list-colorable (even though every graph 
is $(\Delta+1)$ list-colorable).}.  

We give the proof of Theorem~\ref{thm:ps-triangle-free} assuming Proposition~\ref{prop:lc-triangle} here. The proof of Proposition~\ref{prop:lc-triangle} itself  is  technical and detailed and thus even though  interesting on its own, 
we opted to postpone it to Appendix~\ref{app:lc-triangle} to preserve the flow of the paper. 

\begin{proof}[Proof of Theorem~\ref{thm:ps-triangle-free}]
	We prove this theorem with the weaker bound of $O(\Delta^{\gamma} + {\log{n}})$ (as opposed to $O(\Delta^{\gamma} + \sqrt{\log{n}})$) for the number of sampled colors. The extension to the improved bound with $O(\sqrt{\log{n}})$ dependence
	is exactly as in the proof of Theorem~\ref{thm:ps-od-coloring} and is thus omitted. 
	
	Let $\ell := \paren{\Delta^{\gamma}+3000\ln{n}}$ and suppose each vertex samples $\ell$ colors from $\set{1,\ldots,C}$ for $C:= C(\gamma) = \paren{\frac{9\Delta}{\gamma \cdot \ln{\Delta}}}$. 
	Let $p := \ell/C$ which is equal to the probability that any vertex $v$ samples a particular color in $L(v)$. We have, 
	\begin{align*}
		\expect{\deg_L(v,c)} = \sum_{u \in N(v)} \Pr\paren{\text{$u$ samples $c$ in $L(u)$}} \leq p \cdot \Delta. 
	\end{align*}
	Note that as $p \cdot \Delta \geq p \cdot C = \ell \geq 3000\ln{n}$, a simple application of Chernoff bound (Proposition~\ref{prop:chernoff}) plus union bound ensures that, for every vertex $v$ and color $c$, $\deg_L(v,c) \leq (1.1) \cdot p \Delta$ with high probability. In the following, we condition
	on this event. 
	
	Let $d:= (1.1) \cdot p\Delta$. By the above conditioning, $c$-degree of every vertex $v \in V$ is at most $d$. In order to apply Proposition~\ref{prop:lc-triangle} to graph $G$ with lists $L$, we only 
	need to prove that $\ell \geq \frac{8d}{\ln{d}}$. We prove that in fact $\ell \cdot \ln{\ell} \geq 8d$ which implies the desired bound as $\ell = p \cdot C \leq p \cdot \Delta \leq d$. We have, 
	\begin{align*}
		\ell \cdot \ln{\ell} &\geq \paren{p \cdot C} \cdot \ln{\paren{\Delta^{\gamma}}} \tag{as $\Delta^{\gamma} < \ell = p \cdot C$ and by the choice of $C$} 
		= p \cdot \paren{\frac{9\Delta}{\gamma \cdot \ln{\Delta}}} \cdot \gamma \cdot \ln{\Delta} = 9 \cdot p \Delta > 8d.
	\end{align*}
	The proof now follows from applying Proposition~\ref{prop:lc-triangle} to lists $L$. \Qed{Theorem~\ref{thm:ps-triangle-free}}
	
\end{proof}

\subsubsection*{Asymptotic Optimality of the Bounds in Theorem~\ref{thm:ps-triangle-free}}

We now prove the optimality of Theorem~\ref{thm:ps-triangle-free} up to constant factors. 

\begin{proposition}\label{prop:tf-optimal}
    There exists a distribution on $n$-vertex graphs with maximum 
degree $\Delta = \Theta(n^{1/3})$ such that for 
every $\gamma < 1/16$ and $C:= C(\gamma) = \frac{\Delta}{16\gamma \cdot \ln{\Delta}}$ the following is true. Suppose we sample a graph $G(V,E)$ 
from this distribution and then for each vertex $v \in V$,
    we independently pick a set $L(v)$ of colors with size $\Delta^{\gamma}$ uniformly at random from colors $\set{1,\ldots,C}$; then, 
    with high probability there exists no proper coloring of $G$ where for all $v \in V$ color of $v$ is chosen from $L(v)$. 
\end{proposition}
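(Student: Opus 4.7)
The plan is a first-moment calculation over all proper $C$-colorings of $G$. I will take $G$ uniformly distributed on the set of triangle-free $n$-vertex graphs with exactly $m = \lfloor n^{4/3}/2 \rfloor$ edges. By standard concentration results for this model (Kleitman--Winston type counts, or the analysis of the triangle-free process), with high probability $\Delta(G) = (1+o(1)) \cdot 2m/n = \Theta(n^{1/3})$, matching the $\Delta$ in the statement.

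Since $L$ is independent of $G$ and sampled uniformly, for each fixed $f \colon V \to [C]$ that is a proper coloring of $G$ we have $\Pr_L[f(v) \in L(v)\text{ for every }v] = (\ell/C)^n = p^n$ where $p = 16\gamma \Delta^{\gamma-1} \ln\Delta$. Hence by linearity of expectation,
\[
\mathbb{E}_{G,L}[\#\text{proper $L$-list-colorings of } G] = \mathbb{E}_G[P(G,C)] \cdot p^n,
\]
where $P(G,C)$ is the chromatic polynomial of $G$ at $C$. For any coloring $f$ with class sizes $(n_c)$, the number of monochromatic pairs satisfies $|M_f| = \sum_c \binom{n_c}{2} \geq n^2/(2C) - n/2$ by convexity, and a Kleitman--Winston style estimate on the number of triangle-free $m$-edge graphs (valid in the range $m = o(n^{3/2})$) gives
\[
\Pr_G[f\text{ is proper on }G] \leq \frac{\binom{\binom{n}{2}-|M_f|}{m}}{\binom{\binom{n}{2}}{m}}\cdot (1+o(1)) \leq \exp\Paren{-\frac{m\,|M_f|}{\binom{n}{2}}(1-o(1))}.
\]
Summing over all $C^n$ colorings and using $m/C = \Theta(\gamma n \ln \Delta)$ yields $\mathbb{E}_G[P(G,C)] \leq C^n \exp(-m(1-o(1))/C) \leq \Delta^{n(1 - 8\gamma + o(1))}$. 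Multiplying by $p^n = \Delta^{n(\gamma - 1)+o(n)}$,
\[
\mathbb{E}[\#\text{list-colorings}] \leq \Delta^{-\Omega(\gamma n)} = n^{-\omega(1)},
\]
and Markov's inequality then gives $\Pr[\text{a proper list-coloring of $G$ exists}] = n^{-\omega(1)}$, which is the desired conclusion.

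\paragraph{Main obstacle.} The principal technical step is justifying the ratio bound on $\Pr_G[f\text{ is proper}]$ in the uniform triangle-free model, which amounts to showing that the number of triangle-free $m$-edge graphs is $(1-o(1))\binom{\binom{n}{2}}{m}$ when $m = \Theta(n^{4/3})$; this is the regime well below the Bohman--Keevash density $\Theta(n^{3/2}\sqrt{\ln n})$ of the triangle-free process and follows from standard counts of triangle-free graphs, but it is the main quantitative estimate needed. A cleaner alternative is the Erd\H{o}s alteration of $G(2n,\, c_0/n^{2/3})$---delete every vertex involved in at least one triangle---for which the first-moment computation against $G_0$ itself is essentially identical. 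The added difficulty there is that the resulting triangle-free subgraph $G \subsetneq G_0$ satisfies $P(G, C) \geq P(G_0, C)$, so the bound on $\mathbb{E}[P(G_0, C)]$ does not transfer automatically, and one must track the $o(n)$ deleted vertices explicitly and argue against the conditional edge distribution of the altered graph.
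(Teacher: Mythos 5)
Your first-moment architecture is sound, but the key quantitative claim you rest it on is false as stated. For $m=\Theta(n^{4/3})$ the expected number of triangles in a uniform $m$-edge graph is $\Theta(m^3/n^3)=\Theta(n)$, so the proportion of $m$-edge graphs that are triangle-free is $e^{-\Theta(n)}$, not $1-o(1)$; the Bohman--Keevash density of the triangle-free \emph{process} is irrelevant to how many $m$-edge graphs are triangle-free. Consequently the denominator in your ratio $\Pr_G[f\text{ proper}]=\#\{\text{triangle-free, avoiding }M_f\}/\#\{\text{triangle-free}\}$ is smaller than $\binom{\binom{n}{2}}{m}$ by a factor $e^{\Theta(n)}$, and your $(1+o(1))$ must be replaced by $e^{O(n)}$ (which one gets from a Harris/Janson lower bound on $\Pr[G(n,m)\text{ triangle-free}]$, upper-bounding the numerator by the unconditional count). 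Fortunately this is absorbable: with $C=\Theta(n^{1/3}/(\gamma\ln n))$ one has $m|M_f|/\binom{n}{2}\geq \frac{8}{3}\gamma n\ln n(1-o(1))$ while $C^np^n=\Delta^{\gamma n}=e^{\gamma n\ln n/3}$, so the expectation is still $e^{-\frac{7}{3}\gamma n\ln n+O(n)}\to 0$ for constant $\gamma$. But as written the step you yourself flag as ``the main quantitative estimate'' is wrong, and the repaired version needs the triangle-free lower bound spelled out; likewise the assertion that $\Delta(G)=(1+o(1))2m/n$ w.h.p.\ under the triangle-free conditioning is stated without justification (it is true, but again requires transferring concentration through the conditioning, or switching to the alteration model you mention, which brings its own bookkeeping).

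For comparison, the paper takes a different and more robust route that sidesteps conditioning entirely: it samples $G\sim\FG_{n,p}$ with $p=\frac{1}{3}n^{-2/3}$ and deletes every edge lying in a triangle, fixes a worst-case list assignment by averaging, and then argues structurally rather than by a union bound over colorings. The key quantity is $\mu_L(G)$, the maximum total size of disjoint independent sets $U_c\subseteq V_c$ (where $V_c$ is the set of vertices whose list contains $c$); a proper list-coloring forces $\mu_L(G)=n$. The paper bounds $\Exp[\mu_L(G)]\leq \Exp[\sum_c\alpha(H[V_c])]+3\Exp[t(H)]<\frac{7n}{8}$ using the independence number bound $\alpha(G_{n',p})\leq 3\ln(n'p)/p$ and the triangle count, then concludes by concentration. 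Your approach, once repaired, actually yields a stronger failure probability ($e^{-\Omega(\gamma n\ln n)}$ versus the paper's $1-o(1)$) and is arguably more direct, but it pays for this with the delicate triangle-free enumeration; the paper's argument trades sharpness for the ability to work in the unconditioned $\FG_{n,p}$ model throughout.
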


Let $\FG_{n,p}$ denote the Erd\H{o}s-R\'enyi 
distribution of random graphs on $n$ vertices in which each edge is chosen independently with probability $p$. 
Define the following distribution $\FGf$ on triangle-free graphs: Sample a graph $G$ from $\FG_{n,p}$, then remove \emph{every} edge that was part of a triangle originally. 
Clearly, the graphs output by $\FGf$ are triangle-free. Throughout this section, we take $p = \Theta(n^{-2/3})$ (the exact choice of the leading constant will be determined later). 

We prove Proposition~\ref{prop:tf-optimal} by considering the distribution $\FGf$. However, we first present some basic properties 
of distribution $\FG_{n,p}$ needed for our purpose. The proofs are simple exercises in random graph theory and are provided in Appendix~\ref{app:random-graph-theory} for completeness. 
In the following, let $t(G)$ denote the number of triangles in $G$ and $\alpha(G)$ denote the maximum independent set size, and 
recall that $\Delta(G)$ denotes the maximum degree of $G$. 

\begin{lemma}\label{lem:random-triangle}
	For $G \sim \FG_{n,p}$, $\expect{t(G)} \leq (np)^3$, and $t(G) \leq (1+o(1))\expect{t(G)}$ w.h.p.
\end{lemma}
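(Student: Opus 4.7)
The plan is to split the two claims and handle them by linearity of expectation and a variance/Chebyshev style second moment argument, respectively.

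For the expectation bound, I would let $X_T$ be the indicator that a fixed triple $T \in \binom{V}{3}$ spans a triangle in $G \sim \FG_{n,p}$, so that $t(G) = \sum_T X_T$ and, since a triangle requires all three of its edges to be present, $\Pr[X_T = 1] = p^3$. Linearity of expectation then gives $\expect{t(G)} = \binom{n}{3} p^3 \leq \frac{n^3 p^3}{6} \leq (np)^3$, which is the bound we want.

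For the concentration, I would compute $\var(t(G)) = \sum_{T,T'} \cov(X_T,X_{T'})$ and partition the pairs $(T,T')$ according to how many edges $T$ and $T'$ share. Pairs sharing no edge (whether vertex-disjoint or sharing only a single vertex) yield independent indicators and contribute zero covariance. The only nontrivial contribution comes from pairs sharing exactly one edge: there are $O(n^4)$ such ordered pairs, each contributing at most $p^5$ to the covariance. Together with the diagonal terms, which contribute $\sum_T \var(X_T) \leq \binom{n}{3} p^3 = \Theta(\expect{t(G)})$, this gives $\var(t(G)) = O(n^4 p^5) + O(\expect{t(G)})$. With the choice $p = \Theta(n^{-2/3})$ that underlies Proposition~\ref{prop:tf-optimal}, we have $n^4 p^5 = O(n^{2/3})$ and $\expect{t(G)} = \Theta(n)$, so $\var(t(G)) = O(n) = O(\expect{t(G)})$. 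Chebyshev's inequality then yields
\begin{align*}
\Pr\!\left[\,|t(G) - \expect{t(G)}| > \delta \cdot \expect{t(G)}\,\right] \leq \frac{\var(t(G))}{\delta^2\, \expect{t(G)}^2} = O\!\left(\frac{1}{\delta^2 n}\right),
\end{align*}
so choosing any $\delta = \delta(n) = o(1)$ with $\delta^2 n \to \infty$ (e.g.\ $\delta = n^{-1/4}$) gives $t(G) \leq (1+o(1))\expect{t(G)}$ with probability $1-o(1)$.

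The main conceptual step is the covariance count: one must recognize that the covariance contribution from triangles sharing only a vertex vanishes (since the two triangles then involve disjoint edge sets), so the entire second-moment cost is concentrated on edge-sharing pairs, whose count $O(n^4)$ combined with the co-occurrence probability $p^5$ turns out to be negligible compared to $\expect{t(G)}^2 = \Theta(n^2)$ in our regime. If one wanted the stronger polynomial-decay version of ``w.h.p.''\ as defined in the preliminaries, I would upgrade the Chebyshev step to a Kim--Vu polynomial concentration inequality (or a higher-moment argument) for the triangle-count polynomial, which in the regime $p \geq n^{-2/3}$ yields tails of the form $n^{-\omega(1)}$ around the mean; but for the downstream use in Proposition~\ref{prop:tf-optimal} only the $1-o(1)$ bound is actually needed.
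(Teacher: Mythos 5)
Your expectation bound is identical to the paper's, which computes $\expect{t(G)} = \binom{n}{3}p^3 \leq (np)^3$ by linearity exactly as you do. For the concentration claim, the paper merely cites a reference and gives no argument, whereas you supply a self-contained second-moment proof: you partition the covariance sum of $t(G) = \sum_T X_T$ by the number of shared edges, observe that triangles sharing at most one vertex have disjoint edge sets and hence zero covariance, and bound the $O(n^4)$ single-edge-sharing pairs by $p^5$ each and the diagonal by $\expect{t(G)}$, so that at $p = \Theta(n^{-2/3})$ one has $\var{t(G)} = O(n^4p^5) + O(n) = O(n)$ against $\expect{t(G)}^2 = \Theta(n^2)$ and Chebyshev finishes. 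The bookkeeping is correct: sharing two edges forces the same triangle, so the only nontrivial off-diagonal contribution is the one you account for. The two trade-offs, both of which you already flag yourself, are that your argument is specialized to $p = \Theta(n^{-2/3})$ whereas the lemma is stated for general $p$, and that Chebyshev gives only a $1-o(1)$ success probability rather than the $1-n^{-c}$ that the paper's formal definition of ``w.h.p.'' requires; a Kim--Vu or Janson-type polynomial concentration inequality would recover the stronger tail. Neither gap matters downstream, since Proposition~\ref{prop:tf-optimal} only invokes the lemma at this choice of $p$ and only needs the $1-o(1)$ version, so your more elementary, self-contained route is adequate for the paper's purposes and more informative than the paper's citation-only treatment.
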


\begin{lemma}\label{lem:g5-random}
	For $G \sim \FG_{n,p}$, $\expect{\alpha(G)} \leq  
\frac{3 \cdot \ln{(np)}}{p}$, and $\alpha(G)  \leq  \frac{3 \cdot
\ln{(np)}}{p}$ w.h.p.
\end{lemma}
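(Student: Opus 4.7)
The plan is the standard first-moment / union-bound argument over independent sets. For any fixed vertex subset $S \subseteq V$ with $\card{S} = k$, the event that $S$ is independent in $G \sim \FG_{n,p}$ has probability exactly $(1-p)^{\binom{k}{2}}$, since the $\binom{k}{2}$ potential internal edges are included independently. Taking a union bound over the $\binom{n}{k}$ such sets $S$, and using the standard estimates $\binom{n}{k} \leq (en/k)^k$ and $1-p \leq e^{-p}$, I obtain
\begin{align*}
\Pr\bracket{\alpha(G) \geq k} \;\leq\; \binom{n}{k}(1-p)^{\binom{k}{2}} \;\leq\; \exp\Paren{k \ln\tfrac{en}{k} \,-\, \tfrac{p \cdot k(k-1)}{2}}.
\end{align*}

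The heart of the argument is to plug in $k^{\star} := \ceil{3\ln(np)/p}$ and control the resulting exponent. Writing $\ell := \ln(np)$, I have $k^{\star} = (3/p)\cdot(\ell + O(1))$, so $p \cdot k^{\star}(k^{\star}-1)/2 = 9\ell^2/(2p) - O(\ell)$, while $\ln(en/k^{\star}) = \ln n - \ln(3\ell/p) + O(1) = \ell - \ln\ell + O(1)$, hence $k^{\star} \ln(en/k^{\star}) = 3\ell^2/p - O(\ell\ln\ell/p)$. Consequently, the exponent is at most $-(3/2 - o(1)) \cdot \ell^2/p$, which is $\omega(\log n)$ in the regime of interest (in particular for $p = \Theta(n^{-2/3})$, which is the setting of Proposition~\ref{prop:tf-optimal}, one has $\ell^2/p = \Theta(n^{2/3}\log^2 n)$). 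This yields $\Pr\bracket{\alpha(G) \geq k^{\star}} = n^{-\omega(1)}$, immediately giving the high-probability conclusion $\alpha(G) \leq 3\ln(np)/p$.

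For the expectation bound I would use the tail-sum identity $\expect{\alpha(G)} = \sum_{k \geq 1}\Pr\bracket{\alpha(G) \geq k}$, valid since $\alpha(G)$ is $\IN$-valued, and split at $k^{\star}$:
\begin{align*}
\expect{\alpha(G)} \;\leq\; k^{\star} + \sum_{k > k^{\star}} \Pr\bracket{\alpha(G) \geq k} \;\leq\; k^{\star} + n \cdot \Pr\bracket{\alpha(G) \geq k^{\star}} \;=\; k^{\star} + o(1),
\end{align*}
which is at most $3\ln(np)/p$ for $n$ large, after absorbing the ceiling rounding into the $o(1)$ slack. I do not expect any real obstacle -- the argument is a textbook application of the first-moment method and the only minor care needed is to confirm that the leading constant $3$ in the threshold is large enough to dominate the combinatorial $\binom{n}{k}$ factor, which the second paragraph above verifies.
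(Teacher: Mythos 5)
Your argument is correct and is essentially the proof the paper gives: fix $k \approx 3\ln(np)/p$, bound $\Pr[\alpha(G)\geq k]\leq \binom{n}{k}(1-p)^{\binom{k}{2}}$, and note that the exponent is dominated by $-\Theta(\ell^2/p)$ with $\ell=\ln(np)$, which is overwhelming in the regime where the lemma is used. One small bookkeeping nit: with $k^\star=\lceil 3\ln(np)/p\rceil$, the split $\expect{\alpha(G)}\leq k^\star+\sum_{k>k^\star}\Pr[\alpha(G)\geq k]$ gives $k^\star+o(1)$, which is \emph{at least} $3\ln(np)/p$ rather than at most; you should instead write $\expect{\alpha(G)}\leq (k^\star-1)+\sum_{k\geq k^\star}\Pr[\alpha(G)\geq k]$ (or just invoke the slack in the constant $3$, which the paper notes can be pushed to $2+o(1)$), after which the stated bound follows cleanly.
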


\begin{lemma}\label{lem:g5-max-degree}
	For $G \sim \FG_{n,p}$, $\Delta(G) \leq 2np$ w.h.p. 
\end{lemma}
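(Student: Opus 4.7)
The plan is to apply the multiplicative form of the Chernoff--Hoeffding bound from Proposition~\ref{prop:chernoff} to the degree of a single vertex, and then take a union bound over the $n$ vertices of $G$.

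Concretely, I would fix a vertex $v \in V$ and write $\deg_G(v) = \sum_{u \ne v} X_u$, where $X_u \in \{0,1\}$ is the indicator that the edge $\{u,v\}$ appears in $G \sim \FG_{n,p}$. These are $n-1$ independent Bernoulli random variables of parameter $p$, so $\expect{\deg_G(v)} = (n-1)p \leq np$; set $\mumax := np$. I would then invoke the multiplicative bound $\Pr\paren{X > (1+\delta)\mumax} \leq \exp(-\delta^2 \mumax/3)$ with $\delta$ taken just below $1$ (for instance $\delta = 1 - 1/\log n$, which lies in $(0,1)$ as required by the proposition). Since $(1+\delta)\mumax < 2np$ for this choice, monotonicity gives $\Pr\paren{\deg_G(v) > 2np} \leq \exp\paren{-\Omega(np)}$.

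The paper fixes $p = \Theta(n^{-2/3})$ throughout this subsection, so $np = \Theta(n^{1/3})$ and the per-vertex tail bound becomes $\exp(-\Omega(n^{1/3}))$, which is smaller than any inverse polynomial in $n$. A union bound over the $n$ vertices then yields $\Pr\paren{\Delta(G) > 2np} \leq n \cdot \exp(-\Omega(n^{1/3}))$, which is $o(1/n^c)$ for every constant $c$; this is precisely the notion of ``with high probability'' used throughout the paper.

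There is essentially no obstacle here; this is a standard random graph concentration calculation. The only minor technical point worth noting is that Proposition~\ref{prop:chernoff} states its multiplicative bound for $\delta \in (0,1)$ strictly, so one must either take $\delta$ marginally less than $1$ as above or appeal to the well-known equivalent form $\Pr\paren{X > (1+\delta)\mu} \leq \exp(-\mu \delta^2/(2+\delta))$ valid for all $\delta > 0$. Either route loses only a constant factor in the exponent, which is harmlessly absorbed by the union bound, so the argument goes through cleanly.
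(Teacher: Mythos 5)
Your proof is correct and follows exactly the approach the paper intends: its proof is literally just ``A direct application of Chernoff bound and union bound,'' and you have filled in that outline in the standard way. The care you take with the $\delta\in(0,1)$ restriction in Proposition~\ref{prop:chernoff} is a nice touch but does not change the substance.
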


We are now ready to prove Proposition~\ref{prop:tf-optimal}. 

\newcommand{\hDelta}{\ensuremath{\widetilde{\Delta}}}
\newcommand{\eventsize}{\ensuremath{\event_{\textnormal{\textsf{size}}}}}

\begin{proof}[Proof of Proposition~\ref{prop:tf-optimal}]
	Let $p := \frac{1}{3} \cdot (n)^{-2/3}$ for this proof and consider the distribution $\FGf$. Moreover, let $\LL$ denote the distribution of lists of colors sampled for vertices.  
	By Lemma~\ref{lem:g5-max-degree}, the maximum degree of $G \sim \FG_{n,p}$ and consequently $G \sim \FGf$ is at most $\hDelta := 2np$ with high probability. 
	Throughout the following argument, we condition on this event. This can only change the probability calculations by a negligible factor (that we ignore for the simplicity of exposition).
	This way, the number of colors sampled in $\LL$ can be assumed to be at most $C := \frac{\hDelta}{16\gamma\cdot \ln{\hDelta}}$. We further use $q := {\hDelta^{\gamma}}/{C}$ 
	to denote the probability that a color $c$ is sampled in list $L(v)$ of a vertex $v$.

	For a graph $G(V,E) \sim \FGf$ and lists $L \sim \LL$, let $V_1,\ldots,V_C$ be a collection of subsets of $V$ (not necessarily disjoint)
	where for every $c \in [C]$, $V_c$ denotes the vertices $v$ that sampled the color $c$ in their list $L(v)$. As each color is sampled with probability $q$ by a vertex, 
	and the choices are independent across vertices, a simple application of Chernoff bound ensures that with high probability, 
$\card{V_c} \leq 2q \cdot n$ for all $c$. 
	We also condition on this event in the following (and similarly as before ignore the negligible contribution of this conditioning to the probability calculations below).

	Let $\delta$ denote the probability of ``error'' i.e.,  the event that the sampled colors do not lead to a proper coloring of 
	the graph. An averaging argument implies that there exists a fixed set of lists $L \sim \LL$ such that for $G$ sampled from $\FGf$, 
	the error probability of $L$ on $G$ is at most $\delta$.
Fix such a choice of $L$ in the following. We will show that
$\delta =1-o(1)$.  
	
	Recall that $G \sim \FGf$ is chosen independent of the
lists $L$ (by definition of palette sparsification). 
For any graph $G$, define:
	\begin{itemize}
	\item  $\mu_L(G) := \max_{(U_1,\ldots,U_C)} \sum_{c=1}^C \card{U_c}$  where all $U_c$'s are \emph{disjoint}, each $U_c \subseteq V_c$, and $G[U_c]$ is an independent set. 
	\end{itemize}
	As we have fixed the choice of the lists $L$, the function $\mu_L(\cdot)$ is fixed at this point and its value only depends on $G$. 
	A necessary condition for $G$ to be colorable from the lists $L$ is that $\mu_L(G) = n$. This is because $(i)$ any proper coloring of $G$ from lists $L$ necessarily induces an independent set inside each $V_c$; $(ii)$  
	these independent sets are disjoint and hence we can take them as a feasible solution $(U_1,\ldots,U_C)$ to $\mu_L(G)$; $(iii)$ these independent sets cover all vertices of $G$. Our task is 
	now to bound the probability that $\mu(G) = n$ to lower bound $\delta$. 
	
	Firstly, we can switch from the distribution $\FGf$ to $\FG_{n,p}$ using the following equation (recall that $t(G)$ denotes the number of triangles): 
	\begin{align}
		\EX_{G \sim \FGf}\bracket{\mu_L(G)} \leq \EX_{H \sim \FG_{n,p}}\bracket{\mu_L(H) + 3 \cdot t(H)} \label{eq:tri-G-H}.  
	\end{align}
	This is because any graph $G \sim \FGf$ is obtained by removing edges of every triangle in a graph $H \sim \FG_{n,p}$ and removing these edges can only increase the total size of a collection
	 of \emph{disjoint} independent sets (namely, the value of
$\mu_L$) by the number of vertices in the triangles (in fact, by at
most two vertices from each triangle). We can upper bound the second-term in Eq~\eqref{eq:tri-G-H} using Lemma~\ref{lem:random-triangle}. 
	 We now bound the first term. In the following, let $n_c := \card{V_c}$ for $c \in [C]$.  We have, 
	 \begin{align*}
	 \EX_{H \sim \FG_{n,p}}\bracket{\mu_L(H)} &\leq  \EX_{H \sim \FG_{n,p}}\bracket{\sum_{c=1}^{C} \alpha(H[V_c])}, \tag{by removing the disjointness condition between sets $U_c$'s we can only increase value of $\mu_L(H)$} \\
	 &= \sum_{c=1}^{C} \EX_{H_c \sim \FG_{n_c,p}}\bracket{\alpha(H_c)} \tag{by linearity of expectation and as for every $c \in [C]$,  $H[V_c]$ is  sampled from $\FG_{n_c,p}$} \\
	 &\leq \sum_{c=1}^{C} \frac{3 \cdot \ln{(n_c p)}}{p} \tag{by Lemma~\ref{lem:g5-random}} \\
	 &\leq C \cdot \frac{3 \cdot \ln{(2qn \cdot p)}}{p} \tag{as we conditioned on $n_c \leq 2q \cdot n$} \\
	 &= \frac{\hDelta}{16\gamma\cdot \ln{\hDelta}} \cdot \frac{3 \cdot \ln{(q \cdot \hDelta)}}{(\hDelta/2n)} \tag{by definitions of $C$ and $\hDelta$} \\
	 &= \frac{6n}{16} \cdot \frac{\ln{(q \cdot \hDelta)}}{\ln{(\hDelta^\gamma)}} \tag{by a simple re-arranging of terms} \\
	 &< \frac{6n}{8}. \tag{as $\ln{\paren{q \cdot \hDelta}} = \ln{\paren{\hDelta^{\gamma} \cdot 16\gamma \cdot \ln{\hDelta}}} < 2\ln{\paren{\hDelta^\gamma}}$} 
	 \end{align*}
	 Plugging this in Eq~\eqref{eq:tri-G-H} together with Lemma~\ref{lem:random-triangle} to bound the second term, implies that: 
	 \begin{align*}
	 	\EX_{G \sim \FGf}\bracket{\mu_L(G)} \leq \frac{6n}{8} + 3 \cdot (\frac{n^{1/3}}{3})^{3} < \frac{7n}{8}. 
	 \end{align*}
	 Finally, by the assertions of Lemma
\ref{lem:random-triangle} and Lemma \ref{lem:g5-random}, 
$\mu_L(G) < n$ w.h.p. This implies that $\delta=1-o(1)$ as needed. 
\Qed{Proposition~\ref{prop:tf-optimal}}
	  
\end{proof}


\section{A Local Version of Palette Sparsification}\label{sec:ps-local}

We now give a ``local version'' (see, e.g.~\cite{DaviesVKP18,BonamyKNP2018}) of the palette sparsification theorem in which the initial number of available colors for vertices depends on the
local parameters of the vertices, namely, their degree, as opposed to a global parameter such as maximum degree. 


\begin{theorem}\label{thm:ps-deg+1-coloring}
    Let $G(V,E)$ be any $n$-vertex graph and assume each vertex $v \in V$ is given a list $S(v)$ of colors. Suppose for every vertex $v \in V$,
    we independently sample a set $L(v)$ of colors of size $\ell$ uniformly at random  from colors in $S(v)$:
    \begin{enumerate}[label=(\roman*)]
    	\item\label{part:deg+1-p1} if $S(v)$ is \emph{any arbitrary set} of $(1+\eps)\cdot\deg(v)$ colors and $\ell = \Theta(\eps^{-1}\cdot\log{n})$ for  $\eps > 0$, 
	\item\label{part:deg+1-p2} or if $S(v) = \set{1,\ldots,\deg(v)+1}$ and $\ell = \Theta(\log{n})$, 
    \end{enumerate} 
    then, with high probability, there exists a proper coloring of $G$ from lists $L(v)$ for  $v \in V$. 
\end{theorem}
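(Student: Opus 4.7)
The plan is to follow the graph-decomposition framework of Assadi, Chen, and Khanna~\cite{AssadiCK19} (the route flagged in the introduction), with a modified decomposition tailored to the $(\deg+1)$ setting. The first step is to design such a decomposition. In \cite{AssadiCK19} (building on \cite{Reed98,HarrisSS16}) one partitions $V$ into \emph{sparse} vertices (whose neighborhoods contain many non-edges) and \emph{dense} vertices (each sitting in an almost-clique of size close to $\Delta$); sparse and dense parts are then handled separately. This rigid dichotomy is too strong when vertex degrees vary widely: a neighbor $u$ of $v$ can have $\deg(u) \gg \deg(v)$, so there is no uniform notion of an almost-clique relative to $v$. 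I would therefore relax the decomposition so that the density of $v$ is measured only among neighbors of comparable degree, explicitly allowing a third class of vertices that are neither fully sparse nor fully dense but admit a hybrid treatment.

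Given the decomposition, I would handle the dense part essentially by the argument of \cite{AssadiCK19}: inside an almost-clique $K$, verify Hall's condition in the bipartite ``vertex--sampled-color'' graph via a random-set/matching argument, producing a proper list-coloring of $K$ from the sampled lists with high probability. This step adapts directly since all vertices of a single almost-clique have comparable degrees and hence comparable palettes $S(v)$. The bulk of the new work is the sparse part. For a sparse $v$, I would show that after the rest of $V$ is colored, sparsity in $N(v)$ forces the neighbors to exhibit enough color \emph{repetition} (or puts enough of them into low-degree classes whose palettes are essentially disjoint from $v$'s) so that at least $\Omega(\deg(v))$ colors of $S(v)$ remain unused; a Chernoff bound then shows that the random $\Theta(\log n)$-subset $L(v)$ hits one such color with probability $\geq 1 - 1/\poly(n)$, and a union bound (or Lov\'asz Local Lemma, Proposition~\ref{prop:lll}) finishes. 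Part~(i) with $(1+\eps)\deg$ lists is considerably simpler: the automatic $\eps\deg(v)$-slack makes even non-sparse vertices behave like sparse ones, so that $\ell = \Theta(\eps^{-1}\log n)$ samples suffice by an argument parallel to Theorem~\ref{thm:ps-od-coloring} after one separately greedily dispatches vertices of very small degree whose lists are almost entirely sampled.

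The main obstacle is the sparse-part analysis under the $(\deg+1)$ lists of part~(ii), where there is \emph{no} slack at all. In \cite{AssadiCK19} a color repetition among $v$'s neighbors automatically frees a color in $S(v)$ because every vertex shares the common palette $\{1,\ldots,\Delta+1\}$. Under the $(\deg+1)$ lists this fails: a color $c$ lies in $S(v)$ only when $\deg(v) \geq c-1$, so a repetition among high-degree neighbors of $v$ might not free any color in $S(v)$ at all. Consequently the decomposition and the sparse-part argument must \emph{jointly} certify that sparsity is visible from within $S(v)$---not just globally---which is exactly what motivates the degree-sensitive, relaxed decomposition described above. Overcoming this obstacle will require carefully charging the ``saved'' colors to neighbors of the appropriate degree classes, and I expect this bookkeeping to be the technical heart of the proof.
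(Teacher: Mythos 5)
Your plan for Part~\ref{part:deg+1-p2} matches the paper's actual route almost exactly: the paper introduces a three-way decomposition (Lemma~\ref{lem:decomposition}) into locally sparse vertices, a new class of ``uneven'' vertices with many strictly higher-degree neighbors (the ``third class'' you anticipate), and degree-homogeneous almost-cliques; it creates excess colors for the first two classes via a randomized one-round coloring step and then finishes them greedily; and it reduces each almost-clique to the $(\Delta+1)$ result of~\cite{AssadiCK19} via a degree-padding trick (Lemma~\ref{lem:degree-almost-clique}), precisely the ``comparable degrees within $K$ $\Rightarrow$ the ACK argument adapts'' observation you make. You also correctly flag that the charging must be done per degree class, and the paper indeed splits $\Vsparse\cup\Vunbal$ into $\Vsmall$, $\Vlarge$, and the remainder for exactly this bookkeeping.

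One small quibble on the obstacle you describe: a repetition among high-degree neighbors on a color $c\notin S(v)$ is not wasted --- it is harmless, and in fact the paper exploits the opposite phenomenon (high-degree neighbors tend to pick colors outside $S(v)$, effectively reducing $v$'s degree) as the mechanism for coloring the uneven/large classes. The real difficulty is that the classic sparsity dichotomy has no third option for vertices like a degree-$d$ vertex whose $d$ neighbors all sit in one $2d$-clique, which is exactly why $\Vunbal$ is introduced.

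Your sketch for Part~\ref{part:deg+1-p1} is more complicated than what is needed and does not match the paper. The paper's proof is a one-shot greedy over an arbitrary vertex ordering: since $|S(v)|=(1+\eps)\deg(v)$ and at most $\deg(v)$ earlier neighbors can block colors, each sampled color is free with probability at least $\eps/(1+\eps)$, and a union bound over vertices finishes in three lines; no decomposition, no appeal to Reed--Sudakov, and no separate dispatching of small-degree vertices is required. Your Reed--Sudakov route would actually run into the issue that the $c$-degree $\sum_{u\in N(v):\,c\in S(u)}\ell/|S(u)|$ need not be uniformly bounded when the $S(u)$ are arbitrary and of varying sizes, whereas the greedy argument sidesteps this entirely.
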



The main part of the proof of Theorem~\ref{thm:ps-deg+1-coloring} is Part~\ref{part:deg+1-p2} as the proof of the first part follows almost directly from this proof. 
However, we start with a standalone proof of Part~\ref{part:deg+1-p1} as a warm-up and then present the proof of Part~\ref{part:deg+1-p2}, which involves the bulk of our effort in this section.


\newcommand{\Nunfri}{\ensuremath{\overline{{F}}}}
\newcommand{\Nunbal}{\ensuremath{\overline{{B}}}}
\newcommand{\Nunbalp}{\ensuremath{\overline{{B}}}_{+}}
\newcommand{\Nunbalm}{\ensuremath{\overline{{B}}}_{-}}

\newcommand{\Nrem}{\ensuremath{{{R}}}}

\newcommand{\Vunbal}{\ensuremath{V^{\textnormal{\textsf{uneven}}}}}

\newcommand{\FirstColor}{\ensuremath{\textnormal{\textsf{FirstStepColoring}}}\xspace}
\newcommand{\SecondColor}{\ensuremath{\textnormal{\textsf{SecondStepColoring}}}\xspace}

\newcommand{\abort}{\textnormal{\textbf{abort}}\xspace}

\newcommand{\bard}{\ensuremath{\overline{d}}}

\newcommand{\Se}{\ensuremath{S_{\textnormal{\textsf{ext}}}}}
\newcommand{\se}{\ensuremath{s_{\textnormal{\textsf{ext}}}}}

\newcommand{\adeg}{\ensuremath{\deg_{\textnormal{\textsf{active}}}}}
\newcommand{\eventact}{\ensuremath{\event_{\textnormal{\textsf{active}}}}}

\newcommand{\ngood}{\ensuremath{n_{\textnormal{\textsf{good}}}}}
\newcommand{\Uactive}{\ensuremath{U^{\textnormal{\textsf{active}}}}}

\newcommand{\RR}{\ensuremath{\mathcal{R}}}

\newcommand{\Enon}{\ensuremath{\textnormal{NonEdge}}}
\newcommand{\Egood}{\ensuremath{{E}_{\textnormal{\textsf{good}}}}}

\newcommand{\Ndeg}{\ensuremath{N_{\textnormal{\textsf{deg}}}}}
\newcommand{\Ndegp}{\ensuremath{N^+_{\textnormal{\textsf{deg}}}}}
\newcommand{\Enono}{\ensuremath{{E}^+_{\textnormal{\textsf{non}}}}}

\subsection{Warm Up: Palette Sparsification for $(1+\eps)\deg$ List-Coloring}\label{sec:dod-coloring}

\begin{proof}[Proof of Theorem~\ref{thm:ps-deg+1-coloring} -- Part~\ref{part:deg+1-p1}]
Fix any $\eps > 0$ (not necessarily a constant) and suppose we sample $\ell := \frac{10}{\eps} \cdot \ln{n}$ colors $L(v)$ from $S(v)$ for every vertex $v \in V$. 
Consider the following process: 
\begin{tbox}
\begin{enumerate}
	\item Iterate over vertices $v$ in an \emph{arbitrary} order and for each vertex $v$, let $N^{<}(v)$ denote the neighbors of $v$ that appear before $v$ in this ordering. 
	\item\label{line:abort} For each vertex $v$, if there exists a color $c(v)$ in $L(v)$ that is not used to color any vertex $u \in N^{<}(v)$, color $v$ with $c(v)$. Otherwise  \abort. 
\end{enumerate}
\end{tbox}
\noindent
We argue that this procedure will terminate with high probability without having to \abort. This ensures that $G$ is colorable from sampled lists $L$, thus proving Part~\ref{part:deg+1-p1} of Theorem~\ref{thm:ps-deg+1-coloring}. 
We have, 
\begin{align*}
	\Pr\paren{\text{\abort}} &\leq \sum_{v} \Pr\paren{\text{$L(v)$ is a subset of colors chosen for $N^{<}(v)$}} \tag{by union bound} \\ 
	&\leq \sum_{v} \paren{\frac{\card{N^{<}(v)}}{\card{S(v)}}}^{\ell} \leq n \cdot \paren{\frac{\deg(v)}{(1+\eps)\cdot\deg(v)}}^{\ell} \leq n \cdot (1-\eps/2)^{\ell} \tag{by the sampling without replacement
	 procedure of Theorem~\ref{thm:ps-deg+1-coloring} }\\ 
	&\leq n \cdot \exp\paren{-\frac{\eps}{2} \cdot \frac{10}{\eps} \cdot \ln{n}} = n^{-4}. \tag{by the choice of $\ell$}
\end{align*}
This concludes the proof of Part~\ref{part:deg+1-p1} of Theorem~\ref{thm:ps-deg+1-coloring}. \Qed{Theorem~\ref{thm:ps-deg+1-coloring}}

\end{proof}

We conclude this part by noting that our proof above can be also tailored to obtain a palette sparsification theorem for coloring a graph with ``about $\kappa$'' colors where $\kappa$ is the \emph{degeneracy} of the graph 
(see~\cite{BeraCG19} for a recent application of such a result to algorithms in ``space-conscious'' models). 

\begin{remark}[\textbf{Palette sparsification for coloring via degeneracy}]\label{rem:degeneracy-coloring}
	For the above proof, we  considered an \emph{arbitrary} ordering of vertices and upper bounded $\card{N^<(v)}$ by $\card{N(v)} = \deg(v)$ which sufficed for our purpose. However, if we instead worked
	with the \emph{degeneracy ordering} of vertices\footnote{A degeneracy ordering of $G$ is obtained by repeatedly picking the vertex of minimum remaining degree, removing it and updating the degree of remaining vertices, and moving on to the next vertex.}, we could have upper bounded $\card{N^<(v)}$ by $\kappa(v) \leq \kappa$ where $\kappa$ is the degeneracy of the graph and $\kappa(v) \leq \deg(v)$ is the degree of $v$ in the degeneracy ordering. 
	This immediately allows us to extend the previous argument
	to the case where size of each $S(v)$ is only $(1+\eps)\kappa(v)$. This shows that {palette sparsification works for coloring with ``about $\kappa$'' colors} (and $\kappa(v)$ colors for a local version). 
\end{remark}

Remark~\ref{rem:degeneracy-coloring} is closely related to a very recent work of Bera, Chakrabarti, and Ghosh~\cite{BeraCG19} that obtained similar-in-spirit results for graph coloring using about $\kappa$ colors based on graph partitioning (see Section~\ref{sec:vertex-sampling}). 
Our Remark~\ref{rem:degeneracy-coloring} thus gives an alternative way of obtaining (some of the) sublinear algorithms for $\kappa+o(\kappa)$ coloring studied in~\cite{BeraCG19} such as streaming and sublinear-time algorithms. 
As such results (in more details) have already been obtained in~\cite{BeraCG19} and this is not the contribution of 
our work, we omit the details and only note that in our approach, unlike~\cite{BeraCG19}, an additional care is also needed to keep the running time of algorithms small. 

Finally, we note that~\cite{BeraCG19} shows that obtaining a $(1+\eps)\kappa$ coloring via palette sparsification requires sampling $\Omega(\log{n}/\poly(\eps))$ colors per vertex (when $\eps = o(1/\log{n})$); our upper bound
matches this bound to within $\poly(1/\eps)$ terms. 

\subsection{Palette Sparsification for $(\deg+1)$ Coloring}\label{sec:d1-coloring}

We now prove the second and the main part of Theorem~\ref{thm:ps-deg+1-coloring}. 
We follow the approach of~\cite{AssadiCK19} for $(\Delta+1)$ coloring problem (outlined in Appendix~\ref{sec:background-ACK19}) to prove this result. The key difference here is that the graph decomposition for partitioning the graph into \emph{sparse} and \emph{dense} parts that played a key role in~\cite{AssadiCK19} is no longer applicable to the $(\deg+1)$ coloring problem. 

In the following, we first give a new graph decomposition tailored to $(\deg+1)$ coloring problem and states its main properties as well as its differences 
with similar decompositions for $(\Delta+1)$ coloring in~\cite{HarrisSS16,ChangLP18,AssadiCK19} (themselves  based on~\cite{Reed98}). 
The next step is then to show that this decomposition, even though ``weaker'' than the one for $(\Delta+1)$ coloring, still 
has enough structure to carry out the proof for $(\deg+1)$ coloring along the lines of the one for $(\Delta+1)$ coloring in~\cite{AssadiCK19} with the main difference 
being on how we handle the ``sparse'' vertices in our new decomposition. 

\subsubsection{A Graph Decomposition for $(\deg+1)$ Coloring}

Let $\eps \in (0,1)$ be a parameter. We define the following structures for any graph $G(V,E)$. 

\begin{definition}\label{def:almost-clique}
	We say that an induced subgraph $K$ of $G$ is an \textbf{$\bm{\eps}$-almost-clique} iff:
	\begin{enumerate}[label=(\roman*)]
		\item\label{ac1} For every $v \in K$, $\deg_G(v) \geq (1-8\eps) \cdot \Delta(K)$ where we define $\Delta(K) := \max_{v \in K} \deg_G(v)$;
		\item\label{ac2} $(1-\eps) \cdot \Delta(K) \leq \card{V(K)} \leq (1+8\eps) \cdot \Delta(K)$;
		\item\label{ac3} Any vertex $v \in K$ has at most $8\eps \cdot \Delta(K)$ \emph{non-neighbors} (in $G$) \emph{inside} $K$; 
		\item\label{ac4} Any vertex $v \in K$ has at most $9\eps \cdot \Delta(K)$ \emph{neighbors} (in $G$) \emph{outside} $K$.
	\end{enumerate}
\end{definition}

Definition~\ref{def:almost-clique} can be seen as a natural analogue of $(\Delta,\eps)$-almost-cliques
defined in~\cite{AssadiCK19} (see Appendix~\ref{sec:background-ACK19}). The main difference is that instead of having dependence on the global parameter $\Delta$ in a $(\Delta,\eps)$-almost-clique of~\cite{AssadiCK19}, our $\eps$-almost-cliques 
only depend on $\Delta(K)$ which is a $(1+\Theta(\eps))$-approximation of the degree of every vertex in $K$ (and thus can be much smaller than $\Delta$). 

\begin{definition}\label{def:eps-sparse}
	We say a vertex $v \in G$ is \textbf{$\bm{\eps}$-sparse} iff there are at least $\eps^2 \cdot {{\deg(v)}\choose{2}}$ \emph{non-edges} in the neighborhood of $v$. 
\end{definition}

Again, Definition~\ref{def:eps-sparse} is a natural analogue of sparse vertices in~\cite{AssadiCK19,HarrisSS16,ChangLP18} by replacing the dependence on $\Delta$ with $\deg(v)$ instead. 

\begin{definition}\label{def:eps-unbalanced}
	We say a vertex $v \in G$ is \textbf{$\bm{\eps}$-uneven} iff for \emph{at least} $\eps \cdot \deg(v)$ neighbors $u$ of $v$, we have $\deg(v) < (1-\eps)\cdot\deg(u)$. 
\end{definition}

Roughly speaking, a vertex $v$ is considered uneven if it has a ``sufficiently large'' number of neighbors with ``sufficiently larger'' degree than $v$. Definition~\ref{def:eps-unbalanced} is tailored specifically to $(\deg+1)$ coloring problem and does not 
have an analogue in~\cite{AssadiCK19,HarrisSS16,ChangLP18} for $(\Delta+1)$ coloring. We prove the following decomposition result using the definitions above. 

\begin{lemma}[Graph Decomposition for $(\deg+1)$ Coloring]\label{lem:decomposition}
	For any sufficiently small $\eps > 0$, any graph $G(V,E)$ can be partitioned into vertices $V := \Vunbal \sqcup \Vsparse \sqcup K_1\sqcup \ldots \sqcup K_k$ such that: 
	\begin{enumerate}[label=(\roman*)]
		\item For every $i \in [k]$, the induced subgraph $G[K_i]$ is an $\eps$-almost-clique; 
		\item Every vertex in $\Vsparse$ is $(\eps/2)$-sparse;
		\item Every vertex in $\Vunbal$ is $(\eps/4)$-uneven. 
	\end{enumerate}
\end{lemma}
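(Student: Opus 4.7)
The plan is to build the decomposition by peeling off vertices in three stages: first isolate the uneven vertices, then the sparse vertices, and finally group what remains into almost-cliques. Concretely, I would set $\Vunbal := \{v \in V : v \text{ is } (\eps/4)\text{-uneven}\}$ and $\Vsparse := \{v \in V \setminus \Vunbal : v \text{ is } (\eps/2)\text{-sparse}\}$. This immediately secures conditions (ii) and (iii) of the lemma, and the remaining job is to show that $V^\star := V \setminus (\Vunbal \cup \Vsparse)$ is a disjoint union of vertex sets of $\eps$-almost-cliques.

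For $v \in V^\star$, two properties drive the construction: $v$ has fewer than $(\eps/2)^2 \binom{\deg(v)}{2}$ non-edges inside $N_G(v)$ (not $(\eps/2)$-sparse), and fewer than $(\eps/4)\deg(v)$ of its neighbors $u$ satisfy $\deg(u) > \deg(v)/(1-\eps/4)$ (not $(\eps/4)$-uneven). Following the Reed-style template used in~\cite{Reed98,HarrisSS16,ChangLP18,AssadiCK19}, I would define an auxiliary graph $H$ on $V^\star$ by the rule $u \sim_H v$ iff $uv \in E(G)$ and $u, v$ share at least $(1-\alpha\eps)\max(\deg(u), \deg(v))$ common neighbors in $G$, for a suitable absolute constant $\alpha$. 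The almost-cliques $K_1, \ldots, K_k$ would be the connected components of $H$ restricted to $V^\star$. A standard double-counting using the non-sparse bound shows that for $v \in V^\star$, all but an $O(\eps)\deg(v)$ fraction of $N_G(v)$ lies in $V^\star$ and is joined to $v$ in $H$, so each $K_i$ is essentially the neighborhood in $G$ of any of its vertices.

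The bulk of the verification concerns Definition~\ref{def:almost-clique}. The first key observation is that whenever $u \sim_H v$, the symmetric difference $\card{N_G(u) \triangle N_G(v)}$ is small relative to $\max(\deg(u),\deg(v))$, forcing $\card{\deg(u) - \deg(v)} \leq O(\eps)\max(\deg(u), \deg(v))$. Propagating this comparison across a component (whose effective diameter is small) would then show that every vertex of $K_i$ has degree within a $(1 \pm O(\eps))$ factor of $\Delta(K_i)$, yielding (i). The size bounds (ii) and the internal/external edge bounds (iii), (iv) would then follow by combining this degree-closeness with the containment $K_i \subseteq N_G(v) \cup \{v\}$ for any $v \in K_i$ (from the $\sim_H$ rule) together with the non-sparse condition.

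The main obstacle is ensuring that the local quantity $\Delta(K_i)$ really behaves like a uniform ceiling inside each almost-clique. In the $(\Delta+1)$-coloring decompositions of~\cite{HarrisSS16,ChangLP18,AssadiCK19} the global $\Delta$ provides this for free; here one must forbid $\sim_H$ from chaining together a very-high-degree vertex with many lower-degree vertices. The removal of $\Vunbal$ is precisely what rules this out: a $V^\star$-vertex $v$ with a $V^\star$-neighbor $u$ of much larger degree would make $u$'s ``high-degree neighbor'' budget explode once one accounts, via the sharing condition, for the other common neighbors of similar degree. Calibrating the three slack constants $\eps/4$, $\eps/2$, $\eps$ so that every propagation inequality closes with room to spare is the delicate part of the argument, but this is well-trodden territory for Reed-style decompositions.
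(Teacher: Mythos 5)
Your plan peels off $\Vunbal$ and $\Vsparse$ first, defining them as \emph{all} $(\eps/4)$-uneven vertices and all $(\eps/2)$-sparse vertices respectively, and then tries to cover the remainder $V^\star$ by almost-cliques (components of an auxiliary ``friend'' graph $H$ restricted to $V^\star$). This ordering does not work: a vertex $v$ can be neither $(\eps/2)$-sparse nor $(\eps/4)$-uneven while \emph{every} neighbor of $v$ is $(\eps/2)$-sparse. Then $v \in V^\star$ but $N_G(v) \cap V^\star = \emptyset$, so $v$ is an isolated vertex of your auxiliary graph $H$, and the singleton $\{v\}$ cannot satisfy property~(ii) of Definition~\ref{def:almost-clique} (which requires $\card{V(K)} \geq (1-\eps)\Delta(K) = (1-\eps)\deg(v)$). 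Concretely: let $N_G(v)$ be a $d$-clique, and give each $u \in N_G(v)$ an extra set $I_u$ of $s := \eps d/5$ private neighbors forming an independent set with no edges to $N_G(v)\cup\{v\}$. Then $v$ has zero non-edges in its neighborhood (not sparse), and since $\deg(u) = d+s$ with $s < \eps d/(4-\eps)$, no neighbor has large enough degree to make $v$ $(\eps/4)$-uneven. Meanwhile each $u$ has at least $s\cdot d = \eps d^2/5 > (\eps/2)^2\binom{\deg(u)}{2}$ non-edges (from pairs $I_u\times N_G(v)$), so every $u$ is $(\eps/2)$-sparse. Your sweeping claim that ``for $v\in V^\star$, all but an $O(\eps)\deg(v)$ fraction of $N_G(v)$ lies in $V^\star$'' is false precisely because $v$'s non-sparseness constrains the non-edges \emph{inside} $N_G(v)$ but says nothing about the non-edges each $u$ sees among its own extra neighbors.

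The paper avoids this by inverting the peeling order. It defines $\theta$-friend edges and $\theta$-dense vertices purely in terms of shared-neighbor counts and degree balance (independent of sparse/uneven status), builds $H_\theta$ over the dense vertices with $\theta = 4\eps$, and takes the almost-cliques to be the connected components that contain at least one $\eps$-dense vertex; the remaining vertices are then shown (Claim~\ref{clm:decomp-not-dense}) to be $(\eps/2)$-sparse or $(\eps/4)$-uneven. In the counterexample above $v$ is $\eps$-dense and all of $N_G(v)$ is $(4\eps)$-dense, so the whole near-clique is grouped together by the paper's construction even though every $u$ also happens to be sparse. The key structural fact you would need to resurrect your approach --- that no non-sparse, non-uneven vertex has most of its neighborhood declared sparse or uneven --- is exactly what fails; replacing the sparse/uneven filter by a friendship-based density criterion, and only \emph{afterwards} classifying the leftover vertices, is not a cosmetic choice but the step that makes the argument close.
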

\noindent
The key difference of Lemma~\ref{lem:decomposition} with prior decompositions for $(\Delta+1)$ coloring in~\cite{Reed98,AssadiCK19,HarrisSS16,ChangLP18} is the introduction of $\Vunbal$ that
captures vertices with ``sufficiently large'' higher degree neighbors. Allowing for such vertices is (seemingly) crucial for this type of decomposition that depends on the local degrees of vertices as opposed to maximum degree%
 \footnote{For instance, consider a vertex of degree $d$ that is incident to $d$ vertices of a $2d$-clique. Such a vertex is neither sparse (its neighborhood is a clique), nor belongs to an almost-clique for small $\eps < 1$.}. 

\smallskip
Before we move on, a word of caution is in order. By definition, any $\eps$-almost-clique is also an $\eps'$-almost clique for $\eps' \geq \eps$.  On the other hand, 
the exact opposite relation holds for $\eps$-sparse and $\eps$-uneven vertices: any $\eps$-sparse vertex is also $\eps''$-sparse for $\eps'' \leq \eps$ (similarly for uneven vertices). 
As such, one cannot simply ``rescale'' the value of $\eps$ in above definitions and lemma directly (although there are enough slacks in our arguments to allow for proper changes when needed).

\subsubsection*{Proof of Lemma~\ref{lem:decomposition}}
We prove this lemma through a series of simple claims along the lines of the HSS decomposition~\cite{HarrisSS16} and its extension in~\cite{AssadiCK19}. The general approach is similar to~\cite{HarrisSS16,AssadiCK19} but there are some key differences in several places as well.

We start with some necessary definitions. For any sufficiently small  $\theta \in (0,1)$ ($\theta < 1/20$ suffices for our purpose), we define the following: 
	\begin{itemize}[leftmargin=15pt]
		\item An edge $(u,v)$ is \textbf{$\bm{\theta}$-balanced} iff $\min\set{\deg(u),\deg(v)} \geq (1-\theta) \cdot \max\set{\deg(u),\deg(v)}$. 
		\item An edge $(u,v)$ is \textbf{$\bm{\theta}$-friend} iff it is $\theta$-balanced and $\card{N(u) \cap N(v)} \geq (1-\theta) \cdot \min\set{\deg(u),\deg(v)}$. 
		\item A vertex $v$ is \textbf{$\bm{\theta}$-dense} iff it is incident on at least $(1-\theta) \cdot \deg(v)$ many $\theta$-friend edges. 
	\end{itemize}
	Let $F_{\theta} \subseteq E$ denote the set of $\theta$-friend edges and $D_{\theta} \subseteq V$ denote the set of $\theta$-dense vertices. 
	Consider the (not necessarily induced) subgraph $H_{\theta}$ of $G$ defined as $H_{\theta} := (D_{\theta},F_{\theta})$, i.e., the subgraph on $\theta$-dense vertices
	and consisting of only the $\theta$-friend edges (here we slightly abused the notation as endpoints of some edges in $F_\theta$ may not belong to $D_\theta$ in which case we ignore them
	in $H_\theta$ as well). 
	
\paragraph{Handling Vertices in $\bm{D_{\theta}}$.} 
	We use connected components of $H_\theta$ to identify the almost-cliques 
	in the decomposition (where we take $\theta = \Theta(\eps)$). To do so, we need a series of simple claims. In the following, we use $C$ to denote an arbitrary connected component of $H_{\theta}$. 

	\begin{claim}\label{clm:decomp-shared-neighbors}
		For any $u,v \in C \subseteq D_\theta$, $\card{N_{}(u) \cap N_{}(v)} \geq (1-5\theta) \cdot \min\set{\deg(u),\deg(v)}$. 
	\end{claim}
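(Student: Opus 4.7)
The plan is to bound $\card{N(u) \setminus N(v)}$ and (by symmetry) $\card{N(v) \setminus N(u)}$ through a triangle-inequality argument, then convert this to a lower bound on $\card{N(u) \cap N(v)}$. As a first step, I would establish a single-edge bound: for any $\theta$-friend edge $(x,y)$, combining $\card{N(x) \cap N(y)} \geq (1-\theta)\min\set{\deg(x),\deg(y)}$ with the $\theta$-balanced inequality $\min\set{\deg(x),\deg(y)} \geq (1-\theta)\max\set{\deg(x),\deg(y)}$ yields $\card{N(x) \cap N(y)} \geq (1-2\theta)\deg(x)$, hence $\card{N(x) \setminus N(y)} \leq 2\theta \deg(x)$, and symmetrically $\card{N(y) \setminus N(x)} \leq 2\theta \deg(y)$.

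For the main argument, I split on how $u$ and $v$ sit inside $C$. If $(u,v) \in F_\theta$, the claim is immediate from the single-edge bound together with the $\theta$-friend inequality. Otherwise I plan to exhibit a witness $w \in C$ with $(u,w),(w,v) \in F_\theta$, so that $u$ and $v$ are at distance at most two in $H_\theta$. Given such a $w$, the containment $N(u) \cap N(v) \supseteq N(u) \cap N(w) \cap N(v)$ gives $\card{N(u) \cap N(v)} \geq \card{N(u) \cap N(w)} - \card{N(w) \setminus N(v)} \geq (1-2\theta)\deg(u) - 2\theta\deg(w)$. Invoking $\theta$-balanced on $(u,w)$ to bound $\deg(w) \leq \deg(u)/(1-\theta)$, this becomes $\card{N(u) \cap N(v)} \geq (1 - 2\theta - 2\theta/(1-\theta))\deg(u) \geq (1-5\theta)\deg(u)$ for sufficiently small $\theta$; a symmetric argument using $\theta$-balanced on $(w,v)$ gives the same lower bound in terms of $\deg(v)$, so the claim with $\min\set{\deg(u),\deg(v)}$ on the right-hand side follows.

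The main obstacle I anticipate is the distance-two reduction itself: a naive triangle-inequality telescope along a path of length $k \geq 3$ in $H_\theta$ would accumulate error $\Omega(k\theta)$ from the single-edge bound, easily exceeding the $5\theta$ budget. To bypass this, I plan to prove an approximate transitivity statement for the $\theta$-friend relation -- essentially that if $(x,y),(y,z) \in F_\theta$ with $x,y,z \in D_\theta$ then the pair $(x,z)$ can be treated as a friend edge, possibly after absorbing a slightly worse constant by choosing the initial $\theta$ small enough -- and then iterate this collapsing operation along any path in $C$ to reduce it to one of length at most two. The subtle points are verifying that the shortcut pair $(x,z)$ is actually an edge of $G$ (which will require combining the $\theta$-density of $y$ with the large common-neighborhood guarantees along the path), and ensuring the iteration remains inside $D_\theta$ throughout so that the shortcut edge exists in $H_\theta$, the latter being automatic from the definition $H_\theta = (D_\theta, F_\theta)$.
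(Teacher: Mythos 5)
Your setup and the length-two calculation are sound and match the inner step of the paper's argument, but the plan for the distance-two reduction has a genuine gap. You propose to prove an approximate transitivity statement for the $\theta$-friend relation and then iteratively collapse the path $u = w_0, w_1, \ldots, w_t = v$; this fails for two reasons. First, for $(x,y),(y,z) \in F_\theta$ the pair $(x,z)$ need not be an edge of $G$ at all (take $x$ and $z$ non-adjacent but each adjacent to $y$ and to most of $N(y)$), so $(x,z)$ is never a friend edge and never an edge of $H_\theta$, no matter how small $\theta$ is. Second, even if you work only with the quantitative ``friend-like'' properties (large common neighborhood, balanced degrees) and drop the edge requirement, each collapse degrades the constant: the shortcut pair is only $\theta'$-friend-like for some $\theta' > \theta$, the next collapse yields $\theta'' > \theta'$, and so on. The diameter of a connected component of $H_\theta$ is not bounded by an absolute constant, so no initial choice of $\theta$ keeps the accumulated error below $5\theta$; but the claim promises a fixed $5\theta$ independent of the length of the path joining $u$ and $v$.

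The paper sidesteps both problems. It runs an induction along the path whose hypotheses at index $i$ carry the \emph{fixed} constants $5\theta$ (common neighborhood) and $2\theta$ (degree ratio), independent of $i$. To pass from $i$ to $i+1$, it produces a \emph{fresh} witness vertex $z$, not on the path and not required to lie in $C$ or even in $D_\theta$, such that $(u,z)$ and $(z,w_{i+1})$ are genuine $\theta$-friend edges with the original $\theta$. The witness is found by counting inside $N(w_i)$: since $u$ is $\theta$-dense and, by the induction hypothesis, $N(u)$ covers almost all of $N(w_i)$, $u$ has $\theta$-friend edges to all but an $O(\theta)$ fraction of $N(w_i)$; the same holds for $w_{i+1}$ because $w_{i+1}$ is $\theta$-dense and $(w_i,w_{i+1}) \in F_\theta$; for $\theta < 1/11$ the two sets overlap. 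Once $z$ is in hand, your own length-two computation applied to the genuine $\theta$-friend edges $(u,z)$ and $(z,w_{i+1})$ yields precisely the induction hypothesis for $i+1$, with no degradation. So what is missing is not a transitivity lemma but the idea of re-witnessing: at each step, use the density of the two current endpoints together with the inherited overlap with $N(w_i)$ to manufacture a new common $\theta$-friend, keeping the error constant along the entire path.
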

	\begin{proof}
		Consider a path $u = w_0, w_1,\ldots,w_t = v$ between $u$ and $v$ in $H_\theta$ ($u$ and $v$ belong to the same connected component). 
		We prove inductively that for every $i \in [t]$ (the case $i=t$ proves the claim): 
		\begin{align*}
		\card{N_{}(u) \cap N_{{}}(w_i)} &\geq (1-5\theta) \cdot \min\set{\deg(u),\deg(w_i)}, \text{and} \\
		\min\set{\deg(u),\deg(w_i)} &\geq (1-2\theta) \cdot \max\set{\deg(u),\deg(w_i)}.
		\end{align*}

		The induction step for $i=1$ is true because $(u,w_1)$ is a $\theta$-friend edge. 
		Now suppose this is true up until some $i$ and consider $i+1$. Since $(w_i,w_{i+1})$ is a $\theta$-friend edge, we have:  
		\begin{align}
			\card{N_{}(w_i) \cap N_{{}}(w_{i+1})} &\geq (1-\theta) \cdot \min\set{\deg(w_i),\deg(w_{i+1})}, \text{and} \notag \\
		\min\set{\deg(w_i),\deg(w_i)} &\geq (1-\theta) \cdot \max\set{\deg(w_i),\deg(w_{i+1})}. \label{eq:decomp1} 
		\end{align}
		On the other hand, the induction hypothesis implies that:
		\begin{align}
		\card{N_{}(u) \cap N_{{}}(w_i)} &\geq (1-5\theta) \cdot \min\set{\deg(u),\deg(w_i)}, \text{and} \notag \\
		\min\set{\deg(u),\deg(w_i)} &\geq (1-2\theta) \cdot \max\set{\deg(u),\deg(w_i)}. \label{eq:decomp2}
		\end{align}
		We use this to show that there exists a vertex $z$ (not necessarily in $C$ or even $D_\theta$) 
		such that both $(u,z)$ and $(z,w)$ are $\theta$-friend edges. As $u$ is $\theta$-dense and by Eq~\eqref{eq:decomp2}, we have that $u$ has a $\theta$-friend edge to at least 
		$(1-8\theta) \cdot \deg(w_i)$ neighbors of $w_i$. Similarly, as $w_{i+1}$ is $\theta$-dense and by Eq~\eqref{eq:decomp1}, 
		we have that $w_{i+1}$ has a $\theta$-friend edge to at least $(1-3\theta) \deg(w_i)$ neighbors of $w_i$. For $\theta < 1/11$, this implies
		that there exists some neighbor $z$ of $w_i$ where both $u$ and $w_{i+1}$ have a $\theta$-friend edge to. 
		
		Since $(u,z)$ and $(z,w_{i+1})$ are $\theta$-friend edges and thus $\theta$-balanced as well, we obtain the second part of the induction hypothesis for $i+1$. For the first part, again by
		using the fact that $(u,z)$ and $(z,w_{i+1})$ are $\theta$-friend edges, we have that : 
		\begin{align*}
			\card{N(u) \cap N(z)} &\geq (1-\theta) \cdot \min\set{\deg(u),\deg(z)}, \text{and} \\
			\card{N(z) \cap N(w_{i+1})} &\geq (1-\theta) \cdot \min\set{\deg(z),\deg(w_{i+1})}.
		\end{align*}
		implying that $\card{N(u) \cap N(w_{i+1})} \geq (1-5\theta) \cdot \min\set{\deg(u),\deg(z)}$ (using the bound on degrees of $u$ and $w_{i+1}$). This concludes the proof of the induction hypothesis 
		and the claim. \Qed{Claim~\ref{clm:decomp-shared-neighbors}}
		
	\end{proof}
	The following claim is an immediate corollary of Claim~\ref{clm:decomp-shared-neighbors} (and was directly proved there). 
	\begin{claim}\label{clm:decomp-degree}
		For any $u,v \in C \subseteq D_\theta$, $\min\set{\deg(u),\deg(v)} \geq (1-2\theta) \max\set{\deg(u),\deg(v)}.$ 
	\end{claim}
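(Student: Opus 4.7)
The plan is to observe that Claim~\ref{clm:decomp-degree} is not genuinely a new statement but rather the second half of the induction hypothesis that was already maintained throughout the proof of Claim~\ref{clm:decomp-shared-neighbors}. There, for any path $u = w_0, w_1, \ldots, w_t = v$ inside the connected component $C$ of $H_\theta$, the induction on $i$ simultaneously carried the two bounds
\[
\card{N(u) \cap N(w_i)} \geq (1-5\theta) \cdot \min\set{\deg(u),\deg(w_i)} \quad\text{and}\quad \min\set{\deg(u),\deg(w_i)} \geq (1-2\theta) \cdot \max\set{\deg(u),\deg(w_i)}.
\]
Setting $i = t$ in the second inequality yields exactly Claim~\ref{clm:decomp-degree}, so my proof is essentially a one-line dereference to the argument already completed.

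The only subtlety worth emphasizing, and the reason this bound survives path length, is that a naive chaining of $\theta$-balancedness edge by edge along a $u$-to-$v$ path of length $t$ would only give $(1-\theta)^t$, which degenerates for long paths. The argument inside Claim~\ref{clm:decomp-shared-neighbors} circumvents this by using the $\theta$-density of both $u$ and $w_{i+1}$: it locates a single intermediate vertex $z \in N(w_i)$ to which both $u$ and $w_{i+1}$ have $\theta$-friend edges. Applying $\theta$-balancedness only to the two edges $(u,z)$ and $(z,w_{i+1})$, together with the already-established ratio between $\deg(u)$ and $\deg(w_i)$, produces a \emph{fixed} $(1-2\theta)$ ratio between $\deg(u)$ and $\deg(w_{i+1})$, independent of how far $w_{i+1}$ is from $u$ along the path.

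Thus the proof I would write consists simply of invoking the inductive argument from Claim~\ref{clm:decomp-shared-neighbors}: take any $u,v \in C$, let $u = w_0, \ldots, w_t = v$ be a path in $H_\theta$, and read off the degree-ratio conclusion at $i=t$. No additional induction or case analysis is needed. If anything, the only ``obstacle'' is purely expository, namely flagging for the reader that this ratio bound is robust to path length precisely because of the two-hop shortcut through $z$ enabled by density, rather than a naive path-chaining.
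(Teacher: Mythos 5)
Your proposal matches the paper exactly: the paper explicitly states that Claim~\ref{clm:decomp-degree} is an immediate corollary of Claim~\ref{clm:decomp-shared-neighbors}, having been proved as the second half of the induction hypothesis there. Your added remark on why the $(1-2\theta)$ ratio does not degrade with path length (the two-hop shortcut through $z$ via density) correctly identifies the mechanism already present in that proof.
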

	
	We further bound the number of $\theta$-dense neighbors of any vertex $v \in C$ that are outside $C$. 
	\begin{claim}\label{clm:decomp-out-neighbor}
		For any $v \in C$, $\card{N(v) \cap D_\theta \setminus C} \leq 2\theta \cdot \deg(v)$. 
	\end{claim}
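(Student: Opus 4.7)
The plan is to prove this by a short counting argument on the edges incident on $v$. Since $v \in C \subseteq D_\theta$, $v$ is $\theta$-dense by definition, which means at least $(1-\theta) \cdot \deg(v)$ of the edges incident on $v$ belong to $F_\theta$. Hence at most $\theta \cdot \deg(v)$ edges incident on $v$ fail to be $\theta$-friend edges. The whole task reduces to showing that every neighbor $u$ of $v$ that lies in $D_\theta \setminus C$ is connected to $v$ by a non-$\theta$-friend edge, after which the claim follows immediately since $\theta \cdot \deg(v) \leq 2\theta \cdot \deg(v)$.

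To establish this reduction, I would argue as follows: suppose $u \in N(v) \cap D_\theta \setminus C$. Because $u \in D_\theta$, the vertex $u$ is present in $H_\theta$ (not discarded by the "slight abuse" caveat). Both endpoints $u$ and $v$ therefore appear as vertices of $H_\theta$, so the question of whether $(u,v)$ is an edge of $H_\theta$ is genuinely determined by membership in $F_\theta$ alone. If the edge $(u,v)$ were $\theta$-friend, then $(u,v) \in F_\theta$ would be an edge of $H_\theta$, placing $u$ in the same connected component as $v$, which is $C$. This contradicts $u \notin C$. Consequently $(u,v)$ cannot be a $\theta$-friend edge, as desired.

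Combining the two observations, the number of neighbors of $v$ belonging to $D_\theta \setminus C$ is bounded by the number of non-$\theta$-friend edges incident on $v$, which is at most $\theta \cdot \deg(v) \leq 2\theta \cdot \deg(v)$. The argument is entirely syntactic and has no real obstacle; the looseness of the factor $2$ in the stated bound is only a buffer, presumably chosen to uniformize with the other slacks of $O(\theta)$ that appear throughout the decomposition argument and that will eventually be absorbed into the constants multiplying $\eps$ in the almost-clique properties of Definition~\ref{def:almost-clique}.
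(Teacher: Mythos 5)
Your argument is correct and matches the paper's proof essentially word for word: both observe that $\theta$-density of $v$ gives at most $\theta\cdot\deg(v)$ non-$\theta$-friend edges at $v$, and that any $\theta$-friend edge from $v$ to a vertex in $D_\theta$ forces that vertex into $C$ via connectivity in $H_\theta$. The only difference is cosmetic: you phrase it as a contradiction while the paper states the contrapositive directly.
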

	\begin{proof}
		As $v$ is a $\theta$-dense vertex, it has at least $(1-\theta) \cdot \deg(v)$ edges that are $\theta$-friend edges. If the end point of any such edge belongs to $D_\theta$, then that vertex
		clearly belongs to $C$ as well. As such, at most $\theta \cdot \deg(v)$ neighbors of $v$ that are in $D_\theta$ maybe outside of $C$, proving the claim. \Qed{Claim~\ref{clm:decomp-non-neighbor}}
		 
	\end{proof}
	
	The next step is to bound the number of non-neighbors of any vertex $v \in C$ inside $C$. Following~\cite{HarrisSS16}, we do this via a double-counting argument. 
	However, we shall note that the parameter  we use for double-counting is crucially different than the one in~\cite[Lemma~3.9]{HarrisSS16}. 
	\begin{claim}\label{clm:decomp-non-neighbor}
		For any $v \in C$, $\card{C \setminus N(v)} \leq 2\theta \cdot \deg(v)$. 
	\end{claim}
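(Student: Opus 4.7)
My plan is to prove Claim~\ref{clm:decomp-non-neighbor} by a double-counting argument in the same spirit as the one in~\cite{HarrisSS16}, but rescaled to depend on the local degree $\deg(v)$ rather than the global maximum degree, which is precisely the ``different parameter'' alluded to in the surrounding text. Specifically, I would count the number $T$ of pairs $(u,w)$ with $u \in C \setminus (N(v) \cup \{v\})$ and $w \in N(u) \cap N(v)$, i.e., common neighbors of $u$ and $v$ in $G$. Note that although $v \in C \setminus N(v)$ trivially, I exclude $v$ from the summation to obtain the meaningful contribution, so $|C\setminus N(v)| - 1$ will appear in the lower bound.

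For the lower bound on $T$, I would apply Claim~\ref{clm:decomp-shared-neighbors} to each such pair $u \in C$ and $v \in C$, yielding $|N(u) \cap N(v)| \geq (1-5\theta) \cdot \min\{\deg(u),\deg(v)\}$; combined with Claim~\ref{clm:decomp-degree}, this is at least $(1-5\theta)(1-2\theta) \deg(v) \geq (1-7\theta) \deg(v)$. Summing over all $u \in C \setminus (N(v)\cup\{v\})$ gives
\[
T \;\geq\; \bigl(|C \setminus N(v)| - 1\bigr) \cdot (1-7\theta) \deg(v).
\]

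For the upper bound on $T$, I would swap the order of summation to $T = \sum_{w \in N(v)} |\{u \in C \setminus (N(v) \cup \{v\}) : u \in N(w)\}|$ and split the sum based on whether $w \in C$. For $w \in N(v) \cap C$, applying Claim~\ref{clm:decomp-shared-neighbors} now with $w$ in place of $u$ shows $|N(w) \setminus (N(v)\cup\{v\})| \leq O(\theta) \deg(v)$ (using Claim~\ref{clm:decomp-degree} to bound $\deg(w)$), so these $w$'s contribute at most $|N(v) \cap C| \cdot O(\theta) \deg(v) = O(\theta) \deg(v)^2$. For $w \in N(v) \setminus C$, I would first bound $|N(v) \setminus C|$ itself: of the $(1-\theta)\deg(v)$ many $\theta$-friend edges at $v$, those going to $D_\theta$ land in $C$ by definition, so $|N(v) \setminus C|$ is controlled by $\theta\deg(v)$ plus the number of $\theta$-friend neighbors of $v$ outside $D_\theta$; for such $w$'s, the number of $u \in N(w) \cap C$ is in turn bounded by counting non-$\theta$-friend edges and cross-boundary $\theta$-friend edges incident to $C$ (each vertex in $C$ having at most $\theta\deg(\cdot)$ non-$\theta$-friend edges by $\theta$-density).

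The main technical obstacle is handling the $w \in N(v) \setminus C$ term while keeping the final constant as sharp as $2\theta$; I expect this will require carefully tightening the slack in Claim~\ref{clm:decomp-shared-neighbors}'s bound when only one intermediate hop is needed (e.g., when $(v,w) \in F_\theta$) so that the effective multiplicative loss is closer to $(1-\theta)^2$ rather than $(1-5\theta)$. After combining the two bounds and dividing by $(1-7\theta) \deg(v)$, the inequality $|C \setminus N(v)| - 1 \leq O(\theta) \deg(v)$ will follow immediately, and a careful tightening of the intermediate constants (using the stronger form $(1-\theta)^2$ where available, and the clean bound on $|N(v)\setminus C|$) will yield the stated bound $|C \setminus N(v)| \leq 2\theta \deg(v)$.
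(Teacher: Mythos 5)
Your plan matches the paper in spirit—both proceed by double counting—but the crucial design choice of \emph{what} to double count is where it diverges, and the paper explicitly flags this as the key point: ``the parameter we use for double-counting is crucially different than the one in~\cite{HarrisSS16}.'' You count pairs $(u,w)$ with $w$ an arbitrary common neighbor of $u$ and $v$. The paper instead counts triples $(v,w,u)$ in which \emph{both} $(v,w)$ and $(w,u)$ are $\theta$-friend edges, with $u\in C\setminus N(v)$. That extra restriction is what makes the upper bound work cleanly: when summing over $w$, one automatically only ranges over $\theta$-friends of $v$ (at most $\deg(v)$ of them), and for each such $w$ the bound $\card{N(w)\setminus N(v)}\le O(\theta)\deg(v)$ follows directly from the $\theta$-friend definition, regardless of whether $w$ lies in $C$ or even in $D_\theta$.

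Because you drop that restriction, your upper bound has to grapple with all of $N(v)$, and your chosen split (by $w\in C$ versus $w\notin C$) is not the right one. For $w\in N(v)\setminus C$ you propose to first bound $\card{N(v)\setminus C}$, but this set is \emph{not} small: $v$ can have $\Omega(\deg(v))$ $\theta$-friend neighbors that are not $\theta$-dense and hence not in $C$, so $\card{N(v)\setminus C}$ can be $\Theta(\deg(v))$. You then appeal to ``cross-boundary $\theta$-friend edges incident to $C$'' being limited, but there is no clean a priori bound on the number of $\theta$-friend edges from a vertex $u\in C$ to vertices outside $D_\theta$—the $\theta$-density of $u$ only controls its \emph{non-}$\theta$-friend edges. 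So the $w\notin C$ term of your upper bound does not close. (If you instead split by whether $(v,w)\in F_\theta$, the argument can be rescued: $\theta$-friends $w$ of $v$—in or out of $C$—each give $\card{N(w)\setminus N(v)}\le O(\theta)\deg(v)$, and the at most $\theta\deg(v)$ non-$\theta$-friend $w$'s each contribute at most $\card{C\setminus N(v)}$, which can be moved to the left-hand side. But that is precisely re-deriving the paper's restriction on $w$.)

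In short: your counting parameter reverts to the common-neighbor count that the paper warns against, and the resulting upper bound has a genuine gap for $w\in N(v)\setminus C$ that the vague appeal to cross-boundary edges does not repair. The fix is to restrict $w$ to $\theta$-friends of $v$ at the level of the counting parameter itself, as the paper does.
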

	\begin{proof}
		Let $\bard(v) := \card{C \setminus N(v)}$ denote the number of non-neighbors of $v$ in $C$. 
		Let $T$ denote the number of triples $(v,w,u)$ where $(v,w)$ and $(w,u)$ are both $\theta$-friend edges of $G$ while $u \in C \setminus N(v)$. We have, 
		\begin{align*}
			T &= \sum_{u \in C \setminus N(v)} \card{\set{w:  (v,w),(w,u) \in F_{\theta}}} \tag{by definition} \\
			&\geq \sum_{u \in C \setminus N(v)} (1-5\theta) \cdot \min\set{\deg(u),\deg(v)}-2\theta\cdot\max\set{\deg(u),\deg(v)} \tag{by Claim~\ref{clm:decomp-shared-neighbors} and since both $u$ and $v$ are $\theta$-dense} \\
			&\geq \bard(v) \cdot (1-9\theta) \cdot \deg(v) \tag{by definition of $\bard(v)$ and Claim~\ref{clm:decomp-degree} as both $u,v \in C$}; \\ \\ 
			T &= \sum_{w: (v,w) \in F_{\theta}} \card{\set{u: (w,u) \in F_\theta} \cap (C \setminus N(v))} \tag{by definition} \\
			&\leq \sum_{w: (v,w) \in F_{\theta}} \card{N(w) \setminus N(v)} \leq \deg(v) \cdot \theta \cdot \deg(v) \tag{as $w$ and $v$ are $\theta$-friend}. 
		\end{align*}
		Combining the bounds above implies that $\bard(v) \leq \frac{\theta}{1-9\theta}\cdot \deg(v) \leq 2\theta \cdot \deg(v)$ for $\theta < 1/18$. \Qed{Claim~\ref{clm:decomp-non-neighbor}}
		
	\end{proof}
	
	The following claim summarizes the key properties of connected components of $H_{\theta}$. 
	\begin{claim}\label{clm:decomp-cc}
		For any connected component $C$ of $H_{\theta}$, define $\Delta(C) := \max_{v \in C} \deg(v)$. Then: 
		\begin{enumerate}[label=(\roman*)]
			\item For all $v \in C$, $\deg(v) \geq (1-2\theta) \cdot \Delta(C)$; 
			\item For all $v \in C$, $\card{N(v) \cap D_\theta \setminus C} \leq 2\theta \cdot \Delta(C)$; 
			\item For all $v \in C$, $\card{C \setminus N(v)} \leq 2\theta \cdot \Delta(C)$;
			\item Size of $C$ is $\card{C} \leq (1+2\theta) \cdot \Delta(C)$. 
		\end{enumerate}
	\end{claim}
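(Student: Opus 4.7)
} My plan is to derive all four properties as essentially immediate corollaries of Claims~\ref{clm:decomp-shared-neighbors}--\ref{clm:decomp-non-neighbor}, using the simple observation that $\Delta(C)$ is attained at some specific vertex of $C$, which then forces uniform bounds across the component.

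First, for property (i), I would fix any vertex $u^{\star} \in C$ achieving $\deg(u^{\star}) = \Delta(C)$, and then apply Claim~\ref{clm:decomp-degree} to the pair $(u^{\star},v)$ for every $v \in C$. Since $\max\{\deg(u^{\star}),\deg(v)\} = \Delta(C)$, we immediately obtain $\deg(v) \geq (1-2\theta)\Delta(C)$. For properties (ii) and (iii), I would just chain the per-vertex bounds in Claim~\ref{clm:decomp-out-neighbor} and Claim~\ref{clm:decomp-non-neighbor} (which are stated in terms of $\deg(v)$) with the trivial upper bound $\deg(v) \leq \Delta(C)$: that gives $|N(v) \cap D_\theta \setminus C| \leq 2\theta \deg(v) \leq 2\theta \Delta(C)$ and $|C \setminus N(v)| \leq 2\theta \deg(v) \leq 2\theta \Delta(C)$ respectively.

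For property (iv), the idea is to again pick $v^{\star} \in C$ with $\deg(v^{\star}) = \Delta(C)$ and split $C$ into the neighbors of $v^{\star}$ in $C$ and the non-neighbors of $v^{\star}$ in $C$ (the latter set contains $v^{\star}$ itself). The first piece has size at most $\deg(v^{\star}) = \Delta(C)$, while the second piece has size at most $2\theta \cdot \Delta(C)$ by property (iii) applied to $v^{\star}$. Adding these gives $|C| \leq (1+2\theta)\Delta(C)$.

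Since every part reduces to invoking a previously established claim with $\deg(v)$ replaced by $\Delta(C)$ (either by choosing the degree-maximizer or by monotonicity), I do not anticipate any real obstacle here; this claim is really just a convenient repackaging of Claims~\ref{clm:decomp-shared-neighbors}--\ref{clm:decomp-non-neighbor} in a form suitable for later matching with the definition of an $\eps$-almost-clique (Definition~\ref{def:almost-clique}) once $\theta$ is chosen as an appropriate constant multiple of $\eps$.
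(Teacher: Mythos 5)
Your proposal is correct and matches the paper's own (one-line) proof, which simply observes that items (i)--(iii) are restatements of Claims~\ref{clm:decomp-degree}, \ref{clm:decomp-out-neighbor}, and \ref{clm:decomp-non-neighbor}, and that (iv) follows immediately from Claim~\ref{clm:decomp-non-neighbor}. You have merely spelled out the small steps (picking a degree-maximizer, using $\deg(v)\le\Delta(C)$, and the neighbor/non-neighbor split for (iv)) that the paper leaves implicit.
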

	\begin{proof}
		The first three items are restatements of Claims~\ref{clm:decomp-degree},~\ref{clm:decomp-out-neighbor},~\ref{clm:decomp-non-neighbor} and the last one is an immediate corollary of Claim~\ref{clm:decomp-non-neighbor}. 
		\Qed{Claim~\ref{clm:decomp-cc}}
		
	\end{proof}
	
	\paragraph{Handling Vertices Not in $\bm{D_{\theta}}$.} So far, we only focused on vertices of $D_{\theta}$ (through connected components of $H_{\theta}$). We now show a simple property of vertices that are not in $D_{\theta}$ that
	would immediately allows us to partition them into $\Vsparse$ and $\Vunbal$. 
	
	\begin{claim}\label{clm:decomp-not-dense}
		Any vertex $v$ \emph{not} in $D_{\theta}$ is either $(\theta/2)$-sparse or $(\theta/4)$-uneven. 
	\end{claim}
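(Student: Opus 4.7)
} Fix a vertex $v \notin D_\theta$. By definition of $\theta$-density, strictly more than $\theta \cdot \deg(v)$ of the edges incident on $v$ fail to be $\theta$-friend. I would partition the non-$\theta$-friend neighbors of $v$ into three classes, based on \emph{why} the edge $(v,u)$ is not $\theta$-friend:
\begin{itemize}
\item $B_v^+ := \{u \in N(v) : \deg(v) < (1-\theta)\deg(u)\}$ (edge fails $\theta$-balance because $u$ has much higher degree);
\item $B_v^- := \{u \in N(v) : \deg(u) < (1-\theta)\deg(v)\}$ (edge fails $\theta$-balance because $u$ has much lower degree);
\item $F_v := \{u \in N(v) : (v,u)\text{ is }\theta\text{-balanced but }|N(u) \cap N(v)| < (1-\theta)\min\{\deg(u),\deg(v)\}\}$.
\end{itemize}
These classes are pairwise disjoint and together account for every non-$\theta$-friend edge at $v$, so $|B_v^+| + |B_v^-| + |F_v| > \theta \cdot \deg(v)$. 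The proof then splits on whether $B_v^+$ is the dominant contributor.

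\paragraph{Case A: $|B_v^+| \geq (\theta/4)\deg(v)$.} Every $u \in B_v^+$ satisfies $\deg(v) < (1-\theta)\deg(u) < (1-\theta/4)\deg(u)$. Hence $v$ has at least $(\theta/4)\deg(v)$ neighbors witnessing the $(\theta/4)$-uneven condition in Definition~\ref{def:eps-unbalanced}, and we are done immediately.

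\paragraph{Case B: $|B_v^+| < (\theta/4)\deg(v)$.} Then $|B_v^-| + |F_v| > (3\theta/4)\deg(v)$, and I would show that each such $u$ forces $\Omega(\theta \deg(v))$ non-edges inside $N(v)$ incident on $u$. For $u \in B_v^-$: since $v \in N(u)$ but $v \notin N(v)$, the common neighborhood satisfies $|N(u) \cap N(v)| \leq \deg(u) - 1 < (1-\theta)\deg(v) - 1$, so $u$ is non-adjacent to at least $\theta \cdot \deg(v)$ vertices of $N(v) \setminus \{u\}$. For $u \in F_v$: irrespective of whether $\deg(u) \gtrless \deg(v)$, $\theta$-balance gives $\min\{\deg(u),\deg(v)\} \geq (1-\theta)\deg(v)$, and then $|N(u) \cap N(v)| < (1-\theta)\min\{\deg(u),\deg(v)\} \leq (1-\theta)\deg(v)$ yields at least $\theta\deg(v) - 1$ non-neighbors of $u$ in $N(v) \setminus \{u\}$. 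A standard handshake-style double counting then gives
\[
\#\{\text{non-edges in }N(v)\} \;\geq\; \tfrac{1}{2}\,(|B_v^-| + |F_v|)\,(\theta\deg(v) - 1) \;\geq\; \tfrac{3\theta}{8}\,\deg(v)\,(\theta\deg(v) - 1),
\]
and a straightforward comparison shows this exceeds $(\theta/2)^2 \binom{\deg(v)}{2}$ once $\deg(v)$ is sufficiently large (it suffices that $\deg(v) \geq 3/(2\theta)$, well below any regime of interest since otherwise the sparse condition is essentially vacuous). Thus $v$ is $(\theta/2)$-sparse.

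\paragraph{Main obstacle.} There is no deep obstacle, and conceptually the argument is a clean case split. The only care needed is the bookkeeping in Case B: in particular, handling the $\theta$-balanced subcase of $F_v$ uniformly (so that $\deg(u)$ being smaller or larger than $\deg(v)$ both yield the same $\theta\deg(v) - 1$ bound on $u$'s non-neighbors) and dividing by $2$ in the double counting to avoid overcounting non-edges both of whose endpoints lie in $B_v^- \cup F_v$.
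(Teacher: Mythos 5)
Your proof is correct and follows essentially the same route as the paper: classify each non-$\theta$-friend neighbor of $v$ by the reason friendship fails (much higher degree, much lower degree, or balanced but small common neighborhood), observe that the high-degree class witnesses $(\theta/4)$-unevenness while each vertex in the other two classes forces roughly $\theta\deg(v)$ non-edges inside $N(v)$, and close with the same double-counting. The only differences are cosmetic -- you use a single two-way split on $|B_v^+|$ where the paper uses a nested three-way split comparing $|\Nunbal(v)|$ to $|\Nrem(v)|$ and then $|\Nunbalp(v)|$ to $|\Nunbalm(v)|$, and the mild requirement $\deg(v) \gtrsim 1/\theta$ that you track explicitly is also implicitly present (and glossed over) in the paper's Case~1, where the stated ``at least $\theta\deg(v)$ non-edges per vertex'' is really $\theta\deg(v)-1$.
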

	\begin{proof}
		Because $v$ is not $\theta$-sparse, it has at least at least $\theta \cdot \deg{(v)}$ neighbors that are \emph{not} $\theta$-friend with $v$. Let $\Nunfri(v) \subseteq N(v)$ denote the set of these vertices.
		Recall that a vertex $u$ is not $\theta$-friend with $v$ iff either $(u,v)$ is not a $\theta$-balanced edge or $\card{N(u) \cap N(v)} < (1-\theta) \cdot \min\set{\deg(u),\deg(v)}$. 
		Let $\Nunbal(v)$ denote the vertices in $N_1(v)$ that were added because of the first reason and $\Nrem(v)$ denote the remaining vertices in $\Nunfri(v)$. There a couple cases to consider here. 
		
		\smallskip
		{\textbf{Case 1: $\bm{\card{\Nunbal(v)} < \card{\Nrem(v)}}$.}} Any vertex $u$ in $\Nrem(v)$ contributes at least $\theta \cdot \deg(v)$ non-edges to the neighborhood of $v$ (when $\deg(u) < \deg(v)$ it can only contribute more non-edges). 
		As such, in this case there are at least 
		\begin{align*}
			\frac{1}{2} \cdot \card{\Nrem(v)} \cdot \Paren{\theta \cdot \deg(v)} \geq \frac{1}{2} \cdot \Paren{\frac{\theta \cdot \deg(v)}{2}} \cdot \Paren{\theta \cdot \deg(v)} = (\theta/2)^2 \cdot \deg(v)^2,
		\end{align*}
		many non-edges in the neighborhood of $v$; hence $v$ is $(\theta/2)$-sparse in this case. 
		
		\smallskip
		{\textbf{Case 2: $\bm{\card{\Nunbal(v)} \geq \card{\Nrem(v)}}$.}} Let $\Nunbalp(v)$ denote $u \in \Nunbal(v)$ where $\deg(v) < (1-\theta) \cdot \deg(u)$ 
		and $\Nunbalm(v)$ denote the ones where $\deg(u) < (1-\theta) \cdot \deg(v)$ (since $(u,v)$ is not $\theta$-balanced, one of the two cases must happen for $u$). We partition this case into another two cases. 
		
		\smallskip
		{\textbf{Case 2a: $\bm{\card{\Nunbalp(v)} < \card{\Nunbalm(v)}}$.}} Any vertex $u$ in $\Nunbalm(v)$ already contributes $\theta \cdot \deg(v)$ non-edges to the neighborhood of $v$ (simply because its degree is sufficiently small). 
		Hence in this case there are at least
		\begin{align*}
			\frac{1}{2} \cdot \card{\Nunbalm(v)} \cdot \Paren{\theta \cdot \deg(v)} \geq \frac{1}{2} \cdot \Paren{\frac{\theta \cdot \deg(v)}{4}} \cdot \Paren{\theta \cdot \deg(v)} > (\theta/2)^2 \cdot {{\deg(v)}\choose{2}}
		\end{align*}
		many non-edges in the neighborhood of $v$; hence $v$ is $(\theta/2)$-sparse in this case also. 
		
		\smallskip
		{\textbf{Case 2a: $\bm{\card{\Nunbalp(v)} \geq \card{\Nunbalm(v)}}$.}} In this case, we have at least 
		$(\theta/4) \cdot \deg(v)$ neighbors $u$ of $v$ such that $\deg(v) \leq (1-\theta) \cdot \deg(u) < (1-\theta/4) \cdot \deg(u)$, hence $v$ is $(\theta/4)$-uneven in this case. This concludes the proof. 
		\Qed{Claim~\ref{clm:decomp-not-dense}}
		
	\end{proof}
	
	\paragraph{Concluding the Proof of Lemma~\ref{lem:decomposition}.} We are now ready to finalize the proof of the decomposition. The general strategy is to let the connected components of $H_\theta$ be the almost-cliques and 
	then use Claim~\ref{clm:decomp-not-dense} to partition remaining vertices in $\Vsparse$ and $\Vunbal$ accordingly. The catch at this point is that
	Claim~\ref{clm:decomp-cc} does \emph{not} allow us to lower bound size of connected components of $H_{\theta}$ nor it bounds the number 
	of neighbors of vertices in a connected component to outside vertices in $G$ (only in $H_\theta$). We handle these using a similar approach as in~\cite{AssadiCK19}. 
	
	\begin{proof}[Proof of Lemma~\ref{lem:decomposition}]
		Let $\theta = 4\eps$. Consider the graph $H_\theta(D_\theta,F_\theta)$ defined earlier and let $C_1,\ldots,C_\ell$ be its connected components. 
		Let $K_1,\ldots,K_k$ be the components among these that contain at least one $\eps$-dense vertex. Moreover, define $U$ as the set of vertices in $V \setminus K_1 \cup \ldots \cup K_k$. 
		
		None of the vertices in $U$ are $\eps$-dense, hence by Claim~\ref{clm:decomp-not-dense}, we can decompose them into $\Vsparse$ consisting of $(\eps/2)$-sparse vertices 
		and $\Vunbal$ consisting of $(\eps/4)$-uneven vertices (breaking the ties between the two sets arbitrarily). Hence,  these two sets satisfy the requirements of the lemma. 
		
		We now show that for every $i \in [k]$, $K_i$ is an $\eps$-almost-clique according to Definition~\ref{def:almost-clique}. To do so, we prove the properties of 
		Definition~\ref{def:almost-clique} for $K_i$ one by one. 
		\begin{itemize}
		\item \textbf{Property~\ref{ac1}}: For any $v \in K_i$, by Claim~\ref{clm:decomp-cc}, $\deg(v) \geq (1-2\theta) \cdot \Delta(K_i) = (1-8\eps) \cdot \Delta(K_i)$. 
		
		\item \textbf{Property~\ref{ac2}}: By Claim~\ref{clm:decomp-cc}, $\card{K_i} \leq (1+2\theta) \cdot \Delta(K_i) = (1+8\eps) \cdot \Delta(K_i)$, hence we only need to prove the lower bound. Let $v$ be any $\eps$-dense vertex in $K_i$ and $F(v)$
		be the neighbors of $v$ that are $\eps$-friend with $v$ and thus $\card{F(v)} \geq  (1-\eps)\cdot \deg{(v)}$. At the same time, any vertex $u \in F(v)$ shares at least $(1-\eps) \cdot \min\set{\deg(v),\deg(u)} \geq (1-2\eps) \cdot \deg(v)$ 
		neighbors with $v$ 
		by definition of the $(u,v)$ being $\eps$-friend. As such, $u$ has at least $(1-3\eps) \cdot \deg{(v)}$ neighbors in $F(v)$. Moreover, because any two vertices in $F(v)$ share a common neighbor over their $\eps$-friend 
		edges (namely $v$), their degrees are within a factor $(1-2\eps)$ of each other. As such, any vertex in $F(v)$ has a $(4\eps)$-friend edge to at least $(1-3\eps) \cdot \deg{(v)}$ other vertices in $S_v$ (these edges are $(4\eps)$-friend and not $(3\eps)$ to account for the fact that degrees of vertices in $F(v)$ can be larger than $\deg(v)$ by (at most) $(1-\eps)^{-1}$ factor). This in particular 
		implies that all vertices in $F(v) \cup \set{v}$ are part of the same connected component $K_i$ in $H_{\theta} = H_{4\eps}$. Hence, $\card{K_i} \geq \card{F(v)} \geq (1-\eps) \cdot \Delta(K_i)$.
		
		\item \textbf{Property~\ref{ac3}}: By Claim~\ref{clm:decomp-cc}, any vertex $v \in K_i$ has at most $2\theta \cdot \Delta(K_i) = 8\eps \cdot \Delta(K_i)$ non-neighbors in $K_i$. 
		
		\item \textbf{Property~\ref{ac4}}: By combining the lower bound in Property~\ref{ac2} with Property~\ref{ac3}, we have $v \in K_i$ can only have $9\eps \cdot \Delta(K_i)$ neighbors outside of $C$. 
		\end{itemize}
		This concludes the proof of the lemma. \Qed{Lemma~\ref{lem:decomposition}}
		
	\end{proof}
	
\subsubsection{Proof of Theorem~\ref{thm:ps-deg+1-coloring} -- Part~\ref{part:deg+1-p2}}

For the rest of the proof, fix a decomposition of the graph $G(V,E)$ with some sufficiently small \textbf{absolute constant} ${\eps > 0}$ (taking $\eps = 10^{-4}$ would certainly suffice\footnote{In the interest of simplifying the exposition of the proof, we made no attempt in optimizing the constants in this section and instead chose the most straightforward values in every step. Our results continue to hold with much smaller constants.}). 
In the following, we show that we can handle both $\Vunbal$ and $\Vsparse$ vertices first, and then color the almost-cliques using a result of~\cite{AssadiCK19} almost in a black-box way. As such, the main difference
between our work and~\cite{AssadiCK19} (beside the decomposition) is in the treatment of vertices in $\Vunbal \cup \Vsparse$. 

Before we move on, we make an assumption (without loss of generality) that is used to make sure various concentration bounds in the proof hold. 

\newcommand{\Dmin}{\ensuremath{D_{\textnormal{\textsf{min}}}}}

\begin{assumption}\label{assumption}
We may and will assume that degree of every vertex is at least $\Dmin := \alpha \cdot \eps^{10} \cdot \log{n}$ for some sufficiently large \textbf{absolute constant} $\alpha > 0$. This is without loss of generality because by sampling 
$\Theta(\log{n})$ colors, any vertex with lower degree will have $L(v)=S(v)$ and hence we can greedily color these vertices after finding a proper coloring of the rest of the graph. 
\end{assumption}

\subsubsection*{Coloring Sparse and Unbalanced Vertices}

We prove the following lemma in this part. 
\begin{lemma}\label{lem:coloring-sparse-unbalanced}
	Suppose for every vertex $v \in \Vsparse \cup \Vunbal$, we sample a set $L(v)$ of $\Theta(\eps^{-6} \cdot \log{n})$ colors 
	independently and uniformly at random from $S(v) := \set{1,\ldots,\deg(v)+1}$. Then, with high probability, the induced subgraph $G[\Vsparse \cup \Vunbal]$ can be properly colored from 
	the sampled lists. 
\end{lemma}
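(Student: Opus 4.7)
The plan is to adapt the ACK19 framework for $(\Delta+1)$ coloring to our new decomposition, with the key novelty being the handling of uneven vertices in $\Vunbal$. For each sparse vertex $v \in \Vsparse$, the $(\eps/2)^2 \binom{\deg(v)}{2}$ non-edges in $N_G(v)$ yield, via a greedy matching argument on the non-edge graph (whose maximum degree is at most $\deg(v)-1$), a matching $M(v)$ of non-edges of size $\Omega(\eps^2 \deg(v))$; each pair $(u,u') \in M(v)$ can potentially share a color, so $v$'s neighborhood effectively consumes $\deg(v) - |M(v)|$ distinct colors, giving an $\Omega(\eps^2)$-fraction slack in $v$'s palette $S(v) = \set{1,\ldots,\deg(v)+1}$. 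For each uneven vertex $v \in \Vunbal$, at least $(\eps/4)\deg(v)$ neighbors $u$ satisfy $|S(u)| > |S(v)|/(1-\eps/4)$, so for any fixed color $c \in S(v)$ the single-neighbor collision probability is $\Pr\bracket{c(u)=c} = 1/|S(u)| < (1-\eps/4)/|S(v)|$; summed over $N(v)$ this strictly beats the naive $\deg(v)/|S(v)|$ bound by $\Omega(\eps^2)$, yielding analogous slack.

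With the slack in hand, I would run a two-step procedure analogous to ACK19: (i) each vertex independently picks a tentative color $c(v)$ uniformly from $L(v)$; (ii) $v$ is \emph{safely colored} if $c(v)$ does not conflict with the color of any already-processed neighbor (and, for sparse vertices, is consistent with the matching $M(v)$, i.e.\ non-adjacent neighbors paired in $M(v)$ are allowed to collide). Using Talagrand's inequality (Proposition~\ref{prop:talagrand}) together with Assumption~\ref{assumption} to ensure $\ell = \Theta(\eps^{-6}\log n)$ is large enough for concentration, I expect to show that the set of safe colors available at $v$ in $L(v)$ has size $\Omega(\eps^2 \ell)$ with high probability. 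The symmetric Lov\'asz Local Lemma (Proposition~\ref{prop:lll}) then guarantees the existence of a valid completion, yielding a proper coloring of $G[\Vsparse \cup \Vunbal]$ from the sampled lists.

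The main obstacle will be the uneven-vertex analysis, which has no direct analogue in ACK19. Since the palettes $S(v) = \set{1,\ldots,\deg(v)+1}$ differ across vertices, the conflict analysis must be done color-by-color over the local palette $S(v)$ rather than against a single global palette of size $\Delta+1$. In particular, the lists $L(u)$ and $L(v)$ of a low-degree $v$ and a high-degree neighbor $u$ overlap on $S(v)$ in a biased way that must be controlled both in expectation and in concentration, to ensure the $\Omega(\eps^2)$ slack is not washed out by random deviations in list intersections across the $\deg(v)$ neighbors of $v$. A secondary subtlety is that for a sparse vertex $v$, a pair $(u,u') \in M(v)$ only provides actual savings if $L(u) \cap L(u') \neq \emptyset$, which holds in expectation but must be certified with the right concentration—this is where the $\eps^{-6}$ rather than $\eps^{-2}$ in $\ell$ is likely to come from.
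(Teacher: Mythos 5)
Your high-level framework (create slack in each palette, then color greedily, using the ACK19 playbook) is the right one and matches what the paper does, but several steps as written have gaps that would block a complete proof.

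\textbf{Uneven vertices: the per-color argument as stated is wrong.} You claim that summing $\Pr[c(u)=c]$ over $u\in N(v)$ ``strictly beats the naive $\deg(v)/|S(v)|$ bound,'' but that requires the low-degree neighbors to contribute at most $1/|S(v)|$ each, which is false: if $\deg(u) < \deg(v)$ and $c \in S(u)$, then $\Pr[c(u)=c] = 1/|S(u)| > 1/|S(v)|$, so those terms can blow past the naive bound. What is actually true (and what the paper exploits) is that the $\Omega(\eps)\deg(v)$ high-degree neighbors $u$ each have $|S(u)\setminus S(v)| \geq \Omega(\eps)\deg(u)$ colors not in $S(v)$ at all, so with constant probability such a $u$, if colored, takes a color \emph{outside} $S(v)$ and hence never depletes $v$'s palette. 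The right accounting is a count of these ``good'' neighbors, realized via a random partial-coloring phase (the paper's $\FirstColor$), not a per-color collision probability comparison.

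\textbf{The sparse-vertex matching idea has the same flavor as the paper's argument but does not directly yield the bound, and it breaks down on a subfamily of vertices you have not identified.} A matching $M(v)$ on the non-edge graph of $N(v)$ gives a potential for color reuse, but a greedy/tentative coloring with LLL completion does not force matched pairs to actually receive the same color, so the slack is not realized in the way you describe. The paper instead runs a randomized activation phase and counts (via a random variable $Z$) the colors that are actually assigned to at least two non-adjacent neighbors, with a careful Talagrand argument because $Z$ is not directly $O(1)$-certifiable. More importantly, the non-edges of a sparse vertex $v$ can live entirely among very-low-degree neighbors (those with $\deg(u) \ll \eps^2\deg(v)$) whose palettes $S(u)\subsetneq S(v)$ are tiny and nearly disjoint from the bulk of $S(v)$; for such pairs the probability of a shared color after sampling $L(u)$ and $L(u')$ depends on $\min(|S(u)|,|S(u')|)$ and is \emph{not} controlled by $\deg(v)$, so neither the matching argument nor its concentration goes through. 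Symmetrically, non-edges among very-high-degree neighbors contribute negligible reuse probability. The paper resolves this by sub-partitioning $\Vsparse\cup\Vunbal$ into $\Vsmall$, $\Vlarge$, and the rest, and handling each differently: $\Vlarge$ is treated by the same ``good neighbor takes a color outside $S(v)$'' argument as $\Vunbal$, while $\Vsmall$ is handled by a completely different observation in the second (greedy) step — its many low-degree neighbors have palettes confined to $\{1,\ldots,\dsmall(v)\}$ and so collectively consume at most $\dsmall(v)$ distinct colors of $S(v)$, leaving $\Omega(\eps^2\deg(v))$ colors untouched without any randomized slack-creation at all. Your proposal has no analogue of this trichotomy, and without it the concentration bounds you want to extract from Talagrand will not hold uniformly over $\Vsparse\cup\Vunbal$.
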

We construct the coloring of Lemma~\ref{lem:coloring-sparse-unbalanced} in two steps. The first step is to create ``excess'' colors on vertices (reducing the problem essentially to $(1+o(1))\deg(v)$ coloring) and the second one
is to exploit these excess colors to color the vertices using an argument similar to Part~\ref{part:deg+1-p1} of Theorem~\ref{thm:ps-deg+1-coloring}. One important bit is that
the first step of this argument should be done \emph{simultaneously} for both $\Vunbal$ and $\Vsparse$.

For the proof of Lemma~\ref{lem:coloring-sparse-unbalanced}, we need to partition vertices in $\Vsparse$ and $\Vunbal$ further in order to be able to handle the disparity in degree of vertices. As such, we define: 

\newcommand{\Vsmall}{\ensuremath{V^{\textnormal{\textsf{small}}}}}
\newcommand{\SSmall}{\ensuremath{\textnormal{Small}}}
\newcommand{\dsmall}{\ensuremath{d_{\textnormal{\textsf{small}}}}}

\newcommand{\Vlarge}{\ensuremath{V^{\textnormal{\textsf{large}}}}}
\newcommand{\LLarge}{\ensuremath{\textnormal{Large}}}
\newcommand{\dlarge}{\ensuremath{d_{\textnormal{\textsf{large}}}}}

\begin{itemize}
	\item $\psi := \eps^2/32$: a parameter used throughout the definitions in this part for ease of notation. 
	\item\Vsmall: Let $\SSmall(v) := \set{u \in N(V): \deg(u) < \dsmall(v)}$ where $\dsmall(v) := \psi \cdot \deg(v)$. \\
	We define $\Vsmall \subseteq \Vsparse \cup \Vunbal$ as all vertices $v$ with $\card{\SSmall(v)} \geq 2\dsmall(v)$. 
	\item \Vlarge: Let $\LLarge(v) := \set{u \in N(v): \deg(u) > \dlarge(v)}$ where $\dlarge(v) := 2\deg(v)$. \\
	We define $\Vlarge \subseteq \Vsparse \cup \Vunbal$ as all vertices $v$ with $\card{\LLarge(v)} \geq \psi \cdot \deg(v)$.\footnote{We remark that the change in the place where $\psi$ used in the two definitions above is intentional and not a typo.}
\end{itemize}

As stated earlier, the goal of our first step is to construct excess colors for vertices. As it will become evident shortly, vertices in $\Vsmall$ actually
do not need require having excess colors to begin with (roughly speaking, after coloring their very ``low degree'' neighbors in $\SSmall(v)$, we are anyway left with many excess colors). 
Hence, we ignore these vertices in the first step altogether and handle them directly in the second one. Another important remark about the first step is that even though its goal is to color only $\Vsparse \cup \Vunbal$ (minus $\Vsmall$), 
we assume \emph{all} vertices of the graph (including almost-cliques) participate in its coloring procedure. This is only to simplify the math and after this step we simply \emph{uncolor} all vertices that are not in $\Vsparse \cup \Vunbal$.  

\paragraph{Creating Excess Colors.} We start with the following coloring procedure as our first step: 

\begin{tbox}
	$\FirstColor$: A procedure for finding a (partial) coloring of $G[\Vsparse \cup \Vunbal]$. 
\begin{enumerate}
	\item Iterate over vertices of $V$ in an arbitrary order. 
	\item For every vertex $v$, \textbf{activiate} $v$ w.p. $\pactive := \psi/16 ~~(=\Theta(\eps^2))$. 
	\item For every activated vertex $v$, pick a color $c_1(v)$ uniformly at random from $L(v)$ and if $c(v)$ is not used to
	color any neighbor of $v$ so far, \textbf{color} $v$ with $c_1(v)$. 
\end{enumerate}
\end{tbox}

We shall note right away that distribution of $c_1(v)$ for every vertex $v$ in $\FirstColor$ is simply uniform over $S(v)$. 
For any vertex $v \in V$, let $S_1(v)$ denote the list of available colors $S(v)$ after removing the colors assigned to neighbors of $v$ in this procedure. Similarly, let $\deg_1(v)$ denote the degree of $v$ after removing 
the colored neighbors of $v$ from the graph. We show that $S_1(v)$ is ``sufficiently larger'' than $\deg_1(v)$ for all vertices in $\Vsparse \cup \Vunbal \setminus \Vsmall$. Formally, 
\begin{lemma}\label{lem:excess-colors}
	There exists an {absolute constant} $\alpha \in (0,1)$ such that with high probability, for every $v \in \Vsparse \cup \Vunbal \setminus \Vsmall$, we have 
	$
	\card{S_1(v)} \geq \deg_1(v) + \alpha \cdot \eps^6 \cdot \deg(v).
	$ 
\end{lemma}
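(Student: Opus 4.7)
The plan is to rewrite the required gap as a random \emph{excess} and then to lower-bound this excess by $\Omega(\eps^6 \deg(v))$ in each of three structural sub-cases that cover $\Vsparse \cup \Vunbal \setminus \Vsmall$. Let $X(v)$ denote the number of neighbors of $v$ colored by \FirstColor{} and $Y(v)$ the number of distinct colors from $S(v)$ used to color them; then $\card{S_1(v)} - \deg_1(v) = 1 + (X(v) - Y(v))$, and writing $n_c(v)$ for the number of neighbors of $v$ assigned color $c$,
\begin{align*}
X(v) - Y(v) \;=\; \underbrace{\sum_{c \in S(v)} (n_c(v)-1)^+}_{\textnormal{collision gain}} \;+\; \underbrace{\sum_{c \notin S(v)} n_c(v)}_{\textnormal{outside gain}}.
\end{align*}
It thus suffices to show one of these two non-negative quantities is $\Omega(\eps^6 \deg(v))$ with the claimed probability.

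For $v \in \Vunbal \setminus \Vsmall$ I would use the outside gain. Each of the $\geq (\eps/4)\deg(v)$ witnesses $u$ of unevenness satisfies $\card{S(u) \setminus S(v)}/\card{S(u)} = 1 - (\deg(v)+1)/(\deg(u)+1) = \Omega(\eps)$, so $u$ is activated and picks some color outside $S(v)$ with probability $\geq \pactive \cdot \Omega(\eps) = \Omega(\eps^3)$; a short Markov-style bound shows the expected number of earlier-processed neighbors of $u$ that also pick the same outside color is $O(1)$ (in fact bounded by a small constant by a careful choice of constants), so $u$ actually survives the conflict check and receives an outside color with probability $\Omega(\eps^3)$, yielding expected outside gain $\Omega(\eps^4 \deg(v))$. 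The case $v \in \Vlarge \setminus \Vsmall$ is essentially identical: each of the $\geq \psi \deg(v)$ large neighbors $u$ of $v$ (with $\deg(u) > 2\deg(v)$) has $\card{S(u)\setminus S(v)}/\card{S(u)} \geq 1/3$, giving expected outside gain $\Omega(\pactive \cdot \psi \deg(v)) = \Omega(\eps^4 \deg(v))$.

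For the remaining case $v \in \Vsparse \setminus (\Vsmall \cup \Vlarge)$ I would use the collision gain. A direct double-counting argument combining the $(\eps/2)$-sparsity of $v$ with the bounds $\card{\SSmall(v)} \leq 2\psi\deg(v)$ and $\card{\LLarge(v)} \leq \psi\deg(v)$ yields at least $\eps^2 \deg(v)^2/32$ non-edges $(u, u')$ in $N(v)$ with $\psi\deg(v) \leq \deg(u), \deg(u') \leq 2\deg(v)$. For every such non-edge, the probability that $u$ and $u'$ are both activated and both pick the same color in $S(v)$ is at least $\pactive^2 \cdot \card{S(u)\cap S(u')\cap S(v)}/(\card{S(u)}\cdot\card{S(u')}) \geq \pactive^2/(5\deg(v)) = \Omega(\eps^4/\deg(v))$, and a short no-prior-conflict argument preserves a constant fraction of this; summing gives expected number of colliding non-edge pairs $\Omega(\eps^6 \deg(v))$. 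To convert this pair bound into a collision-gain bound I would use that for every $c \in S(v)$ the per-neighbor probability of being colored $c$ is $O(\pactive/\card{S(v)})$ so that $\EX[n_c(v)] = O(\pactive) = O(\eps^2) \ll 1$; in this small-mean regime $n_c(v)$ is well approximated by a Poisson variable, so $\EX[(n_c(v)-1)^+] = (1 - O(\eps^2)) \cdot \EX[\binom{n_c(v)}{2}]$, and summing over $c \in S(v)$ transfers the colliding-pair bound to an expected collision gain of $\Omega(\eps^6 \deg(v))$.

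The main technical obstacle is concentration in the sparse case, since the indicators of different colliding non-edge pairs are correlated both via shared endpoints and via the sequential conflict checks. I plan to apply Talagrand's inequality (Proposition~\ref{prop:talagrand}) by viewing the collision gain as a function of the independent per-vertex random choices (activation bit and chosen color): changing any single such coordinate alters the collision gain by $O(1)$ (it can only create or destroy collisions involving one endpoint and one color), and each collision event is $2$-certifiable from the activation-and-color choices at its two endpoints. Under Assumption~\ref{assumption} the expected collision gain is at least $\Omega(\eps^6 \deg(v)) \geq \Omega(\eps^{16} \log n)$, large enough that Talagrand's inequality yields failure probability at most $n^{-4}$ for each $v$; a union bound over $v \in \Vsparse \cup \Vunbal \setminus \Vsmall$ then completes the proof.
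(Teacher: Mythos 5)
Your high-level plan mirrors the paper's: the same three-way case split ($\Vunbal$, $\Vlarge$, $\Vsparse \setminus (\Vsmall \cup \Vlarge)$), outside-of-$S(v)$ colors handled for the first two and collisions for the third, and Talagrand for concentration. But the concentration step has a genuine gap, and it is exactly the one the paper singles out as the crux.

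You claim the collision gain $\sum_{c \in S(v)} (n_c(v)-1)^+$ (with $n_c(v)$ the number of neighbors \emph{colored} by $c$) is $O(1)$-Lipschitz and, for each colliding pair, $2$-certifiable from the two endpoints' activation-and-color choices. Both claims fail for the same reason: a collision requires both endpoints to \emph{retain} (actually receive) the color, not merely to sample it, and retention is decided by the sequential conflict check, so it depends on the entire causal history of the process in the two-hop neighborhood. Knowing $\omega_{u_1},\omega_{u_2}$ certifies only that $u_1,u_2$ \emph{sampled} $c$; certifying that neither was blocked requires revealing the colors and retention status of their earlier-processed neighbors, which in turn requires their earlier neighbors, and so on. Likewise, changing a single coordinate $\omega_w$ can cascade down this chain and flip the retention status of many vertices in $N(v)$, so the Lipschitz constant is not $O(1)$. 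The paper states this explicitly (``$Z$ may only be $\Omega(\deg(v))$-certifiable because for every color counted in $Z$, we need to reveal $\omega_u$ for $\Omega(\deg(v))$ vertices to ensure this color is retained'') and resolves it by the decomposition $Z = T - D$, where $T$ counts colors \emph{sampled} by two neighbors (a statement only about the sampled palettes, hence $O(1)$-Lipschitz and $2$-certifiable) and the correction term $D$ is certified by an extra conflicting sample rather than by the absence of conflicts. You need this or an equivalent truncation/coupling device; Talagrand applied directly to the retained-collision count does not go through.

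Two secondary issues: (1) For the $\Vunbal$ and $\Vlarge$ cases you only supply an expectation bound and do not address concentration; the paper gets it for free because its per-vertex success event, conditioned on the activation coins, has probability $\geq \theta/3$ uniformly over the history and so the count of ``good'' neighbors stochastically dominates a binomial, whence Chernoff. Your Markov-style bound can in principle be turned into the same domination argument, but as written the concentration step for these two cases is missing. (2) The ``Poisson approximation'' step used to convert colliding pairs into a collision-gain expectation is informal, and it also quietly assumes you can compute moments of $n_c(v)$ (a retention-dependent quantity) ignoring the conflict checks; the paper instead lower-bounds $\expect{Z}$ via an auxiliary $Y$ whose contributing events demand that the shared color is not even sampled in the relevant neighborhoods, which makes the events tractable while giving a valid lower bound.
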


The proof of of this lemma is given in three parts, each for coloring one of the sets $\Vunbal$, $\Vlarge$ and $\Vsparse \setminus (\Vsmall \cup \Vlarge)$ separately. The first two have an almost identical proof and are based on a novel argument -- the third part uses a different argument which on a high level is similar to the approach of~\cite{AssadiCK19} (and~\cite{ElkinPS15,HarrisSS16,ChangLP18}, all rooted in an earlier work of~\cite{MolloyR97}) for coloring sparse vertices (according to a global definition of sparse based on $\Delta$), although several new challenges has to be addressed there as well. 
\begin{lemma}\label{lem:first-uneven}
	W.h.p. for every $v \in \Vunbal$ we have $\card{S_1(v)} \geq \deg_1(v) + \alpha \cdot \eps^4 \cdot \deg(v)$. 
\end{lemma}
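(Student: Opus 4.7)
The plan is to combine a simple deterministic accounting with a probabilistic argument that exploits the degree gap defining $\Vunbal$. First, I would observe the deterministic identity
\[
\card{S_1(v)} - \deg_1(v) = 1 + (\#\text{colored nbrs of } v) - \card{\set{c \in S(v) : c \text{ used by some nbr of } v}} \geq 1 + \card{W(v)},
\]
where $W(v) \subseteq N(v)$ denotes the neighbors of $v$ that are successfully colored in \FirstColor\ with a color \emph{outside} $S(v) = \set{1,\ldots,\deg(v)+1}$: each such $u$ is counted among the colored neighbors but does not remove any color from $S(v)$. It therefore suffices to prove $\card{W(v)} = \Omega(\eps^4 \deg(v))$ w.h.p.

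By Definition~\ref{def:eps-unbalanced}, there is a set $U(v) \subseteq N(v)$ of at least $(\eps/4)\deg(v)$ neighbors $u$ with $\deg(u) \geq \deg(v)/(1-\eps/4)$, and so $\card{S(u) \setminus S(v)}/\card{S(u)} \geq \eps/4$. Since $c_1(u)$ is uniform over $S(u)$ in $\FirstColor$, the probability that $u$ is activated and picks a color outside $S(v)$ is at least $\pactive \cdot \eps/4$. I would then control the ``conflict'' probability --- that some earlier neighbor $w$ of $u$ is activated with $c_1(w) = c_1(u)$ --- by summing over $w$ and using the nested structure $S(x) = \set{1,\ldots,\deg(x)+1}$:
\[
\Pr\bigl[u\text{ act.}, c_1(u) \notin S(v), \text{conflict}\bigr] \leq \pactive \cdot \sum_{w \in N(u)} \pactive \cdot \frac{\card{S(u) \cap S(w) \setminus S(v)}}{\card{S(u)} \cdot \card{S(w)}} \leq \pactive^2,
\]
where in both regimes $\deg(w) \leq \deg(u)$ and $\deg(w) > \deg(u)$ the summand is bounded by $\pactive/\card{S(u)}$, so the sum is at most $\pactive \cdot \deg(u)/\card{S(u)} \leq \pactive$. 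Hence $\Pr[u \in W(v)] \geq \pactive\eps/4 - \pactive^2 \geq \pactive\eps/8$ (using $\pactive = \eps^2/512 \ll \eps$), and linearity gives $\Exp[\card{W(v)}] \geq \card{U(v)} \cdot \pactive\eps/8 = \Omega(\eps^4\deg(v))$, which is $\Omega(\log n)$ by Assumption~\ref{assumption} once $\alpha$ is taken sufficiently large.

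The main obstacle will be \emph{concentration}, since the indicators $\mathbb{1}[u \in W(v)]$ are correlated through shared neighbors. I would write $\card{W(v)} \geq Y(v) - D(v)$, where $Y(v) := \card{\set{u \in U(v) : u \text{ activated and } c_1(u) \notin S(v)}}$ is a sum of \emph{independent} indicators (and thus concentrates sharply around $\Exp[Y(v)] = \Omega(\eps^4\deg(v))$ by Proposition~\ref{prop:chernoff}), while $D(v)$ is bounded by the number of conflict pairs $(u,w)$ with $u \in U(v)$, $w \in N(u)$, both activated with $c_1(u) = c_1(w) \notin S(v)$. The pair-count has expectation $\Exp[D(v)] \leq \card{U(v)} \cdot \pactive^2 = O(\pactive) \cdot \Exp[Y(v)] \ll \Exp[Y(v)]$, and a two-phase argument --- first applying Chernoff to the independent activations to control $\card{N(u) \cap A}$ for $u \in U(v)$, and then bounding conflict-pair deviations via Talagrand's inequality (Proposition~\ref{prop:talagrand}), using that each conflict pair is $2$-certifiable in the independent variables $\set{(b_x, c_1(x))}_{x \in V}$ --- will yield $D(v) \leq \Exp[Y(v)]/4$ w.h.p. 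Combining gives $\card{W(v)} \geq \Exp[Y(v)]/4 = \Omega(\eps^4\deg(v))$ w.h.p., and a union bound over $v \in \Vunbal$ completes the proof.
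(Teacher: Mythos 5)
Your deterministic accounting is exactly the paper's: what you call $W(v) \cap U(v)$ is what the paper calls the set of ``good'' vertices, the identity $\card{S_1(v)} - \deg_1(v) \geq 1 + \card{W(v)}$ matches the paper's observation that $\card{S_1(v)} \geq \deg_1(v) + \ngood(v)$, and your per-vertex success probability $\geq \pactive\eps/4 - \pactive^2 = \Omega(\pactive\eps)$ and expectation bound $\Omega(\eps^4\deg(v))$ agree with the paper's up to constants. So the structural idea and the first-moment computation are right.

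The concentration step, however, has a genuine gap. You decompose $\card{W(v)} \geq Y(v) - D(v)$ and propose controlling the conflict-pair count via Talagrand, citing that each pair is $2$-certifiable. But Talagrand's inequality (Proposition~\ref{prop:talagrand}, either form) also requires a bound on the Lipschitz constant, and the pair-count is \emph{not} $O(1)$-Lipschitz in the variables $\{(b_x,c_1(x))\}$: if a vertex $x$ has many active neighbors $u \in U(v)$ that all sampled $c_1(u)=c_1(x)^{\mathrm{old}}$, then changing $c_1(x)$ flips all of those pairs simultaneously. Your two-phase conditioning on $\card{N(u)\cap A}=O(\pactive\deg(u))$ only caps the worst-case change at $O(\pactive\Delta)$, not $O(1)$, and with $\Exp[D(v)]=O(\pactive^2\deg(v))$ this gives a Talagrand tail of the form $\exp\bigl(-\Theta(\eps^8\deg(v)/(\pactive^4\Delta^2))\bigr)$, which is useless when $\Delta \gg \deg(v)$. (Contrast with the paper's Lemma~\ref{lem:first-sparse}, where the Talagrand route works precisely because the variables $T$ and $D$ there count \emph{colors}, not pairs, and a single coordinate change can only touch the two colors $c_1(x)^{\mathrm{old}},c_1(x)^{\mathrm{new}}$.)

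The paper avoids this issue entirely with a stochastic-domination argument: after conditioning on the activation coins alone (guaranteeing $\adeg(u)\leq 2\pactive\deg(u)$ for all $u$ and $\card{\Uactive(v)}\geq(\pactive/2)\theta\deg(v)$), it processes the activated vertices $u_1,\dots,u_k\in\Uactive(v)$ in the $\FirstColor$ order and observes that \emph{regardless of the history} $\RR(u_1),\dots,\RR(u_{i-1})$, the conditional probability that $u_i$ is good is at least $\theta/3$, because at most $\adeg(u_i)<(\theta/2)\deg(u_i)$ colors of $\Se(u_i)$ can have been blocked while $c_1(u_i)$ is a fresh uniform draw from $S(u_i)$. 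This makes $\ngood(v)$ stochastically dominate $\mathcal{B}(k,\theta/3)$, and Chernoff finishes. If you want to salvage your route, you would need to replace the pair-count $D(v)$ by a $2$-Lipschitz surrogate (e.g.\ a count indexed by colors), but the sequential argument is the cleaner fix.
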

\begin{proof}
	Let $\theta := (\eps/4)$ and recall that all vertices in $\Vunbal$ are $\theta$-uneven by Lemma~\ref{lem:decomposition}. 
	Fix a vertex $v$ in $\Vunbal$ and let $U(v)$ be the neighbors $u$ of $v$ where $\deg(v) < (1-\theta) \cdot \deg(u)$. As $v$ is $\theta$-uneven 
	$\card{U(v)} \geq \theta \cdot \deg(v)$. For any $u \in U(v)$, let $\Se(u) = S(u) \setminus S(v)$ denote the set of colors that are available (originally) to $u$ but not to $v$. 
	For $\se(u) := \card{\Se(u)}$, we have,
	\begin{align}
	\se(u) = \deg(u) - \deg(v) \geq \deg(u) - (1-\theta) \cdot \deg(u) = \theta \cdot \deg(u). \label{eq:uneven1}
	\end{align} 
	
	We say that a vertex $u \in U(v)$ is \emph{good} iff $u$ is colored from $\Se(u)$ by $\FirstColor$. Let $\ngood(v)$ denote the number of good neighbors of $v$. 
	It is easy to see that $\card{S_1(v)} \geq \deg_1(v) + \ngood(v)$ as colors of good vertices are not removed from $S(v)$. Our goal is to lower bound $\ngood(v)$ then. 

	Define the following two events: 
	\begin{itemize}
	\item $\eventact$: For every vertex $u \in V$, the number of active neighbors of $u$, denoted by $\adeg(u)$, is between $(\pactive/2) \cdot \deg(u)$ and $(2\pactive) \cdot \deg(u)$. 
	\item  $\eventact^U(v)$: The set $\Uactive(v)$ of \emph{active} vertices in $U(v)$ has size at least $(\pactive/2) \cdot \theta \cdot \deg(v)$. 
	\end{itemize}
	By our Assumption~\ref{assumption} and a simple application of Chernoff bound, both event $\eventact$ and $\eventact^U(v)$ hold with high probability (recall the lower bound on size of $U(v)$) above. 
	Note that both these events are only a function of the probability of activating each vertex and independent of choice of lists $L$. Hence, in the following we condition on these events (and all coins tosses for activation probabilities)
	and only consider the randomness with respect to choices in $L$. 
	
	Let $u_1,\ldots,u_k$ for $k := (\pactive/2) \cdot \theta \cdot \deg(v)$ be the first $k$ vertices in $\Uactive(v)$ according to the ordering of $\FirstColor$. Let $\RR(u_i)$ denote all the random choices 
	that govern whether $u_i$ will be good or not. Note that by the time we process $u_i$ at most $\adeg(u_i)$ colors from $S(u_i)$ may have been assigned to neighbors of $u_i$. Even if all of these colors
	are adversarially chosen to be in $\Se(u_i)$, the number of colors that if chosen by $u_i$ make $u_i$ a good vertex is at least: 
	\begin{align*}
	\se(u_i) - \adeg(u_i) \geq \theta \cdot \deg(u_i) - (2\pactive) \cdot \deg(u_i) > (\theta/2) \cdot \deg(u_i). \tag{by Eq~\eqref{eq:uneven1} and event $\eventact$, respectively and since $\pactive = \Theta(\eps^2) < \theta/4$}  
	\end{align*}
	Even conditioned on everything else, this choice is only a function of $c_1(u_i)$ chosen uniformly at random from $S(u_i)$. As such,  
	\begin{align*}
		\Pr\paren{\text{$u_i$ is good} \mid \RR(u_1),\ldots,\RR(u_{i-1})} \geq \frac{(\theta/2) \cdot \deg(u_i)}{\deg(u_i)+1} \geq (\theta/3). 
	\end{align*}
	This implies that $(i)$ $\expect{\ngood(v)} \geq (\theta/3) \cdot k$ and $(ii)$ the distribution of good vertices among first $k$ vertices in $\Uactive(v)$ stochastically dominates the binomial distribution $\mathcal{B}(k,\theta/3)$.  
	 By a basic concentration of binomial distributions (say by using Chernoff bound in Proposition~\ref{prop:chernoff}): 
	 \begin{align*}
	 	\Pr\paren{\ngood(v) < (\theta/6) \cdot k} \leq \exp\paren{-\Theta(1) \cdot \theta \cdot k } = \exp\paren{-\Theta(1) \cdot \eps^4 \cdot \log{n}} \ll n^{-10}. \tag{by the choice of $\theta= \Theta(\eps)$, $\pactive = \Theta(\eps^2)$, $k$, and Assumption~\ref{assumption}}
	 \end{align*}
	As $k = \Theta(\eps^3 \cdot \deg(v))$ and $\theta=\Theta(\eps)$, we obtain that
	 w.h.p. $\ngood(v) \geq \Theta(\eps^4) \cdot \deg(v)$. \Qed{Lemma~\ref{lem:first-uneven}}
	 
\end{proof}

\begin{lemma}\label{lem:first-large}
	W.h.p. for every $v \in \Vlarge$ we have $\card{S_1(v)} \geq \deg_1(v) + \alpha \cdot \eps^4 \cdot \deg(v)$. 
\end{lemma}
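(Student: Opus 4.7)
The plan is to mirror the argument of Lemma~\ref{lem:first-uneven} almost verbatim, exploiting the fact that vertices in $\Vlarge$ are even more favorably ``imbalanced'' than uneven ones. For $v \in \Vlarge$, I would take $U(v) := \LLarge(v)$, so $\card{U(v)} \geq \psi \cdot \deg(v)$, and for each $u \in U(v)$ define $\Se(u) := S(u) \setminus S(v) = \set{\deg(v)+2,\ldots,\deg(u)+1}$. Since $\deg(u) > 2\deg(v)$ by the definition of $\LLarge(v)$, we get $\se(u) = \deg(u)-\deg(v) > \deg(u)/2$, which is in fact stronger than the analogous bound $\se(u) \geq \theta\deg(u)$ used in the uneven case. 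As before, a vertex $u \in U(v)$ is called \emph{good} if $\FirstColor$ colors $u$ with some $c_1(u) \in \Se(u)$, and the key observation is still $\card{S_1(v)} \geq \deg_1(v) + \ngood(v)$, so it suffices to lower bound $\ngood(v)$.

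Next I would condition on the high-probability event $\eventact$ that every vertex $u$ satisfies $\adeg(u) \in [(\pactive/2)\deg(u),\,(2\pactive)\deg(u)]$, together with the event that $\card{\Uactive(v)} \geq (\pactive/2) \cdot \psi \cdot \deg(v)$; both follow by a Chernoff bound using Assumption~\ref{assumption} and are independent of the randomness in the lists $L$. Let $u_1,\ldots,u_k$ for $k := (\pactive/2) \cdot \psi \cdot \deg(v) = \Theta(\eps^4) \cdot \deg(v)$ be the first $k$ vertices of $\Uactive(v)$ in the order used by $\FirstColor$. When $u_i$ is processed, at most $\adeg(u_i) \leq (2\pactive)\deg(u_i)$ colors from $S(u_i)$ have been assigned to its neighbors, so even if all of them land in $\Se(u_i)$, the number of colors in $\Se(u_i)$ whose selection would make $u_i$ good is at least
\begin{align*}
\se(u_i) - \adeg(u_i) \;\geq\; \tfrac{1}{2}\deg(u_i) - (2\pactive)\deg(u_i) \;\geq\; \tfrac{1}{4}\deg(u_i),
\end{align*}
where the last inequality uses $\pactive = \psi/16 = \eps^2/512$, which is comfortably smaller than $1/16$.

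Since $c_1(u_i)$ is uniform on $S(u_i)$ independently of the entire prior history $\RR(u_1),\ldots,\RR(u_{i-1})$, the conditional probability that $u_i$ is good is at least $\tfrac{1}{4}\deg(u_i)/(\deg(u_i)+1) \geq 1/5$. Hence the number of good vertices among $u_1,\ldots,u_k$ stochastically dominates $\mathcal{B}(k,1/5)$, and Chernoff (Proposition~\ref{prop:chernoff}) yields
\begin{align*}
\Pr\paren{\ngood(v) < k/10} \;\leq\; \exp\paren{-\Theta(1) \cdot k} \;=\; \exp\paren{-\Theta(\eps^4 \log n)} \;\ll\; n^{-10},
\end{align*}
using Assumption~\ref{assumption} so that $\deg(v) \geq \Dmin = \alpha \eps^{10}\log n$. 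A union bound over $v \in \Vlarge$ then gives $\ngood(v) \geq k/10 = \Omega(\eps^4 \deg(v))$ w.h.p., and the lemma follows.

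There is no real obstacle here beyond bookkeeping constants: the only thing to verify is that $\pactive$ is small enough that $\se(u_i) - \adeg(u_i)$ remains a constant fraction of $\deg(u_i)$, and the threshold $2\deg(v)$ built into the definition of $\Vlarge$ (rather than $(1+\eps)\deg(v)$) provides exactly this slack so that no $\eps$-dependent constants need to be chased through the calculation. Everything else is a direct transcription of Lemma~\ref{lem:first-uneven}.
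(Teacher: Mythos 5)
Your proof is correct and is exactly the argument the paper has in mind: the paper itself only sketches this lemma by deferring to Lemma~\ref{lem:first-uneven} with $U(v) := \LLarge(v)$, noting that the $\Theta(\eps)$-factor loss in $\card{\LLarge(v)}$ versus $\card{U(v)}$ is exactly compensated by the $\Theta(1/\eps)$-factor gain in $\se(u)$ (since $\se(u) > \deg(u)/2$ rather than $\Theta(\eps)\deg(u)$), and you have faithfully carried out that verbatim transcription with the right constants. The one cosmetic slip is the step ``$\exp(-\Theta(1)\cdot k) = \exp(-\Theta(\eps^4\log n))$'': strictly it should read $\exp(-\Theta(\eps^4)\deg(v))$ and then invoke $\deg(v)\geq \alpha\eps^{10}\log n$ from Assumption~\ref{assumption} (with $\alpha$ large enough to absorb the extra $\eps^{10}$); the paper makes the same abbreviation, so this is harmless.
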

\begin{proof}
	Proof of this lemma is almost identical to that of Lemma~\ref{lem:first-uneven}. The reason is that since $v$ belongs to $\Vlarge$: 
	\begin{enumerate}[label=(\roman*)]
		\item $N(v)$ contains at least $(\eps^2/8) \cdot \deg(v)$ vertices  in $\LLarge(v)$ with degree $\geq \dlarge(v) = 2\deg(v)$; 
		\item Each vertex in $u \in \LLarge(v)$ have $\se(u) \geq \deg(v)$ for $\se(u)$ defined in Lemma~\ref{lem:first-uneven} to be the number of colors in $S(u) \setminus S(v)$. 
	\end{enumerate}
	As such, we can apply the same exact argument in Lemma~\ref{lem:first-uneven} to vertices in $\LLarge(v)$ (i.e., take $U(v)$ there to be $\LLarge(v)$) and bound the number
	of resulting good vertices. The proof now follows verbatim from the proof of Lemma~\ref{lem:first-uneven} and hence is omitted. We only note that even though size of $\LLarge(v)$ is smaller 
	by a factor $\Theta(\eps)$ here than $U(v)$ in the other lemma, size of $\se(u)$ for $u \in \LLarge(v)$ is a factor $\Theta(1/\eps)$ larger than than $\se(u) \in U(v)$ in there and thus
	we obtain the same exact bound up to constant factors. \Qed{Lemma~\ref{lem:first-large}}
	
\end{proof}

\newcommand{\Enonact}{\ensuremath{\Enon^{\textnormal{\textsf{active}}}}}
\newcommand{\eventnon}{\ensuremath{\event^{\textnormal{NE}}}}
\begin{lemma}\label{lem:first-sparse}
	Wh.p. for every $v \in \Vsparse \setminus (\Vsmall \cup \Vlarge)$ we have $\card{S_1(v)} \geq \deg_1(v) + \alpha \cdot \eps^6 \cdot \deg(v)$. 
\end{lemma}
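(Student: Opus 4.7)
My plan is to lower-bound the ``savings'' $\card{S_1(v)} - \deg_1(v) = 1 + X - Y$, where $X$ is the number of neighbors of $v$ colored by $\FirstColor$ and $Y$ is the number of \emph{distinct} colors assigned to them, and show $X - Y = \Omega(\eps^6 \deg(v))$ with high probability. First I will identify \emph{good} non-edges in $N(v)$: non-edges $(u,w)$ with $\deg(u), \deg(w) \in [\psi \deg(v), 2\deg(v)]$. Since $v$ is $(\eps/2)$-sparse, $N(v)$ contains $\Omega(\eps^2 \deg(v)^2)$ non-edges; since $v \notin \Vsmall$, at most $O(\eps^2 \deg(v))$ of its neighbors have degree below $\psi \deg(v)$, and since $v \notin \Vlarge$, at most $O(\eps^2 \deg(v))$ have degree above $2 \deg(v)$; removing non-edges incident to these two exceptional sets still leaves $\Omega(\eps^2 \deg(v)^2)$ good non-edges.

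Next I will estimate the per-pair collision probability. The key observation is that the lists $S(u) = \set{1,\ldots,\deg(u)+1}$ and $S(w) = \set{1,\ldots,\deg(w)+1}$ are prefixes, so
$\card{S(u) \cap S(w)}/(\card{S(u)} \cdot \card{S(w)}) = 1/\max(\card{S(u)}, \card{S(w)}) \geq 1/(2\deg(v)+2)$.
Conditional on both endpoints being activated (probability $\pactive^2 = \Theta(\eps^4)$), the probability that $c_1(u) = c_1(w)$ is therefore $\Theta(1/\deg(v))$. A Chernoff bound together with Assumption~\ref{assumption} gives that each of $u, w$ has only $O(\pactive \cdot \deg(v)) = O(\eps^2 \deg(v))$ activated neighbors w.h.p., whence the probability that this common color is blocked by an earlier-processed neighbor of $u$ or $w$ is at most $O(\eps^2) = o(1)$. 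Combining, each good non-edge generates a color collision with probability $\Omega(\eps^4/\deg(v))$. To pass from collisions to savings I will use a third-order Bonferroni inequality
\begin{align*}
\Exp\bracket{X - Y} = \sum_{c} \paren{\Exp\bracket{k_c} - \Pr[k_c \geq 1]} \geq \sum_{(u,w)} \Pr[u,w \text{ same color}] - \sum_{(u,w,z)} \Pr[u,w,z \text{ same color}],
\end{align*}
where $k_c$ is the number of neighbors of $v$ colored $c$. The first sum restricted to good non-edges is $\Omega(\eps^2 \deg(v)^2) \cdot \Omega(\eps^4/\deg(v)) = \Omega(\eps^6 \deg(v))$; the third-order sum evaluates (again by the prefix structure) to $O(\pactive \cdot \eps^6 \deg(v)) = o(\eps^6 \deg(v))$ and is thus absorbed. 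Hence $\Exp\bracket{X - Y} = \Omega(\eps^6 \deg(v))$.

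For concentration I will apply Talagrand's inequality (Proposition~\ref{prop:talagrand}) to $X - Y$ viewed as a function of the independent activation bits and color picks $c_1(\cdot)$. Changing one coordinate only alters the coloring outcome of that vertex and possibly one subsequent vertex whose conflict check it affects, so $X - Y$ is $O(1)$-Lipschitz; and certifying $X - Y \geq s$ requires exhibiting $s$ colliding pairs, each witnessed by the choices of $O(1)$ vertices, giving $O(1)$-certifiability. Since Assumption~\ref{assumption} ensures $\deg(v) \geq \alpha' \eps^{10} \log n$, the expectation $\Exp\bracket{X - Y} = \Omega(\eps^{16} \log n)$ comfortably exceeds the threshold required by Talagrand to yield $X - Y \geq \Exp\bracket{X-Y}/2$ with failure probability $n^{-\Theta(1)}$. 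A final union bound over all $v \in \Vsparse \setminus (\Vsmall \cup \Vlarge)$ completes the argument.

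The hard part will be the concentration step together with carefully tracking the cascading dependencies introduced by the ``earlier-neighbor conflict'' rule of $\FirstColor$: rewiring a single color choice can both break an existing collision and create a new one, so verifying the Lipschitz and certifiability constants cleanly demands care. A secondary subtlety is that obtaining the claimed $\eps^6 \deg(v)$ rate rather than the weaker $\eps^8 \deg(v)$ bound that arbitrary lists of size $\Theta(\deg(v))$ would produce crucially exploits the prefix structure $S(u) = \set{1,\ldots,\deg(u)+1}$ through the identity $\card{S(u) \cap S(w)}/(\card{S(u)} \cdot \card{S(w)}) = 1/\max(\card{S(u)}, \card{S(w)})$.
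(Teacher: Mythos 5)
Your expectation lower bound via third-order Bonferroni is a reasonable alternative to the paper's direct argument (the paper instead defines, for each good non-edge $f_i=(u_i,w_i)$, an indicator $Y_i$ for the event that $u_i,w_i$ alone sample and retain a common color, yielding a cleaner sum of indicators), though you should be careful with the third-order term: since $\pactive = \Theta(\eps^2)$, the triple sum is $\Theta(\pactive^3\deg(v)) = \Theta(\eps^6\deg(v))$ — the \emph{same asymptotic order} as the pair sum, not $o(\eps^6\deg(v))$. It is absorbed only because $\pactive = \psi/16 = \eps^2/512$ carries a small explicit constant relative to the $\Theta(\eps^2\deg(v)^2)$ count of good non-edges, so you need to track leading constants rather than orders of $\eps$.

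The concentration step, however, has a genuine gap. You claim $X-Y$ is $O(1)$-certifiable because "certifying $X-Y\geq s$ requires exhibiting $s$ colliding pairs, each witnessed by the choices of $O(1)$ vertices." This is false. A colliding pair $(u,w)$ contributes to $X-Y$ only if \emph{both} $u$ and $w$ actually retain the color $c$ in \FirstColor, i.e., no earlier-processed neighbor of $u$ or of $w$ was colored $c$. Certifying retention is a negative statement about the entire earlier-neighborhood and forces you to reveal $\omega_z$ for $\Omega(\deg(v))$ vertices $z$ per colliding pair — so the quantity is only $\Omega(\deg(v))$-certifiable, and Talagrand gives nothing useful. (The Lipschitz constant, in contrast, can be salvaged by an alternating-cascade argument — a single changed coordinate can trigger a long chain of recolorings, but they alternate colored/uncolored among vertices sharing the same two colors, so $X$ and $Y$ each move by $O(1)$ — but this is not the bottleneck.) The paper circumvents exactly this obstruction by writing the savings as a difference $Z = T - D$, where $T$ counts colors in $S(v)$ \emph{sampled} (regardless of retention) by at least two neighbors in $V(\Enon(v))$, and $D$ counts colors sampled by at least two neighbors but \emph{not} retained by at least one of them. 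Both $T$ and $D$ are $\Theta(1)$-certifiable — for $T$ you point at two samplers, and for $D$ you additionally point at one earlier-processed blocking neighbor (a positive witness) — so Talagrand applies to each separately. Without some analogous decomposition, your concentration argument does not go through.
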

\begin{proof}
	Let us define $\Enon(v)$ as the set of \emph{non-edge} in $N(v)$ between vertices $u$ and $w$ where neither $u$ nor $w$ belong to $\SSmall(v) \cup \LLarge(v)$, i.e., 
	\begin{align*}
		\Enon(v) := \set{u,w \in N(v) : (u,w) \notin E \wedge u \notin \SSmall(v) \cup \LLarge(v) \wedge w \notin \SSmall(v) \cup \LLarge(v)}. 
	\end{align*} 
	Define $\theta := (\eps/2)$. As $v$ is neither in $\Vsmall$ nor $\Vlarge$ but it is in $\Vsparse$ and hence is $\theta$-sparse by Lemma~\ref{lem:decomposition}, we have, 
	\begin{align}
		\card{\Enon(v)} &\geq \theta^2 \cdot {{\deg(v)}\choose{2}} - \card{\Vsmall} \cdot \deg(v) - \card{\Vlarge} \cdot \deg(v) \tag{each vertex $u \in \SSmall(v) \cup \LLarge(v)$ can only contribute $\deg(v)$ non-edges} \\
		&\geq \theta^2 \cdot {{\deg(v)}\choose{2}} - (4\theta^2/32) \cdot \deg(v)^2 - (4\theta^2/32) \cdot \deg(v)^2 \notag \\
		&> (\theta^2/3) \cdot \deg(v)^2 \label{eq:enon}. 
	\end{align}
\noindent
	Let $k := \card{\Enon(v)}$ and $f_1,\ldots,f_k$ denote these non-edges. We define the  random variable: 
	\begin{itemize}
		\item $Z$: number of colors in $S(v)$ that are sampled by \emph{at least} two activated vertices in $V(\Enon(v))$ and are additionally retained (i.e.,  assigned as a color to the vertex) by \emph{all} these activated neighbors. 
	\end{itemize}
	Since any color counted in $Z$ is used more than once to color a neighbor of $v$, we have, 
	\begin{align}
	\card{S_1(v)} \geq \deg_1(v) + Z. \label{eq:good-non}
	\end{align}
	We now lower bound the expectation of $Z$ and later on prove that it is concentrated. 
	
	\begin{claim}\label{clm:enon-expect}
		$\expect{Z} \geq \Theta(\eps^6) \cdot \deg(v)$. 
	\end{claim}
	\begin{proof}
	
	For every non-edge $f_i := (u_i,w_i)$, define the indicator random variable $Y_i$ where $Y_i = 1$ iff: 
	\begin{enumerate}[label=(\roman*)]
	\item $\event_1$: both $u_i$ and $w_i$ are activated and sample the same color $c_1(u_i) = c_1(w_i) \in S(v)$; 
	\item $\event_2$: color $c_1(u_i)=c_1(w_i)$ is not sampled by any active vertex in $N(u_i) \cup N(w_i)$;
	\item $\event_3$: color $c_1(u_i)=c_1(w_i)$ is not sampled by any active vertex in $V(\Enon(v)) \setminus \set{u_i,w_i}$.
	\end{enumerate}
	
	Define $Y = \sum_{i=1}^{k} Y_i$. Note that $Y \leq Z$ because in the definition of $Y$, we are counting number of colors sampled and retained by \emph{exactly two} neighbors of $v$ as opposed to \emph{at least two} neighbors in the definition of $Z$.  
	We can thus focus on lower bounding $Y$ instead. 

	Clearly, $\expect{Y_i} = \Pr\paren{\event_1 \wedge \event_2} \cdot \Pr\paren{\event_3 \mid \event_1,\event_2}.$ We compute each of these probabilities below. 
	By symmetry, let us assume that $\deg(u_i) \leq \deg(w_i)$. We define one more auxiliary event: 
	\begin{itemize}
		\item $\eventact(v,u_i,w_i)$: There are at most $4\pactive \deg(w_i)$ active vertices in $N(u_i) \cup N(w_i)$, and at most $2\pactive \deg(v)$ active vertices in $V(\Enon(v)) \setminus \set{u_i,w_i}$. 
	\end{itemize}
	As before, $\eventact(v,u_i,w_i)$ happens with high probability by Chernoff bound. We condition on this event and the activation coin flips of all vertices in $N(u_i) \cup N(w_i) \cup V(\Enon(v)) \setminus\set{u_i,w_i}$ (note that 
	we excluded $u_i,w_i$ from this conditioning).

	We now bound probability of $\event_1 \wedge \event_2$. Firstly, 
	\begin{align*}
		\dsmall(v) \geq \psi \cdot \deg(v) \geq \psi/2 \cdot \deg(w_i) \geq 8\pactive \cdot \deg(w_i). \tag{as $w_i \notin \LLarge(v)$, and by definition of $\pactive = \psi/16$}
	\end{align*}
	This implies that the number of colors in $S(u_i) \cap S(v) \subseteq S(w_i) \cap S(v)$ that have \emph{not} 
	been sampled by any active vertex in $N(u_i) \cup N(w_i)$ is at least (here we crucially use the fact that the underlying problem is $(\deg+1)$ coloring
	not $(\deg+1)$ \emph{list}-coloring): 
	\begin{align*}
		 \min\set{\deg(u_i),\deg(v)}-4\pactive \deg(w_i) \geq \min\set{\deg(u_i),\deg(v)}/2 \geq \deg(u_i)/4 \tag{by $\event(v,u_i,w_i)$ and because $\dsmall(v) \leq \deg(u_i) \leq 2\deg(v)$ and $\deg(v) > \dsmall(v)$}.
	\end{align*}
	 Clearly, if both $u_i$ and $w_i$ are activated and sample one of these colors (that belong to the lists of both of them), then 
	$\event_1 \wedge \event_2$ happens. 
	As such, 
	\begin{align*}
		\Pr\paren{\event_1 \wedge \event_2} \geq \pactive^2 \cdot \frac{\deg(u_i)/4}{\deg(u_i)+1} \cdot \frac{1}{\deg(w_i)+1} \geq \Theta(\eps^4) \cdot \frac{1}{\deg(v)}, \tag{$\deg(w_i) \leq 2\deg(v)$ and $\pactive = \Theta(\eps^2)$}
	\end{align*}
	
	To calculate $\event_3 \mid \event_1,\event_2$, we only need to bound the 
	probability of the event that each vertex $z \in V(\Enon(v)) \setminus (N(u_i) \cup N(w_i) \cup \set{u_i,w_i})$ samples the color $c$ (implied by events $\event_1,\event_2$). As the choice of vertices $z$
	are independent (and independent of the conditioned events), plus the fact that for every $z \in V(\Enon(v))$ we know that $\deg(z) \geq \dsmall(v)$, we have, 
	\begin{align*}
		 \Pr\paren{\event_3 \mid \event_1,\event_2} \geq \paren{1-\pactive \cdot\frac{1}{\dsmall(v)}}^{\deg(v)} \geq \exp\paren{-\pactive /2\psi}  = \Theta(1). \tag{$\pactive = \Theta(\psi)$}
	\end{align*}
	
	By the equations above, linearity of expectation, and Eq~\eqref{eq:enon} (and since $\theta=\eps/2$), 
	\begin{align*}
		\expect{Z} \geq \expect{Y} = \sum_{i} \expect{Y_i} \geq \card{\Enon(v)} \cdot \Theta(\eps^4) / {\deg(v)} \geq \Theta(\eps^6) \cdot \deg(v) 
	\end{align*}
	concluding the proof. \Qed{Claim~\ref{clm:enon-expect}}
	
	\end{proof}

	Let us now prove that $Z$ is concentrated which concludes the proof. The proof of this concentration is somewhat standard and appears in different forms (and with different techniques) in 
	several places, see, e.g.~\cite{MolloyR97,ElkinPS15,HarrisSS16,ChangLP18} (in particular~\cite[Lemma~2]{MolloyR97},~\cite[Lemma~3.1]{ElkinPS15},~\cite[Lemma~5.5]{HarrisSS16}, or~\cite[Lemma~3]{ChangLP18}). However, 
	as none of these results directly apply to our setting, we present this proof following the approach of~\cite[Chapter~10]{ColoringBook}. 
	
	For each vertex $u$ in the graph, let $\omega_u$ denote the random variable
	for the choice of activation coin and the random color sampled from $S(u)$ if $u$ is activated. 
	Notice that $Z$ is only a function of $\omega_u$ for $u \in N(v) \cup N(N(v))$ and by definition, these variables are independent of each other. To apply Talagrand's inequality, we need to show that $Z$ is $c$-Lipschitz and $r$-certifiable in these variables
	for some (ideally) small $c$ and $r$	(see Proposition~\ref{prop:talagrand} and its preceding paragraph for these definitions). 
	Unfortunately, this is in fact not the case for $Z$ (in particular, $Z$ may only be $\Omega(\deg(v))$-certifiable because for every color counted in $Z$, we need to reveal $w_u$ for $\Omega(\deg(v))$ vertices to ensure this color is retained; this is too large to 
	apply Talagrand's inequality directly.)
	
	We thus bound $Z$ indirectly as follows. Define the two additional variables: 
	\begin{itemize}
		\item $T$: number of colors in $S(v)$ that are sampled by \emph{at least} two neighbors of $v$ in $V(\Enon(v))$. 
		\item $D$: number of colors in $S(v)$ that are sampled by \emph{at least} two neighbors of $v$ but are \emph{not} retained by \emph{at least one} of them. 
	\end{itemize}
	
	Firstly, it is clear that $Z = T - D$. Also notice that both $T$ and $D$ are functions of $\omega_u$ for $u \in N(v) \cup N(N(v))$. Moreover, unlike $Z$, both $T$ and $D$ are $\Theta(1)$-certifiable (for $T$ point to two
	neighbors of $v$ that sampled the color; for $D$ additionally point to one of the neighbors of this pair that also
	sampled the color, hence not allowing one of them to retain it). They are also both $\Theta(1)$-Lipschitz: changing choice of one color for a vertex can only affect the two colors involved (the original one and the changed one). 
	As such, we can apply Talagrand's inequality (Proposition~\ref{prop:talagrand}) to obtain the desired bounds as follows.

	We first prove the bound for $T$. By bounding the total number of colors sampled in the neighborhood of $v$, it is easy to verify that, 
	\begin{align}
		\expect{T} \leq \deg(v) \cdot \pactive^2 = \Theta(\eps^4) \cdot \deg(v), \label{eq:T-expect}
	\end{align}
	as $\pactive = \Theta(\psi) = \Theta(\eps^2)$. 
	Moreover, $T$ is  both $\Theta(1)$-Lipschitz and $\Theta(1)$-certifiable as argued above. As such, by Talagrand's inequality (Proposition~\ref{prop:talagrand}): 
	\begin{align*}
		\Pr\paren{\card{T-\expect{T}} \geq \expect{Z}/100} &\leq \exp\paren{-\Theta(1) \cdot \frac{(\expect{Z}/100 - \Theta(1)\sqrt{\expect{T}})^2}{\expect{T}}} \\
		&\leq \exp\paren{-\Theta(1) \cdot \frac{\expect{Z}^2}{\expect{T}}} \tag{as $\expect{Z} > \sqrt{\expect{T}}/2$ by Claim~\ref{clm:enon-expect}, Eq~\eqref{eq:T-expect}, and Assumption~\ref{assumption}} \\
		&\leq \exp\paren{-\Theta(\eps^4) \cdot \expect{Z}} \tag{as $\expect{Z} \geq \Theta(\eps^4) \cdot \expect{T}$ by Claim~\ref{clm:enon-expect} and Eq~\eqref{eq:T-expect}}  \\
		&\leq \exp\paren{\Theta(\eps^{10}) \cdot \deg(v)} \tag{by Claim~\ref{clm:enon-expect}} \\
		&\ll n^{-4} \tag{by Assumption~\ref{assumption}}.
	\end{align*}
	
	We now focus on $D$. By applying Talagrand's inequality again (Proposition~\ref{prop:talagrand}):
	\begin{align*}
		\Pr\paren{\card{D-\expect{D}} \geq \expect{Z}/100} &\leq \exp\paren{-\Theta(1) \cdot \frac{(\expect{Z}/100 - \Theta(1)\sqrt{\expect{D}})^2}{\expect{D}}} \ll n^{-4},
	\end{align*}
	by exactly the same calculation as above since $\expect{D} \leq \expect{T}$ (as $D \leq T$). Combining the above two equations implies that with high probability, 
	\begin{align*}
		Z = T - D \geq (\expect{T} - \expect{Z}/100) - (\expect{D} + \expect{Z}/100) \geq (49/50) \cdot \expect{Z}.
	\end{align*}
	Plugging in this bound in Eq~\eqref{eq:good-non} concludes the proof. \Qed{Lemma~\ref{lem:first-sparse}}
	
\end{proof}

Lemma~\ref{lem:excess-colors} now follows directly from Lemmas~\ref{lem:first-uneven}, Lemma~\ref{lem:first-large} and~\ref{lem:first-sparse} and a union bound. 

\paragraph{Exploiting Excess Colors.} For the second step, consider the following procedure: 
\begin{tbox}
	$\SecondColor$: A procedure for finishing the proper coloring of $G[\Vsparse \cup \Vunbal]$. 
	\begin{enumerate}
	\item Iterate over uncolored vertices $v \in \Vsparse \cup \Vunbal$ in an \emph{arbitrary} order and for each vertex $v$, let $N^{<}(v)$ denote the neighbors of $v$ that appear before $v$ in this ordering \emph{plus} all neighbors
	of $v$ that have been colored in the first step. 
	\item\label{line:abort} For each vertex $v$, if there exists a color in $L(v)$ that is not used to color any vertex $u \in N^{<}(v)$, color $v$ with this color. Otherwise \abort. 
\end{enumerate}
\end{tbox}
It is immediate that if $\SecondColor$ does not \abort, we find a proper coloring  using the sampled colors in lists $L$. We now prove that \abort happens with only a small probability. 

\newcommand{\eventabort}{\event_{\textnormal{\textsf{abort}}}}

\begin{lemma}\label{lem:abort-second}
	W.h.p. $\SecondColor$ does \emph{not}  {\abort}. 
\end{lemma}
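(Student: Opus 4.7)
The plan is to condition on the high-probability event $\event_1$ of Lemma~\ref{lem:excess-colors}, show that under $\event_1$ the probability that $\SecondColor$ aborts at any fixed vertex $v$ is at most $n^{-c}$ for an arbitrarily large constant $c$, and then take a union bound over the $n$ vertices. Fix any processing order for $\SecondColor$. At the moment $v$ is processed, let $A(v) \subseteq S(v)$ denote the colors in $S(v)$ not yet used by any vertex in $N^<(v)$, and let $U(v) \subseteq N(v)$ denote the uncolored neighbors of $v$; then the event ``abort at $v$'' is exactly $L(v) \cap A(v) = \emptyset$. The strategy is first to obtain a deterministic lower bound on $|A(v)|$, and then to exploit the residual randomness in $L(v)$ to bound the abort probability.

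The two cases for the lower bound on $|A(v)|$ are handled separately. For $v \in \Vsmall$, the key observation is that the $\geq 2\dsmall(v)$ small neighbors of $v$ each have their entire palette contained in $\set{1,\ldots,\dsmall(v)}$, and therefore collectively occupy at most $\dsmall(v)$ distinct colors of $S(v)$ no matter how they are colored; hence $|A(v)| \geq (\deg(v)+1) - ((\deg(v) - 2\dsmall(v)) + \dsmall(v)) = \dsmall(v) + 1 = \Theta(\eps^2)\cdot\deg(v)$ holds throughout the execution, with no reliance on Lemma~\ref{lem:excess-colors}. For $v \in (\Vsparse \cup \Vunbal) \setminus \Vsmall$, Lemma~\ref{lem:excess-colors} gives $|S_1(v)| - \deg_1(v) \geq \alpha \eps^6 \deg(v)$ at the end of $\FirstColor$, and I would then observe that the ``excess'' $|A(v)| - |U(v)|$ is monotonically non-decreasing throughout $\SecondColor$: each newly colored neighbor either consumes one color from $A(v)$ while leaving $U(v)$ (excess unchanged) or contributes no new color to $S(v)$ (either because the color is already in use or lies outside $S(v)$), in which case only $U(v)$ shrinks and the excess strictly grows. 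Hence $|A(v)| \geq \alpha \eps^6 \deg(v)$ whenever $v$ is processed.

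To bound the abort probability, I would use the following decoupling of the randomness in $L(v)$. For a vertex $v$ activated in $\FirstColor$, the joint distribution of $L(v)$ and $c_1(v)$ is equivalent to first drawing $c_1(v)$ uniformly from $S(v)$ and then drawing $L'(v) := L(v) \setminus \set{c_1(v)}$ uniformly from the $(\ell-1)$-subsets of $S(v) \setminus \set{c_1(v)}$. Crucially, the set $A(v)$ is a deterministic function of the activation coins, the $c_1(u)$ for all $u$, and the sets $L(u)$ for $u$ processed before $v$ in $\SecondColor$---it does \emph{not} depend on $L'(v)$. Therefore, conditional on all these quantities, $L'(v)$ remains uniform on $(\ell-1)$-subsets of $S(v) \setminus \set{c_1(v)}$ and is independent of $A(v)$. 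Moreover, whenever $v$ reaches $\SecondColor$, either $v$ was not activated in $\FirstColor$ (so all of $L(v)$ is ``fresh'') or $v$ was activated-but-failed (in which case $c_1(v)$ is by definition already used by some neighbor, so $c_1(v) \notin A(v)$ and the abort event reduces to $L'(v) \cap A(v) = \emptyset$). In either case,
\[
\Pr(\text{abort at } v) \leq \Paren{1 - \frac{|A(v)|-1}{|S(v)|-1}}^{\ell - 1} \leq \exp\Paren{-\Theta(\eps^6) \cdot \ell} \leq n^{-c}
\]
for $\ell = \Theta(\eps^{-6}\log n)$ with a sufficiently large hidden constant (the $\Vsmall$ case gives an even stronger bound because the ratio is $\Theta(\eps^2)$ while $\ell \gg \eps^{-2}\log n$).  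A union bound over $v \in \Vsparse \cup \Vunbal$ together with the high-probability event $\event_1$ completes the proof.

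The main obstacle will be in setting up this conditioning cleanly: one has to verify that $A(v)$ uses $L(v)$'s randomness only through the single sampled color $c_1(v)$ in $\FirstColor$, so that $L'(v)$ is genuinely independent of $A(v)$ after conditioning on $c_1(v)$ and all other randomness. Once this decoupling is in place and the monotonicity of the excess throughout $\SecondColor$ is justified, the abort tail estimate and the final union bound are routine.
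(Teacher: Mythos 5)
Your proposal is correct and follows the same structure as the paper's proof: in both cases one uses the excess from Lemma~\ref{lem:excess-colors} (or, for $\Vsmall$, the observation that small neighbors occupy at most $\dsmall(v)$ distinct colors) to lower-bound the number of available colors when $v$ is processed, and then applies a geometric tail bound with exponent $\ell-1$ plus a union bound. Your decoupling of $L(v)$ into $c_1(v)$ and the remaining $\ell-1$ colors makes explicit a point the paper leaves implicit (the $\ell-1$ in the exponent), but it is the same argument.
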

\begin{proof}
	Recall that $\alpha \in (0,1)$ is the constant in Lemma~\ref{lem:excess-colors}. Let $\ell := (\frac{10}{\alpha \cdot \eps^6} \cdot \log{n}+1)$ and suppose size of each list $L(v)$ is at least $\ell$ (which is $\Theta(\log{n})$ as both $\alpha,\eps \in \Theta(1)$). 
	Define the event: 
	\begin{itemize}
		\item $\eventabort(v)$: $L(v)$ is a subset of colors assigned to $N^{<}(v)$. 
	\end{itemize}
	We prove that $\Pr\paren{\eventabort(v)} \leq n^{-4}$; a union bound finalizes the proof as $\SecondColor$ would \abort only if at least one of the events $\eventabort(v)$ happens. 
	
	Suppose first $v$ belongs to $\Vsparse \cup \Vunbal \setminus \Vsmall$ (but not colored in the first step). Recall that at the beginning of this step, the list of available colors to $v$ is $S_1(v)$ and 
	$\deg_1(v)$ denotes the degree of $v$ to remaining uncolored vertices. By the time it is turn to color $v$ in $\SecondColor$, at most $\deg_1(v)$ other colors have been removed from available colors $S_1(v)$. As such, 
	\begin{align*}
		\Pr\paren{\text{$\eventabort(v)$}} &\leq \paren{\frac{\card{S(v)} - (\card{S_1(v)} - \deg_1(v))}{\card{S(v)}}}^{\ell-1} \leq \paren{1-\frac{\alpha \cdot \eps^6 \cdot \deg(v)}{\deg(v)+1}}^{\ell-1} 
		\tag{by Lemma~\ref{lem:excess-colors} and since $\card{S(v)} = \deg(v)+1$} \\
		&\leq \exp\paren{-\alpha \cdot \eps^6 \cdot \frac{10}{\alpha \cdot \eps^6} \cdot \log{n}} \ll n^{-4}. \tag{by the choice of $\ell$}
	\end{align*} 
	
	Now suppose  $v$ belongs to $\Vsmall$ instead. By definition, in this case $v$ has at least $2\dsmall(v)$ neighbors with degree $<\dsmall(v)$. For each such neighbor $u$, 
	$S(u) = \set{1,\ldots,\deg(u)+1}$ originally. As such, even if we have colored all neighbors of $v$ by the time we want to process $v$, there are at most $\deg(v) - 2\dsmall(v) + \dsmall(v) = (1-\eps^2/32)\deg(v)$ distinct colors that 
	have  appeared in the neighborhood of $v$. As such, 
	\begin{align*}
		\Pr\paren{\text{$\eventabort(v)$}} &\leq \paren{\frac{(1-\eps^2/32)\deg(v)}{\card{S(v)}}}^{\ell-1} \leq \exp\paren{- (\eps^2/32) \cdot \frac{10}{\alpha \cdot \eps^6} \cdot \log{n}} \ll n^{-4}. \tag{by the choice of $\ell$ and since $\card{S(v)} = \deg(v)+1$}
	\end{align*} 
	This concludes the proof. \Qed{Lemma~\ref{lem:abort-second}}
	
\end{proof}

Lemma~\ref{lem:coloring-sparse-unbalanced} now follows from Lemmas~\ref{lem:excess-colors} and~\ref{lem:abort-second} and a union bound.

\newcommand{\outdeg}{\ensuremath{\textnormal{out-deg}}}

\subsubsection*{Coloring Almost-Cliques} 

We are now left with the coloring of almost-cliques from the sampled lists after fixing the colors of remaining vertices. This is done by the following lemma. 
We note that this lemma is a simple generalization of a result of~\cite{AssadiCK19} for $(\Delta+1)$ coloring (see Lemma~\ref{lem:almost-clique-color} in Appendix~\ref{sec:background-ACK19}) and 
the proof is via a simple ``reduction'' to the proof of the analogous lemma for $(\Delta+1)$ coloring; hence, we claim no novelty for the proof of this lemma. 

Recall the definition of an $\eps$-almost-cliques $K$ in Definition~\ref{def:almost-clique}. For a vertex $v \in K$, we define $\outdeg(v)$ as the number of neighbors of $v$ that are outside $K$. 
Note that by definition of $\eps$-almost-cliques, $\outdeg(v) \leq 9\eps \cdot \Delta(K)$. 

\begin{lemma}\label{lem:degree-almost-clique}
	Let $K$ be an $\eps$-almost-clique in $G$ according to Definition~\ref{def:almost-clique} for some sufficiently small $\eps > 0$ and define $\Delta(K) :=\max_{v \in K} \deg(v)$. 
	Suppose for every vertex $v \in K$, we \emph{\underline{adversarially}} pick a set $\barS(v)$ of size at most $\outdeg(v) \leq 9\eps \cdot \Delta(K)$ from colors $\set{1,\ldots,\deg(v)+1}$. 
	If for every vertex $v \in V$, we sample a set $L(v)$ of $\Theta(\eps^{-1} \cdot \log{n})$ colors independently from the set of colors $\set{1,\ldots,\deg(v)+1}$, then, with high probability, the induced subgraph $G[K]$ can be properly colored
	from the lists $L(v) \setminus \barS(v)$ for $v \in C$. 
\end{lemma}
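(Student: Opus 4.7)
The plan is to reduce Lemma~\ref{lem:degree-almost-clique} to the corresponding statement for $(\Delta+1)$ coloring from~\cite{AssadiCK19} (Lemma~\ref{lem:almost-clique-color} in Appendix~\ref{sec:background-ACK19}). Set $D := \Delta(K)$. The key observation is that property~\ref{ac1} of Definition~\ref{def:almost-clique} forces $\deg(v) \geq (1-8\eps) D$ for every $v \in K$, so
\[
\{1,\ldots,(1-8\eps) D+1\} \;\subseteq\; S(v) \;=\; \{1,\ldots,\deg(v)+1\} \;\subseteq\; \{1,\ldots,D+1\}.
\]
Thus vertex palettes agree with the ``global'' palette $\{1,\ldots,D+1\}$ except on a tail of at most $8\eps D$ colors, and the induced subgraph $G[K]$ has maximum degree at most $D$. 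The structural properties~\ref{ac2}--\ref{ac4} of $K$ are already phrased in terms of $\Delta(K)=D$, so $K$ already behaves as an almost-clique with respect to this global palette.

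Next I would couple each random list $L(v)$ with a uniformly random $\ell'$-subset $\hat{L}(v)$ of $\{1,\ldots,D+1\}$ for some $\ell' = \Theta(\eps^{-1}\log n)$ slightly larger than $\ell$, in such a way that $L(v) = \hat{L}(v) \cap S(v)$ on the high-probability event that this intersection has at least $\ell$ elements (which holds with probability $1 - n^{-\omega(1)}$ by a Chernoff bound, since $|S(v)| \geq (1-8\eps)(D+1)$); conditioning away the complementary event costs a negligible additive loss in the success probability. Extend the adversary's forbidden set to
\[
\hat{\barS}(v) \;:=\; \barS(v) \;\cup\; \bigl(\{1,\ldots,D+1\} \setminus S(v)\bigr),
\]
which has size at most $9\eps D + 8\eps D = 17\eps D = O(\eps) D$. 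The crucial identity
\[
\hat{L}(v)\setminus\hat{\barS}(v) \;=\; \bigl(\hat{L}(v)\cap S(v)\bigr)\setminus\barS(v) \;=\; L(v)\setminus\barS(v)
\]
then guarantees that any proper coloring of $G[K]$ drawn from the lists $\hat{L}(v)\setminus\hat{\barS}(v)$ is automatically a proper coloring drawn from the original lists $L(v)\setminus\barS(v)$.

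Finally I would invoke Lemma~\ref{lem:almost-clique-color} of~\cite{AssadiCK19} on the induced subgraph $G[K]$ with maximum degree $D$, global palette $\{1,\ldots,D+1\}$, sampled lists $\hat{L}(v)$, and adversarial sets $\hat{\barS}(v)$. Its hypotheses hold for $K$ with constants inflated by at most a fixed factor (in particular $|\hat{\barS}(v)| \leq 17\eps D$ in place of the original $9\eps D$ bound), which is absorbed into the $\Theta(\eps^{-1}\log n)$ sample size. The main obstacle is designing the coupling cleanly so that $\hat{L}(v)$ is genuinely uniform over $\{1,\ldots,D+1\}$ while the marginal of $L(v)$ remains uniform over $S(v)$; this is handled by a standard two-stage sampling (draw $\hat{L}(v)$ first, take $L(v)$ to be its restriction to $S(v)$, padding with fresh samples from $S(v)$ on the negligible shortfall event), after which the black-box invocation of Lemma~\ref{lem:almost-clique-color} completes the reduction.
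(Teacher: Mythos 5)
Your high-level strategy---reduce to Lemma~\ref{lem:almost-clique-color} of~\cite{AssadiCK19} by enlarging the ``global'' palette to $\set{1,\ldots,D+1}$ and treating the missing colors $\set{\deg(v)+2,\ldots,D+1}$ as additional adversarially blocked colors absorbed into $\barS(v)$---is exactly the paper's reduction, and the coupling of $L(v)$ with a uniform sample $\hat L(v)$ from the larger palette is the same informal simulation that the paper also uses. So the key insight matches.

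However, the final invocation as you state it does not satisfy the hypotheses of Lemma~\ref{lem:almost-clique-color}. You say you apply that lemma ``on the induced subgraph $G[K]$ with maximum degree $D$.'' But $D=\Delta(K)=\max_{v\in K}\deg_G(v)$ counts $v$'s edges leaving $K$, so $\Delta(G[K])$ can be strictly less than $D$; and more importantly, in $G[K]$ there are no out-neighbors at all, so $\outdeg_{G[K]}(v)=0$, whereas Lemma~\ref{lem:almost-clique-color} explicitly requires $\card{\barS(v)}\le \outdeg_C(v)$. Your $\hat{\barS}(v)$ of size up to $17\eps D$ violates that constraint, and this is not merely a ``constant absorbed into the sample size'': the bound $\card{\barS(v)}\le \outdeg_C(v)$ is a structural hypothesis, not a tunable parameter. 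The paper resolves exactly this point by not working in $G[K]$, but in an auxiliary graph $G'$ obtained from $K$ by attaching $\outdeg(v)+(\Delta(K)-\deg(v))$ dummy pendant vertices to each $v\in K$; this simultaneously forces $\Delta(G')=\Delta(K)$ and inflates $\outdeg_{G'}(v)$ to at least $\card{S'(v)}$, so that $K$ becomes a genuine $(\Delta(K),20\eps)$-almost-clique in $G'$ with the blocked-set condition met verbatim. To repair your argument you either need to add that dummy-vertex construction, or go non-black-box and argue (from the internal proof of Lemma~\ref{lem:almost-clique-color}, which this paper does not reproduce) that the lemma tolerates $\card{\barS(v)}=O(\eps\Delta)$ even when it exceeds $\outdeg_C(v)$. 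As written the invocation has a genuine gap.
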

\begin{proof}
	Fix an $\eps$-almost-clique $K$ in $G$. For every vertex $v \in K$, we define $S'(v)$ to be $\barS(v)$ plus the colors $\set{\deg(v)+2,\ldots,\Delta(K)}$. Consider the graph $G'$ consisting of the $\eps$-almost-clique $K$ and 
	additionally for each $v \in K$, $\outdeg(v) + (\Delta(K) - \deg(v))$ dummy vertices that are only connected to $v$. For every vertex $v \in K$, define the set $S'(v) := \barS(v) \cup \set{\deg(v)+2,\ldots,\Delta(K)+1}$: we can think of this 
	as coloring $\outdeg(v)$ dummy vertices incident on $v$ by $S(v)$ and $(\Delta(K) - \deg(v))$ dummy vertices incident on $v$ by the ``new colors'' for $v$ (due to the increase in its degree), thus effectively canceling the contribution of these new colors for $v$. 
	
	Note that if we can find a coloring of $K$ in $G'$ in a scenario where every vertex samples a list of colors $L'(v)$ from $\set{1,\ldots,\Delta(K)+1}$ (as opposed to $\set{1,\ldots,\deg(v)+1}$ for $L(v)$), 
	and then coloring each vertex from $L'(v) \setminus S'(v)$ we will be done -- this is because the color used for coloring $v$ should still belong to $\set{1,\ldots,\deg(v)+1} \cap L(v) \setminus \barS(v)$ 
	as all the colors in $L'(v) \setminus L(v)$ belong to $S'(v)$. 
	
	The final observation here is that in the graph $G'$, $\Delta := \Delta(G') = \Delta(K)$ and so we have: 
	\begin{enumerate}[label=(\roman*)]
		\item we claim that $K$ in $G'$ is a $(\Delta,\eps')$-almost clique according to definition of Lemma~\ref{lem:extended-HSS-decomposition} of~\cite{AssadiCK19} for some $\eps'$ which is
		larger than $\eps$ by some constant factor ($\eps' = 20\eps$ certainly suffice): the only property of $(\Delta,\eps')$-almost clique that one needs to worry is the number of neighbors of each
		 vertex in $K$ to outside $K$ (as we increased it by adding some new dummy vertices). However, 
		this is not problematic because $\outdeg(v) \leq 9\eps \cdot \Delta(K)$ and $\Delta(K)-\deg(v) \leq 8\eps\Delta(K)$ and hence each vertex in $K$ has at most $17\eps\cdot\Delta(K)$ out degree in $G'$, which is smaller than $\eps'\Delta$.  
		For the remaining parameters $ (1-\eps') \cdot \Delta \leq \card{K} \leq (1+\eps') \cdot \Delta$ and number of non-neighbors inside is at most $8\eps\Delta(K) \leq \eps' \Delta$. Thus, $K$ is indeed a $(\Delta,\eps')$-almost-clique. 
		
		\item We still placed at most $\outdeg_{G'}(v)$ in the lists of colors $S'(v)$ that are ``blocked''; 
		\item $\eps'$ is still a sufficiently small constant (by taking $\eps$ to be small enough);
		\item We can ``simulate'' the sampling of colors $L'(v)$ from $\set{1,\ldots,\Delta(K)+1}$ by sampling $L(v)$ from $\set{1,\ldots,\deg(v)+1}$ (i.e., use the given colors in the lemma statement for $v$) and 
		sampling from $\set{\deg(v)+2,\ldots,\Delta(K)+1}$ separately (i.e., picking some ``artificial'' colors for $v$); as the latter
		colors cannot be assigned to $v$ anyway, this does not make a problem. 
	\end{enumerate}
	Hence, can apply Lemma~\ref{lem:almost-clique-color} (of~\cite{AssadiCK19}) to $K$ in $G'$ and obtain the coloring of $K$ in $G$.  \Qed{Lemma~\ref{lem:degree-almost-clique}}
	
\end{proof}

\subsubsection*{Concluding the Proof} 

\begin{proof}[Proof of Theorem~\ref{thm:ps-deg+1-coloring} -- Part~\ref{part:deg+1-p2}]
	We fix a decomposition of the graph $G$ according to Lemma~\ref{lem:decomposition} for some sufficiently small absolute {constant} $\eps > 0$ (taking $\eps = 10^{-4}$ would certainly suffice). 
	Lemma~\ref{lem:coloring-sparse-unbalanced} allows us to argue that with high probability, all vertices except for almost-cliques in the decomposition can be properly colored using the sampled lists. 
	We fix such a coloring of those vertices. We then iterate over almost-cliques one by one, and invoke Lemma~\ref{lem:degree-almost-clique} to each almost-clique $K_i$ by letting $\barS(v)$ for every $v \in K_i$ to be 
	the set of colors used so far in this process for coloring neighbors of $v$ outside this almost-clique. This allows us to color this almost-clique in a way that its coloring can be extended to the partial coloring computed 
	so far (with high probability). Iterating over all almost-cliques this way and using a union bound finalizes the proof.
\end{proof}





\newcommand{\Econf}{\ensuremath{E_\textnormal{\textsf{conflict}}}}
\newcommand{\Gconf}{\ensuremath{G_\textnormal{\textsf{conflict}}}}

\section{Sublinear Algorithms from Palette Sparsification }\label{sec:sublinear}

In this section, we describe some applications of our palette sparsification theorems to sublinear algorithms following the work of~\cite{AssadiCK19}. In the following, we give the definition of each of the two models of streaming algorithms and sublinear-time algorithms formally, followed by the resulting algorithms from palette sparsification for each one separately. 

\subsection{Streaming Algorithms} 
In the streaming model, edges of the graph are presented one by one to an algorithm that can make one or a few passes over the input and use a limited memory to process the stream and has to output
the answer at the end of the last pass. In this paper, we only consider \emph{single-pass} streaming algorithms. 
We can obtain the following algorithms from Results~\ref{res:od},~\ref{res:triangle-free}, and~\ref{res:deg+1}. 

\begin{corollary}\label{cor:ps-stream}
	There exists randomized single-pass streaming algorithms for finding each of the following colorings with high probability:
	\begin{itemize}
		\item a $(1+\eps)\Delta$ coloring of any general graph with $O_{\eps}(n\log{n})$ space;
		\item an $O(\frac{\Delta}{\gamma \cdot \ln{\Delta}})$ coloring of any triangle-free graph with $\Ot(n\cdot{\Delta}^{2\gamma})$ space;
		\item a $(1+\eps)\deg$-list coloring of any general graph with $O_{\eps}(n \cdot \log^2{n})$ space;  
		\item a $(\deg+1)$ coloring of any general graph with $O(n \cdot \log^2{n})$ space. 
	\end{itemize} 
\end{corollary}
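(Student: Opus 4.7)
The plan is to apply the template of~\cite{AssadiCK19} that converts any palette sparsification theorem into a one-pass streaming algorithm. For each vertex $v$ I first sample the list $L(v)$ as prescribed by Theorem~\ref{thm:ps-od-coloring}, Theorem~\ref{thm:ps-triangle-free}, or Theorem~\ref{thm:ps-deg+1-coloring}, depending on which bullet is being proved. As each edge $(u,v)$ arrives in the stream, I test $L(u) \cap L(v) \neq \emptyset$ in $O(\ell)$ time with a hash table; if the intersection is non-empty I add $(u,v)$ to a \emph{conflict graph} $\Gconf$, and otherwise I discard the edge. Any edge outside $\Gconf$ is automatically properly colored by any assignment that selects each vertex's color from its list, so by the relevant palette sparsification theorem $\Gconf$ is list-colorable from $L$ with high probability, and I recover the final coloring by running any offline list-coloring routine on $\Gconf$ in post-processing.

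The space bound decomposes into storage for $L$, which is $O(n \cdot \ell)$ words with $\ell$ the per-vertex list size from each theorem, plus the size of $\Gconf$. A union bound over shared colors shows that any fixed edge $(u,v)$ ends up in $\Gconf$ with probability at most $\ell^2 / \max(|S(u)|,|S(v)|)$, where $S(v)$ is the candidate color set at $v$. Summing across the edges incident to each vertex and collecting terms gives expected conflict-graph sizes of $O_\eps(n \log n)$ for $(1+\eps)\Delta$ coloring, $\widetilde O(n \cdot \Delta^{2\gamma})$ for triangle-free, and $O_\eps(n \log^2 n)$ for the $(1+\eps)\deg$ and $(\deg+1)$ variants. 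Because the lists $L(\cdot)$ are sampled independently across vertices, a Chernoff bound turns each expectation into a high-probability upper bound, and adding the cost of storing $L$ matches the four space bounds claimed in the corollary.

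The key implementation hurdle I anticipate is that the parameters needed to sample $L$ are not revealed up front: $\Delta$ for the first two bullets and $\deg(v)$ for the last two. Unknown $\Delta$ is handled by the standard trick of running $O(\log\log n)$ parallel instances at geometric guesses of $\Delta$, which only multiplies the space by $\log\log n$ and is absorbed into the stated bound. For the local versions of Theorem~\ref{thm:ps-deg+1-coloring}, I sample each $L(v)$ online using a shared random priority function $\pi_v(\cdot)$ on colors together with a size-$\ell$ min-heap tracking the top-$\ell$ priorities in the current candidate pool $\set{1,\ldots,\deg_t(v)+1}$, updating the heap as each new incident edge grows the pool. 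The most delicate point will be accounting for colors that are not yet in the pool when an edge arrives but enter $L(v)$ later and create a missed conflict; I expect this to be handled by a small amount of per-vertex bookkeeping (re-examining a side watchlist whenever the heap is updated), which a routine expectation-and-concentration argument along the lines of~\cite{AssadiCK19} confirms fits within the $O_\eps(n \log^2 n)$ budget.
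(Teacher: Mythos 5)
Your overall template — pre-sample the lists, keep only the edges of the conflict graph $\Gconf$, and list-color $\Gconf$ offline — is exactly the paper's, and your conflict-graph size bound via $\Pr[(u,v)\in\Econf]\le \ell^2/\max(|S(u)|,|S(v)|)$ is the same calculation as Lemma~\ref{lem:stream-conf} (equivalently, orienting each edge toward its higher-degree endpoint and bounding each out-degree). One small caution on the concentration step: the indicators $\mathbb{1}[L(u)\cap L(v)\neq\emptyset]$ over all edges are not independent, so ``a Chernoff bound turns each expectation into a high-probability bound'' needs to be read per vertex: condition on $L(v)$ and Chernoff-bound $\deg^{+}_{\Gconf}(v)$, which is a sum of indicators independent across the higher-degree neighbors $u$. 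That is what Lemma~\ref{lem:stream-conf} actually does, and your phrasing should be tightened to that form.

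The genuine gap is in your treatment of unknown degrees for the last two bullets. You propose to sample $L(v)$ \emph{adaptively} during the stream via a priority function $\pi_v$ and a size-$\ell$ min-heap over the growing pool $\{1,\ldots,\deg_t(v)+1\}$, and you acknowledge that conflicts can be ``missed'' when a color enters $L(v)$ only after some edge $(u,v)$ has already been discarded. The proposed fix — a side watchlist re-examined whenever the heap updates — cannot work as stated, because the objects that need to be re-examined are precisely the edges $(u,v)$ you chose not to store; they are gone. To recover them the watchlist would have to store essentially all discarded edges (there can be $\Theta(\deg(v))$ per vertex), which defeats the space bound. The natural degree-oblivious relaxation $\tilde L(v)=\{c : \pi_v(c)$ ranks in the top $\ell$ among $\pi_v(1),\ldots,\pi_v(c)\}$ is also too loose — every color $c\le\ell$ lies in every $\tilde L(v)$, so all edges would be stored. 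The paper avoids this entirely by making the sampling \emph{non-adaptive}: it pre-samples $O(\log n)$ candidate lists $\hL_i(v)$ from geometric palettes $P_i=\{1,\ldots,2^i\}$ before seeing any edge (Lemma~\ref{lem:sub-time}), keeps edges conservatively across guesses, and only at the end of the stream commits to the correct $\hL_{i^*}(v)$ once $\deg(v)$ is known (Remark~\ref{rem:know-Delta}), paying a $\polylog(n)$ factor. Replacing your adaptive heap with that non-adaptive geometric-guess scheme closes the gap.
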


The streaming algorithms in Corollary~\ref{cor:ps-stream} are basically as follows: we sample the colors in $L$ at the beginning of the stream and throughout the stream whenever an edge $(u,v)$ is presented, 
we check whether $L(u) \cap L(v) = \emptyset$ or not; if not we store this edge explicitly. At this point, obtaining the first two algorithms in Corollary~\ref{cor:ps-stream} from Results~\ref{res:od} and~\ref{res:triangle-free} is
straightforward (see also~\cite{AssadiCK19}). However, the results for the latter two parts does not immediately follow from the
argument for other two (or the one in~\cite{AssadiCK19}). This is due to the fact that both $(1+\eps)\deg$ and $(\deg+1)$ problems are ``local'' problems with dependence on $\deg$ instead of $\Delta$.  

To show that the above strategy still works even for these local coloring problems, we only need to show that the total number of edges stored by the algorithm is not ``too large''. This is equivalent 
to bounding the number of edges in the \emph{conflict-graph} $\Gconf(V,\Econf)$ where $\Econf := \set{(u,v) \in E : L(u) \cap L(v) \neq \emptyset}$. This is done in the following lemma.
We prove this result for $(\deg+1)$ coloring problem; the proof can be extended to $(1+\eps)\deg$ problem verbatim. We note that in the following we assume we know $\deg(v)$ of each vertex
beforehand (so that we can sample the needed colors from $S(v)$). This assumption is \emph{not} needed and we show how to remove it in Lemma~\ref{lem:sub-time} and Remark~\ref{rem:know-Delta}. 

\begin{lemma}\label{lem:stream-conf}
	W.h.p. the total number of edges in $\Econf$ in palette sparsification for $(\deg+1)$ coloring problem is at most $O(n\cdot\log^2{n})$. 
\end{lemma}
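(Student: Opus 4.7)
The plan is to orient each edge from its lower-degree endpoint to its higher-degree endpoint (breaking ties arbitrarily), charge each edge to its lower-degree endpoint, and then carry out a per-vertex Chernoff bound followed by a union bound. The key structural observation is that if $\deg(u) \le \deg(v)$, then $S(u) = \{1,\ldots,\deg(u)+1\}$ is a subset of $S(v) = \{1,\ldots,\deg(v)+1\}$, which means $L(u) \subseteq S(v)$ holds \emph{deterministically}. This lets us avoid having to track how $L(v)$ fluctuates.

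Formally, write $\ell = \Theta(\log n)$ for the number of sampled colors per vertex. For any oriented edge $u \to v$ (so $\deg(u) \le \deg(v)$), a union bound over the colors in $L(u)$ gives
\[
\Pr\bigl[L(u) \cap L(v) \neq \emptyset \,\big|\, L(u)\bigr] \le \sum_{c \in L(u)} \Pr\bigl[c \in L(v)\bigr] = \frac{\ell \cdot |L(u)|}{\deg(v)+1} = \frac{\ell^2}{\deg(v)+1},
\]
and this bound holds for every realization of $L(u)$, since $L(u) \subseteq S(v)$ always.

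Now for each vertex $u$, let $X_u$ denote the number of out-neighbors $v$ with $L(u) \cap L(v) \neq \emptyset$, so that $|\Econf| = \sum_u X_u$ and $u$ has at most $\deg(u)$ out-neighbors. Conditional on $L(u)$, the indicator random variables for distinct out-neighbors are mutually independent (they are functions of the disjoint independent lists $L(v)$), and the conditional expectation satisfies
\[
\mathbb{E}[X_u \mid L(u)] \;\le\; \sum_{v:\, u \to v} \frac{\ell^2}{\deg(v)+1} \;\le\; \deg(u) \cdot \frac{\ell^2}{\deg(u)+1} \;\le\; \ell^2 \;=\; O(\log^2 n),
\]
uniformly in $L(u)$, where the middle step uses $\deg(v)\ge\deg(u)$ for out-neighbors. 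Proposition~\ref{prop:chernoff} applied conditionally on $L(u)$ then yields $X_u = O(\log^2 n)$ with probability at least $1 - n^{-5}$, taking the hidden constant in $\ell$ sufficiently large. A union bound over the $n$ vertices gives $|\Econf| = \sum_u X_u = O(n\log^2 n)$ with high probability.

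The step I expect to be the most delicate is simply identifying that the right way to set up the conditioning is to condition on the \emph{lower}-degree endpoint's list. If one instead conditions on the higher-degree endpoint's list, the conditional probability $\Pr[L(u) \cap L(v) \neq \emptyset \mid L(v)]$ involves the random intersection size $|L(v) \cap S(u)|$, which varies with $L(v)$; one would then need a separate hypergeometric concentration step and a careful case analysis (splitting in-neighbors by whether their degrees are above or below $\poly(\log n)$) to control the conditional mean, while a direct McDiarmid-style bound fails since a single color change in $L(v)$ could flip up to $\deg(v)$ incident edges. Conditioning on the lower-degree endpoint's list makes this fluctuation vanish entirely, since $|L(u) \cap S(v)| = |L(u)| = \ell$ deterministically, and the whole argument reduces to the one-line Chernoff bound above.
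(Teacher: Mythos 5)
Your proof is correct and follows essentially the same approach as the paper: orient each edge from its lower-degree endpoint to its higher-degree endpoint, bound the conditional expected conflict out-degree per vertex by $O(\log^2 n)$, apply a Chernoff bound, and sum over all vertices. You spell out more explicitly than the paper does that the Chernoff step requires conditioning on the lower-degree endpoint's list (so the indicator events become independent across out-neighbors) and that $L(u) \subseteq S(v)$ holds deterministically for an oriented edge $u \to v$ -- both points are implicit in the paper's terser argument, so your version is a cleaner exposition of the same idea.
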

\begin{proof}
	In $(\Delta+1)$ coloring, we can simply show that maximum degree of $\Gconf(V,\Econf)$ is at most $O(\log^2{n})$. This is no longer true for $(\deg+1)$ -- consider the center of an induced star with $\Theta(n)$ petals. 
	We fix this issue as follows. Let us orient the edges $E$ of $G$ from lower degree endpoint to the higher degree one (breaking the ties arbitrarily). Let $\deg^+_G(v)$ denote the out-degree of $v$ in $G$ under this orientation. 
	We show that even though $\deg_{\Gconf}(v)$ can be too large, $\deg^+_{\Gconf}(v)$ is still $O(\log^2{n})$ for every $v$ with high probability. 
	
	Consider any vertex $u$ which is counted toward $\deg^+_G(v)$, i.e., in the orientation, $v$ has an outgoing edge to $u$. Since $\deg_G(u) \geq \deg_G(v)$ the probability that $u$ samples one of the $O(\log{n})$ colors 
	in $L(v)$ is at most $O(\log^2{n})/\deg(u) \geq O(\log^{2}(n))/\deg(v)$. As such, $\expect{\deg^+_{\Gconf}(v)} = O(\log^{2}{n})$. By Chernoff bound, we have that $\deg^+_{\Gconf(v)}$ is also $O(\log^{2}{n})$. As every edge of $\Gconf$ 
	is counted exactly once in $\deg^+_{\Gconf}(\cdot)$ across all vertices, we obtain that $\card{\Econf} = O(n \cdot \log^{2}{n})$.
\end{proof}
It is now easy to see that the last two parts of Corollary~\ref{cor:ps-stream} also follow from Result~\ref{res:deg+1}. 

We conclude this part by noting that our results can be extended to dynamic streams where edges can be both inserted to and deleted from the stream by increasing the space of the algorithm with $\polylog(n)$ factors as was done in~\cite{AssadiCK19}.  

\subsection{Sublinear-Time Algorithms} 
When designing sublinear-time algorithms, it is crucial to specify the data model as the algorithm cannot even read the entire input once. 
We assume the standard query model for sublinear-time algorithms on general graphs (see, e.g.,~\cite[Chapter 10]{Goldreich17}).
In this model, we have the following three types of queries $(i)$ what is the degree of a vertex $v$; $(ii)$ what is the $i$-th neighbor of a given vertex $v$; and $(iii)$ whether a given pair of vertices $(u,v)$ are neighbor to each other or not. 
We say an algorithm is \emph{non-adaptive} if it asks all its queries in parallel in one go. 

We can obtain the following algorithms from Results~\ref{res:od},~\ref{res:triangle-free}, and~\ref{res:deg+1}. 

\begin{corollary}\label{cor:sub-time}
	There exists randomized non-adaptive sublinear-time algorithms for finding each of the following colorings with high probability:
	\begin{itemize}
		\item a $(1+\eps)\Delta$ coloring of any general graph in $\Ot_{\eps}(n^{3/2})$ time;
		\item an $O(\frac{\Delta}{\gamma \cdot \ln{\Delta}})$ coloring of any triangle-free graph in $\Ot(n^{3/2+2\gamma})$ time;
		\item a $(1+\eps)\deg$-list coloring of any general graph in $\Ot(n^{3/2})$ time;  
		\item a $(\deg+1)$ coloring of any general graph in $\Ot(n^{3/2})$ time.
	\end{itemize} 
\end{corollary}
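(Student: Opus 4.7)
The plan is to follow the sublinear-time template of~\cite{AssadiCK19} essentially in a black-box manner, adapting it to each of the four palette sparsification results. Each algorithm has three non-adaptive phases. First, for every vertex $v$ we sample $L(v)$ exactly as prescribed by the corresponding palette sparsification theorem (Theorems~\ref{thm:ps-od-coloring}, \ref{thm:ps-triangle-free}, or~\ref{thm:ps-deg+1-coloring}); for the two local problems we also need $\deg(v)$ to determine the palette $\{1,\ldots,\deg(v)+1\}$, which we obtain in parallel by degree queries (or via the estimator mentioned in Remark~\ref{rem:know-Delta} when necessary). Second, we recover the \emph{conflict subgraph} $\Gconf = (V,\Econf)$ consisting of all edges $(u,v) \in E$ with $L(u) \cap L(v) \neq \emptyset$ using non-adaptive neighbor and pair queries. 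Third, we run the existential list coloring procedure underlying the relevant palette sparsification theorem on $\Gconf$ from the lists $L$; correctness is immediate from the sparsification theorem.

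The only phase that requires a query-complexity analysis is the second one. The standard recipe, following~\cite{AssadiCK19}, is: for each vertex $v$, if $\deg(v) \leq \sqrt{n}$ then read its entire neighborhood via neighbor queries, while if $\deg(v) > \sqrt{n}$ we query a uniformly random subset of $\Ot(\sqrt{n})$ neighbors of $v$ and, in addition, pair-query each such sampled $u$ against $v$ (the latter step is in fact subsumed by the neighbor query). Because a random neighbor $u$ of $v$ lies in $\Econf$ with probability at most roughly $\ell_v \cdot \ell_u / C$ (where $\ell_v = |L(v)|$ and $C$ is the relevant palette size), a standard concentration argument shows that this sampling scheme discovers every edge of $\Econf$ with high probability. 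For $(1+\eps)\Delta$ coloring, $\ell = \Ot_\eps(\sqrt{\log n})$ and $C = (1+\eps)\Delta$, giving $\Ot_\eps(n^{3/2})$ total queries. For triangle-free graphs the lists have size $O(\Delta^\gamma + \sqrt{\log n})$ so the probability of a conflict on a fixed edge blows up by a factor of $\Delta^{2\gamma}$, which is precisely the source of the extra $\Delta^{2\gamma}$ in the stated bound. For $(1+\eps)\deg$-list coloring and $(\deg+1)$ coloring, the key input is Lemma~\ref{lem:stream-conf}: after orienting each edge from its lower-degree endpoint to its higher-degree endpoint, each vertex has oriented out-degree in $\Gconf$ bounded by $O(\log^2 n)$ with high probability, and this plus the degree-thresholding dichotomy again yields $\Ot(n^{3/2})$ queries.

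Non-adaptivity is straightforward: the lists $L(v)$, the identities of the random neighbors to probe, and the degree queries themselves can all be drawn from private randomness before any query is issued, and the resulting multiset of queries can be executed in parallel (the third phase is purely offline computation on the already-probed subgraph). Running-time accounting matches query complexity up to $\polylog$ factors since the offline list coloring from Theorems~\ref{thm:ps-od-coloring}, \ref{thm:ps-triangle-free}, and~\ref{thm:ps-deg+1-coloring} is implementable in time polynomial in $|\Econf|$ by inspecting their constructive proofs (in particular, Proposition~\ref{prop:lc-eps} and the decomposition-based argument of Section~\ref{sec:d1-coloring} yield polynomial-time coloring routines).

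The main obstacle I expect is the careful concentration bound for detecting all conflict edges of a high-degree vertex non-adaptively in the two \emph{local} coloring problems: a naive application of the $(\Delta+1)$ argument from~\cite{AssadiCK19} breaks because a single vertex's conflict degree can be much larger than $\polylog(n)$. The fix is precisely the orientation trick sketched above and used in Lemma~\ref{lem:stream-conf}, combined with the observation that for each oriented edge $u \to v$ (so $\deg(u) \leq \deg(v)$), the probability that random sampling at $v$ misses $u$ decays geometrically with the number of probes; this lets us charge every conflict edge to its lower-degree endpoint, where the neighborhood is either small enough to read entirely or large enough that $\Ot(\sqrt{n})$ samples suffice by a Chernoff argument on the binomial conflict distribution.
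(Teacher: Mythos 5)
There is a genuine gap in your approach to the high-degree regime. You propose that for a vertex $v$ with $\deg(v) > \sqrt{n}$, querying a uniformly random subset of $\Ot(\sqrt{n})$ of its neighbors (and pair-querying those) will ``discover every edge of $\Econf$ with high probability,'' citing the low conflict probability per edge. This is backwards. The rarity of conflict edges is exactly what makes random neighbor sampling \emph{fail} to find them: if $\deg(v) = \Theta(n)$ and $(u,v) \in \Econf$, then $\Ot(\sqrt n)$ uniform samples from $N(v)$ miss $u$ with probability $1 - \Ot(1/\sqrt n)$, so almost surely some conflict edge incident to $v$ is missed, and the list-coloring step on $\Gconf$ can then fail. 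Your ``geometric decay'' claim for the probability of missing $u$ is incorrect when the ambient degree is $\gg \sqrt n$; a Chernoff bound only helps if the number of samples is comparable to the degree.

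The paper's Lemma~\ref{lem:sub-time} does something fundamentally different for high-degree pairs: since the sampled lists $L(\cdot)$ are drawn from private randomness, the set of \emph{candidate} conflicting pairs $\set{(u,v): L(u)\cap L(v)\neq\emptyset}$ can be computed entirely offline, and one then issues pair queries for exactly those candidates among vertices of degree $\geq \sqrt n$ (where the palettes are large, so there are only $\Ot(\sqrt n)$ candidates per vertex). No neighbor sampling is used to find high-degree conflict edges at all; neighbor queries are used only to exhaustively read neighborhoods of low-degree vertices. Additionally, for the two local problems, the paper must resolve a circularity you gloss over: to sample $L(v)$ from $\set{1,\ldots,\deg(v)+1}$ one needs $\deg(v)$, which is a query answer, yet the algorithm must be non-adaptive. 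The paper handles this by pre-sampling $O(\log n)$ ``potential lists'' from geometrically-scaled potential palettes $P_i = \set{1,\ldots,2^i}$ and selecting the right one after degree queries return. Your invocation of Remark~\ref{rem:know-Delta} points back to exactly this construction in Lemma~\ref{lem:sub-time}, so it cannot be used as a black box here. Your orientation trick from Lemma~\ref{lem:stream-conf} is the right idea for bounding $|\Econf|$ but does not by itself give a non-adaptive way to \emph{locate} the conflict edges.
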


The sublinear-time algorithms in Corollary~\ref{cor:sub-time} are again based on finding the edges of the conflict-graph $\Econf$ using $\Ot(\min\set{n\Delta,n^{2}/\Delta})$ queries for the case of $(1+\eps)\Delta$ coloring and
 $\Ot(\min\set{n\Delta,n^{2}/\Delta^{1-2\gamma}})$ queries for triangle-free graphs. This can be done using the simple approach of~\cite{AssadiCK19} but as before that does not work for the last two parts. Here, we give another simple
 way for finding edges of the conflict-graph using a small number of queries. We again only prove it for $(\deg+1)$ coloring problem; the same argument extends to other problems as well. 
 
\begin{lemma}\label{lem:sub-time}
	W.h.p. all edges in $\Econf$ can be found using $\Ot(n^{3/2})$ queries non-adaptively. 
\end{lemma}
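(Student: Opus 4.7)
The plan is to split the edges of $\Econf$ into those with at least one low-degree endpoint and those with two high-degree endpoints, and to use a different query type for each family. Call a vertex $v$ \emph{low} if $\deg_G(v) \leq \sqrt{n}$ and \emph{high} otherwise. I would first spend $n$ degree queries to learn $\deg_G(v)$ for every $v$---either as the first of two non-adaptive rounds, or by using the trick alluded to in Remark~\ref{rem:know-Delta}---and then sample the lists $L(v)$ of size $\ell = \Theta(\log n)$ from $\{1,\ldots,\deg_G(v)+1\}$ as prescribed by Theorem~\ref{thm:ps-deg+1-coloring}.

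For every low vertex $v$, issue all $\deg_G(v)$ neighbor queries of $v$; this exposes every edge incident to $v$, and in post-processing we retain only those $(u,v)$ with $L(u) \cap L(v) \neq \emptyset$. The total neighbor-query cost is $\sum_{v \text{ low}} \deg_G(v) \leq n \cdot \sqrt{n} = n^{3/2}$, and this batch already catches every edge of $\Econf$ that has at least one low endpoint.

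For edges between two high-degree vertices, I would issue a pair query $(u,v)$ for exactly those pairs $\{u,v\}$ of high vertices with $L(u) \cap L(v) \neq \emptyset$. The key calculation is that for high $u,v$ with $\deg_G(u) \leq \deg_G(v)$, a union bound over the $\deg_G(u)+1$ colors in $u$'s palette gives
\[
\Pr\bigl[L(u) \cap L(v) \neq \emptyset\bigr] \leq (\deg_G(u)+1) \cdot \frac{\ell}{\deg_G(u)+1} \cdot \frac{\ell}{\deg_G(v)+1} \leq \frac{\ell^{2}}{\sqrt{n}},
\]
so the expected number of pair queries is $O\!\left(\binom{n}{2} \cdot \ell^{2}/\sqrt{n}\right) = \Ot(n^{3/2})$. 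This count is a function of the $n$ mutually independent random lists and changing any single list alters it by at most $n$, so McDiarmid's inequality turns the expectation into a high-probability $\Ot(n^{3/2})$ upper bound.

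The main obstacle I expect is precisely this pair-query bound. In the $(\deg+1)$ setting, the conflict graph $\Gconf$ can have unbounded maximum degree---a low-degree vertex may conflict with essentially all of its much higher-degree neighbors---so the per-vertex argument used for $(\Delta+1)$ in~\cite{AssadiCK19} does not directly transfer. Confining the pair-query phase to \emph{high-high} pairs is what rescues the budget: on the high side the palette size $>\sqrt{n}$ drives the shared-color probability down to $O(\log^{2} n / \sqrt{n})$, while low endpoints are already handled by the cheap neighbor scan. Correctness is then immediate: every edge of $\Econf$ either has a low endpoint (and is found by the neighbor scan) or is high-high (and is probed by a pair query whenever its lists intersect), so no edge of $\Econf$ is missed. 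Finally, every query's identity depends only on the degrees and on the independent random lists, so after the $n$ degree queries the remaining queries can be issued in a single non-adaptive round.
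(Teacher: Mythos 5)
Your high/low split, the per-pair intersection probability bound $\Pr[L(u)\cap L(v)\neq\emptyset]\le \ell^2/\sqrt{n}$ for high--high pairs, and the resulting $\Ot(n^{3/2})$ budget are all essentially the same calculation the paper does, and your McDiarmid concentration (vs.\ the paper's per-vertex Chernoff bound) is a valid alternative. However, there is a genuine gap on the central point of the lemma: \emph{non-adaptivity}. The lemma promises that all of $\Econf$ can be found \emph{non-adaptively}, meaning every query is issued before a single answer is received. Your proposal needs $\deg_G(v)$ in two places before it can name its queries: to sample $L(v)$ from $\{1,\ldots,\deg(v)+1\}$, and to decide whether $v$ is ``low'' so as to issue exactly $\deg(v)$ neighbor queries. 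Your two suggested fixes do not close this. ``Two non-adaptive rounds'' is a contradiction in terms---if the second round's queries depend on the first round's answers, the algorithm is adaptive; and citing ``the trick alluded to in Remark~\ref{rem:know-Delta}'' is circular, since that remark points right back to Lemma~\ref{lem:sub-time} as the source of the trick.

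The paper's actual proof supplies the missing ingredient: a dyadic family of \emph{potential palettes} $P_i = \{1,\ldots,2^i\}$ for $i=1,\ldots,O(\log n)$, with a potential list $\hL_i(v)$ sampled from each $P_i$ \emph{before} any queries are made. All queries are then issued in one batch: one degree query per vertex, $10\sqrt{n}$ neighbor queries per vertex unconditionally (this covers every edge with a low endpoint without needing to know which endpoints are low), and pair queries for all $(u,v)$ and all ``large'' indices $i,j$ with $\hL_i(v)\cap\hL_j(u)\neq\emptyset$. Only after the answers arrive does the algorithm pick, for each $v$, the smallest $i$ with $|P_i|\ge\deg(v)$ and set $L(v) := \hL_i(v)\cap\{1,\ldots,\deg(v)+1\}$. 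Because the set of issued pair queries is a superset of the ones actually needed, and there are only $O(\log^2 n)$ index pairs $(i,j)$, the budget blows up by only $\polylog(n)$, which $\Ot(\cdot)$ absorbs. Your proof would be complete if you replaced the ``first learn the degrees'' step with this potential-palettes sampling.
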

\begin{proof}
	 Define $t:= O(\log{n})$  ``potential'' palettes $P_1,\ldots,P_t$ where for every $i \in [t]$, $P_i := \set{1,\ldots,2^{i}}$. 
	 Let $\ell = \Theta(\log{n})$ denote the number of sampled colors in the palette sparsification theorem for $(\deg+1)$ coloring problem. 
	 For every vertex $v \in V$, we sample $t$ ``potential'' lists $\hL_1(v),\ldots,\hL_t(v)$ where each $\hL_i(v)$ is obtained by sampling each color in $P_i$ with probability $10\ell/\card{P_i}$. Note that all this has been done
	 without querying the graph yet. 
	 
	 We now make the following queries non-adaptively for every vertex $v \in V$: 
	 \begin{enumerate}[label=(\roman*)]
	 	\item We make a single degree-query on $v$;
		\item We make $10\sqrt{n}$ neighbor-queries on $v$ to return  $\min\set{\deg(v),10\sqrt{n}}$ neighbors of $v$;
		\item For every $i,j$ where $\card{P_i} \geq \sqrt{n}$ and $\card{P_j} \geq \sqrt{n}$, we make a pair query between $(v,u)$ whenever  $\hL_i(v) \cap \hL_j(u) \neq \emptyset$. 
		A simple application of Chernoff bound ensures that in this case also we make at most $\Ot(\sqrt{n})$ queries as size of both $P_i,P_j$ is at least $\sqrt{n}$.  
	 \end{enumerate}
	Overall with high probability we made at most $\Ot(n\sqrt{n})$ queries. 
	
	After getting the answer to those queries, we know $\deg(v)$ for every $v \in V$. We then pick the smallest integer $i$ and $P_i$ with $\card{P_i} \geq \deg(v)$, and consider $L(v) := \hL_i(v) \cap P_i \setminus \set{\deg(v)+2,\ldots,\card{P_i}}$. Again, by Chernoff bound, 
	size of each $L(v)$ is at least $\ell$ as $\deg(v)$ and $\card{P_i}$ differ from each other by at most a factor of $2$ and by the construction of $\hL_i(v)$. This way, we obtain $\ell$ colors $L(v)$ chosen uniformly at random from 
	$\set{1,\ldots,\deg(v)+1}$. These lists define $\Econf$ uniquely. 
	
	Finally, any edge $(u,v) \in \Econf$, if either $\deg(u) < 10\sqrt{n}$ or $\deg(v) < 10\sqrt{n}$ we have found this edge using the neighbor queries for the lower degree vertex in item $(ii)$. On the other hand, if both vertices have degree larger than
	$4\sqrt{n}$ then we will find this edge using the pair queries in item $(iii)$. This concludes the proof. 
\end{proof}

So far, we only analyzed the query complexity of the algorithms and ignored the runtime needed to compute the list-coloring of the conflict-graph. It is easy to see that all our proofs also imply an efficient algorithm for finding the coloring 
in time linear in the size of the conflict-graph (when needed, we can run algorithmic variants of Lov\'asz Local Lemma using the Moser-Tardos framework~\cite{MoserT10}). The only exception is for $(\deg+1)$ coloring problem
when we invoke the result of~\cite{AssadiCK19}; for that particular instance the runtime of the algorithms is $\Ot(n\sqrt{n})$ (as shown in~\cite{AssadiCK19}) even though the conflict graph is sparser. 

It is now easy to see that all items in Corollary~\ref{cor:sub-time} follow from Lemma~\ref{lem:sub-time} and Results~\ref{res:od},~\ref{res:triangle-free}, and~\ref{res:deg+1} (we remark that for $(1+\eps)\deg)$-list coloring our sublinear time algorithm works 
even without having direct access to the list $S(v)$ as long as it can  be sampled). 
 
 \subsection{Further Remarks}\label{sec:sub-remarks}
 
 We conclude this section by the following remarks. These remarks also apply the same exact way to our algorithms in Section~\ref{sec:vertex-sampling}. 
 
 \begin{remark}[\textbf{Knowledge of $\bm{\Delta}$}]\label{rem:know-Delta}
	{\textbf{We do not require a prior knowledge of $\bm{\Delta}$}. As was shown already in Lemma~\ref{lem:sub-time}, there is a simple ``guessing'' mechanism for easily working with unknown values of $\Delta$ (which is more crucial for the local versions),
	and whenever needed we can run that approach at a cost of increasing the complexity of the algorithms by a $\polylog{(n)}$ factor. We note that this is not new to our paper and also holds for previous work in~\cite{AssadiCK19,BeraCG19}.} 
\end{remark}

\begin{remark}[\textbf{Deterministic Guarantee on Resource Requirements}]
{The resource requirement of our algorithms, as stated, is bounded with high probability but not deterministically. However, this is easy to fix by a standard argument: whenever the resources used by the algorithm exceed the bound implied by the high-probability-result, simply terminate the whole algorithm -- this can only increase the error probability by a negligible factor. As such, there is a \textbf{deterministic guarantee on the resource requirement of algorithms} in this paper.}
\end{remark}


\newcommand{\colf}{\ensuremath{{\zeta}}}
\newcommand{\colftri}{\ensuremath{{\zeta_{\textnormal{\textsf{tri-free}}}}}}

\section{Sublinear Algorithms from Graph Partitioning}\label{sec:vertex-sampling}

In this section, we deviate from our theme of palette sparsification and consider another technique for designing sublinear algorithms for graph coloring.
A simple technique that lies at the core of various algorithms for graph coloring in different models is \emph{random graph partitioning} (see, e.g.~\cite{Parter18,ParterS18,HarveyLL18,ChangFGUZ18,BeraCG19}).  
While the exact implementation of this technique varies significantly from one application to another, the basic idea is as follows: Partition the vertices of the graph $G$ randomly into multiple parts $V_1,\ldots,V_k$, then color the induced subgraphs
$G[V_1],\ldots,G[V_k]$ separately using disjoint palettes of colors for each subgraph. The hope is that each subgraph $G[V_i]$ has become ``simpler enough'' so that it can be colored ``easily'' with a ``small'' palette of colors so that using disjoint palette for each subgraph would not be too wasteful. 

We apply the same basic idea in this section. To state our result, we need some definitions first. 
We say that a family $\FG$ of graphs is \emph{hereditary} iff for every $G \in \FG$, every induced subgraph of $G$ also belongs to $\FG$, namely, $\FG$ is closed under vertex deletions. 

\begin{definition}\label{def:colorable-family}
	Let $\FG$ be a hereditary family of graphs and $\colf: \IN^+ \rightarrow \IN^+$ be a non-decreasing function. We say that $\FG$ is \textbf{$\bm{\colf}$-colorable} iff every graph $G$ in $\FG$ is $\colf(\Delta)$-colorable, where $\Delta:= \Delta(G)$ denotes
	the maximum degree of $G$. 
\end{definition}

For instance, the family of all graphs is an $\colf$-colorable family for the function $\colf(\Delta) = \Delta+1$, and triangle-free graphs are $\colf$-colorable for $\colf(\Delta) = O(\frac{\Delta}{\ln{\Delta}})$. 


\begin{theorem}\label{thm:vertex-sampling}
	Let $\FG$ be a $\colf$-colorable family of graphs (see Definition~\ref{def:colorable-family}) and $G(V,E)$ be an $n$-vertex graph with maximum degree $\Delta$ in $\FG$. 
	For the parameters 
	\[
		\eps>0, ~\qquad ~ 1 \leq k \leq \frac{\eps^2\cdot\Delta}{9\ln{n}}, ~\qquad~ C := C(\eps,k) = k \cdot \colf\Paren{(1+\eps)\cdot\frac{\Delta}{k}},
	\]
	suppose we partition $V$ into $k$ sets $V_1,\ldots,V_k$ uniformly at random; then with high probability $G$ can be $C$-colored by coloring each $G[V_i]$ with a distinct palette of size $C/k$. 
\end{theorem}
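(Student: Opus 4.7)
The plan is to show that the random partitioning produces induced subgraphs each of which has maximum degree at most $(1+\eps)\Delta/k$ with high probability; then the $\colf$-colorability of $\FG$ applied to each $G[V_i]$ (which belongs to $\FG$ because $\FG$ is hereditary) immediately yields a proper coloring of $G[V_i]$ from its dedicated palette of $\colf\paren{(1+\eps)\Delta/k} = C/k$ colors. Because the palettes assigned to different parts are disjoint, any edge with endpoints in distinct parts is automatically properly colored, and edges inside a part are handled by the coloring of $G[V_i]$. So a proper $C$-coloring of $G$ results.

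The bulk of the argument is therefore the degree concentration. Fix a vertex $v$ and condition on $v$ being assigned to part $V_i$; for each neighbor $u \in N_G(v)$, the event $u \in V_i$ occurs independently with probability $1/k$, so the random variable $X_v$ counting neighbors of $v$ in the same part as $v$ satisfies $\expect{X_v} \leq \Delta/k$. Applying the multiplicative Chernoff bound from Proposition~\ref{prop:chernoff} with $b=1$, $\mu_{\max} = \Delta/k$, and $\delta = \eps$,
\[
\Pr\paren{X_v > (1+\eps)\cdot \Delta/k} \leq \exp\Paren{-\frac{\eps^2 \Delta}{3k}}.
\]
Using the hypothesis $k \leq \eps^2 \Delta/(9 \ln n)$, the exponent is at least $3\ln n$, giving a failure probability of at most $n^{-3}$ per vertex. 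A union bound over all $n$ vertices shows that with probability $\geq 1 - n^{-2}$, every vertex has at most $(1+\eps)\Delta/k$ neighbors inside its own part, i.e. $\Delta(G[V_i]) \leq (1+\eps)\Delta/k$ simultaneously for all $i \in [k]$.

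Conditioned on this event, I would then finish as follows. Since $\FG$ is hereditary, each $G[V_i]$ lies in $\FG$, so by the $\colf$-colorability assumption and monotonicity of $\colf$, $G[V_i]$ admits a proper coloring using $\colf(\Delta(G[V_i])) \leq \colf\paren{(1+\eps)\Delta/k} = C/k$ colors. Assigning part $V_i$ the palette $\set{(i-1)\cdot C/k + 1, \ldots, i\cdot C/k}$ and coloring $G[V_i]$ from this palette gives the desired proper $C$-coloring of $G$.

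There is no serious obstacle here: the hereditary property trivializes the cross-part coloring (disjoint palettes), and the only technical ingredient is a single Chernoff bound whose parameter regime is exactly tuned by the hypothesis on $k$. The only mild care needed is to ensure the Chernoff calculation is applied to the unconditional random variable $X_v$ (number of neighbors of $v$ in the part of $v$), rather than to a conditional version, so that independence across neighbors is preserved; this is immediate from the definition of $X_v$ as a sum of independent indicators.
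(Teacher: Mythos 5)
Your proof is correct and follows essentially the same route as the paper, whose Lemma~\ref{lem:sample-degree} is precisely your Chernoff-and-union-bound degree-concentration step, and whose conclusion via the hereditary property, monotonicity of $\colf$, and disjoint palettes is identical to yours. One small note: your closing caveat is inverted --- the \emph{unconditional} count of neighbors of $v$ landing in the part of $v$ is \emph{not} a sum of independent indicators (each indicator depends on the assignment of $v$ itself), so one \emph{must} condition on the part of $v$ first, which is exactly what your second paragraph already does, so the argument itself is unaffected.
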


The proof of this theorem is by simply showing that the maximum degree of each graph $G[V_i]$ is sufficiently small, itself a simple application of Chernoff bound. 

\begin{lemma}\label{lem:sample-degree}
	The maximum degree of any $G[V_i]$ is at most $(1+\eps)\cdot\frac{\Delta}{k}$ with high probability. 
\end{lemma}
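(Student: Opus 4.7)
The plan is to directly apply a Chernoff bound to each vertex's degree within its assigned part, then union bound over all vertices. First I would fix a vertex $v \in V$ and a part index $i \in [k]$, and consider the random variable $X_v^{(i)} := \card{N_G(v) \cap V_i}$, which is the degree of $v$ in $G[V_i]$ conditional on $v \in V_i$. Since the partition is uniform and independent across vertices, each neighbor $u \in N_G(v)$ lands in $V_i$ independently with probability $1/k$, so $X_v^{(i)}$ is a sum of independent $\set{0,1}$ random variables with $\expect{X_v^{(i)}} = \deg_G(v)/k \leq \Delta/k$.

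Next, I would apply the Chernoff bound from Proposition~\ref{prop:chernoff} (the multiplicative form) with $\mumax = \Delta/k$ to conclude
\[
\Pr\Paren{X_v^{(i)} > (1+\eps) \cdot \frac{\Delta}{k}} \leq \exp\Paren{-\frac{\eps^2}{3} \cdot \frac{\Delta}{k}}.
\]
The assumption $k \leq \eps^2 \Delta /(9 \ln n)$ rearranges to $\Delta/k \geq 9 \ln n / \eps^2$, so the above probability is at most $\exp(-3 \ln n) = n^{-3}$.

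Finally, I would union bound this bad event over all at most $n \cdot k \leq n^2$ choices of $(v,i)$, which yields a total failure probability of at most $n^{-1}$. On the complementary event, every vertex in every part $V_i$ has degree at most $(1+\eps) \cdot \Delta/k$ in $G[V_i]$, which is exactly the claim. There is no real obstacle here: the hypothesis on $k$ has been tuned so that $\Delta/k$ is logarithmically large, making Chernoff concentration essentially automatic; the only thing to be slightly careful about is noting that the neighbor placements are independent of the placement of $v$ itself, so conditioning on $v \in V_i$ does not alter the distribution of $X_v^{(i)}$.
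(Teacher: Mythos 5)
Your proof is correct and takes essentially the same route as the paper: bound the expected degree inside a part by $\Delta/k$, apply the multiplicative Chernoff bound with deviation $\eps$, and observe that the hypothesis on $k$ makes $\Delta/k$ large enough that the tail probability is $n^{-3}$, then union bound. The only cosmetic difference is that you union bound over all $n\cdot k$ pairs $(v,i)$ rather than over the $n$ vertices (each of which lies in exactly one part), so you lose a factor of $k$ in the final failure probability; this is harmless here and easily tightened.
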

\begin{proof}
	For any vertex $v \in V_i$, let $\deg_i(v)$ denote the number of neighbors of $v$ in $G[V_i]$. Clearly, 
	$\expect{\deg_i(v)} = \frac{1}{k} \cdot \deg(v) \leq \frac{\Delta}{k}$. As the choice of neighbors of $v$ in $V_i$ are independent, by Chernoff bound (Proposition~\ref{prop:chernoff} with $\mu = \frac{1}{k} \cdot \Delta$ and $\delta = \eps$), 
	\begin{align*}
		\Pr\paren{\deg_i(v) \geq (1+\eps)\frac{\Delta}{k}} \leq \exp\paren{-\eps^2 \cdot \frac{\Delta}{3k}} = 1/n^3. \tag{as $k \leq \frac{\eps^2\cdot\Delta}{9\ln{n}}$}
	\end{align*}
	A union bound on all $n$ vertices finalizes the proof. \Qed{Lemma~\ref{lem:sample-degree}}
	
\end{proof}

\begin{proof}[Proof of Theorem~\ref{thm:vertex-sampling}]
Since $\FG$ is a hereditary family, $G[V_i]$ also belongs to $\FG$, and since $\FG$ is $\colf$-colorable and maximum degree of $G[V_i]$ is at most $(1+\eps)\cdot\frac{\Delta}{k}$ by Lemma~\ref{lem:sample-degree}, with high probability, 
the total number of colors needed for coloring $G$ this way is at most
\begin{align*}
	\sum_{i=1}^{k} \colf\Paren{(1+\eps)\cdot\frac{\Delta}{k})} = k \cdot \colf\Paren{(1+\eps)\cdot\frac{\Delta}{k})} = C,
\end{align*}
finalizing the proof. \Qed{Theorem~\ref{thm:vertex-sampling}}

\end{proof}

Even though Theorem~\ref{thm:vertex-sampling} is quite simple, it has various interesting implications combined with known results on chromatic number of different families of ``locally sparse" graphs. 
In the following, we first show how this theorem implies a ``recipe'' for designing sublinear algorithms and then state several of its implications.

\subsection{Sublinear Algorithms from Theorem~\ref{thm:vertex-sampling}}
As before, we only focus on streaming and query algorithms in this section. Table~\ref{tab:vertex} contains a summary of our results in this part. 
 Before getting to our results though, we first prove a simple auxiliary lemma. 

\begin{lemma}\label{lem:sample-edge}
	In the setting of Theorem~\ref{thm:vertex-sampling}, the maximum number of vertices in any graph $G[V_i]$ is at most $O(n/k)$  with high probability. 
\end{lemma}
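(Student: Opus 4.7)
The plan is to apply a standard Chernoff bound to each part $V_i$ and then union bound over the $k$ parts. Since each vertex is assigned independently and uniformly to one of the $k$ parts, we have $|V_i| = \sum_{v \in V} X_v^{(i)}$ where $X_v^{(i)}$ is the indicator that $v \in V_i$, and these are independent $\{0,1\}$ random variables with $\Pr(X_v^{(i)}=1) = 1/k$. Hence $\Exp[|V_i|] = n/k$, and the target bound $O(n/k)$ is a constant-factor deviation from the mean.

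The only subtlety is ensuring that the mean $n/k$ is large enough (at least logarithmic in $n$) for the multiplicative Chernoff bound to give a high-probability tail. This is where the hypothesis $k \leq \eps^2\Delta/(9\ln n)$ from Theorem~\ref{thm:vertex-sampling} is used: combined with the trivial bound $\Delta \leq n$, this forces $n/k \geq 9\ln n/\eps^2 = \Omega(\log n)$. Applying Proposition~\ref{prop:chernoff} with $\mumax = n/k$ and $\delta = 1$ then yields
\[
\Pr\!\paren{|V_i| > 2n/k} \;\leq\; \exp\paren{-\tfrac{1}{3}\cdot \tfrac{n}{k}} \;\leq\; \exp\paren{-\tfrac{3\ln n}{\eps^2}} \;\leq\; n^{-3},
\]
for any $\eps \leq 1$. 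A union bound over all $i \in [k]$ (with $k \leq n$) then shows that, with high probability, every part satisfies $|V_i| \leq 2n/k = O(n/k)$, which is the claim.

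There is no real obstacle here: the lemma is a routine concentration statement, and the only thing one has to notice is that the range of $k$ allowed by Theorem~\ref{thm:vertex-sampling} automatically guarantees a logarithmic lower bound on the expected size of each part, which is exactly what Chernoff needs. No additional structural property of the graph or of the family $\FG$ is used.
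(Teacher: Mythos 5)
Your proof is correct and matches the paper's approach: the paper simply says the argument is identical to Lemma~\ref{lem:sample-degree} (Chernoff plus union bound, using the constraint $k \leq \eps^2\Delta/(9\ln n)$ together with $\Delta\leq n$ to guarantee $n/k = \Omega(\log n)$), which is exactly what you did. The only cosmetic difference is that you take $\delta = 1$ while the paper's template uses $\delta = \eps$; since Proposition~\ref{prop:chernoff} is stated for $\delta\in(0,1)$ you may wish to use $\delta=\eps$ or any fixed $\delta<1$ to stay strictly within its hypotheses, but this is immaterial.
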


The proof of this lemma is identical to that of
Lemma~\ref{lem:sample-degree} and is hence omitted. In the following two algorithms, the parameters $C$ and
$k$ are the same as in Theorem~\ref{thm:vertex-sampling}. 

\paragraph{Streaming Algorithms from Theorem~\ref{thm:vertex-sampling}.} The algorithm is simply as follows: 
\begin{tbox}
	\begin{enumerate}[label=(\roman*)]
	\item At the beginning, sample a random $k$-partitioning of the vertices into $V_1,\ldots,V_k$.
	\item Throughout the stream, store any edge that belongs to one of the graphs $G[V_i]$. 
	\item At the end, use the stored subgraphs to find a $C$-coloring of $G$ by coloring each $G[V_i]$ with a distinct palette of size $C/k$. 
	\end{enumerate}
\end{tbox}

	The correctness of the algorithm (with high probability) follows from Theorem~\ref{thm:vertex-sampling}. The space complexity of this algorithm is also: 
	$O(n)$ (to store the random partitioning) $+ k \cdot O(n{\Delta}/{k^2})$ (by Lemmas~\ref{lem:sample-degree} and~\ref{lem:sample-edge}) $= O(n \cdot \frac{\Delta}{k})$.
	This implies the following corollary. 

 \def\arraystretch{2}

\setlength{\tabcolsep}{12pt}

\newsavebox{\tabtwo}

\sbox{\tabtwo}{
             {\small
             
        \centering
      
        \begin{tabular}{|c||c|c|c|}
             \hline
     
        \textbf{ \# of Colors} & \textbf{Graph Family} &  \textbf{Streaming} & \textbf{Sublinear-Time} \\
             \hline 
             \hline
	     {$O(\frac{\Delta}{\gamma \cdot \ln{\Delta}})$ }  & 	Triangle-Free 	&  $O(n\Delta^{2\gamma})$ space  & $\Ot(n^{3/2+2\gamma}) $ time  \\	
	     {$O(\frac{\Delta \ln\ln{\Delta}}{\gamma \cdot \ln{\Delta}})$ }  & 	$K_r$-Free 	&  $O(n\Delta^{2\gamma})$ space  & $\Ot(n^{3/2+\Theta(\gamma)}) $ time  \\	
	     {$O(\frac{\Delta}{\gamma \ln{\Delta}} \cdot \ln{r})$ }  & 	Locally $r$-Colorable 	&  $O(n\Delta^{2\gamma})$ space  & $\Ot(n^{3/2+2\gamma}) $ \underline{queries}  \\	
	     {$O(\frac{\Delta}{\gamma \ln{\ln{n}}} \cdot \ln{r})$ }  & 	Locally $r$-Colorable 	&  $O(n\Delta^{2\gamma})$ space  & $\poly(n) $ \underline{time}  \\	
	     {$O(\frac{\Delta}{\ln{(1/\delta)}})$ }  & 	$\delta$-Sparse-Neighborhood 	&  $O(n/\delta)$ space  & $\Ot(n^{3/2} \cdot \poly(1/\delta)) $ time  \\	

	   \hline
        \end{tabular}
      }
  }
  
 \begin{table}[t!]
\begin{tikzpicture}
   \node[fill=white](boz){};
  \node[drop shadow={black, shadow xshift=5pt,shadow yshift=-5pt, opacity=0.5}, fill=white, inner xsep=-7pt, inner ysep=0pt](table)[right=5pt of boz]{\usebox{\tabtwo}};
\end{tikzpicture}
\vspace{0.25cm}
          \caption{A sample of our sublinear algorithms obtained as corollaries of Theorem~\ref{thm:vertex-sampling}.  
          All the streaming algorithms here are \emph{single-pass} and all sublinear-time algorithms are \emph{non-adaptive}. Note the two different rows for locally $r$-colorable graphs; see also Remark~\ref{rem:r-colorable}. 
        \label{tab:vertex}}

    \end{table}

	\begin{corollary}\label{cor:stream-vertex-sampling}
		Let $\FG$ be a $\colf$-colorable family of graphs (Definition~\ref{def:colorable-family}). There exists a randomized streaming algorithm that makes a single pass over any graph $G$ from $\FG$ with maximum degree $\Delta$, 
		and for any setting of parameters: 
		\[
		\eps>0, ~\qquad ~ 1 \leq k \leq \frac{\eps^2\cdot\Delta}{9\ln{n}}, ~\qquad~ C := C(\eps,k) = k \cdot \colf\Paren{(1+\eps)\cdot\frac{\Delta}{k}},
		\]
		with high probability computes a proper $C$-coloring of $G$ using $O(n \cdot \frac{\Delta}{k})$ space. 
	\end{corollary}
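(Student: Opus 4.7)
The plan is to run exactly the three-step procedure sketched in the paragraph preceding the corollary: before the stream starts, draw a single uniform random $k$-partition $V = V_1 \sqcup \cdots \sqcup V_k$ and record the part of each vertex in an $O(n \log k)$-bit table; during the stream, for each arriving edge $(u,v)$ simply drop it if $u$ and $v$ are in different parts and otherwise append it to the adjacency list of the appropriate subgraph $G[V_i]$; after the stream ends, run any (offline) $\colf$-coloring procedure separately on each $G[V_i]$ using its own private palette $P_i$ of size $C/k = \colf((1+\eps)\Delta/k)$, and output the union of the $k$ colorings.

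Correctness follows directly from Theorem~\ref{thm:vertex-sampling}: since $\FG$ is hereditary, each $G[V_i]$ lies in $\FG$, and by Lemma~\ref{lem:sample-degree} has maximum degree at most $(1+\eps)\Delta/k$ with high probability. Therefore $G[V_i]$ is $\colf((1+\eps)\Delta/k)$-colorable, and since the palettes $P_1, \ldots, P_k$ are pairwise disjoint, their concatenation yields a proper $C$-coloring of $G$.

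For the space bound I would combine Lemma~\ref{lem:sample-degree} with Lemma~\ref{lem:sample-edge}: with high probability each $G[V_i]$ has at most $O(n/k)$ vertices each of degree at most $O(\Delta/k)$, so $|E(G[V_i])| = O(n\Delta/k^2)$, and summing over $i \in [k]$ gives an edge-storage cost of $O(n\Delta/k)$ which dominates the $O(n \log n)$ cost of storing the partition (this domination is justified by the constraint $k \leq \eps^2 \Delta/(9 \ln n)$, which forces $\Delta/k = \Omega(\log n)$). A union bound absorbs the two failure events into the ``high probability'' guarantee, and the abort-on-overflow trick of Section~\ref{sec:sub-remarks} converts this into a deterministic resource bound.

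The only real subtlety is that a streaming algorithm does not know $\Delta$ at the moment it must commit to the partition and to $k$. Following Remark~\ref{rem:know-Delta}, I would instantiate the above data structure in parallel for $O(\log n)$ geometrically increasing guesses of $\Delta$, maintain their edge sets separately during the stream, and after the stream ends discard all but the smallest guess that upper-bounds the true $\Delta$ (now known from the stored degree information) before running the $\colf$-coloring step. Since the regime of interest has $\Delta/k = \Omega(\log n)$, the extra $\polylog(n)$ overhead is swallowed by $O(n\Delta/k)$ after adjusting constants, so no change to the statement is needed.
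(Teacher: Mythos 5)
Your proposal is correct and follows exactly the paper's own argument: sample the random $k$-partition up front, store only the intra-part edges during the stream, and invoke Theorem~\ref{thm:vertex-sampling} together with Lemmas~\ref{lem:sample-degree} and~\ref{lem:sample-edge} to get correctness and the $O(n\Delta/k)$ edge-storage bound. The extra care you take about the $O(n\log k)$ cost of the partition table and the lack of prior knowledge of $\Delta$ (handled via the guessing mechanism of Remark~\ref{rem:know-Delta}, which the paper explicitly notes also applies to this section) is sound and consistent with what the paper intends but leaves implicit.
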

	
\paragraph{Query Algorithms from Theorem~\ref{thm:vertex-sampling}.} The algorithms is as follows: 
\begin{tbox}
	\begin{enumerate}[label=(\roman*)]
	\item Sample a random $k$-partitioning of the vertices into $V_1,\ldots,V_k$.
	\item Obtain the subgraphs $G[V_1],\ldots,G[V_k]$ using the following procedure: 
	\begin{itemize}
		\item If $\Delta > n/k$, then non-adaptively query all pairs of vertices $u,v$ where both $u,v$ belong to the same  $V_i$ (using pair queries); 
		\item Otherwise, non-adaptively query all neighbors of all vertices $u$ (using neighbor queries). 
	\end{itemize}
	\item Find a $C$-coloring of $G$ by coloring each $G[V_i]$ with a distinct palette of size $C/k$ (with no further access to $G$). 
	\end{enumerate}
\end{tbox}

	The correctness of the algorithm (with high probability) again follows from Theorem~\ref{thm:vertex-sampling}. The query complexity of this algorithm is also (by Lemma~\ref{lem:sample-edge}): 
	$\min\set{O(n\Delta) + O(n^2/k)}$ queries (note that the first term on its own is trivial as it requires looking at the entire graph). It now follows: 
	
		\begin{corollary}\label{cor:query-vertex-sampling}
		Let $\FG$ be a $\colf$-colorable family of graphs (Definition~\ref{def:colorable-family}). There exists a randomized non-adaptive algorithm that given query access to any graph $G$ from $\FG$ with maximum degree $\Delta$, 
		for any setting of parameters: 
		\[
		\eps>0, ~\qquad ~ 1 \leq k \leq \frac{\eps^2\cdot\Delta}{9\ln{n}}, ~\qquad~ C := C(\eps,k) = k \cdot \colf\Paren{(1+\eps)\cdot\frac{\Delta}{k}},
		\]
		with high probability computes a proper $C$-coloring of $G$ using $\min\set{O(n\Delta) + O(n^2/k)}$ queries. 
	\end{corollary}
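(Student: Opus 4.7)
The plan is to follow the three-step template already sketched in the excerpt: sample a random $k$-partition of $V$, use the query model to recover each induced subgraph $G[V_i]$, and then color via Theorem~\ref{thm:vertex-sampling}. The non-adaptivity requirement is easy to meet because both of our query strategies depend only on the partition (which we sample before touching the graph) and on coarse, input-independent information, namely $n$, $\Delta$, and $k$.

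For correctness, once we have all induced subgraphs $G[V_i]$ in hand, Theorem~\ref{thm:vertex-sampling} directly guarantees, with high probability, that $G$ admits a proper coloring obtained by coloring each $G[V_i]$ from a disjoint palette of size $C/k$, so that the total number of colors used is at most $C$. No further graph access is needed for the coloring step itself: once the edge sets of all $G[V_i]$ are collected, the algorithm works entirely on this sampled data. In particular, the random partitioning, together with the choice between pair-queries and neighbor-queries, commits to a fixed (randomized) query set in advance, so the algorithm is non-adaptive as claimed.

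For query complexity, I would bound each of the two regimes separately and then take the minimum. In the neighbor-query regime we ask, for every vertex $u \in V$, for all of its $\deg(u) \leq \Delta$ neighbors; this costs $O(n\Delta)$ queries total and certainly reveals every edge inside every $G[V_i]$. In the pair-query regime we ask, for every unordered pair $\{u,v\}$ such that $u$ and $v$ lie in the same part $V_i$, whether $(u,v) \in E$. By Lemma~\ref{lem:sample-edge}, with high probability $|V_i| = O(n/k)$ for every $i$, so the number of such pairs is at most $k \cdot \binom{O(n/k)}{2} = O(n^2/k)$. Since the algorithm chooses the cheaper of the two query plans based on the comparison between $\Delta$ and $n/k$, the total number of queries is $\min\{O(n\Delta),\,O(n^2/k)\}$ with high probability, and a standard truncation argument (terminate and retry if the high-probability bound is exceeded) turns this into a deterministic bound at no asymptotic cost.

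There is no real obstacle here beyond bookkeeping: the $\colf$-colorability of $\FG$ is used only as a black box through Theorem~\ref{thm:vertex-sampling}, and the two query strategies together cover both the sparse ($\Delta \leq n/k$) and dense ($\Delta > n/k$) regimes. The mildly delicate point is to make sure the query set is committed to before the partition-dependent information is used adaptively — but since both strategies reduce to ``given the partition, query all potential edges inside each part,'' this is immediate.
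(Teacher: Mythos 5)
Your proposal is correct and takes essentially the same route as the paper: sample a random $k$-partition up front, recover each $G[V_i]$ via either all-neighbor queries or all same-part pair queries (choosing the cheaper based on whether $\Delta > n/k$), invoke Lemma~\ref{lem:sample-edge} to bound $\card{V_i} = O(n/k)$, and then apply Theorem~\ref{thm:vertex-sampling} for correctness. The paper writes the bound as $\min\set{O(n\Delta) + O(n^2/k)}$, which is a typographical slip for $\min\set{O(n\Delta),\, O(n^2/k)}$; your reading is the intended one.
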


We conclude this section with some important remarks about Corollaries~\ref{cor:stream-vertex-sampling} and~\ref{cor:query-vertex-sampling}. 

\begin{remark}[\textbf{Runtime of our algorithms}]
{We did not state the runtime of our algorithms in this section and focused primarily on space and query complexity of algorithms, respectively. 
	This is because in both cases, the runtime of the algorithm crucially depends on the runtime of the coloring algorithm for finding a $\colf$-coloring of each subgraph $G[V_i]$ which is specific to the family $\FG$ (and $\colf$) and thus 
	not known a-priori.}
	
	{Nevertheless, for {\textbf{almost all} our applications to specific families of graphs (with one exception)}, \textbf{the runtime of the algorithms is also sublinear in the input size}.} 
\end{remark}

\subsection{Particular Implications of Theorem~\ref{thm:vertex-sampling}}

We now list the applications of Theorem~\ref{thm:vertex-sampling} and Corollaries~\ref{cor:stream-vertex-sampling} and~\ref{cor:query-vertex-sampling} to different families of ``locally sparse'' graphs that
are colorable with much fewer than $(\Delta+1)$ colors. 

\subsubsection*{Triangle-Free Graphs} 

As stated earlier, triangle-free graphs admit an $O(\frac{\Delta}{\ln{\Delta}})$ coloring. This was first proved by Johansson~\cite{Johansson96a} by showing an upper bound of $9\frac{\Delta}{\ln{\Delta}}$ on the chromatic number of these graphs\footnote{This 
result of Johansson was never published -- see~\cite[Chapter~13]{ColoringBook} for a lucid presentation of the original proof.}. The leading constant was then improved to $4$ by Pettie and Su~\cite{PettieS15} and very recently to $1+o(1)$
by Molloy~\cite{Molloy19} matching the result of Kim for graphs of girth $5$~\cite{Kim95}. Moreover, Molloy's result implies an $\Ot(n\Delta^2)$ time algorithm for finding such a coloring.

Note that triangle-free graphs form a hereditary family of graphs and aforementioned results imply that they are $\colftri$-colorable for $\colftri(\Delta) = O(\frac{\Delta}{\ln{\Delta}})$. As such, Corollaries~\ref{cor:stream-vertex-sampling} and~\ref{cor:query-vertex-sampling} imply the following algorithms for any $\gamma \in (0,1/2)$ as small as $\Theta(\frac{\ln\ln{\Delta}}{\ln{\Delta}})$: 

\begin{itemize}[leftmargin=15pt]
	\item \textbf{Streaming Model:} A randomized single-pass $\Ot(n^{1+\gamma})$ space  algorithm for $O(\frac{\Delta}{\gamma \ln{\Delta}})$ coloring of triangle-free graphs.
	The post-processing time of this algorithm is $\Ot(n\cdot\Delta^{\gamma})$. 
	\item \textbf{Query Model:} A randomized non-adaptive $\Ot(n^{3/2+\gamma})$-query algorithm for $O(\frac{\Delta}{\gamma \ln{\Delta}})$ coloring of triangle-free graphs. 
	The runtime of this algorithm is also $\Ot(n^{3/2+2\gamma})$. 
\end{itemize}
Both results above are proved by picking $\eps = \Theta(1)$ and $k = \Theta(\Delta^{1-\gamma})$, thus obtaining a $C$-coloring: 
\begin{align*}
	C = C(\eps,k) = k \cdot \colftri\Paren{\Theta(\Delta/k)} = O(k) \cdot \frac{\Delta/k}{\ln{(\Delta/k)}} = O(\frac{\Delta}{\ln{\Delta^{\gamma}}}) = O(\frac{\Delta}{\gamma \ln{\Delta}}). 
\end{align*}

\begin{remark}
	{The above approach can also be used to obtain a linear time classical algorithm for $O(\frac{\Delta}{\ln{\Delta}})$ coloring of triangle-free graphs faster than the state-of-the-art algorithm of  Molloy~\cite{Molloy19} (albeit with a larger number of colors by a constant factor). For any $\gamma \in (0,1/2)$, we obtain an algorithm for $O(\frac{\Delta}{\gamma \cdot \ln{\Delta}})$ coloring of triangle-free graphs in $O(n\Delta) + \Ot(n\Delta^{2\gamma}) = O(n\Delta)$ time. }
\end{remark}

\subsubsection*{$\bm{K_r}$-Free Graphs}

For any fixed integer $r \geq 1$, we refer to any graph that does not contain a copy of the $K_r$, namely, the clique on $r$ vertices, as a $K_r$-free graph. Johansson proved that any $K_r$-free graph 
admits an $O(\frac{\Delta \ln\ln{\Delta}}{\ln{\Delta}})$ coloring~\cite{Johansson96b} and gave an $O(n \cdot \poly(\Delta))$ time algorithm for finding it\footnote{This result of Johansson was also
never published -- see~\cite{BansalGG15} for a streamlined version of this proof.}. This result was very recently simplified (and extended to $r$ beyond a fixed constant) by Molloy~\cite{Molloy19} (however the latter result does not imply an efficient algorithm).  

Similar to the case of triangle-free graphs, combining these results with Corollaries~\ref{cor:stream-vertex-sampling} and~\ref{cor:query-vertex-sampling} imply the following algorithms
for any $\gamma \in (0,1/2)$ as small as $\Theta(\frac{\ln\ln{\Delta}}{\ln{\Delta}})$: 

\begin{itemize}[leftmargin=15pt]
	\item \textbf{Streaming Model:} A randomized single-pass $\Ot(n^{1+\gamma})$ space  algorithm for $O(\frac{\Delta\ln\ln{\Delta}}{\gamma \ln{\Delta}})$ coloring of $K_r$-free graphs.
	The post-processing time of this algorithm is $O(n^{1+\Theta(\gamma)})$. 
	\item \textbf{Query Model:} A randomized non-adaptive $\Ot(n^{3/2+\gamma})$-query algorithm for $O(\frac{\Delta\ln\ln{\Delta}}{\gamma \ln{\Delta}})$ coloring of $K_r$-free graphs. 
	The runtime of this algorithm is also $O(n^{3/2+\Theta(\gamma)})$. 
\end{itemize}

\subsubsection*{Graphs with $\bm{r}$-Colorable Neighborhoods}

For any fixed integer $r \geq 1$, we say that a graph $G$ is locally $r$-colorable iff neighborhood of every vertex in $G$ is $r$-colorable. Johansson also proved that $r$-colorable graphs admits an $O(\frac{\Delta}{\ln{\Delta}} \cdot \ln{r})$ coloring~\cite{Johansson96b}; 
see~\cite{BansalGG15} for a proof and also an algorithm that finds such a coloring in $\poly(n \cdot 2^{\Delta})$ time (which uses, as a subroutine, a result of~\cite{BjorklundHK09}). 

It is easy to see that locally $r$-colorable graphs also form a hereditary family. Consequently, as before, Corollaries~\ref{cor:stream-vertex-sampling} and~\ref{cor:query-vertex-sampling} imply the following 
for any $\gamma \in (0,1/2)$ as small as $\Theta(\frac{\ln\ln{\Delta}}{\ln{\Delta}})$: 

\begin{itemize}[leftmargin=15pt]
	\item \textbf{Streaming Model:} A randomized single-pass $\Ot(n^{1+\gamma})$ space algorithm for $O(\frac{\Delta}{\gamma \ln{\Delta}} \cdot \ln{r})$ coloring of locally $r$-colorable graphs.
	The post-processing time of the algorithm is $\poly(n \cdot 2^{\Delta^{\gamma}})$. 
	\item \textbf{Query Model:} A randomized non-adaptive $\Ot(n^{3/2+\gamma})$-query algorithm for $O(\frac{\Delta}{\gamma \ln{\Delta}} \cdot \ln{r})$ coloring of locally $r$-colorable graphs. 
	The runtime of this algorithm is also $\poly(n \cdot 2^{\Delta^{\gamma}})$. 
\end{itemize}

\begin{remark}\label{rem:r-colorable}
	By picking $k = \Theta(\Delta/\log{n})$ in the query algorithm above (instead of $k=\Delta^{1-\gamma}$ in the above part), we obtain a (classical) algorithm for $O(\frac{\Delta}{\ln\ln{n}} \cdot \ln{r})$ coloring
	in $\poly(n \cdot 2^{\Theta(\log{n})}) = \poly(n)$ time. Although the number of colors of this algorithm is sub-optimal for $\Delta > (\log{n})^{\omega(1)}$, this gives a polynomial time algorithm for 
	coloring these graphs. 
\end{remark}

\subsubsection*{Graphs with $\bm{\delta}$-Sparse Neighborhoods} 
For any $\delta \in (0,1)$, we say a graph $G(V,E)$ has a $\delta$-sparse neighborhood iff the total number of edges in the neighborhood of any vertex $v$ (i.e., edges between neighbors of $v$) is at most $\delta \cdot \Delta^2$ (not to be confused with Definition~\ref{def:eps-sparse} for $\eps$-sparse vertices, albeit the two definitions are equivalent for $\Delta$-\emph{regular} graphs by setting $\delta=(1-\eps^2)$). 
Alon, Krivelevich and Sudakov~\cite{AlonKS99} proved that any graph $G$ with maximum degree $\Delta$ and $\delta$-sparse neighborhood admits an $O(\frac{\Delta}{\log{(1/\delta)}})$ coloring and that this is tight for all admissible values of $\delta$
and $\Delta$. 

We note that unlike all other families of graphs considered in this section, the family of sparse-neighborhood graphs is \emph{not} a hereditary family. As such, we cannot readily apply Theorem~\ref{thm:vertex-sampling} (and hence 
Corollaries~\ref{cor:stream-vertex-sampling} and~\ref{cor:query-vertex-sampling}). However, we can modify the proof of Theorem~\ref{thm:vertex-sampling} slightly to apply to this case as well. In particular, we prove the following result. 

\begin{lemma}\label{lem:vertex-sampling-sparse}
	For any $\delta \in (0,1)$, let $G(V,E)$ be an $n$-vertex graph with maximum degree $\Delta$ and $\delta$-sparse neighborhoods. For the parameters 
	\[
		1 \leq k \leq \frac{\delta \cdot \Delta}{9\cdot\ln{n}}, ~\qquad~ C := \Theta(\frac{\Delta}{\ln{(1/\delta)}}),
	\]
	suppose we partition $V$ into $k$ sets $V_1,\ldots,V_k$ uniformly at random; then with high probability $G$ can be $C$-colored by coloring each $G[V_i]$ with a distinct palette of size $C/k$. 
\end{lemma}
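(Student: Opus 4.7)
The plan is to follow the same recipe as in the proof of Theorem~\ref{thm:vertex-sampling}, but because the family of graphs with $\delta$-sparse neighborhoods is \emph{not} hereditary, we cannot simply assert that every $G[V_i]$ inherits the sparse-neighborhood property. Instead, we verify directly that with high probability every $G[V_i]$ has (i) maximum degree $O(\Delta/k)$, and (ii) for every vertex $v \in V_i$ at most $O(\delta) \cdot (\Delta/k)^2$ edges among its neighbors in $V_i$. Once both are established, each $G[V_i]$ satisfies the hypothesis of the Alon-Krivelevich-Sudakov theorem~\cite{AlonKS99} with max-degree bound $\Delta' = O(\Delta/k)$ and inflated sparsity $\delta' = O(\delta)$, giving an $O(\Delta/(k\ln(1/\delta)))$ coloring of $G[V_i]$; assigning disjoint palettes to the $k$ subgraphs then yields the claimed $C = O(\Delta/\ln(1/\delta))$ coloring of $G$.

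Property (i) follows from Lemma~\ref{lem:sample-degree} applied with $\eps = 1$. For property (ii), fix a vertex $v$, condition on $v \in V_i$, and let $Z_u := \mathbf{1}[u \in V_i]$ for $u \in N_G(v)$; these are mutually independent Bernoulli$(1/k)$ random variables. Let $X_v$ denote the number of edges of $G$ with both endpoints in $N_G(v) \cap V_i$, so that $\expect{X_v} \leq \delta \Delta^2/k^2$. A direct Talagrand-style argument on $X_v$ does not suffice: viewed as a quadratic polynomial in the $Z_u$'s, the random variable $X_v$ has Lipschitz constant up to $\Delta$, which is too large to tame the upper tail given only the modest slack between $\expect{X_v}$ and $\log n$. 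Establishing a sharp upper-tail bound for $X_v$ using only the slack $k \leq \delta\Delta/(9\ln n)$ is the main obstacle.

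We overcome this by exploiting the double-counting identity
\[
2X_v \;=\; \sum_{u \in N_G(v)} Z_u \cdot D_u, \qquad D_u \;:=\; \sum_{u' \in N_G(u)\cap N_G(v)} Z_{u'} \;=\; |N_G(u)\cap N_G(v)\cap V_i|,
\]
and applying Chernoff-type bounds (Proposition~\ref{prop:chernoff}) in two stages. The outer stage bounds the weighted sum $\sum_{u} Z_u w_u$ with $w_u := |N_G(u)\cap N_G(v)|$: its mean is $2\cdot|E(G[N_G(v)])|/k \leq 2\delta\Delta^2/k$, the per-coordinate weight is at most $\Delta$, and the ratio of these two quantities is at least $2\delta\Delta/k \geq 18\ln n$ by the hypothesis on $k$; a multiplicative Chernoff bound with $\mumax = 2\delta\Delta^2/k$ therefore gives $\sum_u Z_u w_u \leq O(\delta\Delta^2/k)$ w.h.p. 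The inner stage applies a multiplicative Chernoff bound separately to each $D_u$ (taking the mean bound to be $\max(w_u/k,\,O(\ln n))$) to obtain $D_u \leq w_u/k + O(\ln n)$ w.h.p. Combining these with $\sum_u Z_u = |N_G(v)\cap V_i| \leq O(\Delta/k)$ from Lemma~\ref{lem:sample-degree} yields
\[
2 X_v \;\leq\; \tfrac{1}{k}\sum_{u} Z_u w_u + O(\ln n)\cdot \sum_{u} Z_u \;\leq\; O\!\paren{\tfrac{\delta\Delta^2}{k^2}} + O\!\paren{\tfrac{\Delta \ln n}{k}}.
\]
The hypothesis $k \leq \delta\Delta/(9\ln n)$ makes the second term absorbed into the first, giving $X_v \leq O(\delta)\cdot(\Delta/k)^2$ w.h.p. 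A union bound over all $v$ and $i$ establishes (ii), and invoking~\cite{AlonKS99} on each $G[V_i]$ completes the proof.
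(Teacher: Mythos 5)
Your proof is correct and follows the paper's overall strategy---show that w.h.p.\ each $G[V_i]$ has maximum degree $O(\Delta/k)$ and an $O(\delta)$-sparse neighborhood, then invoke~\cite{AlonKS99} on each part with disjoint palettes---but your concentration argument for neighborhood sparsity takes a genuinely cleaner route and in fact sidesteps a subtle issue in the paper's version of this step (Lemma~\ref{lem:sample-sparse}). The paper also argues in two stages: it first bounds $\deg^i_{N(v)}(u) := |N(u)\cap N(v)\cap V_i| \leq 2\Delta/k$ uniformly, and then applies Chernoff to $X = \sum_{u\in N(v)} Z_u \cdot \deg^i_{N(v)}(u)$ treating the summands as ``independent random variables in $[0,2\Delta/k]$.'' Those summands are \emph{not} independent, however, since $\deg^i_{N(v)}(u)$ and $\deg^i_{N(v)}(u')$ are both functions of the same underlying indicators $\{Z_{u''}\}_{u''\in N(v)}$, so the Chernoff hypothesis is not literally satisfied. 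Your double-counting decomposition $2X_v = \sum_u Z_u D_u$ avoids this: the outer quantity $\sum_u Z_u w_u$ is taken against \emph{deterministic} weights $w_u = |N(u)\cap N(v)|$ and so really is a sum of independent $[0,\Delta]$-bounded variables, to which multiplicative Chernoff applies legitimately (the ratio $\mumax/b \geq 2\delta\Delta/k \geq 18\ln n$ by the hypothesis on $k$); the per-vertex bound $D_u \leq w_u/k + O(\ln n)$ is likewise a clean independent-sum estimate; and the resulting cross term $O(\Delta\ln n/k)$ is absorbed into $O(\delta\Delta^2/k^2)$ exactly because $k \leq \delta\Delta/(9\ln n)$. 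Your per-vertex bound in terms of $w_u$ is also tighter than the paper's uniform $2\Delta/k$, though either would suffice after the cross-term absorption. In short: same high-level plan, but your two-stage split is the one that would pass a careful referee without patching.
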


The proof of this result is by simply showing that not only the maximum degree of each graph $G[V_i]$ is sufficiently small (Lemma~\ref{lem:sample-degree}), but also it is a $(2\delta)$-sparse neighborhood graph.  

\begin{lemma}\label{lem:sample-sparse}
	 With high probability $G[V_i]$ has a $(2\delta)$-sparse neighborhood. 
\end{lemma}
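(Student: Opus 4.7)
My plan is to fix a vertex $v \in V$ and analyze the random variable $X_v := |E(G[N_G(v) \cap V_i])|$, the number of edges in the $G[V_i]$-neighborhood of $v$. Writing $Z_u = \II[u \in V_i]$ for each $u$, we have
\[
X_v \;=\; \sum_{(u,w) \in E(G[N_G(v)])} Z_u \, Z_w.
\]
Since $G$ has $\delta$-sparse neighborhoods, $|E(G[N_G(v)])| \leq \delta \Delta^2$; and since each pair of vertices lies in $V_i$ with probability essentially $1/k^2$, linearity of expectation gives $\expect{X_v} \leq \delta \Delta^2/k^2$. The target inequality $X_v \leq 2\delta\,\Delta(G[V_i])^2$ then amounts (via Lemma~\ref{lem:sample-degree}, which controls $\Delta(G[V_i]) \approx \Delta/k$) to showing that $X_v$ does not exceed roughly $2\expect{X_v}$ with high probability.

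The core of the proof, and the main obstacle, is this concentration statement: $X_v$ is a \emph{quadratic} function of the Bernoulli indicators $Z_u$, and the individual products $Z_u Z_w$ are pairwise dependent through shared endpoints, so a direct Chernoff bound does not apply. I would use a polynomial concentration inequality of Kim--Vu type for degree-two polynomials, whose key parameters here are $E_0 = \expect{X_v} \leq \delta\Delta^2/k^2$ and $E_1 := \max_u \expect{\partial X_v / \partial Z_u} = \max_u \deg_{G[N_G(v)]}(u)/k \leq \Delta/k$. The resulting deviation bound is of order $\sqrt{E_0 E_1}$ times a polylogarithmic factor, i.e.\ roughly $\sqrt{\delta \Delta^3/k^3}\cdot\polylog(n)$, and this is dominated by $\expect{X_v}/2 = \delta \Delta^2/(2k^2)$ precisely when $k \lesssim \delta\Delta/\polylog(n)$, matching the range in the lemma's hypothesis. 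An alternative self-contained route, in case one wishes to avoid citing Kim--Vu, is to use Vizing's theorem to properly edge-color $G[N_G(v)]$ with $\Delta+1$ matchings $M_1,\ldots,M_{\Delta+1}$, write $X_v = \sum_c X_v^{(c)}$ where $X_v^{(c)}$ sums $Z_u Z_w$ over $(u,w)\in M_c$ (now a \emph{sum of independent Bernoullis}, since matching edges are vertex-disjoint), apply Chernoff to each $X_v^{(c)}$, and then a union bound.

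Finally, I would combine the concentration bound with $\Delta(G[V_i]) \leq (1+\eps)\Delta/k$ from Lemma~\ref{lem:sample-degree} (using some sufficiently small constant $\eps$), which turns the bound $X_v \leq (1+o(1))\delta\Delta^2/k^2$ into $X_v \leq 2\delta\, \Delta(G[V_i])^2$ after absorbing the $(1+\eps)^2$ factor into the constant (for instance, by initially targeting $(3/2)\expect{X_v}$ rather than $2\expect{X_v}$ in the concentration step). A union bound over the at most $n$ vertices $v$ then yields the $(2\delta)$-sparse-neighborhood property of $G[V_i]$ with high probability, completing the proof.
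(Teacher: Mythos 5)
Your proposal takes a different route from the paper's, and as sketched it does not cover the full range of the parameter $k$ that the lemma needs. The paper's own argument avoids quadratic-polynomial concentration machinery altogether: for each $u \in N(v)$ it first shows by a direct Chernoff bound that $\deg^i_{N(v)}(u) := \card{N(u) \cap N(v) \cap V_i} \leq 2\Delta/k$ holds simultaneously for all pairs $(v,u)$ with high probability, and then, conditioning on this event, writes $X := \sum_{u \in N(v)} \II[u \in V_i]\cdot\deg^i_{N(v)}(u)$ (which is $2X_v$) as a sum of terms each bounded by $2\Delta/k$ and applies Chernoff a second time. In other words, one layer of the quadratic dependence is absorbed into (whp-bounded) coefficients rather than treated as an honest degree-two polynomial.

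The concrete gap in your proposal is quantitative. For the Vizing route, $\chi'(G[N_G(v)]) \leq \Delta(G[N_G(v)])+1$ can be as large as $\Theta(\Delta)$ even though $\card{E(G[N_G(v)])} \leq \delta\Delta^2$ (a single vertex can be near-universal in $N_G(v)$). Each matching contributes expected edge count at most $\Delta/(2k^2)$, and a per-matching Chernoff bound at failure probability $n^{-O(1)}$ requires additive slack at least $\Theta(\ln n)$ no matter how small the mean is. Over $\Theta(\Delta)$ matchings the accumulated slack is already $\Theta(\Delta\ln n)$, and demanding this be at most the deviation budget $\Theta(\delta\Delta^2/k^2)$ forces $k \lesssim \sqrt{\delta\Delta/\ln n}$; this is strictly smaller than the lemma's $k \leq \delta\Delta/(9\ln n)$ whenever $\delta\Delta \gg \ln n$, and a Cauchy--Schwarz reallocation of the slack across matchings does not change the constraint. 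For the Kim--Vu route, the polylogarithmic factor in the degree-two bound is $\Theta(\ln^2 n)$, not $O(\ln n)$, since the deviation carries $\lambda^2$ and one needs $\lambda \approx \ln n$ for failure probability $n^{-O(1)}$; tracing your own computation, $\sqrt{E_0 E_1}\cdot\ln^2 n \leq E_0/2$ rewrites as $\delta\Delta/k \gtrsim \ln^4 n$, again strictly more restrictive than the hypothesis for large $n$. So the overall skeleton (linearity of expectation, second-order concentration, combine with Lemma~\ref{lem:sample-degree}) is right and you correctly diagnosed why a direct Chernoff fails, but neither of the two concentration tools you invoke delivers the required strength across the full parameter range.
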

\begin{proof}
	Fix a vertex $v \in V_i$. For any vertex $u \in N(v)$, let $\deg_{N(v)}(u)$ denote the degree of $u$ to other vertices in $N(v)$. 
	Moreover, define $\deg^i_{N(v)}(u)$ as the degree of $u$ to vertices in $N(v)$ that are also present in $V_i$, hence, 
	\begin{align*}
		\expect{\deg^i_{N(v)}(u)}  =  \frac{1}{k}\cdot \deg_{N(v)}(u) \leq \Delta/k. 
	\end{align*}
	Moreover, $\deg^i_{N(v)}(u)$ is a sum of $\Delta$ independent random variables and hence by Chernoff bound (Proposition~\ref{prop:chernoff} with $\mu = \Delta/k$ and $\delta = 1$): 
	\begin{align*}
		\Pr\paren{\deg^i_{N(v)}(u) \geq 2\Delta/k} \leq \exp\paren{-\frac{\Delta}{3k}} \leq \frac{1}{n^3},
	\end{align*}
	by the condition on value of $k$. By a union bound, with high probability, for all vertices $v \in V_i$ and $u \in N(v)$ the above inequality holds. In the following, we condition on this event. Note that as this is a
	``high probability'' event, this conditioning does not change the distribution of random variables by more than a negligible factor. 
	
	Again fix a vertex $v \in V_i$. Define (at most) $\Delta$ random variables $X_u$ for $u \in N(v)$ where $X_u = \deg^i_{N(v)}(u)$ iff $u$ is also sampled in $V_i$ and otherwise $X_u=0$. 
	Define $X:= \sum_{u \in N(v)} X_u$ to be the number of edges between vertices in $N(v) \cap V_i$. As each edge appears in $G[V_i]$ w.p. $1/k^2$, by linearity of expectation, 
	\begin{align*}
		\expect{X} \leq \delta \cdot \Delta^2 \cdot \frac{1}{k^2}. 
	\end{align*}
	Moreover, as $X$ is a sum of independent random variables which are in $[0,2\Delta/k]$ (by the high probability event we conditioned on), an application of Chernoff bound (Proposition~\ref{prop:chernoff}) implies that: 
	\begin{align*}
		\Pr\paren{X \geq 2 \cdot \delta \cdot \frac{\Delta^2}{k^2}} \leq \exp\paren{-\frac{(\delta^2 \cdot \Delta^4/k^4)}{3 \cdot \Delta^2/k^2}} = \exp\paren{-\frac{\delta^2 \cdot \Delta^2}{3k^2}} \leq \frac{1}{n^3}, 
	\end{align*}
	by the choice of $k$. We take another union bound over all vertices $v \in V$. 
	
	Finally, as by Lemma~\ref{lem:sample-degree}, we have that maximum degree of $G[V_i]$ is at most $2\Delta/k$ and since by the above argument, neighborhood of each vertex contains at most $2\delta \cdot \Delta^2/k^2$ edges, 
	we obtain that $G[V_i]$ has $(2\delta)$-sparse neighborhoods, concluding the proof. 
\end{proof}

Lemma~\ref{lem:vertex-sampling-sparse} now follows from Lemma~\ref{lem:sample-sparse} (the same exact way as in the proof of Theorem~\ref{thm:vertex-sampling}). 
Similar to Corollaries~\ref{cor:stream-vertex-sampling} and~\ref{cor:query-vertex-sampling}, this in turn implies the following algorithms: 

\begin{itemize}[leftmargin=15pt]
	\item \textbf{Streaming Model:} A randomized single-pass $\Ot(n/\delta)$ space algorithm for $O(\frac{\Delta}{\ln{(1/\delta)}})$ coloring of graphs with $\delta$-sparse neighborhoods.
	The post-processing time  is $\Ot(n \cdot \poly{(1/\delta)})$. 
	\item \textbf{Query Model:} A randomized non-adaptive $\Ot(n^{3/2}/\delta)$-query algorithm for $O(\frac{\Delta}{\ln{(1/\delta)}})$ coloring of graphs with $\delta$-sparse neighborhoods.
	The runtime of the algorithm is $\Ot(n^{3/2} \cdot \poly{(1/\delta)})$ 
\end{itemize}

\subsection*{Acknowledgements}

Sepehr Assadi would like to thank Suman Bera, Amit Chakrabarti, Prantar Ghosh, Guru Guruganesh, David Harris, Sanjeev Khanna, and Hsin-Hao Su for helpful conversations and Mohsen Ghaffari for communicating the $(\deg+1)$ coloring problem  
and an illuminating discussion that led us to the proof of the palette sparsification theorem for this problem in this paper. We are also thankful to the anonymous reviewers of RANDOM 2020 for helpful suggestions on the presentation of the paper, 
and to Stijn Cambie and Ross Kang for helpful comments.

\bibliographystyle{abbrv}
\bibliography{general}

\clearpage
\appendix


\section{Proof of Proposition~\ref{prop:lc-triangle}}\label{app:lc-triangle}

We present the proof of Proposition~\ref{prop:lc-triangle}, restated below, in this section. 

\begin{proposition*}[Restatement of Proposition~\ref{prop:lc-triangle}]
   There exists an absolute constant $d_0$ such that for all $d \geq d_0$ the following holds. 
   Suppose $G(V,E)$ is a triangle-free graph with lists $S(v)$ for every $v \in V$ such that:
   \begin{enumerate}[label=(\roman*)]
    	\item for every vertex $v$, $\card{S(v)} \geq 8 \cdot \frac{d}{\ln{d}}$, and
	\item for every vertex $v$ and color $c \in S(v)$, $\deg_S(v,c) \leq d$. 
    \end{enumerate} 
    Then, there exists a proper coloring of $G$ from these lists. 
\end{proposition*}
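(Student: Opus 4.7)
The approach is an iterative semi-random (R\"{o}dl nibble) coloring procedure, following the template of Johansson~\cite{Johansson96a} and its sharper variant due to Pettie and Su~\cite{PettieS15}. Write $\ell_v := |S(v)|$ and $d_v := \max_{c \in S(v)} \deg_S(v,c)$, and define the \emph{slack ratio} $K_v := \ell_v/d_v$, which starts at $K_v \geq 8/\ln d$ by hypothesis. The plan is to perform a sequence of partial-coloring rounds that increase $K_v$ at every surviving vertex until $K_v \geq 2e$; at that point Proposition~\ref{prop:lc-const} delivers a proper list-coloring of the residual graph, and together with the colors committed in earlier rounds this produces the desired coloring of $G$.

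Each round is as follows: every uncolored vertex activates independently with some probability $p$ (tuned to the current minimum of $K_v$); each activated vertex picks a uniformly random color from its current list and commits to it iff no neighbor picks the same color; committed vertices and the colors they used are then removed from the residual graph and lists. A standard calculation gives $\mathbb{E}[\ell_v^{\mathrm{new}}] \geq \ell_v \cdot \exp\!\bigl(-(1+o(1))\,p d_v/\ell_v\bigr)$, because a color $c \in S(v)$ is lost only when some neighbor $u$ of $v$ with $c \in S(u)$ commits to $c$. For each such neighbor $u$, $u$ drops out of $v$'s $c$-neighborhood either because $u$ itself gets colored, or because some witness $w \in N(u)$ commits to $c$ and thereby removes $c$ from $S(u)$. \emph{Triangle-freeness is essential here}: any such $w$ cannot belong to $N(v)$ (otherwise $u,v,w$ would form a triangle), so the ``collateral'' color-removal events are disjoint from the event that $c$ is lost from $S(v)$, and are effectively independent across different $u$'s in $v$'s $c$-neighborhood. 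Balancing the two mechanisms and using the concentration argument below, one concludes that w.h.p.\ $K_v^{\mathrm{new}} \geq K_v (1+\eta)$ for some $\eta > 0$ depending on the current scale of $K_v$; iterating finitely many times brings every $K_v$ above the threshold $2e$.

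Concentration of $\ell_v^{\mathrm{new}}$ and $d_v^{\mathrm{new}}$ around their expectations comes from Talagrand's inequality (Proposition~\ref{prop:talagrand}) applied to the independent per-vertex random bits (activation coin and chosen color). Both are $O(1)$-Lipschitz, and each can be made $O(1)$-certifiable, yielding deviations of order $\sqrt{d_v \log n}$, which is lower order than the expected change provided $d_v = \omega(\log n)$. Any vertex whose $c$-degree drops below $\Theta(\log n)$ during the iteration is deferred to the end, at which point its residual list already dominates its residual $c$-degree by the required margin, so Proposition~\ref{prop:lc-const} (or a direct greedy argument) handles it.

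The main obstacle will be the rigorous concentration of $d_v^{\mathrm{new}}$. Unlike $\ell_v^{\mathrm{new}}$, which depends only on the random bits at $N(v)$, the quantity $d_v^{\mathrm{new}}$ depends on bits at distance $2$ from $v$ (through the collateral-removal witnesses $w \in N(N(v))$), so a naive Lipschitz bound is too lossy. Following Pettie and Su, I will decompose $d_v^{\mathrm{new}}$ into a sum of indicator variables, one per candidate neighbor $u$, and verify that each indicator admits an $O(1)$-sized certificate (the commit bits of $u$ itself together with a single witness $w$). Applying the certifiable form of Talagrand's inequality to this sum then gives a concentration radius that is an $o(1)$ fraction of the expectation, which is exactly what is needed to propagate the invariant $K_v \geq K_v^{(t)}(1+\eta-o(1))$ deterministically through the rounds of the iteration.
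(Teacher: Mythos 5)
Your high-level plan --- an iterative ``nibble'' partial-coloring procedure in the spirit of Pettie--Su, closing with Proposition~\ref{prop:lc-const} once the lists are a constant factor larger than the residual $c$-degrees --- is the same as the paper's. You also correctly identify that the crux is concentration of the residual $c$-degree, and you correctly locate where triangle-freeness enters (the witness $w$ that removes color $c$ from a neighbor $u$'s list cannot itself lie in $N(v)$). But the way you try to close the concentration step has a genuine gap, and you miss a bookkeeping device that the paper uses precisely to sidestep it.

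The Talagrand step does not go through as written. Fix a color $c$ and write $b^{\mathrm{new}}(v,c)=\sum_{u\in B(v,c)}\II[c\in \hA(u)]\cdot\II[u\text{ not colored}]$; your $d_v^{\mathrm{new}}$ is the maximum of such sums. Consider a vertex $w$ at distance two from $v$. Nothing forbids $w$ from being adjacent to many vertices of $B(v,c)$: triangle-freeness only controls $N(w)\cap N(v)$ when $w\in N(v)$, and here $w\notin N(v)$. So if $w$ switches its assignment to $c$, the color $c$ is removed from $\hA(u)$ for every $u\in N(w)\cap B(v,c)$, a set that can have size up to $|B(v,c)|\leq d$. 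Hence $b^{\mathrm{new}}(v,c)$ is $\Theta(d)$-Lipschitz, not $O(1)$-Lipschitz, in the per-vertex random bits. Decomposing into indicators and exhibiting $O(1)$-sized certificates for each indicator does not repair this: Proposition~\ref{prop:talagrand} needs \emph{both} the Lipschitz constant $c$ and the certifiability $r$ in its exponent, and a Lipschitz constant of order $d$ renders the bound vacuous. Relatedly, the events ``$u_1$ keeps $c$'' and ``$u_2$ keeps $c$'' for $u_1,u_2\in B(v,c)$ are not independent, since $B(u_1,c)$ and $B(u_2,c)$ can share witnesses $w$, so your ``effectively independent across different $u$'s'' is exactly the assertion that has to be replaced by a more careful argument. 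The paper avoids Talagrand entirely: it peels the round into two stages and uses triangle-freeness to produce \emph{genuine} independence at each stage. First, the event ``$c$ stays in $\hA(v)$'' depends only on bits at $B(v,c)$, while $\bp(v,c):=\sum_{u\in B(v,c)}\II[c\in\hA(u)]$ depends only on bits at $\bigcup_{u\in B(v,c)}B(u,c)$, which is \emph{disjoint} from $B(v,c)$ by triangle-freeness; and for different colors $c\neq c'$ the summands $X_c=\II[c\in\hA(v)]\bp(v,c)$ use disjoint per-color coins and are hence independent, so Chernoff applies to $\sum_c X_c$. Second, conditioned on that intermediate stage, whether each $u$ gets colored in the round depends only on $u$'s own coin, so Chernoff again applies to the count of surviving $u$'s.

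A second missing ingredient is that you track the \emph{maximum} $c$-degree $d_v$, whereas the paper deliberately does not: it points out that individual $b_i(v,c)$'s are not concentrated, because in this proposition the initial $c$-degrees vary from $0$ to $d$ while the vertex degree can be as large as $\Theta(d^2/\ln d)$. Instead, after each round the procedure explicitly \emph{drops} from $A_{i+1}(v)$ any color whose residual $c$-degree exceeds $2\betaid_{i+1}$, so the $\bmax_i\leq 2\betaid_i$ bound is enforced by truncation, and the analysis then tracks the induced \emph{list-size loss} through a weighted-average quantity $\ss_i(v)=\lambda_i(v)\bb_i(v)+(1-\lambda_i(v))\cdot 2\betaid_i$ (a hybrid of the average $c$-degree $\bb_i(v)$ and a penalty term for missing colors), maintaining the invariant $\ssmax_i\leq(1+\eps)\betaid_i$. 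This truncate-and-account device, together with an ``equalized keep-probability'' modification of the nibble step, is the technical core inherited from Pettie--Su that your proposal omits. Finally, because $n$ is unconstrained relative to $d$, a ``w.h.p.'' union bound over vertices is not available; the per-round bad events have failure probability $\exp(-d^{\Theta(1)})$ and depend only on a $\poly(d)$-size neighborhood, so the paper closes each round with the Lov\'asz Local Lemma rather than a union bound.
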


We prove Proposition~\ref{prop:lc-triangle} using the probabilistic method and in particular a version of the so-called ``R\"{o}dl Nibble'', the ``semi-random method'', or the ``wasteful coloring procedure''; see, e.g.~\cite{Rodl85,ColoringBook}: The idea 
is to iteratively find a partial coloring of $G$ from the given
lists by coloring a small fraction of the vertices randomly, update the lists of their neighbors, and continue until we can color $G$ entirely. We shall remark 
that our approach in proving Proposition~\ref{prop:lc-triangle} closely follows the distributed algorithm of Pettie and Su~\cite{PettieS15} and we borrow several ideas from their work although there are many differences as well. 

\newcommand{\alphastar}{\ensuremath{\alpha^{*}}}
\newcommand{\betastar}{\ensuremath{\beta^{*}}}
\newcommand{\hA}{\ensuremath{\widehat{A}}}
\newcommand{\ha}{\ensuremath{\widehat{a}}}
\newcommand{\hB}{\ensuremath{\widehat{B}}}
\newcommand{\hb}{\ensuremath{\widehat{b}}}

\newcommand{\bb}{\ensuremath{{b}}}

\renewcommand{\ss}{\ensuremath{\eta}}
\newcommand{\Ss}{\ensuremath{\eta^{*}}}

\newcommand{\hsss}{\ensuremath{\widehat{\ss}}}
\newcommand{\hlambda}{\ensuremath{\widehat{\lambda}}}

\newcommand{\alphaid}{\ensuremath{\alpha^{\textnormal{\textsf{ideal}}}}}
\newcommand{\betaid}{\ensuremath{\beta^{\textnormal{\textsf{ideal}}}}}

\newcommand{\amin}{\ensuremath{a^{\textnormal{\textsf{min}}}}}
\newcommand{\bmax}{\ensuremath{b^{\textnormal{\textsf{max}}}}}
\newcommand{\ssmax}{\ensuremath{\ss^{\textnormal{\textsf{max}}}}}

\subsubsection*{Preliminaries and Parameters}

Our procedure is iterative. Each iteration $i$ of the procedure uses the following parameters: 
\begin{itemize}
	\item $\alphaid_i$: used as an ``ideal'' \emph{lower bound} for size of each list;
	\item $\betaid_i$: used as an ``ideal'' \emph{upper bound} on the $c$-degree of each vertex $v \in G_i$ for every $c \in A_i(v)$.
\end{itemize}
These parameters are defined recursively as follows (these expressions would become clear shortly):
\begin{align}
		&\pkeep_i := \paren{1-\frac{1}{2\ln{d} \cdot \alphaid_i}}^{2\betaid_i}  && \pcolor_i := \paren{1-\frac{1}{2\ln{d} \cdot \alphaid_i}}^{\pkeep_i \cdot \alphaid_i/2} \notag \\
		&\alphaid_1 = 8 \cdot \frac{d}{\ln{d}}  && \alphaid_{i+1} :=  \pkeep_i \cdot \alphaid_i \notag \\
		&\betaid_1 = d   && \betaid_{i+1} :=  \pcolor_i \cdot \pkeep_i \cdot \betaid_i. \label{eq:starting}
\end{align}
The following lemma lists some of the main relations between parameters $\alphaid_i$ and $\betaid_i$ that we use throughout the proof. The proof is by some rather straightforward (albeit daunting) calculations. 
\begin{lemma}\label{lem:ideal-list}
	The parameters $\alphaid_i$ and $\betaid_i$ satisfy the following properties: 
	\begin{enumerate}[label=(\roman*)]
		\item For every $i$, ${\betaid_i}/{\alphaid_i} \leq \betaid_1/\alphaid_1 \leq \ln{d}/8$. 
		\item There exists some sufficiently small $\delta = \Theta(1)$ such that for every $i$, $\alphaid_i \geq d^{\delta}$. 
		\item There exists an $\istar = O(\log^2{d})$ such that $\betaid_{\istar} < \alphaid_{\istar} /100$. 
	\end{enumerate}
\end{lemma}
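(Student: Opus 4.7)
All three claims will follow from a careful analysis of the recursion \eqref{eq:starting} after introducing the ratio $r_i := \betaid_i/\alphaid_i$, which satisfies the clean recurrence $r_{i+1} = \pcolor_i \cdot r_i$ (the common factor $\pkeep_i$ cancels from numerator and denominator). Part (i) is then immediate: $r_1 = d/(8d/\ln d) = \ln d/8$, and since $\pcolor_i \in (0,1]$, the sequence $\{r_i\}$ is monotonically non-increasing.

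For parts (ii) and (iii) I would establish quantitative estimates on $\pkeep_i$ and $\pcolor_i$ using $\ln(1-x) = -x(1+O(x))$ for small $x$ with $x = 1/(2\ln d \cdot \alphaid_i)$. This yields
\[
    \ln \pkeep_i = -\frac{r_i}{\ln d}(1+o(1)), \qquad \ln \pcolor_i = -\frac{\pkeep_i}{4\ln d}(1+o(1)),
\]
valid provided $\alphaid_i \geq d^{\Omega(1)}$, which is guaranteed inductively by (ii). Since (i) ensures $r_i \leq \ln d/8$ for all $i$, we get $\pkeep_i \geq e^{-1/8 - o(1)}$, an absolute constant bounded away from zero. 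This in turn provides an absolute constant $c$ (any $c < e^{-1/8}/4 \approx 0.22$ works) such that $\pcolor_i \leq e^{-c/\ln d}$ for every iteration $i$.

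Plugging this back into the ratio recurrence gives geometric decay $r_i \leq (\ln d/8) \cdot e^{-(i-1)c/\ln d}$. For (iii), the first $\istar$ at which this drops below $1/100$ satisfies $\istar - 1 \geq (\ln d/c) \cdot \ln(12.5 \ln d) = O(\log d \cdot \log \log d) = O(\log^2 d)$. For (ii), telescoping the product for $\alphaid$ gives
\[
    \ln \alphaid_i - \ln \alphaid_1 = \sum_{j=1}^{i-1} \ln \pkeep_j \geq -(1+o(1)) \sum_{j=1}^{i-1} \frac{r_j}{\ln d},
\]
and summing the geometric bound on $r_j$ yields $\sum_{j \geq 1} r_j \leq r_1/(1-e^{-c/\ln d}) = (\ln^2 d)/(8c) \cdot (1+o(1))$. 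Dividing by $\ln d$ and combining with $\ln \alphaid_1 = \ln d - O(\log \log d)$ gives $\ln \alphaid_i \geq (1 - 1/(8c)) \ln d - O(\log\log d)$, so $\alphaid_i \geq d^\delta$ for some absolute constant $\delta > 0$ and sufficiently large $d$.

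\textbf{Main obstacle.} No single estimate is deep; the delicate point is making sure the constants line up so that the exponent $\delta = 1 - 1/(8c) - o(1)$ is strictly positive. This is precisely why the hypothesis fixes $\alphaid_1 = 8d/\ln d$ rather than a smaller multiple of $d/\ln d$: the leading constant $8$ forces $r_1 = \ln d/8$, which keeps $\pkeep_i \geq e^{-1/8}$ and thereby ensures $8c > 1$. With $c \approx 0.22$ the exponent $1 - 1/(8c) \approx 0.43$ is comfortably positive, so taking $\delta = 0.4$ suffices. A secondary care is the simultaneous induction: the Taylor approximations for $\ln \pkeep_j$ and $\ln \pcolor_j$ require $\alphaid_j$ to remain $\geq d^{\Omega(1)}$ throughout the $\istar = O(\log^2 d)$ iterations, which is exactly the content of (ii) — so (ii) and the estimates on $\pkeep_j$, $\pcolor_j$ must be carried together in the induction.
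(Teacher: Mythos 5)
Your proposal follows essentially the same route as the paper's proof: both work with the ratio $r_i := \betaid_i/\alphaid_i$, observe that $r_{i+1} = \pcolor_i r_i$ gives monotonicity for part (i), use $r_i \leq \ln d / 8$ to bound $\pkeep_i$ away from zero (the paper shows $\pkeep_i \geq 3/4$, you get $e^{-1/8-o(1)}$), deduce a bound of the form $\pcolor_i \leq 1 - \Theta(1)/\ln d$, and then telescope the product $\alphaid_i = \alphaid_1 \prod_j \pkeep_j$ using the resulting geometric decay of $r_j$ for part (ii), and count iterations against that decay rate for part (iii). Your constants are tracked a bit more sharply (e.g., you obtain the finer $\istar = O(\log d \cdot \log\log d)$ inside the claimed $O(\log^2 d)$), and you are explicit about the mild simultaneous induction on $\alphaid_i$ staying polynomially large that the paper leaves implicit, but there is no substantive difference in approach.
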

\begin{proof}
	The first part is immediate as the ratio $\betaid_i/\alphaid_i$ drops by a factor $\pcolor_i \in (0,1)$ in each iteration. We now prove the second part. Firstly, 
	\begin{align*}
		\pkeep_i = \paren{1-\frac{1}{2\ln{d} \cdot \alphaid_i}}^{2\betaid_i} \geq \exp\paren{-\frac{2\betaid_i}{\ln{d} \cdot \alphaid_i}} \geq \exp\paren{-\frac{2\betaid_1}{\ln{d} \cdot \alphaid_1}} \geq 3/4.
	\end{align*}
	By definition of $\pcolor_i$: 
	\begin{align*}
		\pcolor_i = \paren{1-\frac{1}{2\ln{d} \cdot \alphaid_i}}^{\pkeep_i \cdot \alphaid_i/2} \leq \exp\paren{-\frac{\pkeep_i}{4\ln{d}}}  \leq 1-\frac{1}{6\ln{d}}.  
	\end{align*}
	 Define $r_i := \betaid_i/\alphaid_i$. By the above equation:
	 \begin{align*}
	 	r_{i+1} = \pcolor_i \cdot r_i \leq \Paren{1-\frac{1}{6\ln{d}}} \cdot r_i \leq \Paren{1-\frac{1}{6\ln{d}}}^{i} \cdot r_1. 
	 \end{align*}
	 This in turn allows us to bound $\pkeep_i$: 
	 \begin{align*}
	 	\pkeep_i &= \paren{1-\frac{1}{2\ln{d} \cdot \alphaid_i}}^{2\betaid_i} \geq \exp\paren{-\frac{1+o(1)}{\ln{d}}\cdot r_i} \\
		&\geq \exp\paren{-\frac{1+o(1)}{\ln{d}} \cdot \Paren{1-\frac{1}{6\ln{d}}}^{i-1} \cdot r_1} \\
		&= \exp\paren{-\frac{(1+o(1))}{8} \cdot \Paren{1-\frac{1}{6\ln{d}}}^{i-1}} \tag{as $r_1 = \ln{d}/8$}. 
	 \end{align*}
	 By using this bound in the definition of $\alphaid_{i}$, we get that: 
	 \begin{align*}
	 	\alphaid_{i} = \alphaid_1 \cdot \prod_{j=1}^{i-1} \pkeep_j  &\geq \alpha_1 \cdot \exp\paren{-\frac{(1+o(1))}{8} \cdot \sum_{j=1}^{i-1}\Paren{1-\frac{1}{6\ln{d}}}^{j-1}} \\
		&\geq \alpha_1 \cdot \exp\paren{-\frac{(1+o(1))}{8} \cdot 6\ln{d}} \\
		&\geq d^{\delta} \tag{for some  small $\delta=\Theta(1)$}
	 \end{align*}
	 This proves the second part. For the third part, note that as long as $\betaid_{i} \geq \alphaid_i/100$, we have, 
	 \begin{align*}
	 	\pkeep_i &=   \paren{1-\frac{1}{2\ln{d} \cdot \alphaid_i}}^{2\betaid_i} \leq \exp\paren{-\frac{1}{\ln{d}} \cdot\frac{\betaid_i}{\alphaid_i}} \leq \exp\paren{-\frac{1}{100\ln{d}}} \leq 1-\frac{\Theta(1)}{\ln{d}}.
	 \end{align*}
	 This, together with the upper bound on $\pcolor_i$ implies that:	 
	 \begin{align*}
	 	\betaid_{i} \leq \betaid_1 \cdot \prod_{j=1}^{i-1} \pcolor_j \cdot \pkeep_j \leq d \cdot \paren{1-\frac{\Theta(1)}{\ln{d}}}^{i-1}.
	 \end{align*}
	 As such, as long as $\betaid_i \geq \alphaid_i/100$, $\betaid_i$ will decrease by at least some fixed rate while by the second part we know that $\alphaid_i$ will never go below $d^\delta$ for some constant $\delta$. Hence,
	after $\istar = O(\log^2{d})$ steps we will have $\betaid_i < \alphaid_i/100$. 
\end{proof}

\paragraph{Notation.} We further define the following notation to describe our procedure. 
The definition of some of these parameters would become more clear later but we still list them all here for ease of reference 
(in the following $G_1 := G$ and $A_1(v) = S(v)$).  
\begin{itemize}
	\item $G_i$: The remaining graph to color at the beginning of iteration $i$;
	\item $A_i(v)$: List of available colors to $v \in G_i$ at the beginning of iteration $i$ -- let $a_i(v) := \card{A_i(v)}$; \\ we further define $\amin_i := \min_v a_i(v)$;
	\item $B_i(v,c)$: Set of vertices $u \in N(v)$ such that $c \in A_i(u)$ -- let $b_i(v,c) := \card{B_i(v,c)}$; \\ we further define $\bmax_i := \max_{v,c} b_i(v,c)$;
	\item $\hA_i(v)$: The {intermediate} list of colors of vertex $v$ during iteration $i$ -- let $\ha_i(v) := \card{\hA_i(v)}$; 
	\item $\hB_i(v,c)$: Set of vertices  $u \in N(v)$ such that $c \in \hA_i(u)$ -- let $\hb_i(v,c) := \card{\hB_i(v,c)}$.
\end{itemize}

\subsection{The Coloring Procedure}

Each iteration $i$ of our procedure is as follows (with a minor modification described below): 
\begin{tbox}
	\wasteful: The algorithm for each iteration $i$ of the coloring procedure. 
\begin{enumerate}
		\item For every vertex $v \in G_i$ and every color $c \in A_i(v)$, we \textbf{assign} $c$ to $v$ with probability $p_i(v) := \frac{1}{2\ln{d} \cdot \alphaid_i}$ and include the assigned colors in a set $C_i(v)$.
		\item\label{line:tri-modify} For every $v \in G_i$ we obtain the intermediate list $\hA_i(v)$ from $A_i(v)$ by removing each color $c$ assigned to some $u \in N_{G_i}(v)$, i.e., if $c \in C_i(u)$. 
		\item If there exists a color $c \in \hA_i(v) \cap C_i(v)$, \textbf{color} $v$ with $c$ (breaking the ties arbitrarily). 
		\item\label{line:tri-def} Update the following parameters for the next iteration:
		\[ G_{i+1} :=  G_i \setminus \set{\text{colored vertices in iteration $i$}},  A_{i+1}(v) := \set{c \in \hA_i(v) \mid \hb_{i}(v,c) \leq 2  \betaid_{i+1}}.\] 
\end{enumerate}
\end{tbox}
\noindent
Several remarks are in order. Firstly, it is easy to see that the partial coloring found by this procedure is always feasible: we (conservatively) throw out any color $c$ from the list $A_i(v)$ of a vertex $v$ if it is assigned to (and not even
necessarily used to color) a neighbor of $v$. Secondly, at the end of each iteration, we additionally throw out any color $c$ from $A_i(v)$ that has a ``large'' $c$-degree $b_i(v,c) > 2 \betaid_i$, hence, the $c$-degrees
of vertices is at most twice the ideal value $\betaid_i$. Finally, we will run this procedure 
up until a certain point where we can guarantee that the size of $A_i(v)$ for every vertex $v \in G_i$ is some constant factor larger than the $c$-degree of $v$ for $c \in A_i(v)$: at this point, 
we can simply apply Proposition~\ref{prop:lc-const} to color the remainder of the graph.

\paragraph{Equalizing probabilities:} Let $\pkeep_i(v,c)$ denote the probability that color $c \in A_i(v)$ is being kept in $\hA_i(v)$. 
It would make our proof much easier if all valid choices of $v,c$ have the same probability $\pkeep_i(v,c) = \pkeep_i$ (where $\pkeep_i$ is defined in Eq~\eqref{eq:starting}). 
While this is  not guaranteed by the \wasteful procedure, as we show below
a simple additional step in every iteration can ensure this property. Note that for every choices of $v$ and $c \in A_i(v)$:
\begin{align}
	\pkeep_{i}(v,c) &= \Pr\paren{\text{$c$ is not assigned to any vertex in $B_i(v,c)$}} \notag \\
	&= \prod_{u \in B_i(v,c)} (1-\frac{1}{\ln{d}} \cdot \frac{1}{2\alphaid_i}) \geq \Paren{1-\frac{1}{2\ln{d} \cdot \alphaid_i}}^{2\betaid_i} = \pkeep_i \label{eq:keepdef}.
\end{align}
 We \textbf{modify the procedure} by {removing} each color $c \in \hA_i(v)$ with probability $1-\frac{\pkeep_i}{\pkeep_i(v,c)}$ in Line~\eqref{line:tri-modify} of $\wasteful$. 
As a consequence of this, in the modified procedure, for every valid choices of $v,c$ in iteration $i$: 
\begin{align}
	&\Pr\paren{\text{$c \in A_i(v)$ belongs to $\hA_{i}(v)$ in iteration $i$}} = \pkeep_i. \label{eq:keep} 
\end{align}
\noindent
From now on, we work with this modified procedure and hence we can use Eq~\eqref{eq:keep}.

\subsubsection*{The Setup}

Recall that $\amin_i$ denotes the minimum list size and $\bmax_i$ denotes the maximum $c$-degree in each iteration $i$. 
Our goal is to maintain the invariant that in each iteration $i$, $\amin_i \geq \alphaid_i/2$ and $\bmax_i \leq 2\betaid_i$ (as stated, this invariant is ``too tight'' and thus 
in the proof we actually allow for some small approximation to take care of the errors due to the concentration bounds). As we know by Lemma~\ref{lem:ideal-list} that eventually 
$\betaid_i < \alphaid_i / 100$, such an invariant allows us to reach an iteration $i$ where $\amin_i > \bmax_i/10$. At this point, we can apply Proposition~\ref{prop:lc-const} and color the rest of the graph. 

It turns out for the purpose of bounding $\amin_i$ and $\bmax_i$, working with the parameters $a_i(v)$ and $b_i(v,c)$ directly is a hard task due to the lack of appropriate concentration (in particular, $b_i(v,c)$'s are not concentrated). 
To address this, let us further define the following parameters: 
\begin{itemize}
	\item $\lambda_i(v) := \min\set{1,a_i(v)/\alphaid_i}$: the ratio of size of list $A_i(v)$ to the ideal size $\alphaid_i$; 
	\item $\bb_i(v) := \sum_{c \in A_i(v)} b_i(v,c)/a_i(v)$: the average $c$-degree of $v$ in $A_i(v)$;
	\item $\ss_i(v) := \lambda_i(v) \cdot \bb_i(v) + (1-\lambda_i(v)) \cdot 2\betaid_i$; we further define $\ssmax_i := \max_v~\ss_i(v)$. 
\end{itemize}
We note that $\ss_i(v)$ can be seen as the average $c$-degree of $v$ if we add $\alphaid_i - a_i(v)$ new artificial colors with $c$-degree $2\betaid_i$ to $v$. 
Let us first see how does these parameters can help with our goal of bounding $\amin_i$ and $\bmax_i$. 
\begin{claim}\label{clm:tri-ss-helps}
	For any iteration $i$: 
		\begin{align*}
		\amin_i \geq \alphaid_i \cdot \paren{1-\frac{\ssmax_i}{2\betaid_i}} \qquad \textnormal{and} \qquad \bmax_i \leq 2 \cdot \betaid_i. 
	\end{align*}
\end{claim}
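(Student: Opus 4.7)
The plan is to handle the two inequalities separately, and observe that both are essentially bookkeeping consequences of the algorithm's definitions and the definitions of $\ss_i(v)$, $\lambda_i(v)$, $\bb_i(v)$. There will be no probabilistic content in this particular claim; the concentration work lies elsewhere (in showing that $\ssmax_i$ itself is well-behaved across iterations). I expect this claim to be a short deterministic lemma, so the ``hard part'' is just being careful with the $\min\{1,\cdot\}$ truncation in the definition of $\lambda_i(v)$.

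First I would dispense with the second bound, $\bmax_i \leq 2\betaid_i$, which is immediate by construction. For $i=1$, the assumption $A_1(v) = S(v)$ combined with hypothesis~(ii) of Proposition~\ref{prop:lc-triangle} gives $b_1(v,c) = \deg_S(v,c) \leq d = \betaid_1$. For $i \geq 2$, Line~\ref{line:tri-def} of \wasteful{} explicitly defines $A_{i}(v) = \set{c \in \hA_{i-1}(v) : \hb_{i-1}(v,c) \leq 2\betaid_{i}}$; since $A_i(v) \subseteq \hA_{i-1}(v)$, we have $b_i(v,c) \leq \hb_{i-1}(v,c) \leq 2\betaid_i$ for every $c \in A_i(v)$, giving $\bmax_i \leq 2\betaid_i$ as claimed.

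Next I would prove the first bound by fixing an arbitrary $v$ and splitting on whether $a_i(v) \geq \alphaid_i$ or not. Before that, note that the second bound just established implies $\bb_i(v) \leq \bmax_i \leq 2\betaid_i$, and hence $\ss_i(v) \leq 2\betaid_i$, so $\ssmax_i/(2\betaid_i) \in [0,1]$ and the claimed lower bound is at most $\alphaid_i$. If $a_i(v) \geq \alphaid_i$, then the bound holds trivially since the right-hand side is at most $\alphaid_i \leq a_i(v)$. Otherwise $\lambda_i(v) = a_i(v)/\alphaid_i$, and by the definition of $\ss_i(v)$ together with $\bb_i(v) \geq 0$,
\begin{align*}
\ssmax_i \;\geq\; \ss_i(v) \;=\; \frac{a_i(v)}{\alphaid_i} \cdot \bb_i(v) \;+\; \Paren{1-\frac{a_i(v)}{\alphaid_i}} \cdot 2\betaid_i \;\geq\; \Paren{1-\frac{a_i(v)}{\alphaid_i}} \cdot 2\betaid_i.
\end{align*}
Rearranging this last inequality gives $a_i(v) \geq \alphaid_i \cdot (1 - \ssmax_i/(2\betaid_i))$. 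Taking the minimum over $v$ yields the first claimed bound.

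Since this proof is entirely algebraic, there is no real obstacle: the only subtlety is remembering that the truncation in $\lambda_i(v)$ is precisely what makes the ``average'' quantity $\ss_i(v)$ serve as a useful proxy for both $a_i(v)$ (via $\lambda_i(v)$) and $\bb_i(v)$ simultaneously. The utility of this claim is that it reduces bounding both $\amin_i$ and $\bmax_i$ to controlling the single scalar $\ssmax_i$, which should then be shown to track its ideal value across iterations via Talagrand-style concentration arguments.
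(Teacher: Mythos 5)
Your proof is correct and takes essentially the same route as the paper: the second bound is read off the filtering step in Line~\ref{line:tri-def} of $\wasteful$, and the first follows by dropping the nonnegative term $\lambda_i(v)\cdot\bb_i(v)$ from the definition of $\ss_i(v)$ and rearranging. You are in fact slightly more careful than the paper, which omits the $i=1$ base case of $\bmax_i\leq 2\betaid_i$ (there is no $\hb_0$; one must appeal, as you do, to hypothesis~(ii) of Proposition~\ref{prop:lc-triangle} giving $b_1(v,c)=\deg_S(v,c)\leq d=\betaid_1$) and does not explicitly verify that $\ssmax_i/(2\betaid_i)\leq 1$ to dispose of the $a_i(v)\geq\alphaid_i$ case.
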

\begin{proof}
	The proof of the second part follows from the condition in Line~\eqref{line:tri-def} of $\wasteful$ as $b_i(v,c) \leq \hb_{i-1}(v,c) \leq 2\cdot\betaid_i$ for every $v$ and any $c \in A_{i}(v)$. 
	For the first part, consider any $v$ where $a_i(v) < \alphaid_i$ (if no such $v$ exists we are already done):
	\begin{align*}
		a_i(v) &= \lambda_i(v) \cdot \alphaid_i \geq \alphaid_i \cdot \paren{1-\frac{\ss_i(v)}{2\betaid_i}} \geq \alphaid_i \cdot \paren{1-\frac{\ssmax_i}{2\betaid_i}}, 
	\end{align*}
	where the first inequality follows from the definition of $\ss_i(v)$. 
\end{proof}

As such, instead of directly computing $\amin_i$ and $\bmax_i$, we instead maintain the invariant that $\ssmax_i \leq \betaid_i$ (again modulo some small approximation terms), and 
then plugin in this value in Claim~\ref{clm:tri-ss-helps} to obtain the desired bounds on $\amin_i$ and $\bmax_i$. We shall note that this invariant on $\ssmax_i$ is analogous to the induction hypothesis of~\cite{PettieS15}
and is heart of the proof. The rest of the proof from there is straightforward as we already discussed. 

\subsection{Bounding $\ssmax_i$ in Each Iteration} 

We now state and prove the aforementioned bound on $\ssmax_i$ for each iteration $i$. 
The following lemma allows us to bound $\ssmax_i$ inductively using the fact that  $\ssmax_1 = \bmax_1 = d$ as a base case. 

\begin{lemma}\label{lem:tri-main}
	Consider any iteration $i < \istar$ and let $\eps \in (0,1)$ be a parameter such that $\eps > d^{-\delta/10}$ (for $\istar$ and $\delta$ defined in Lemma~\ref{lem:ideal-list}). Suppose
	\begin{align*}
		\ssmax_i \leq (1+\eps) \cdot \betaid_i. 
	\end{align*}
	Then, with positive probability, 
	\begin{align*}
		\ssmax_{i+1} \leq (1+19\eps) \cdot \betaid_{i+1}. 
	\end{align*}
\end{lemma}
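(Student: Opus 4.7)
The plan is to fix any $v \in V(G_{i+1})$, show that $\ss_{i+1}(v)$ concentrates around a value at most $(1+19\eps)\betaid_{i+1}$, and then combine these concentration events across all $v$ using the Lov\'asz Local Lemma.

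\emph{Expectation.} First I would compute $\expect{\ss_{i+1}(v)}$. For each $c \in A_i(v)$, the equalization step guarantees $\Pr[c \in \hA_i(v)] = \pkeep_i$; similarly, for each $u \in B_i(v,c)$, a standard calculation gives $\Pr[c \in A_{i+1}(u) \textnormal{ and } u \in G_{i+1}] \approx \pkeep_i \cdot \pcolor_i$, where the factor $\pcolor_i$ bounds the probability that $u$ fails to color itself. Triangle-freeness is essential here: the coins that determine whether $u \in G_{i+1}$ live on vertices of $N(u) \setminus \{v\}$, and since no two neighbors of $v$ share a common neighbor in a triangle-free graph, these coins for different $u \in B_i(v,c)$ are nearly independent. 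Combining these bounds with the inductive hypothesis $\ss_i(v) \leq (1+\eps) \betaid_i$ and the recursive definition $\betaid_{i+1} = \pcolor_i \cdot \pkeep_i \cdot \betaid_i$ yields $\expect{\ss_{i+1}(v)} \leq (1+\eps)\betaid_{i+1}$ up to lower-order error terms.

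\emph{Concentration and union bound.} Next I would establish concentration of $\ss_{i+1}(v)$ using Talagrand's inequality (Proposition~\ref{prop:talagrand}). The underlying independent random variables are the color-assignment and equalization coins at the vertices within distance two of $v$. Because $G$ is triangle-free, flipping a single such coin alters at most a constant number of contributing pair events, so the relevant surrogate statistics are $O(1)$-Lipschitz and $O(1)$-certifiable. Since $\alphaid_i \geq d^{\delta}$ by Lemma~\ref{lem:ideal-list} and $\eps > d^{-\delta/10}$, Talagrand's inequality produces a deviation of at most $O(\eps)\cdot\betaid_{i+1}$ except with probability $\exp(-d^{\Theta(\delta)})$. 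The failure event for $v$ depends only on coins at graph distance at most two from $v$, so it is mutually independent of all but $O(\Delta^4)$ other failure events; as the per-vertex failure probability is far below $1/\Delta^4$, the symmetric Lov\'asz Local Lemma (Proposition~\ref{prop:lll}) simultaneously avoids all failures with positive probability, establishing $\ssmax_{i+1} \leq (1+19\eps)\betaid_{i+1}$.

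\emph{Main obstacle.} The delicate step is the concentration argument: since $\ss_{i+1}(v) = \lambda_{i+1}(v) \cdot \bb_{i+1}(v) + (1-\lambda_{i+1}(v)) \cdot 2\betaid_{i+1}$ and $\bb_{i+1}(v)$ is a ratio, the quantity is not directly $c$-Lipschitz. Following the approach of Pettie and Su~\cite{PettieS15}, I would instead apply Talagrand's inequality separately to the numerator $\sum_{c \in A_{i+1}(v)} b_{i+1}(v,c)$ and the denominator $a_{i+1}(v)$, and then combine. The truncation ``$c$-degree $\leq 2\betaid_{i+1}$'' in the definition of $A_{i+1}$ requires particular care: colors discarded at the end of iteration $i$ lower $a_{i+1}(v)$ below $\alphaid_{i+1}$, but the ``artificial'' term $(1-\lambda_{i+1}(v)) \cdot 2\betaid_{i+1}$ in the definition of $\ss_{i+1}(v)$ is precisely designed to absorb this loss, so it only contributes lower-order slack to the bound. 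These error terms together accumulate into the multiplicative loss $19\eps$ stated in the lemma.
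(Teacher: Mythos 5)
Your high-level plan (per-vertex concentration, then LLL with dependency range $\poly(d)$, using the inductive bound $\ssmax_i \le (1+\eps)\betaid_i$) matches the paper's structure, but the concentration step has a genuine gap.

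The central difficulty the paper must overcome is that whether a neighbor $u$ of $v$ retires in iteration $i$ is a function of \emph{all} of $u$'s remaining colors, so a single coin flip (say, the coin for $(u,c)$) can change $u$'s coloring status and thereby simultaneously alter $b_{i+1}(v,c')$ for every $c' \in A_i(u) \cap A_i(v)$. This means $\sum_{c} b_{i+1}(v,c)$ is \emph{not} $O(1)$-Lipschitz, contrary to what you assert; your proposed numerator/denominator split does not remove this problem because the numerator itself inherits the large Lipschitz constant. The paper's fix is to introduce the surrogate $\bp_i(v,c)$, the number of $u \in B_i(v,c)$ with $c \in \hA_i(u)$ \emph{regardless of whether $u$ is colored}. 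This purges the ``did $u$ color itself?'' dependency from the first concentration step (Claim~\ref{clm:tri-conc-b1}), and then the ``did $u$ color itself?'' question is handled in a second step (Claim~\ref{clm:tri-conc-b2}) where it depends only on $u$'s own private coins, conditioned on $\Bp_i(v,c)$. After that decoupling, the quantities are sums of independent bounded variables, and the paper applies Chernoff rather than Talagrand. You gesture at Pettie--Su but don't articulate this decoupling, which is the crux; without it neither the Lipschitz route nor a direct independence route goes through.

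Second, your stated triangle-free justification --- ``no two neighbors of $v$ share a common neighbor'' --- is false. Two neighbors $u_1, u_2$ of $v$ can share arbitrarily many common neighbors $w$ with $w \notin N(v)$; this creates no triangle. What triangle-freeness actually gives, and what the paper uses in Claim~\ref{clm:tri-conc-b1}, is that for $u \in B_i(v,c)$ (so $u$ is adjacent to $v$) one has $N(u) \cap N(v) = \emptyset$, hence $B_i(u,c) \cap B_i(v,c) = \emptyset$. This is what makes the event $\{c \in \hA_i(v)\}$ independent of the count $\bp_i(v,c)$. The independence you need is between the ``$v$ keeps $c$'' event and the ``how many of $v$'s neighbors keep $c$'' count, not between the coloring fates of distinct neighbors of $v$. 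Your expectation calculation also glosses over the fact that $\Pr[u \text{ not colored}] \approx \pcolor_i$ only holds \emph{after} one has first established the concentration of $\ha_i(u)$ (Lemma~\ref{lem:tri-conc-a}), so that chain of lemmas cannot be collapsed into a single ``standard calculation''.

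In short: the LLL wrapper is right, but the concentration argument you sketch would fail at the Lipschitz step, and the triangle-freeness inference you rely on is the wrong one. The essential missing idea is the $\bp_i$ decoupling.
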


We prove Lemma~\ref{lem:tri-main} in this part. In the following, we condition on the events that happened in iterations $<i$ 
so far including the assumption that $\ssmax_i \leq (1+\eps) \cdot \betaid_i$ and only consider the probability of events with respect to random choices in iteration $i$. Claim~\ref{clm:tri-ss-helps} then implies that: 
\begin{align}
	\ssmax_i \leq (1+\eps) \cdot \betaid_i \qquad \textnormal{,} \qquad \amin_i &\geq \frac{1-\eps}{2} \cdot \alphaid_i \qquad \textnormal{and} \qquad \bmax_i \leq 2 \betaid_i. \label{eq:min-max}
\end{align}

Recall that $A_{i+1}(v)$ is obtained by first moving from $A_i(v)$ to $\hA_i(v)$ through the process of assigning colors and then from $\hA_i(v)$ to $A_{i+1}(v)$ by filtering out the high $c$-degree colors. 
Our main goal is to understand the change between $A_i(v)$ to $\hA_i(v)$. To this end, let us further define: 

\begin{itemize}
	\item $\hb_i(v) := \sum_{c \in \hA_i(v)} \hb_i(v,c)/\ha_i(v)$: the average $c$-degree of $v$ in $\hA_i(v)$.
\end{itemize}

In the following two lemmas, we prove that both $\ha_i(v)$ and $\hb_i(v)$ are concentrated. These are the main parts of the proof and in the only part when we use $G$ is triangle-free. 

\begin{lemma}\label{lem:tri-conc-a}
	For any vertex $v \in G_i$: 
	\begin{align*}
		\Pr\paren{\ha_{i}(v) < (1-\eps) \cdot \pkeep_i \cdot a_i(v)} < \exp\paren{-\Theta(d^{4\delta/5})}.
	\end{align*}

\end{lemma}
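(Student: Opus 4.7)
The plan is to exploit a clean independence structure that the equalization step of \wasteful is precisely designed to produce. By the modification built into $\wasteful$, every color $c \in A_i(v)$ survives in $\hA_i(v)$ with probability exactly $\pkeep_i$ -- this is the content of Eq.~\eqref{eq:keep}. In particular $\Exp\bracket{\ha_i(v)} = \pkeep_i \cdot a_i(v)$. So the whole task is concentration.

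The key observation I would establish first is that the indicator variables $\II[c \in \hA_i(v)]$ for different colors $c \in A_i(v)$ are \emph{mutually independent}. Indeed, whether $c$ survives depends only on (i) the per-color assignment coins $X_{u,c}$ for $u \in B_i(v,c)$ in Step~1 of \wasteful, and (ii) the equalization coin $Y_{v,c}$ used in the modified version of Line~\eqref{line:tri-modify}. For distinct $c \neq c'$ these two coin-sets are disjoint: even if a vertex $u$ lies in both $B_i(v,c)$ and $B_i(v,c')$, the coins $X_{u,c}$ and $X_{u,c'}$ are independent because colors are assigned \emph{independently} at each $(u,c)$ pair. Thus $\ha_i(v)$ is a sum of $a_i(v)$ independent $\text{Bernoulli}(\pkeep_i)$ random variables.

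From here I would apply the lower-tail Chernoff bound (Proposition~\ref{prop:chernoff}) to obtain
\begin{align*}
\Pr\paren{\ha_i(v) < (1-\eps) \cdot \pkeep_i \cdot a_i(v)} \leq \exp\paren{-\frac{\eps^2 \cdot \pkeep_i \cdot a_i(v)}{2}},
\end{align*}
and then lower-bound the exponent. From Eq.~\eqref{eq:min-max} we have $a_i(v) \geq \amin_i \geq \frac{1-\eps}{2} \cdot \alphaid_i = \Theta(\alphaid_i)$; the proof of Lemma~\ref{lem:ideal-list} gives $\pkeep_i \geq 3/4$ and $\alphaid_i \geq d^{\delta}$; and the hypothesis $\eps \geq d^{-\delta/10}$ yields $\eps^2 \geq d^{-\delta/5}$. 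Multiplying, $\eps^2 \cdot \pkeep_i \cdot a_i(v) \geq \Theta(d^{\delta - \delta/5}) = \Theta(d^{4\delta/5})$, exactly the claimed bound.

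I expect no serious obstacle here -- the hard concentration task in this section is for $\hb_i(v)$, where survival of different colors really does entangle through shared neighborhoods (and where the triangle-freeness hypothesis enters). The main thing to present carefully in this lemma is the independence argument for $\ha_i(v)$, because at first glance one might worry about correlations through shared vertices in $B_i(v,c) \cap B_i(v,c')$; the point to emphasize is that the randomness is indexed by pairs $(u,c)$, not by $u$, so such shared vertices do not induce any correlation across different colors.
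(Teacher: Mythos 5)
Your proof is correct and follows essentially the same route as the paper: Eq.~\eqref{eq:keep} gives each color a $\pkeep_i$ survival probability, the survival indicators are independent (since the assignment coins are indexed by $(u,c)$ pairs, so distinct colors see disjoint coin sets), and the lower-tail Chernoff bound with $a_i(v)\geq\Theta(\alphaid_i)\geq\Theta(d^{\delta})$, $\pkeep_i=\Omega(1)$, and $\eps^2\geq d^{-\delta/5}$ gives the stated exponent. Your explicit justification of the mutual independence is a welcome elaboration of the one sentence the paper devotes to it, but it is the same argument.
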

\begin{proof}
	Recall that $\hA_i(v)$ is obtained  by picking each color $c \in A_i(v)$ that is not assigned to a neighbor of $v$. By Eq~\eqref{eq:keep}, the probability of this event for each color is precisely $\pkeep_i$. 
	Moreover, the colors are chosen independently of each other to be included in $\hA_i(v)$. Hence, $\ha_i(v)$ is a sum of $a_i(v)$ independent $\set{0,1}$-random variables with $\expect{\ha_i(v)} = \pkeep_i \cdot a_i(v)$. 
	Hence, by Chernoff bound (Proposition~\ref{prop:chernoff} and since $\pkeep_i = \Omega(1)$):
	\begin{align*}
		\Pr\paren{\ha_{i}(v) < (1-\eps) \cdot \pkeep_i \cdot a_i(v)} \leq \exp\paren{-\Theta(1) \cdot \eps^2 \cdot a_i(v)} \leq \exp\paren{-\Theta(1) \cdot d^{4\delta/5}}, 
	\end{align*}
	where the last inequality is because 	by Eq~\eqref{eq:min-max}, $a_i(v) \geq \Theta(1) \cdot \alphaid_i$, by Lemma~\ref{lem:ideal-list}, $\alphaid_i \geq d^{\delta}$, and since $\eps > d^{-\delta/10}$. 
\end{proof}

\begin{lemma}\label{lem:tri-conc-b}
	For any unfinished iteration $i$ and vertex $v \in G_i$: 
	\begin{align*}
		\Pr\paren{\hb_i(v) >  \pcolor_i \cdot \pkeep_i \cdot \bb_i(v) + 8\eps \cdot \pcolor_i \cdot \pkeep_i \cdot \bmax_i} < \exp\paren{-\Theta(d^{4\delta/5})}.
	\end{align*}
\end{lemma}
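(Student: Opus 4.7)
My plan is to establish Lemma~\ref{lem:tri-conc-b} in two stages: first compute the expectation $\expect{\hb_i(v)}$ exploiting the triangle-free structure of $G$, and then establish concentration around this expectation via Talagrand's inequality. Because $\hb_i(v) = \ha_i(v)^{-1}\sum_{c \in \hA_i(v)} \hb_i(v,c)$ is a ratio, I will reduce to showing concentration of the numerator $X := \sum_{c \in \hA_i(v)} \hb_i(v,c)$ and combine it with the concentration of the denominator already provided by Lemma~\ref{lem:tri-conc-a}.

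The crucial structural fact driving the expectation calculation is that in a triangle-free graph, $N_G(v) \cap N_G(u) = \emptyset$ whenever $(u,v) \in E$. Consequently, for any color $c \in A_i(v) \cap A_i(u)$ the events $\{c \in \hA_i(v)\}$ and $\{c \in \hA_i(u)\}$ depend on essentially disjoint collections of color-assignment coins (those at $N(v)$ versus those at $N(u)$), and are therefore approximately independent. After the equalizing step built into \wasteful, each individual keep probability equals $\pkeep_i$, and combined with the probability $\pcolor_i$ that $u$ itself remains uncolored (governed by the assignment coins at $u$, which are independent of the keep events by the same triangle-free decoupling), the relevant joint probability factors as $\pcolor_i \cdot \pkeep_i^2$ up to lower-order terms. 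Summing over $u \in N(v)$ and $c \in A_i(v) \cap A_i(u)$ identifies $\expect{X}$ with $\pcolor_i \cdot \pkeep_i^2 \cdot a_i(v) \cdot \bb_i(v)$, and dividing by $\expect{\ha_i(v)} \approx \pkeep_i \cdot a_i(v)$ gives the target expectation $\pcolor_i \cdot \pkeep_i \cdot \bb_i(v)$.

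For concentration I will apply Talagrand's inequality (Proposition~\ref{prop:talagrand}) to $X$, viewed as a function of the color-assignment coins and equalizing coins at vertices within distance two of $v$. One verifies that $X$ is $O(1)$-Lipschitz in each individual coin and is $O(1)$-certifiable: to certify that a particular triple ``$u \in N(v)$, $c \in \hA_i(v) \cap \hA_i(u)$, $u$ uncolored'' contributes to $X$, it suffices to reveal the handful of coins governing the fate of $c$ at $v$ and $u$. Combined with the lower bound $\expect{X} = \Omega(d^{\Theta(\delta)})$, which follows from $\bmax_i \leq 2\betaid_i$ and the size bounds in Lemma~\ref{lem:ideal-list}, Talagrand's inequality yields deviation $O(\eps) \cdot \pcolor_i \cdot \pkeep_i \cdot \bmax_i$ with failure probability $\exp(-\Theta(d^{4\delta/5}))$. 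Together with Lemma~\ref{lem:tri-conc-a} to control the denominator $\ha_i(v)$, this gives the claimed bound on $\hb_i(v)$.

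The chief obstacle is the Lipschitz analysis: a single assignment coin at a vertex $w$ at distance two from $v$ can in principle influence $X$ through every pair $(u,c)$ with $u \in N(v) \cap N(w)$, and triangle-freeness bounds $|N(v) \cap N(w)|$ only weakly. Following the approach of Pettie and Su~\cite{PettieS15}, I will decompose $X$ into pieces each of which is genuinely $O(1)$-Lipschitz, or alternatively condition on a high-probability ``regularity'' event (e.g., that codegrees in the neighborhood of $v$ are close to their expectations) and apply the certifiable version of Talagrand's bound only on the good event. Carrying out this decomposition carefully, while retaining a Lipschitz constant small enough to give the $\exp(-\Theta(d^{4\delta/5}))$ tail, is the main delicate technical step.
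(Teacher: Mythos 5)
Your high-level roadmap (compute the expectation using triangle-freeness, then show concentration of the numerator $X = \sum_{c} \hb_i(v,c)$ and combine with Lemma~\ref{lem:tri-conc-a}) matches the paper's, and you correctly isolate the obstacle: a single assignment coin at a distance-two vertex $w$ can affect $X$ through many pairs $(u,c)$. But your plan for overcoming it has two real gaps.

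First, the claim that $X$ is $O(1)$-certifiable is not correct. To certify that a given pair $(u,c)$ contributes, you must certify that (a) $c \in \hA_i(v)$, (b) $c \in \hA_i(u)$, and (c) $u$ is uncolored. Event (a) is the event that $c$ was \emph{not} assigned to any $w \in B_i(v,c)$ (and survived the equalizing coin), a negative event whose witness requires revealing the assignment coin for $c$ at up to $\bmax_i = \Theta(d)$ neighbors; (b) is the same at $u$; and (c) requires revealing that \emph{no} color in $\hA_i(u) \cap C_i(u)$ exists, i.e.\ the assignment coins for all $\Theta(\alphaid_i) = \Theta(d/\ln d)$ colors of $u$. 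This is exactly the kind of ``$\Omega(d)$-certifiable'' obstruction the paper explicitly flags for the variable $Z$ in the proof of Lemma~\ref{lem:first-sparse}, where it is worked around by splitting $Z = T - D$. There is no such split proposed here, so the ``$r$-certifiable'' hypothesis of Talagrand's inequality is not met, and the second part of Proposition~\ref{prop:talagrand} simply does not apply.

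Second, the Lipschitz problem is acknowledged but left to ``the main delicate technical step,'' which is in fact where the substance lies. The paper does \emph{not} use Talagrand at all here; it introduces an intermediate counter $\bp_i(v,c)$ that counts neighbors $u \in B_i(v,c)$ keeping $c$ in $\hA_i(u)$ \emph{regardless of whether $u$ gets colored}. This decouples the randomness into two layers with genuine independence: (i) whether $c$ is kept at $v$ and the value of $\bp_i(v,c)$ depend on disjoint coin sets by triangle-freeness (since $B_i(u,c) \cap B_i(v,c) = \emptyset$ for $u \in B_i(v,c)$), and the summands across different colors $c$ use entirely disjoint per-color coins, so $\bp_i(v) = \sum_c \bp_i(v,c) \mathbf{1}[c \in \hA_i(v)]$ is a sum of independent bounded variables and Chernoff applies (Claim~\ref{clm:tri-conc-b1}); (ii) conditioned on who lands in $\Bp_i(v,c)$, whether each such $u$ joins $\hB_i(v,c)$ depends only on $u$'s own assignment coins, which are independent across $u$, so Chernoff again applies to pass from $\bp_i(v,c)$ to $\hb_i(v,c)$ (Claim~\ref{clm:tri-conc-b2}). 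This two-stage Chernoff via $\bp_i(v,c)$ is the precise ``decomposition'' you gesture at; without it, or an equivalently concrete device, the concentration step does not go through.
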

\begin{proof}
	Let us additionally define the following parameters similar to $\hb_{i+1}(v,c)$ and $\bb_i(v),\hb_i(v)$: 
	\begin{itemize}
		\item $\bp_{i}(v,c)$: number of neighbors $u \in B_i(v,c)$  that keep the color $c \in \hA_{i}(v)$ \emph{regardless} of whether they are colored in this iteration or not (in other words, $u$ will be
		 counted in $\bp_{i}(v,c)$ even if $u$ is colored in this iteration as long as $c \in \hA_{i}(u)$).  As such, $\bp_{i}(v,c) \geq \hb_{i}(v,c)$.
		\item $\bp_{i}(v):= \sum_{c \in \hA_{i}(v)} \bp_{i}(v,c)$ (we emphasize that unlike $\bb_i(v)$ and $\hb_i(v)$ which are the \emph{average} of $b_i(v,c)$'s and $\hb_i(v,c)$'s, here we take $\bp_{i}(v)$ to be the \emph{sum} of 
		$\bp_{i}(v,c)$'s for simplicity).
	\end{itemize}
	In the following claims, we first upper bound $\bp_{i}(v)$ and then relate it $\hb_i(v)$ and $\bb_i(v)$. 
	\begin{claim}\label{clm:tri-conc-b1}
		$\Pr\paren{\bp_{i}(v) > \pkeep^2_i \cdot a_i(v) \cdot \bb_i(v) + \eps \cdot \pkeep^2_i \cdot a_i(v) \cdot \bmax_i} \leq \exp\paren{-\Theta(d^{4\delta/5})}.$
	\end{claim}
	\begin{proof}
	We argue that: 
	\begin{align}
		\expect{\bp_{i}(v)} = \expect{\sum_{c \in A_i(v)} \II\bracket{c \in \hA_{i}(v)} \cdot \bp_{i}(v,c)} &=\sum_{c \in A_i(v)} \Pr\paren{\text{$c \in \hA_{i}(v)$}} \cdot \expect{\bp_{i}(v,c)}. \label{eq:expect-kept}
	\end{align}
	To do this, we prove that the event $c \in \hA_{i}(v)$ is \emph{independent} of the random variable $\bp_{i}(v,c)$. Indeed, the event $c \in \hA_{i}(v)$ is only a function of random choices of vertices $u \in B_{i}(v,c)$.
	On the other hand, for any vertex $u \in B_{i}(v,c)$ the choice of whether $u$ is counted in $\bp_{i}(v,c)$ is only a function of vertices $w \in B_{i}(u,c)$ (note that in definition of $\bp_{i}(v,c)$ we crucially excluded the possibility 
	of $u$ changing $\bp_{i}(v,c)$ by coloring itself). Now note that since $G$ is triangle-free, for any vertex $u \in B_{i}(v,c)$, $B_{i}(u,c) \cap B_{i}(v,c)$ is disjoint (otherwise we find a triangle with $u,v$ and the intersecting vertex). 
	This shows the correctness of Eq~\eqref{eq:expect-kept}. By expanding the RHS of~\eqref{eq:expect-kept},
	\begin{align*}
		\expect{\bp_{i}(v)} &= \sum_{c \in A_i(v)} \pkeep_i \cdot \sum_{u \in B_{i}(v,c)}\Pr\paren{\text{$c \in \hA_{i}(u)$}} \tag{by Eq~\eqref{eq:keep} for the first term and by definition for second one} \\
		&= \sum_{c \in A_i(v)} \pkeep_i \cdot b_{i}(v,c) \cdot \pkeep_i \tag{again by Eq~\eqref{eq:keep}} \\
		&= \pkeep_i^2 \cdot a_i(v) \cdot \bb_i(v). \tag{by definition of $\bb_i(v)$}
	\end{align*}
	We now prove a concentration bound for $\bp_{i}(v)$. For any $c \in A_i(v)$ define the random variable $X_c = \bp_{i}(v,c)$ if $c$ is kept in $\hA_{i}(v)$ as well and $X_c = 0$ otherwise. 
	Additionally, define $X := \sum_{c \in A_i(v)} X_c$. By Eq~\eqref{eq:expect-kept}, $X = \bp_{i}(v)$ and by the discussion after this equation plus the fact that the choices of $\bp_{i}(v,c)$ and $\bp_{i}(v,c')$ for colors $c \neq c'$
	are independent, we have that $X_c$'s are independent. Moreover, each $X_c \leq b_i(v,c) \leq \bmax_i$ by definition. As such, by Chernoff bound (Proposition~\ref{prop:chernoff} and since $\pkeep_i=\Omega(1)$), 
	\begin{align*}
		\Pr\paren{X - \expect{X} > \eps \cdot \pkeep^2_i \cdot a_i(v) \cdot \bmax_i} &\leq \exp\Paren{-\Theta(1) \cdot \frac{\eps^2 \cdot a_i(v)^2 \cdot \paren{\bmax_i}^2}{a_i(v) \cdot \paren{\bmax_i}^{2}}} \\
		&\leq \exp\Paren{-\Theta(1) \cdot \eps^2 \cdot a_i(v)}  \\
		&\leq \exp\paren{-\Theta(1) \cdot d^{4\delta/5}} \tag{as already calculated in the proof of Lemma~\ref{lem:tri-conc-a}}
	\end{align*}
	Since $X = \bp_{i}(v)$ and by the value of $\expect{X}$ calculated earlier, this finalizes the proof. 
	\Qed{Claim~\ref{clm:tri-conc-b1}}
	
	\end{proof}

	\begin{claim}\label{clm:tri-conc-b2}
		$\Pr\paren{\hb_{i}(v,c) > \pcolor_i \cdot \bp_{i}(v,c) + 2\eps \cdot  \pcolor_i \cdot \bmax_i} < \exp\paren{-\Theta(d^{4\delta/5})}.$
	\end{claim}
	\begin{proof}
		Consider the complement of the event in Lemma~\ref{lem:tri-conc-a} for all vertices $u \in B_{i}(v,c)$. Note that the choice of colors in $\hA_{i}(u)$ is entirely independent of
		the randomness of vertex $u$ itself. Similarly, let $\Bp_{i}(v,c)$ denote the set of vertices $u \in B_{i}(v,c)$ that are {counted in} $\bp_{i}(v,c)$ (defined at the beginning of the proof of the lemma).
		Note that again for each vertex $u \in B_{i}(v,c)$, the choice
		whether $u$ joins $\Bp_{i}(v,c)$ or not is independent of randomness of $u$ itself (this is the key difference between $\Bp_i$ and $\hB_i$). 
		In the following, we condition on the choice of $\hA_i(u)$ for vertices $u \in B_i(v,c)$ as well as the choice of $\Bp_{i}(v,c)$; by union bound over at most $\poly(d)$ vertices in the constant-hop neighborhood of $v$, we have that the complement of 
		the event in both Lemma~\ref{lem:tri-conc-a} and Claim~\ref{clm:tri-conc-b1} happens with sufficiently probability for the assertion of the claim.


		Now consider each vertex $u \in \Bp_{i}(v,c)$. For $u$ to join $\hB_{i}(v,c)$ as well (and hence counted in $\hb_{i}(v,c)$), $u$ should not be colored in this iteration. 
		This is equivalent to the event that no color in $\hA_i(u)$ is assigned to $u$. This choice is only a function of randomness of $u$. As such, 		
		\begin{align*}
			\Pr\paren{\text{$u \in \Bp_i(v,c)$ joins $\hB_i(v,c)$}} &= \Pr\paren{\text{$u$ is not colored in iteration $i$}} \\
			&= \prod_{c \in \hA_{i}(u)}(1-\Pr\paren{\text{$c$ is assigned to $u$}}) \\ 
			&= (1-\frac{1}{2\ln{d} \cdot \alphaid_i})^{\ha_i(u)} \tag{by the choice of $p_i(u)$ in $\wasteful$}\\ 
			&\leq (1-\frac{1}{2\ln{d} \cdot \alphaid_i})^{(1-\eps) \cdot \pkeep_i \cdot a_i(v)} \tag{by Lemma~\ref{lem:tri-conc-a}}\\
			&\leq (1-\frac{1}{2\ln{d} \cdot \alphaid_i})^{(1-\eps)^2 \cdot \pkeep_i \cdot \alphaid_i/2} \tag{by Eq~\eqref{eq:min-max}}\\
			&\leq \pcolor_i \cdot (1-\frac{1}{2\ln{d} \cdot \alphaid_i})^{-2\eps \cdot \pkeep_i \cdot \alphaid_i/2} \tag{by definition of $\pcolor_i$ in Eq~\eqref{eq:starting} and since $(1-\eps)^2 \leq 1-2\eps$}\\
			&\leq \pcolor_i \cdot \exp\paren{\frac{\eps \cdot \pkeep_i \cdot \alphaid_i}{2\ln{d} \cdot \alphaid_i}} \\
			&\leq \pcolor_i \cdot (1+\eps) \tag{as $\pkeep_i = \Theta(1) \ll \ln{d}$}.
		\end{align*}
		This implies that $\expect{\hb_i(v,c) \mid \Bp_i(v,c)} \leq \pcolor_i \cdot (1+\eps) \cdot \bp_i(v,c)$. 
		Moreover, as stated earlier, at this point all choices of whether $u \in \Bp_{i}(v,c)$ also belongs to $\hB_{i}(v,c)$ depend on the randomness of $u$ itself and are thus independent across different $u \in \Bp_{i}(v,c)$.
		As such, $\hb_i(v,c)$ is a sum of $\bp_i(v,c)$ $\set{0,1}$-independent random variables and hence by Chernoff bound (Proposition~\ref{prop:chernoff} and since $\pcolor_i = \Omega(1)$): 
		\begin{align*}
			\Pr\paren{\hb_{i}(v,c) > \pcolor_i \cdot (1+\eps) \cdot \bp_i(v,c) + \eps \cdot \bmax_i} &\leq \exp\paren{-\Theta(1) \cdot \eps^2 \cdot \bmax_i} \\
			&\leq \exp\paren{-\Theta(1) \cdot \eps^2 \cdot \amin_i} \tag{as iteration $i < \istar$ has $\bmax_i \geq \amin_i/100$} \\
			&\leq \exp\paren{-\Theta(1) \cdot d^{4\delta/5}} \tag{ by the choice of $\eps$ as already calculated in Lemma~\ref{lem:tri-conc-a}}
		\end{align*}
		This concludes the proof. \Qed{Claim~\ref{clm:tri-conc-b1}}
		
	\end{proof}

	We are now ready to finalize the proof of Lemma~\ref{lem:tri-conc-b}. We condition on the complements of the events in Claims~\ref{clm:tri-conc-b1} and~\ref{clm:tri-conc-b2} and 
	by union bound (over $\poly(d)$ vertices in the constant-hop neighborhood of $v$), this happens with sufficiently high probability for the proof. We now have,
	\begin{align*}
		\sum_{c \in \hA_i(v)} \hb_{i}(v,c) &\leq \sum_{c \in \hA_{i}(v)} \Paren{\pcolor_i \cdot \bp_{i}(v,c) + 2\eps \cdot \pcolor_i \cdot \bmax_i} \tag{by Claim~\ref{clm:tri-conc-b2}} \\
		&\leq \paren{\pcolor_i \cdot \sum_{c \in \hA_{i}(v)} \bp_{i}(v,c)} +  2\eps \cdot \pcolor_i \cdot a_i(v) \cdot \bmax_i  \tag{as $\ha_i(v) \leq a_i(v)$}\\
		&= \pcolor_i \cdot \bp_{i}(v)  +  2\eps \cdot \pcolor_i \cdot a_i(v) \cdot \bmax_i \tag{by definition of $\bp_i(v)$} \\
		&\leq \pcolor_i \cdot \Paren{\pkeep^2_i \cdot a_i(v) \cdot \bb_i(v) + \eps \cdot \pkeep^2_i \cdot a_i(v) \cdot \bmax_i} + 2\eps \cdot \pcolor_i \cdot a_i(v) \cdot \bmax_i  \tag{by Claim~\ref{clm:tri-conc-b1}} \\
		&\leq \pcolor_i \cdot \pkeep^2_i \cdot a_i(v) \cdot \bb_i(v) +  3\eps \cdot \pcolor_i \cdot a_i(v) \cdot \bmax_i \tag{as $\pkeep_i < 1$}.
	\end{align*}
	Let us now further condition on the event of Lemma~\ref{lem:tri-conc-a}. We will thus have, 
	\begin{align*}
		\hb_i(v) &= \frac{1}{\ha_i(v)} \cdot \sum_{c \in \hA_i(v)} \hb_{i}(v,c) \\
		&\leq \frac{1}{(1-\eps) \cdot \pkeep_i \cdot a_i(v)} \cdot \Paren{\pcolor_i \cdot \pkeep_i^2 \cdot a_i(v) \cdot \bb_i(v) + 3\eps \cdot \pcolor_i \cdot a_i(v) \cdot \bmax_i} \\
		&= \frac{\pcolor_i \cdot \pkeep_i^2 \cdot a_i(v) \cdot \bb_i(v)}{(1-\eps) \cdot \pkeep_i \cdot a_i(v)} + \frac{3\eps \cdot \pcolor_i \cdot a_i(v) \cdot \bmax_i}{(1-\eps) \cdot \pkeep_i \cdot a_i(v)} \\
		&\leq {\pcolor_i \cdot \pkeep_i  \cdot \bb_i(v)} \cdot (1+2\eps) + \frac{4\eps \cdot \pcolor_i \cdot \bmax_i}{\pkeep_i} \\
		&\leq {\pcolor_i \cdot \pkeep_i  \cdot \bb_i(v)} \cdot (1+2\eps) + 5\eps \cdot \pcolor_i \cdot \bmax_i \tag{as calculated in Lemma~\ref{lem:ideal-list}, for $i < \istar$ $\pkeep_i \geq (1-\Theta(1)/\ln{d})$} \\
		&\leq {\pcolor_i \cdot \pkeep_i  \cdot \bb_i(v)} + 8\eps \cdot \pcolor_i \cdot \pkeep_i \cdot \bmax_i \tag{again by the lower bound on $\pkeep_i$}, \\
	\end{align*}
	concluding the proof. \Qed{Lemma~\ref{lem:tri-conc-b}}

\end{proof}

We now combine the above lemmas to prove the following bound on $\ssmax_i$. 
\begin{lemma}\label{lem:tri-conc-ss}
	For every $v \in G_{i+1}$, assuming the events in Lemmas~\ref{lem:tri-conc-a} and~\ref{lem:tri-conc-b}:
	\begin{align*}
		\ss_{i+1}(v)  \leq \pcolor_i \cdot \pkeep_i \cdot \ss_i(v) + 18\eps \cdot  \betaid_{i+1}. 
	\end{align*}
\end{lemma}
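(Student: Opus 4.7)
The plan is to reduce the target to a clean algebraic inequality by rewriting both sides after multiplying by $\alphaid_{i+1} = \pkeep_i \alphaid_i$. Using $\pcolor_i \pkeep_i \betaid_i = \betaid_{i+1}$ and $\tilde a_i(v) := \min(a_i(v), \alphaid_i) = \lambda_i(v)\alphaid_i$, the right-hand side of the target multiplied by $\alphaid_{i+1}$ expands to
\[
\alphaid_{i+1} \cdot \pcolor_i \pkeep_i \ss_i(v) = \pcolor_i \pkeep_i^2 \tilde a_i(v) \bb_i(v) + \left(\alphaid_{i+1} - \pkeep_i \tilde a_i(v)\right) \cdot 2\betaid_{i+1},
\]
so the task becomes establishing a matching upper bound on $\alphaid_{i+1} \ss_{i+1}(v)$.

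First I would establish the ``padded-sum'' inequality
\[
\alphaid_{i+1} \ss_{i+1}(v) \leq \ha_i(v) \hb_i(v) + (\alphaid_{i+1} - \ha_i(v))^+ \cdot 2\betaid_{i+1}.
\]
This follows from the representation $\alphaid_{i+1} \ss_{i+1}(v) = \min(a_{i+1}(v),\alphaid_{i+1})\bb_{i+1}(v) + (\alphaid_{i+1} - a_{i+1}(v))^+ \cdot 2\betaid_{i+1}$, the monotonicity $b_{i+1}(v,c) \leq \hb_i(v,c)$ for $c \in A_{i+1}(v)$, and the observation that every filtered color $c \in \hA_i(v) \setminus A_{i+1}(v)$ contributes strictly more than $2\betaid_{i+1}$ to $\hat S_i(v) := \ha_i(v)\hb_i(v)$; a brief case analysis on whether $\ha_i(v)$ and $a_{i+1}(v)$ exceed $\alphaid_{i+1}$ completes this step. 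Plugging in Lemma~\ref{lem:tri-conc-b} in the form $\hb_i(v) \leq \pcolor_i\pkeep_i \bb_i(v) + 16\eps\betaid_{i+1}$ (using $\bmax_i \leq 2\betaid_i$), together with Lemma~\ref{lem:tri-conc-a} to upper-bound $(\alphaid_{i+1} - \ha_i(v))^+$ via $\ha_i(v) \geq (1-\eps)\pkeep_i a_i(v)$, the target reduces, after dividing by $\betaid_{i+1}$, to the linear inequality $2(D\rho + E) + 16\eps\, \ha_i(v) \leq 18\eps\,\alphaid_{i+1}$, where $D := \ha_i(v) - \pkeep_i \tilde a_i(v)$, $E := (\alphaid_{i+1} - \ha_i(v))^+ - (\alphaid_{i+1} - \pkeep_i \tilde a_i(v))$, and $\rho := \bb_i(v)/(2\betaid_i) \in [0,1]$.

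The main obstacle is the regime $a_i(v) > \alphaid_i$ with $\ha_i(v) > \alphaid_{i+1}$, where $D$ and $E$ can both be non-negligible. I would dispose of it by a dichotomy on $\rho$. If $\rho > 1 - 8\eps$ then the target right-hand side is already at least $2\betaid_{i+1}$, and the trivial bound $\ss_{i+1}(v) \leq 2\betaid_{i+1}$ (which holds because every $c \in A_{i+1}(v)$ satisfies $b_{i+1}(v,c) \leq 2\betaid_{i+1}$ by the filtering step and the padding uses the same value) suffices. If instead $\rho \leq 1 - 8\eps$, then setting $r := \ha_i(v)/\alphaid_{i+1}$ and using $r \leq a_i(v)/\alphaid_{i+1} \leq 1/\pkeep_i \leq 4/3$ (with $\pkeep_i \geq 3/4$ coming from Lemma~\ref{lem:ideal-list}), the linear inequality collapses to $(8r-9)\eps \leq (r-1)(1-\rho)$; this holds because $(8r-9)/(r-1) = 8 - 1/(r-1) < 8$ for $r > 1$ while $1-\rho \geq 8\eps$ in this subcase, and it is trivial for $r \leq 9/8$. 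In the remaining regimes ($a_i(v) \leq \alphaid_i$ or $\ha_i(v) \leq \alphaid_{i+1}$) the quantities $D$ and $E$ are both of order $\eps\alphaid_{i+1}$, and the $18\eps\alphaid_{i+1}\betaid_{i+1}$ slack absorbs them directly, the worst contribution being $2\eps\alphaid_{i+1}\betaid_{i+1}$ coming from the $(1-\eps)$-slack in Lemma~\ref{lem:tri-conc-a}.
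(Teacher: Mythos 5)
Your ``padded-sum'' inequality $\alphaid_{i+1}\,\ss_{i+1}(v)\leq\ha_i(v)\,\hb_i(v)+(\alphaid_{i+1}-\ha_i(v))^{+}\cdot 2\betaid_{i+1}$ is correct as stated, but it is too lossy precisely in the regime you flag as the main obstacle, because $\ha_i(v)$ is left uncapped. When $\ha_i(v)>\alphaid_{i+1}$ the padding term vanishes and the right-hand side grows linearly in $\ha_i(v)$, whereas the left-hand side is always at most $2\alphaid_{i+1}\betaid_{i+1}$; nothing in the setup bounds $a_i(v)$ from above ($\card{S(v)}\geq 8d/\ln d$ is only a lower bound, and Lemma~\ref{lem:tri-conc-a} is one-sided), so this loss is unbounded. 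Concretely, the reduced linear inequality $2(D\rho+E)+16\eps\,\ha_i(v)\leq 18\eps\,\alphaid_{i+1}$ already fails from the $16\eps\,\ha_i(v)$ term alone whenever $\ha_i(v)>\tfrac98\alphaid_{i+1}$ and $D\rho+E\geq 0$. Neither escape route you propose rescues it: the ``$\rho>1-8\eps$'' branch is vacuous (the induction hypothesis $\ssmax_i\leq(1+\eps)\betaid_i$ forces $\bb_i(v)\leq(1+\eps)\betaid_i$, hence $\rho\leq\tfrac{1+\eps}{2}<1-8\eps$), and the ``$\rho\leq 1-8\eps$'' branch invokes $r\leq a_i(v)/\alphaid_{i+1}\leq 1/\pkeep_i\leq 4/3$, which is equivalent to $a_i(v)\leq\alphaid_i$---the exact inequality you assumed \emph{false} in order to enter this case, so the argument is internally contradictory.

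The paper's proof of Lemma~\ref{lem:tri-conc-ss} diverges from yours exactly here: it defines $\hlambda_i(v)=\min\set{1,\ha_i(v)/\alphaid_{i+1}}$ and works with the \emph{capped} quantity $\hsss_i(v)=\hlambda_i(v)\hb_i(v)+(1-\hlambda_i(v))\cdot 2\betaid_{i+1}$. Because $\hb_i(v)<2\betaid_{i+1}$ (from Lemma~\ref{lem:tri-conc-b} together with $\bb_i(v)\leq\ss_i(v)\leq(1+\eps)\betaid_i$ and $\bmax_i\leq 2\betaid_i$), this capped expression is \emph{decreasing} in $\hlambda_i(v)$, so one may replace $\hlambda_i(v)$ by the lower bound $(1-\eps)\lambda_i(v)$ coming from Lemma~\ref{lem:tri-conc-a}, paying only an additive $2\eps\betaid_{i+1}$; substituting Lemma~\ref{lem:tri-conc-b} and using $\betaid_{i+1}=\pcolor_i\pkeep_i\betaid_i$ then closes the bound with no case analysis on $\rho$ or $r$. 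Once $\ha_i(v)$ exceeds $\alphaid_{i+1}$ your uncapped version is no longer monotone in $\ha_i(v)$, which is exactly why the arithmetic cannot be made to close in that regime.
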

\begin{proof}
	Let us define two new parameters for the purpose of this proof (similar to $\lambda_i$ and $\ss_i$): 
	\begin{itemize}
	\item $\hlambda_i(v) := \min\set{1,\ha_i(v)/\alphaid_{i+1}}$: the ratio of size of $\hA_i(v)$ to the ideal size $\alphaid_{i+1}$; 
	\item $\hsss_i(v) := \hlambda_i(v) \cdot \hb_i(v) + (1-\hlambda_i(v)) \cdot 2\betaid_{i+1}$. 
	\end{itemize}
	Firstly, as $\ss_{i+1}(v)$ is obtained from $\hsss_i(v)$ by changing the contribution of any color in $\hA_i(v) \setminus A_{i+1}(v)$ from something larger than $2\betaid_{i+1}$ down to $2\betaid_{i+1}$, 
	we have $\ss_{i+1}(v) \leq \hsss_i(v)$. We use this in the following claim. 
	\begin{claim}\label{clm:tri-ss1}
		$\ss_{i+1}(v) \leq \lambda_i(v) \cdot \hb_i(v) + (1-\lambda_i(v)) \cdot 2\betaid_{i+1} + 2\eps \cdot  \betaid_{i+1}$. 
	\end{claim}
	\begin{proof}
	We have, 
	\begin{align*}
	\hlambda_i(v) &= \frac{\ha_i(v)}{\alphaid_{i+1}} \geq \frac{(1-\eps) \cdot \pkeep_i \cdot a_i(v)}{\pkeep_i \cdot \alphaid_i}  \geq (1-\eps) \cdot \lambda_i(v).
	\tag{by Lemma~\ref{lem:tri-conc-a} in the nominator and definition of $\alphaid_i$ in Eq~\eqref{eq:starting} for the denominator} 
	\end{align*}
	Moreover, 
	\begin{align*}
	\hb_i(v) &\leq \pcolor_i  \cdot \pkeep_i \cdot \bb_i(v) + 8\eps \cdot \pcolor_i \cdot \pkeep_i \cdot \bmax_i \tag{by Lemma~\ref{lem:tri-conc-b}} \\
	 &\leq (1+\eps)\betaid_i + 16\eps \cdot \betaid_i  \tag{by definition, $\bb_i(v) \leq \ss_i(v)$ and by Eq~\eqref{eq:min-max}, $\ss_i(v) \leq (1+\eps)\betaid_i$ and $\bmax_i \leq 2\betaid_i$} \\
	&<2\betaid_i \tag{for $\eps$ sufficiently small -- $\eps < 1/100$ certainly suffices}.
	\end{align*} 
	Consequently, 
	\begin{align*}
		\ss_{i+1}(v) &\leq \hsss_i(v) = \hlambda_i(v) \cdot \hb_i(v) + (1-\hlambda_i(v)) \cdot 2\betaid_{i+1} \\
		&\leq (\lambda_i(v)-\eps\lambda_i(v)) \cdot \hb_i(v) + (1-\lambda_i(v)+\eps\lambda_i(v)) \cdot 2\betaid_{i+1} \tag{by the two equations above}\\
		&\leq \lambda_i(v)\cdot \hb_i(v) + (1-\lambda_i(v)) \cdot {2\betaid_{i+1}} + 2\eps\cdot \betaid_{i+1}.  \tag{as $\lambda_i(v) \leq 1$} \Qed{Claim~\ref{clm:tri-ss1}}
	\end{align*}

	\end{proof}
	Finally, by Claim~\ref{clm:tri-ss1},
	\begin{align*}
		\ss_{i+1}(v)&\leq \lambda_i(v) \cdot \hb_i(v) + (1-\lambda_i(v)) \cdot 2\betaid_{i+1} + 2\eps \cdot  \betaid_{i+1} \\
		&\leq \lambda_i(v) \cdot \Paren{\pcolor_i \cdot \pkeep_i \cdot \bb_i(v) +  8\eps \cdot \pcolor_i \cdot \pkeep_i \cdot \bmax_i} + (1-\lambda_i(v)) \cdot 2\betaid_{i+1} + 2\eps \cdot  \betaid_{i+1}   \tag{by Lemma~\ref{lem:tri-conc-b}} \\
		&\leq \lambda_i(v) \cdot \Paren{\pcolor_i \cdot \pkeep_i \cdot \bb_i(v) +  16\eps \betaid_{i+1}} + (1-\lambda_i(v)) \cdot 2\betaid_{i+1} + 2\eps \cdot  \betaid_{i+1}  
		\tag{by Eq~\eqref{eq:min-max}, $\bmax_i \leq 2\betaid_i$ and by definition of $\betaid_{i+1}$} \\
		&= \pcolor_i \cdot \pkeep_i \Paren{\lambda_i(v) \cdot \bb_i(v) + (1-\lambda_i(v)) \cdot 2\betaid_{i}} + 18\eps  \cdot  \betaid_{i+1}   \tag{by definition of $\betaid_{i+1} = \pcolor_i \cdot \pkeep_i \cdot \betaid_i$ in Eq~\eqref{eq:starting}} \\
		&= \pcolor_i \cdot \pkeep_i \cdot \ss_i(v) +  18\eps  \cdot  \betaid_{i+1}  \tag{by definition of $\ss_i(v)$}. 
	\end{align*}
	This finishes the proof of the lemma. \Qed{Lemma~\ref{lem:tri-conc-ss}}
	
\end{proof}
Lemma~\ref{lem:tri-main} now follows easily from this as follows. 
\begin{proof}[Proof of Lemma~\ref{lem:tri-main}]
	For any vertex $v$ and color $c \in A_i(v)$, the events of Lemmas~\ref{lem:tri-conc-a} and~\ref{lem:tri-conc-b} are only a function of random choices in the constant-hop neighborhood
	of $v$. Hence, each such event depends on at most $\poly(d)$ other events. As such, by the bounds on the probability of success in these two lemmas and Lov\'asz Local Lemma (Proposition~\ref{prop:lll}), 
	we obtain that with positive probability none of these events happen. We can thus apply Lemma~\ref{lem:tri-conc-ss} to any vertex $v \in G_{i+1}$ and hence obtain that: 
	\begin{align*}
		\ssmax_{i+1} &\leq \pcolor_i \cdot \pkeep_i \cdot \ssmax_i + 18\eps \cdot  \betaid_{i+1} \\
		&\leq \pcolor_i \cdot \pkeep_i \cdot (1+\eps) \cdot \betaid_i + 18\eps \cdot  \betaid_{i+1} \tag{by Eq~\eqref{eq:min-max}} \\
		&=  (1+\eps) \cdot \betaid_{i+1} + 18\eps \cdot  \betaid_{i+1} \tag{by definition of $\betaid_{i+1}$ in Eq~\eqref{eq:starting}} \\
		&= \betaid_{i+1} + 19 \eps \cdot \betaid_{i+1},
	\end{align*}
	finishing the proof. 
	\Qed{Lemma~\ref{lem:tri-main}}
	
\end{proof}

\subsection{Concluding the Proof of Proposition~\ref{prop:lc-triangle}} 

We now show that by repeatedly applying Lemma~\ref{lem:tri-main}, we can reach the desired state whereby size of the lists for remaining vertices is sufficiently larger than 
their $c$-degrees and thus apply Proposition~\ref{prop:lc-const} to obtain the coloring of all remaining vertices in one shot.  

\begin{proof}[Proof of Proposition~\ref{prop:lc-triangle}]
	We run the $\wasteful$  procedure over iterations $i \leq \istar$ (recall the definition of $\istar$ from Lemma~\ref{lem:ideal-list}). Let us define the following parameter $\eps_i$ recursively: 
	\begin{align*}
		\eps_1 = d^{-\delta/20} \qquad \textnormal{and} \qquad \eps_{i+1} = (1+19\eps_{i}). 
	\end{align*}
	It is easy to see that for $i \leq \istar$, all $\eps_i > d^{-\delta/10}$ and since $\istar = O(\log^{2}{d})$, we also have $\eps_i  = o(1)$. As such, we can repeatedly apply Lemma~\ref{lem:tri-main} with parameters $\eps_i$ 
	and $\ssmax_i \leq (1+\eps_i) \cdot \betaid_i$ to with positive probability obtain $\ssmax_{i+1} \leq (1+19\eps_i) \cdot \betaid_{i+1} = (1+\eps_{i+1}) \cdot \betaid_{i+1}$. At iteration $\istar$, by Lemma~\ref{lem:ideal-list}, 
	we have that $\betaid_{\istar} < \alphaid_{\istar}/100$. At this point, by Eq~\eqref{eq:min-max}, we have, 
	\begin{align*}
		\amin_{\istar} \geq \alphaid_{\istar}/2 \cdot (1-\eps_{\istar}) > \alpha_{\istar}/3 > 30 \betaid_{\istar} \geq 15 \cdot \bmax_{\istar}.  
	\end{align*}
	We can now simply apply Proposition~\ref{prop:lc-const} and obtain a proper coloring of $G$.  \Qed{Proposition~\ref{prop:lc-triangle}}

\end{proof}


\section{Background on the Palette Sparsification Theorem of~\cite{AssadiCK19}}\label{sec:background-ACK19}

Our main results are closely related to the palette sparsification theorem of Assadi, Chen, and Khanna~\cite{AssadiCK19} and our Result~\ref{res:deg+1}  
involves using components of this result in a non-black-box way. As such, we give a brief high level overview of this result here, and state 
the main properties that we use in our proofs. The palette sparsification theorem of~\cite{AssadiCK19} is as follows. 

\begin{proposition}[Palette sparsification theorem of~\cite{AssadiCK19}]\label{prop:ACK19}
In any graph $G(V,E)$ with $n$ vertices and maximum degree $\Delta$, if we sample $\Theta(\log{n})$ colors $L(v)$ for each vertex $v \in V$ independently and uniformly at random from colors $\set{1,\ldots,\Delta+1}$, 
then $G$ can be properly colored from the sampled lists $L(v)$ for $v \in V$ with high probability. 
\end{proposition}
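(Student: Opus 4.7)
The plan is to prove Proposition~\ref{prop:ACK19} by combining a structural graph decomposition with separate arguments for its two types of parts. The first step is to apply an HSS-style decomposition (in the spirit of~\cite{HarrisSS16,Reed98}): partition the vertex set into a ``sparse'' part $\Vsparse$ and a collection of ``almost-cliques'' $K_1,\ldots,K_k$, where each $K_i$ is a $(\Delta,\eps)$-almost-clique (a subgraph on roughly $(1\pm\eps)\Delta$ vertices, most pairs of which are adjacent, and where each vertex has only $O(\eps\Delta)$ neighbors outside $K_i$), and each vertex in $\Vsparse$ has $\Omega(\eps^2 \Delta^2)$ non-edges in its neighborhood. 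This decomposition isolates the two qualitatively different obstacles: sparse vertices are hard because each has essentially only $\Delta+1$ colors for $\Delta$ neighbors, while almost-clique vertices are hard because their lists overlap heavily.

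For sparse vertices, the plan is to first perform a randomized partial coloring in one ``warm-up'' round, where every vertex is activated with a small constant probability and assigned a uniformly random color from its sampled list. Using the abundance of non-edges in the neighborhood of a sparse vertex $v$, I would show via a second-moment / Talagrand-inequality computation (analogous to the calculation in Lemma~\ref{lem:first-sparse}) that with high probability $v$ ends up with $\Omega(\eps^{O(1)} \cdot \Delta)$ ``excess'' colors---colors that remain in $v$'s list while the number of uncolored neighbors of $v$ has dropped by a matching amount. A subsequent greedy pass then succeeds at coloring every sparse vertex, exactly as in Lemma~\ref{lem:abort-second}, since sampling $\Theta(\log n)$ colors guarantees (by a union bound) that each sparse vertex has an available free color in its sampled list.

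For the almost-cliques, the plan is to process them one by one after fixing the coloring of $\Vsparse$ and of external neighbors. Inside an almost-clique $K_i$, the problem essentially reduces to list-edge-coloring a near-complete bipartite vertex-color incidence structure. The key combinatorial lemma (analogous to Lemma~\ref{lem:almost-clique-color} / Lemma~\ref{lem:degree-almost-clique}) is that even after removing the $O(\eps\Delta)$ ``blocked'' colors per vertex arising from already-colored external neighbors, with high probability one can find a perfect matching between vertices of $K_i$ and distinct colors in their sampled lists. This is proved by a careful application of Hall's theorem together with a concentration argument: for every subset $S\subseteq K_i$, the expected number of sampled colors appearing in some vertex of $S$ is $(1-o(1))|S|$, and sampling $\Theta(\log n)$ colors produces enough concentration for a union bound over all potential Hall-violating subsets.

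The hard part will be the almost-clique analysis, specifically the union bound over Hall-violators: the naive bound fails because there are exponentially many subsets of size close to $|K_i|$. The standard trick is to restrict attention to ``critical'' subset sizes and to use a two-level argument that separately handles small, medium, and near-full subsets, exploiting the $(1-\eps)\Delta$ lower bound on list size combined with concentration that is strong enough to beat the entropy of the relevant collection of subsets. Once those concentration estimates and Hall's theorem argument are in place, assembling the three pieces (decomposition, sparse vertex coloring, almost-clique coloring) and taking a union bound over all almost-cliques and all sparse vertices yields the theorem.
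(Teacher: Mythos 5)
Your high-level plan matches the structure of the ACK19 proof as sketched in Appendix~\ref{sec:background-ACK19}: HSS-style decomposition into sparse vertices and $(\Delta,\eps)$-almost-cliques (Lemma~\ref{lem:extended-HSS-decomposition}), an excess-color argument for the sparse part (Lemma~\ref{lem:sparse-color}), and a separate argument for each almost-clique after adversarially fixing its external neighbors (Lemma~\ref{lem:almost-clique-color}). The sparse-vertex portion of your proposal is essentially right.

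The almost-clique portion, however, has a genuine gap. You propose to find a \emph{perfect matching between the vertices of $K_i$ and distinct colors} in their (unblocked) sampled lists. But Lemma~\ref{lem:extended-HSS-decomposition} only guarantees $\card{K_i} \le (1+6\eps)\Delta$, so $K_i$ can have \emph{more than} $\Delta+1$ vertices, in which case no injective assignment of colors from $\set{1,\ldots,\Delta+1}$ exists and Hall's condition is unsalvageable by any concentration argument. Worse, even when $\card{K_i}\le \Delta+1$, the adversary may block the \emph{same} up to $7\eps\Delta$ colors at every vertex of $K_i$ (the bound in Lemma~\ref{lem:almost-clique-color} constrains $\card{\barS(v)}$ per vertex, not the total mass on any single color), which can shrink the jointly usable palette to roughly $(1-7\eps)\Delta$ colors---again fewer than $\card{K_i}$. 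A correct coloring in these regimes \emph{must} reuse some color on a pair of non-adjacent vertices inside $K_i$. The ACK19 argument handles this with a preliminary step that finds a ``colorful matching'' in the complement of $G[K_i]$: a collection of non-edges $(u,w)$ each assigned a shared color appearing in both $L(u)$ and $L(w)$, chosen so that enough pairs are merged to bring the number of remaining ``singleton'' vertices down below the effective palette size. Only after that reduction does a system-of-distinct-representatives / matching argument take over, and the crucial quantitative fact making the first step possible is that when $\card{K_i} \geq \Delta+1$, the bound $\outdeg_{K_i}(v) \leq \Delta - \deg_{K_i}(v)$ forces $\outdeg_{K_i}(v)$ to be at most the number of non-neighbors of $v$ inside $K_i$, i.e.\ each blocked color is balanced by an available non-edge.

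Relatedly, the ``hard part'' you flag---the union bound over near-full Hall-violating subsets $S$---is not actually where the difficulty lies once the feasibility issue above is resolved: for $\card{S}=\Delta+1-j$ the failure probability decays like $\exp(-\Theta(j\ell))$ with $\ell=\Theta(\log n)$, which comfortably beats the $\binom{\Delta+1}{j}^2 \le (\Delta+1)^{2j}$ entropy for every $j$. The genuine obstacle is that Hall's condition \emph{cannot} hold when the number of vertices exceeds the number of usable colors, so the reduction step via non-edges is not an optimization but a necessity, and it is missing from your proposal.
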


The proof of this result is carried out in three main steps in~\cite{AssadiCK19}: $(i)$ introducing a proper decomposition of every graph $G$ into \emph{sparse} and \emph{dense} vertices, 
$(ii)$ proving that sampled colors are sufficient for coloring sparse vertices, and $(iii)$ proving that after fixing the colors for sparse vertices (even adversarially), the sampled colors are sufficient for coloring the dense vertices. 
We shall note the idea of decomposing the graph into sparse and dense parts and analyzing each part separately in the context of $(\Delta+1)$ coloring 
has a long history in the graph theory literature starting with the pioneering work of Reed~\cite{Reed98}; see, e.g.~\cite{MolloyR98,Reed99b,MolloyR10,MolloyR14}. 

\smallskip
We now briefly describe each of the three components of the proof of Proposition~\ref{prop:ACK19} in~\cite{AssadiCK19}.

\paragraph{Graph Decomposition.} For a parameter $\eps \in (0,1)$, we say a vertex $v \in V$ in a graph $G(V,E)$ is \emph{$(\Delta,\eps)$-sparse} iff there are at least $\eps^2\cdot{{\Delta}\choose{2}}$ \emph{non}-edges in the neighborhood of $v$ (when $\deg(v) < \Delta$, we first append the neighborhood of $v$ with $\Delta-\deg(v)$ dummy vertices connected only to $v$). 
We use $\Vsparse_{\eps}$ to denote the set of $(\Delta,\eps)$-sparse vertices. The following decomposition proven in~\cite{AssadiCK19} is an extension of the HSS decomposition of~\cite{HarrisSS16} (itself based on anearlier decomposition
of~\cite{Reed98}). 

\begin{lemma}[Extended HSS Decomposition~\cite{AssadiCK19}]\label{lem:extended-HSS-decomposition}
	For any parameter $\eps \in [0,1)$, any graph $G(V,E)$ can be partitioned into a collection of vertices $V:= \Vsparse_\star \sqcup C_1 \sqcup \ldots \sqcup C_k$ such that: 
	\begin{enumerate}[leftmargin=15pt]
		\item\label{HSS-p1} $\Vsparse_\star \subseteq \Vsparse_{\eps}$, i.e., any vertex in $\Vsparse_\star$ is $(\Delta,\eps)$-sparse. 
		\item\label{HSS-p2} For any $i \in [k]$, $C_i$ has the following properties (we refer to $C_i$ as an \emph{\underline{$(\Delta,\eps)$-almost-clique}}): 
		\begin{enumerate}
			\item\label{HSS-p2a}\label{p1} $(1-\eps)\Delta \leq \card{C_i} \leq (1+6\eps)\Delta$. 
			\item\label{HSS-p2b} Any $v \in C_i$ has at most $7\eps\Delta$ neighbors outside of $C_i$.
			\item\label{HSS-p2c} Any $v \in C_i$ has at most $6\eps\Delta$ non-neighbors inside of $C_i$. 
		\end{enumerate}
	\end{enumerate}
\end{lemma}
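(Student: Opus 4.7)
The plan is to follow the same high-level template that I will use (in considerably more intricate form) for the decomposition in Lemma~\ref{lem:decomposition}, specialized to the much simpler setting where everything is controlled by the global parameter $\Delta$ rather than local degrees. Concretely, fix a small $\theta = \Theta(\eps)$ (tuned at the end) and introduce the auxiliary notions: call an edge $(u,v)$ a \emph{$\theta$-friend edge} iff $\card{N(u)\cap N(v)} \geq (1-\theta)\Delta$, and call a vertex $v$ a \emph{$\theta$-dense vertex} iff it is incident to at least $(1-\theta)\deg(v)$ many $\theta$-friend edges. Let $F_\theta$ be the set of $\theta$-friend edges, $D_\theta$ the set of $\theta$-dense vertices, and let $H_\theta$ be the subgraph induced on $D_\theta$ using only the edges of $F_\theta$.

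The almost-cliques $C_1,\ldots,C_k$ will be the connected components of $H_\theta$ (that contain at least one $\theta$-dense vertex), and $\Vsparse_\star$ will be everything else. The first and main technical step is to show that if $u,v$ lie in the same connected component of $H_\theta$ then $\card{N(u)\cap N(v)} \geq (1-O(\theta))\Delta$. This is proved by induction on the $H_\theta$-distance between $u$ and $v$, exactly along the lines of Claim~\ref{clm:decomp-shared-neighbors}: for a path $u$–$w$–$v$, both $u$ and $v$ are $\theta$-friends with $(1-O(\theta))\Delta$ neighbors of $w$, which forces an overlap vertex $z$ and hence $\card{N(u)\cap N(v)} \geq (1-O(\theta))\Delta$. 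From this, Property~\ref{HSS-p2c} (few non-neighbors inside $C_i$) follows directly for any fixed $v\in C_i$, and the upper bound $\card{C_i}\leq (1+6\eps)\Delta$ in Property~\ref{HSS-p2a} is an immediate consequence of that (each $v\in C_i$ is adjacent to all but $O(\theta)\Delta$ of its component and has $\deg(v)\leq\Delta$).

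To get the lower bound $\card{C_i}\geq (1-\eps)\Delta$ in Property~\ref{HSS-p2a}, pick any $\theta$-dense $v\in C_i$ and let $F(v)$ denote the $\theta$-friends of $v$; then $\card{F(v)}\geq (1-\theta)\deg(v)$. The argument is that every $u\in F(v)$ shares $\geq (1-O(\theta))\Delta$ neighbors with $v$ in $F(v)$ itself (via a common-neighborhood count), which forces $u$ to also be $\theta'$-dense for a slightly worse $\theta'=\Theta(\theta)$ and thus puts $u\in D_{\theta'}$, connected to $v$ by $\theta'$-friend edges, so $F(v)\cup\{v\}\subseteq C_i$. Property~\ref{HSS-p2b} (few outside neighbors) then falls out from $\card{C_i}\geq (1-\eps)\Delta$, $\deg(v)\leq \Delta$, and the non-neighbors bound.

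Finally, any vertex $v\notin D_\theta$ is not $\theta$-dense, so at least $\theta\deg(v)$ neighbors of $v$ fail to be $\theta$-friends with $v$, meaning each such neighbor $u$ shares fewer than $(1-\theta)\Delta$ common neighbors with $v$, and hence contributes at least $\theta\Delta$ non-edges to the (dummy-padded) neighborhood of $v$. A careful double-count (dividing by $2$ to avoid over-counting) produces at least $\tfrac{1}{2}\cdot\theta\deg(v)\cdot\theta\Delta \geq \eps^2\binom{\Delta}{2}$ non-edges in $v$'s $\Delta$-padded neighborhood whenever $\theta$ is chosen a suitable constant multiple of $\eps$, certifying $v\in\Vsparse_\eps$. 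Such vertices are collected into $\Vsparse_\star$, completing the partition. The main obstacle will not be the structural arguments themselves — which are essentially a stripped-down version of the proof of Lemma~\ref{lem:decomposition} with the ``uneven'' case vacuously absent — but rather the constant-tracking bookkeeping needed to tune $\theta$ so that the inequalities come out at exactly $(1-\eps),(1+6\eps),7\eps,6\eps$; picking $\theta=\eps$ (with some slack absorbed in the derivations above by using $\eps\in[0,1)$ sufficiently small) should suffice.
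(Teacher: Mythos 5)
Your high-level scaffolding (friend edges, dense vertices, components of $H_\theta$, and the non-degrading induction for shared neighborhoods) is the standard HSS-style route, and it is the same scaffold the paper reuses in its proof of Lemma~\ref{lem:decomposition}. The approach is sound, but two steps you declare as ``direct'' or relegate to bookkeeping actually require arguments that you do not supply. First, the non-neighbors bound (Property~\ref{HSS-p2c}) does \emph{not} follow directly from $|N(u)\cap N(v)|\geq (1-O(\theta))\Delta$ for all pairs in $C_i$: two vertices can share nearly all their neighbors without being adjacent to one another, and that pairwise bound alone says nothing about how many non-adjacent pairs sit inside the same component. You need the double-counting argument on triples $(v,w,u)$ with $(v,w),(w,u)\in F_\theta$ and $u\in C_i\setminus N(v)$ — lower-bounded via the shared-neighborhood claim plus denseness, upper-bounded via $|N(w)\setminus N(v)|\leq\theta\Delta$ for friends $w$ of $v$ — exactly as in Claim~\ref{clm:decomp-non-neighbor}. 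This is a genuine structural step, not constant tuning; without it both Property~\ref{HSS-p2c} and the size upper bound you hang off of it are unproved.

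Second, the sparse-certification calculation $\tfrac{1}{2}\theta\deg(v)\cdot\theta\Delta \geq \eps^2\binom{\Delta}{2}$ only closes when $\deg(v)$ is within a constant factor of $\Delta$; for small $\deg(v)$ the left side is $o(\Delta^2)$ and no fixed $\theta=\Theta(\eps)$ rescues it. You invoke the dummy-padded neighborhood, but only to give each non-friend neighbor its $\theta\Delta$ quota of non-edges; you never count the $\binom{\Delta-\deg(v)}{2}$ non-edges among dummies and between dummies and real neighbors, which is precisely what certifies sparsity for a vertex with $\deg(v)<(1-\Theta(\eps))\Delta$. A separate case for low-degree $v$ is therefore necessary, and as written your argument has a hole there. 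A smaller issue: the qualifier ``components of $H_\theta$ that contain at least one $\theta$-dense vertex'' is vacuous, since every vertex of $H_\theta$ is $\theta$-dense by construction; as in the proof of Lemma~\ref{lem:decomposition}, one needs a coarser threshold to build the component graph and a finer one to identify anchor vertices (you gesture at a $\theta'$ later), and your setup conflates the two.
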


The approach in~\cite{AssadiCK19} is then as follows. The authors first pick some small enough \emph{constant} $\eps > 0$ (say $\eps = 10^{-4}$ for concreteness). Let 
 $\Vsparse_\star \sqcup C_1 \sqcup \ldots \sqcup C_k$ be a decomposition of the given graph $G(V,E)$ in Lemma~\ref{lem:extended-HSS-decomposition} for this parameter $\eps$. 
 The rest is to color $\Vsparse_\star$ and $C_1 \cup \ldots \cup C_k$ from the sampled colors in lists $L$ using different arguments. 
 
\paragraph{Coloring Sparse Vertices.} The first (and the easy) part of the argument is to color sparse vertices, ignoring entirely all the dense vertices. 
This is done using the following lemma. 

\begin{lemma}[\!\!\cite{AssadiCK19}]\label{lem:sparse-color}
	Suppose for every vertex $v \in \Vsparse_{\eps}$, we sample a set $L(v)$ of $\Theta(\eps^{-2} \cdot \log{n})$ colors 
	independently and uniformly at random from $\set{1,\ldots,\Delta+1}$. Then, with high probability, the induced subgraph $G[\Vsparse_\eps]$ can be properly colored from 
	the sampled lists $L(v)$ for $v \in \Vsparse_{\eps}$. 
\end{lemma}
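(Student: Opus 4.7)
The plan is to give a two-phase randomized coloring procedure, in the spirit of the treatment of sparse vertices in Sections 4.2 and the wasteful coloring argument used throughout the paper (ultimately going back to \cite{MolloyR97,ElkinPS15,HarrisSS16,ChangLP18}). In the first phase, we use randomness in $L$ to create ``excess'' colors on each sparse vertex, and in the second phase we exploit this excess to complete the coloring greedily. Let $\ell := \Theta(\eps^{-2}\log n)$ denote the size of each sampled list and fix $\passign := \Theta(\eps^2)$ as a small activation probability. Throughout, we may assume $\Delta \geq \Theta(\eps^{-2}\log n)$ (otherwise $L(v) = \set{1,\ldots,\Delta+1}$ trivially and we greedily color).

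For every vertex $v \in \Vsparse_\eps$, by $(\Delta,\eps)$-sparsity the complement graph on $N(v)$ has $\geq \eps^2 \binom{\Delta}{2}$ edges, so a greedy matching argument produces a matching $M_v$ of \emph{non}-edges inside $N(v)$ of size at least $\Omega(\eps^2 \Delta)$. The first phase is then the following procedure run over \emph{all} vertices of $G$ in an arbitrary order: activate each vertex independently with probability $\passign$; every activated vertex $u$ samples a color $c_1(u)$ uniformly at random from $L(u)$ and, if $c_1(u) \in \set{1,\ldots,\Delta+1}$ is unused among its already-colored neighbors, commits to this color (otherwise $u$ is left uncolored). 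After this phase, uncolor every non-sparse vertex; the set of colors assigned to $N(v)$ may now contain repetitions. For each non-edge $\set{a,b} \in M_v$, call it \emph{good} if $a$ and $b$ are both colored with the same color and this color is retained by no other activated neighbor of $v$. Each good non-edge produces an ``excess'' color for $v$: the remaining palette available to $v$ has size strictly larger than $v$'s remaining degree by one.

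The main step is to prove that with high probability, for every $v \in \Vsparse_\eps$, the number of good non-edges in $M_v$ is at least $\Omega(\eps^4 \Delta)$. The expectation bound is straightforward: for a fixed $\set{a,b} \in M_v$, both endpoints are activated and sample the same color in $\set{1,\ldots,\Delta+1}$ with probability $\Omega(\passign^2/\Delta) = \Omega(\eps^4/\Delta)$, and conditioned on this the probability the color is not retained by another active neighbor of $v$ is $\Omega(1)$ since the number of active neighbors is $O(\passign \Delta)$ in expectation. Summing over $|M_v| = \Omega(\eps^2 \Delta)$ non-edges gives expected $\Omega(\eps^6 \Delta)$ good non-edges for $v$. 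The hard part is concentration: the natural count is not $r$-certifiable with small $r$ because certifying that a color is retained requires revealing $\Omega(\Delta)$ many vertex-randomness variables. The fix, which follows the approach in the proof of Lemma \ref{lem:first-sparse} almost verbatim, is to split the count as $Z = T - D$, where $T$ counts pairs in $M_v$ that sample a common color and $D$ counts such pairs for which at least one endpoint fails to retain that color. Both $T$ and $D$ are $O(1)$-Lipschitz and $O(1)$-certifiable in the per-vertex activation/color variables, so Talagrand's inequality (Proposition~\ref{prop:talagrand}) gives concentration of each around its expectation within $\Omega(\eps^6 \Delta)$ additive error with probability $1 - \exp(-\Omega(\eps^{\Theta(1)}\Delta)) \ll n^{-4}$. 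A union bound over sparse vertices completes Phase 1.

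For Phase 2, iterate over uncolored sparse vertices in any order and for each $v$ try to extend the coloring by some $c \in L(v)$ not yet used on a neighbor of $v$. By the Phase 1 guarantee, at most $|L(v)| \cdot (\deg(v) - \Omega(\eps^6 \Delta))/(\Delta+1) \leq (1-\Omega(\eps^6)) \cdot |L(v)|$ many colors of $L(v)$ can appear on colored neighbors of $v$, so the probability that $L(v)$ is entirely absorbed by these is at most
\begin{align*}
\paren{1 - \Omega(\eps^6)}^{|L(v)|} \leq \exp\paren{-\Omega(\eps^6) \cdot \Theta(\eps^{-2}\log n)} \ll n^{-4},
\end{align*}
so a union bound shows the greedy stage never aborts with high probability, finishing the proof. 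The main obstacle, as noted, is Phase 1 concentration; the $T - D$ decomposition is exactly what makes Talagrand's inequality applicable with the small Lipschitz and certifiability constants needed to beat the union bound at the target list size $\Theta(\eps^{-2}\log n)$.
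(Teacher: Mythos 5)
Your two-phase plan (create excess colors via a random partial coloring, then finish greedily) is the right architecture, and your $T-D$ split for Talagrand is exactly the device the paper uses for the $(\deg+1)$ analogue (Lemma~\ref{lem:first-sparse}). However, there is a quantitative error in Phase 1 that collapses the argument: you restrict to a \emph{matching} $M_v$ of non-edges in $N(v)$. A matching inside a neighborhood of size at most $\Delta$ has at most $\Delta/2$ edges, and $(\Delta,\eps)$-sparsity indeed only yields $\card{M_v} = \Theta(\eps^2\Delta)$, as you state. But then, with per-pair success probability $\Theta(\passign^2/\Delta) = \Theta(\eps^4/\Delta)$, the expected number of good pairs in $M_v$ is $\Theta(\eps^2\Delta)\cdot\Theta(\eps^4/\Delta) = \Theta(\eps^6)$, a \emph{constant}; the $\Delta$'s cancel, and the ``$\Omega(\eps^6\Delta)$'' you write after this sum is an arithmetic slip. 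A constant excess is useless for Phase 2: after the first phase the number of distinct colors on colored neighbors of $v$ would be $\deg(v) - O(1)$, so the per-trial absorption probability is $1 - O(1/\Delta)$ and $\paren{1-O(1/\Delta)}^{\ell}$ is $1-o(1)$, not $n^{-\Theta(1)}$.

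The fix is to count good events over the \emph{full} set of $\Theta(\eps^2\Delta^2)$ non-edges in $N(v)$ that sparsity guarantees, not a matching of them. This recovers the missing factor of $\Delta$: $\Theta(\eps^2\Delta^2)\cdot\Theta(\eps^4/\Delta) = \Theta(\eps^6\Delta)$, which is precisely the paper's calculation (Eq.~\eqref{eq:enon} together with Claim~\ref{clm:enon-expect}). The non-edges now share endpoints, so the ``good'' indicators are more entangled, but that is exactly what your $T-D$ decomposition is designed to handle: both $T$ and $D$ remain $O(1)$-Lipschitz and $O(1)$-certifiable even over the full non-edge set, and the Talagrand step and Phase 2 then go through unchanged. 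One secondary caveat: with an $\Omega(\eps^6\Delta)$ excess, the Phase-2 exponent is $\Theta(\eps^6\ell) = \Theta(\eps^4\log n)$, so $\ell = \Theta(\eps^{-2}\log n)$ suffices only when $\eps$ is a fixed constant; that matches how the lemma is invoked, but if you want the statement uniformly in $\eps$ you would need $\ell = \Theta(\eps^{-6}\log n)$ or a sharper excess bound.
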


This lemma is proven in~\cite{AssadiCK19} by ``simulating'' a simple greedy coloring procedure for coloring $G$ using by-now standard ideas from~\cite{ElkinPS15,HarrisSS16,ChangLP18} (which are all rooted in~\cite{MolloyR97} that proved that chromatic
number of any graph where all vertices are $\eps$-sparse is at most $(1-\Theta(\eps)) \cdot \Delta$).  
Equipped with this lemma, one can then color all vertices in $\Vsparse_\star \subseteq \Vsparse_\eps$ in the decomposition using the sampled lists in the palette sparsification theorem (recall that $\eps$ is a sufficiently small constant).

\paragraph{Coloring Almost-Cliques.} The second (and the main) part of the argument in~\cite{AssadiCK19} is to color almost-cliques, which is done using the following lemma. 

For a vertex $v$ in a $(\Delta,\eps)$-almost-clique $C$, we define the out-degree of $v$ in $C$, denoted by $\outdeg_C(v)$ as the number of neighbors of $v$ in $G$ that are outside $C$. Recall that by definition of 
a $(\Delta,\eps)$-almost-clique, $\outdeg(v) \leq 7\eps\Delta$. 
\begin{lemma}[\!\cite{AssadiCK19}]\label{lem:almost-clique-color}
	Let $C$ be a $(\Delta,\eps)$-almost-clique in $G$. Suppose for every  $v \in C$, we \emph{\underline{adversarially}} pick a set $\barS(v)$ of size $\leq \outdeg_C(v)$ colors from $\set{1,\ldots,\Delta+1}$. 
	Now, if for every vertex $v \in V$, we sample a set $L(v)$ of $\Theta(\eps^{-1} \cdot \log{n})$ colors independently from $\set{1,\ldots,\Delta+1}$, then, with high probability, the induced subgraph $G[C]$ can be properly colored
	from the lists $L(v) \setminus \barS(v)$ for $v \in C$. 
\end{lemma}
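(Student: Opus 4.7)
The plan is to recast the coloring of $G[C]$ as a system-of-distinct-representatives problem on an auxiliary bipartite graph, exploiting the fact that $C$ is structurally almost a clique. First, for each $v \in C$ let $\hat{L}(v) := L(v) \setminus \bar{S}(v)$. Since $|\bar{S}(v)| \leq 7\eps\Delta$ and $|L(v)| = \Theta(\eps^{-1}\log n)$, we have $\Exp{|L(v) \cap \bar{S}(v)|} = O(\log n)$, and a Chernoff bound together with a union bound over $C$ gives $|\hat{L}(v)| = \Theta(\eps^{-1}\log n)$ w.h.p. So henceforth I treat each $\hat{L}(v)$ as a uniformly random subset of $\{1,\ldots,\Delta+1\}$ of size $\Theta(\eps^{-1}\log n)$.

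Second, since $|C|$ can exceed $\Delta+1$ by as much as $6\eps\Delta$, we must pair up non-adjacent vertices so that they share a color. I would find a matching $M$ among the non-edges of $G[C]$ with $|C| - 2|M| \leq \Delta+1$: by maximality of a greedily chosen matching, $V(C) \setminus V(M)$ forms a clique in $G$, and any clique in $G$ has size at most $\Delta+1$, so the needed matching always exists. Declare each pair in $M$ and each leftover vertex a \emph{block}, yielding at most $\Delta+1$ blocks. A valid coloring of $G[C]$ amounts to choosing a distinct color for each block $B$ from $\hat{L}(B) := \bigcap_{v \in B}\hat{L}(v)$, i.e., a perfect matching in the bipartite graph $H$ between blocks and the $\Delta+1$ colors.

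The main obstacle is verifying Hall's condition for $H$. Singleton blocks have neighborhood size $\Theta(\eps^{-1}\log n)$, which is comfortably large. Pair blocks $\{u,w\}$, however, have expected neighborhood size $\Theta(\log^2 n/(\eps^2\Delta))$ only—tiny when $\Delta$ is polynomially large. To overcome this, I would choose the matching $M$ \emph{adaptively after sampling the lists}: among all non-edges $(u,w)$ of $G[C]$ with $|\hat{L}(u)\cap\hat{L}(w)| = \Omega(\log n)$ (a ``good'' non-edge), show that enough good non-edges exist to form a matching $M$ of the required size. Concretely, each vertex $v$ has $\Omega(\Delta)$ candidate partners among its non-neighbors (after padding via the out-degree slack), and for each such partner $w$, $|\hat{L}(v)\cap\hat{L}(w)|$ is a sum of $\Theta(\eps^{-1}\log n)$ negatively-correlated indicators with expectation $\Theta(\log^2 n/(\eps^2\Delta))$, so Talagrand's inequality gives a non-trivial probability that this intersection exceeds $\Omega(\log n)$—enough, via a second moment / LLL argument, to extract a matching of good non-edges of the required size.

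Finally, once every block has a neighborhood of size $\Omega(\log n)$ in $H$ and the total number of blocks is at most $\Delta+1$, Hall's condition can be verified via the Lovász Local Lemma applied to potential Hall-violating sets: the probability that a specific set $S$ of blocks has $|N_H(S)| < |S|$ drops geometrically in $|S|$ once the per-block neighborhood is $\omega(\log|C|)$, so a union bound (or equivalently LLL over violations) rules out all violators with positive probability. A perfect matching in $H$ then supplies the desired proper coloring of $G[C]$ from the lists $L(v)\setminus\bar{S}(v)$.
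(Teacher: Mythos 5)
The paper does not prove this lemma itself; it cites it from~\cite{AssadiCK19}. Still, your proposal has two concrete gaps that keep it from being a valid proof.

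\textbf{The block count is off by a factor of two.} A maximal matching $M$ in the complement of $G[C]$ guarantees that the unmatched vertices form a clique in $G$, hence $|C|-2|M|\le\Delta+1$. But the number of blocks you create is $|M|+(|C|-2|M|)=|C|-|M|$, and for a perfect matching to colors you need $|C|-|M|\le\Delta+1$, i.e.\ $|M|\ge|C|-\Delta-1$. Maximality only gives $|M|\ge(|C|-\Delta-1)/2$, so you may still be left with more than $\Delta+1$ blocks. One can show that a \emph{maximum} matching in the non-edge graph does have size $\ge|C|-\Delta-1$ (a counting argument using the bound $\le 6\eps\Delta$ on non-neighbors inside $C$ and the minimum complement-degree $\ge|C|-\Delta-1$), but this is a separate fact you would need to prove; maximality alone is not enough.

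\textbf{The notion of ``good'' non-edge is unattainable for large $\Delta$.} You ask for pairs $(u,w)$ with $|\hat{L}(u)\cap\hat{L}(w)|=\Omega(\log n)$. But $\Exp{|\hat{L}(u)\cap\hat{L}(w)|}=\Theta\!\big(\tfrac{\log^2 n}{\eps^2\Delta}\big)$, which is $o(1)$ already for $\Delta=\omega(\log^2 n)$. The probability of $\ge\log n$ common colors is then of order $\big(\tfrac{\log^3 n}{\Delta}\big)^{\Omega(\log n)}$—doubly exponentially small—so there are simply no good non-edges in this sense once $\Delta$ is polynomial in $n$. Talagrand's inequality cannot help here: it gives concentration around an $o(1)$ mean, not a lower bound of $\Omega(\log n)$. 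Because every pair block has a tiny (often empty) color list when $\Delta$ is large, the ``per-block neighborhood is $\omega(\log|C|)$'' premise of your Hall's-condition argument also fails for those blocks. Any correct argument has to get Hall's condition to hold with pair-block lists of constant or even sub-constant expected size, which requires a much more global, structured counting argument over subsets of blocks than a per-block LLL can deliver.
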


Lemma~\ref{lem:almost-clique-color} is the heart of the argument in~\cite{AssadiCK19}. It states that the no matter how we color the remainder of the graph, there is ``enough'' randomness in the lists of almost-cliques so that we can (with high probability) find a 
coloring of each almost-clique to extend to the previous coloring. As such, we can simply go over the almost-cliques one by one and color each almost-clique $C$ using Lemma~\ref{lem:almost-clique-color} as follows: As every vertex $v \in C$ has
at most $\outdeg_C(v) \leq 7\eps\Delta$ neighbors outside $C$ (by definition of $(\Delta,\eps)$-almost-cliques in Lemma~\ref{lem:extended-HSS-decomposition}), we pick the colors used for these neighbors in the set $\barS(v)$ and then invoke Lemma~\ref{lem:almost-clique-color} 
to color $C$ with high probability. We iterate like this until we find a proper coloring of $G$. This concludes the high level approach of the proof in~\cite{AssadiCK19}. 



\section{Proofs of Basic Random Graph Theory Results}\label{app:random-graph-theory}

\begin{lemma*}[Restatement of Lemma~\ref{lem:random-triangle}]
	For $G \sim \FG_{n,p}$, $\expect{t(G)} \leq (np)^3$, and w.h.p. 
	\[t(G) \leq (1+o(1))\expect{t(G)}.\]
\end{lemma*}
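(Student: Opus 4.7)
The plan is to establish the expectation bound by a direct computation and then obtain concentration via the second moment method (Chebyshev's inequality).

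First, I would compute $\expect{t(G)}$ using linearity of expectation. There are $\binom{n}{3}$ potential triangles on the vertex set, and each appears in $G \sim \FG_{n,p}$ with probability exactly $p^3$, since the three edges are chosen independently. Hence
\[
\expect{t(G)} = \binom{n}{3} p^3 \leq \frac{(np)^3}{6} \leq (np)^3,
\]
which gives the first half of the lemma.

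For concentration, I would apply the second moment method. Write $t(G) = \sum_T X_T$ where $X_T$ is the indicator for triangle $T$ being present in $G$, and expand
\[
\var{t(G)} = \sum_{T,T'} \textnormal{Cov}(X_T, X_{T'}).
\]
The key observation is that any two distinct triangles $T, T'$ that share at most one vertex have disjoint edge sets, so $X_T$ and $X_{T'}$ are independent and contribute zero to the sum. Only pairs sharing exactly one edge (equivalently, two vertices) contribute: such a pair uses $5$ distinct edges, so $\textnormal{Cov}(X_T, X_{T'}) \leq p^5$, and the number of such ordered pairs is at most $\binom{n}{2}(n-2)(n-3) = O(n^4)$ (choose the shared edge, then the remaining vertex of each triangle). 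This yields $\var{t(G)} = O(n^4 p^5)$.

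In the regime $p = \Theta(n^{-2/3})$ used in Proposition~\ref{prop:tf-optimal}, we have $\expect{t(G)} = \Theta(n)$ and $\var{t(G)} = O(n^{2/3})$, so $\var{t(G)}/\expect{t(G)}^2 = O(n^{-4/3}) = o(1)$. Applying Chebyshev's inequality with a slowly-growing deviation parameter (e.g., $\expect{t(G)}/\log n$) then gives $t(G) \leq (1+o(1))\expect{t(G)}$ with the required probability. The only nontrivial step is the variance bound; the rest is essentially automatic. The one point requiring care is the combinatorial accounting of intersection patterns between pairs of triangles, and in particular verifying that independence of disjoint edge sets kills all covariance contributions except from the edge-sharing case.
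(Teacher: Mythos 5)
Your approach differs from the paper's: the paper computes the expectation exactly as you do, but for the concentration it simply appeals to a reference (a well-known fact about triangle counts in $\FG_{n,p}$), whereas you give a self-contained second-moment argument. That is a reasonable substitution, and the second-moment method does suffice for the $1-o(1)$ probability that Proposition~\ref{prop:tf-optimal} actually requires.

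However, your variance bound is wrong. You dropped the diagonal terms: $\var{t(G)} = \sum_T \var{X_T} + \sum_{T \neq T'} \textnormal{Cov}(X_T, X_{T'})$, and the first sum equals $\binom{n}{3}(p^3 - p^6) = \Theta(n^3 p^3)$. Your accounting only captures the second sum, $O(n^4 p^5)$. In the regime $p = \Theta(n^{-2/3})$ used in Proposition~\ref{prop:tf-optimal} the diagonal term $n^3 p^3 = \Theta(n)$ dominates the cross term $n^4 p^5 = \Theta(n^{2/3})$, so the correct variance is $\Theta(n)$, not $O(n^{2/3})$. Fortunately this does not break the argument: with $\expect{t(G)} = \Theta(n)$ and $\var{t(G)} = \Theta(n)$, Chebyshev still gives $\var{t(G)}/\expect{t(G)}^2 = \Theta(1/n) = o(1)$, and the conclusion $t(G) \leq (1+o(1))\expect{t(G)}$ with probability $1-o(1)$ goes through. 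One further caveat worth noting is that Chebyshev only delivers probability $1 - 1/\poly(n)$ with a small exponent, weaker than the paper's blanket convention for ``w.h.p.''; this is harmless for the intended application but means the lemma as stated (with the paper's w.h.p.\ convention) technically calls for a stronger tool such as Janson's inequality or the Kim--Vu inequality.
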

\begin{proof}
	$\expect{t(G)} \leq \sum_{u,v,w} \Pr\paren{\text{$(u,v),(v,w),(w,u)$ belongs to $G$}} = {{n}\choose{3}} \cdot p^3 \leq (np)^3$. 
The high probability result can be proven in several ways and is
well known, see, for example, \cite{HMS2019}.
\end{proof}

\begin{lemma*}[Restatement of Lemma~\ref{lem:g5-random}]
	For $G \sim \FG_{n,p}$, $\expect{\alpha(G)} \leq  
\frac{3 \cdot \ln{(np)}}{p}$, and w.h.p. 
\[
\alpha(G)  \leq  \frac{3 \cdot \ln{(np)}}{p}. \]
\end{lemma*}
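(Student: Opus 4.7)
The plan is to use a standard first-moment/union bound argument on independent sets in $\FG_{n,p}$, together with the trivial bound $\alpha(G) \leq n$ to convert the tail estimate into a bound on the expectation.

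First, I would set $k := \lceil 3\ln(np)/p \rceil$ and estimate the probability that $\alpha(G) \geq k$ by a union bound over all $k$-subsets of $V$. For a fixed $S \subseteq V$ with $|S| = k$, the event that $G[S]$ contains no edge has probability exactly $(1-p)^{\binom{k}{2}}$ since the $\binom{k}{2}$ potential edges inside $S$ are present independently with probability $p$. Consequently
\[
\Pr\paren{\alpha(G) \geq k} \leq \binom{n}{k}(1-p)^{\binom{k}{2}} \leq \paren{\frac{en}{k}}^{k} \exp\paren{-p\binom{k}{2}}.
\]
Taking logarithms and plugging in the choice of $k$, the first term contributes at most $k\ln(en/k) = k(\ln(np) - \ln\ln(np) + O(1))$, while the second contributes $-p\binom{k}{2} \approx -\tfrac{3}{2}k\ln(np)$. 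For $np$ sufficiently large (which is the regime of interest, since the statement is only meaningful when $np \to \infty$), the negative term dominates by a constant factor, giving
\[
\Pr\paren{\alpha(G) \geq k} \leq \exp\paren{-\tfrac{1}{2} k \ln(np) (1-o(1))} = o(1/n),
\]
which in particular is $o(1)$. This already yields the high-probability part of the claim.

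For the expectation bound I would split on the event $\{\alpha(G) < k\}$ and its complement:
\[
\expect{\alpha(G)} \leq k \cdot \Pr\paren{\alpha(G) < k} + n \cdot \Pr\paren{\alpha(G) \geq k} \leq k + n \cdot o(1/n) = k + o(1),
\]
so $\expect{\alpha(G)} \leq (1+o(1))\cdot 3\ln(np)/p$, which matches the claimed bound (possibly after absorbing the $o(1)$ into the leading constant; the proof in this paper only needs this up to constant factors, as it is used inside a sum in the proof of Proposition~\ref{prop:tf-optimal}).

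I do not expect any real obstacle here; the only delicate point is verifying that the exponent in the union bound is genuinely negative with the correct constant $3$, which reduces to checking that $\ln(en/k) \leq \tfrac{3}{2}\ln(np)(1-o(1))$ for our choice of $k$. This follows because $k \geq 3\ln(np)/p$ gives $n/k \leq np/(3\ln(np))$, so $\ln(en/k) \leq \ln(np) + O(1)$, which is indeed less than $\tfrac{3}{2}\ln(np)(1-o(1))$ for $np$ large.
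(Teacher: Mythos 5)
Your proof is correct and follows essentially the same route as the paper: a first-moment union bound over $k$-subsets with $k \approx 3\ln(np)/p$, showing $\binom{n}{k}(1-p)^{\binom{k}{2}}$ is tiny. The one place you are slightly more careful than the paper is the expectation bound: you explicitly split on $\{\alpha(G)<k\}$ and use the trivial bound $\alpha(G)\le n$ on the tail event, whereas the paper's proof just asserts "$\alpha(G)\le k$ in expectation" after the high-probability bound without spelling out this step; your write-up makes that inference explicit. The resulting $k+o(1)$ versus $k$ discrepancy you flag is immaterial, as you note, since the constant $3$ has slack (one can take $2+o(1)$) and the lemma is only used up to constants in Proposition~\ref{prop:tf-optimal}.
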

\begin{proof}
	Fix any set $S$ of $k := \frac{3\cdot \ln{(np)}}{p}$ vertices in $G$. We have,
	\begin{align*}
		\Pr\paren{\text{$S$ is an independent set}} = (1-p)^{{k}\choose{2}} \leq \exp\paren{-p \cdot {{k}\choose{2}}} \leq \exp\paren{-\frac{4}{p} \cdot \ln^2{(np)}}.
	\end{align*}
	On the other hand, the total number of choices for $S$ is:
	\begin{align*}
		{\# \text{of $k$-subsets of $V$}} = {{n}\choose{k}} \leq \paren{\frac{e \cdot n}{k}}^{k} \leq \exp\paren{k\cdot\ln{(\frac{n}{k})} + k} \leq \exp\paren{\frac{3}{p} \cdot\ln^2{(np)}}. 
	\end{align*}
	Taking a union bound over all $k$-subsets $S$, we obtain that w.h.p, none of the subsets can be an independent set. This implies  $\alpha(G) < k$ with high probability and $\alpha(G) \leq k$ in expectation. 
\end{proof}
\noindent
We note that the constant $3$ above can be easily reduced to $2+o(1)$ but this is not needed here. 

\begin{lemma*}[Restatement of Lemma~\ref{lem:g5-max-degree}]
	For $G \sim \FG_{n,p}$, w.h.p. $\Delta(G) \leq 2np$. 
\end{lemma*}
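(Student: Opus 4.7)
The plan is to apply a standard Chernoff bound plus union bound argument. Fix any vertex $v \in V$; then $\deg_G(v)$ is a sum of $n-1$ independent Bernoulli random variables, each with parameter $p$, so $\Exp[\deg_G(v)] = (n-1)p \leq np$. By the Chernoff bound from Proposition~\ref{prop:chernoff} (applied with $\delta = 1$ and $\mumax = np$),
\[
\Pr\paren{\deg_G(v) > 2np} \leq \exp\paren{-\frac{np}{3}}.
\]

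Since Lemma~\ref{lem:g5-max-degree} is invoked only in the regime where $p = \Theta(n^{-2/3})$, we have $np = \Theta(n^{1/3}) = \omega(\log n)$, so $\exp(-np/3) = o(1/n^c)$ for every constant $c$. A union bound over the $n$ vertices then gives $\Delta(G) \leq 2np$ with high probability, as required.

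There is no real obstacle here: the only thing one needs to be a bit careful about is ensuring that $np$ is large enough compared to $\log n$ so that the union bound survives, but this is clearly satisfied for the values of $p$ used in Proposition~\ref{prop:tf-optimal}. In fact the statement holds much more generally whenever $np \geq C \log n$ for a sufficiently large constant $C$, so the proof really is just one line of Chernoff plus one line of union bound.
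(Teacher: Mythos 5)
Your proof is correct and is exactly the paper's argument: the paper's entire proof is the one-line remark ``A direct application of Chernoff bound and union bound,'' which you have simply spelled out, including the observation that $np = \Theta(n^{1/3}) \gg \log n$ in the relevant regime so the union bound goes through.
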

\begin{proof}
	A direct application of Chernoff bound and union bound. 
\end{proof}

\end{document}